\documentclass[a4paper,11pt]{article}
\pdfoutput=1 

\usepackage{jheppub} 
\usepackage{tikz}
\usepackage{tikz-cd}

\newcommand{\cA}{\mathcal A}

\newcommand{\bea}{\begin{eqnarray}}
\newcommand{\eea}{\end{eqnarray}}

\usepackage[utf8]{inputenc}
\usepackage{hyperref}
\usepackage{amsfonts}
\usepackage{amsmath}
\usepackage{amssymb}
\usepackage{mathrsfs}  
\usepackage{color}
\usepackage{physics}
\usepackage{multicol}
\usepackage{tikz}

\usepackage{amsmath,bbm,array,amsfonts,graphicx,wrapfig,lscape,float,mathtools,multirow,longtable,amsthm}
\usepackage{stackrel}
\usepackage[all]{xy}
\usepackage{caption}
\usepackage{subcaption}
\usepackage{multirow}
\captionsetup{font=footnotesize}
\usepackage[bottom]{footmisc}
\usepackage{mathrsfs}
\usepackage{tikz}
\usepackage{bm}
\usepackage{color}
\usepackage{enumerate}
\usetikzlibrary{decorations.pathreplacing}
\usepackage{diagbox}
\numberwithin{equation}{section}
\numberwithin{figure}{section}
\numberwithin{table}{section}
\usepackage{pgfplots}
\pgfplotsset{compat=1.14}
\usepackage{makecell}
\usepackage{lscape}
\usepackage[utf8]{inputenc}
\usepackage{mathtools}
\usepackage{todonotes}
\usepackage{epigraph}

\usepackage[autostyle=true]{csquotes}
\newtheorem{definition}{Definition}[section]
\newtheorem{theorem}{Theorem}[section]
\newtheorem{corollary}{Corollary}[theorem]
\newtheorem{lemma}[theorem]{Lemma}
\newtheorem{proposition}[theorem]{Proposition}
\newtheorem{conjecture}[theorem]{Conjecture} 
\newtheorem{remark}{Remark}
\newtheorem{example}{Example}

\usepackage[toc,page]{appendix}
\usepackage[inline]{enumitem}

\newcommand{\commentout}[1]{}

\title{Mahler Measure for a Quiver Symphony}

\author[a,b]{Jiakang Bao,}
\author[b,a,c,d]{Yang-Hui He,}
\author[e,b]{Ali Zahabi}

	\affiliation[a]{
		Department of Mathematics, City, University of London, EC1V 0HB, UK}
	\affiliation[b]{
	    London Institute for Mathematical Sciences, Royal Institution, London W1S 4BS, UK}
	\affiliation[c]{
		Merton College, University of Oxford, OX1 4JD, UK}
	\affiliation[d]{
		School of Physics, NanKai University, Tianjin, 300071, P.R. China}

    \affiliation[e]{Institut de Math\'ematiques de Bourgogne, Universit\'e Bourgogne Franche-Comt\'e, France}

	\emailAdd{jiakang.bao@city.ac.uk}
	\emailAdd{hey@maths.ox.ac.uk}
    \emailAdd{zahabi.ali@gmail.com}

\preprint{
		\begin{flushright}
			LIMS-2021-010
		\end{flushright}
	}

\abstract{Adopting the Mahler measure from number theory, we introduce it to toric quiver gauge theories, and study some of its salient features and physical implications. We propose that the Mahler measure is a universal measure for the quiver, encoding its dynamics with the monotonic behaviour along a so-called Mahler flow including two special points at isoradial and tropical limits. Along the flow, the amoeba, from tropical geometry, provides geometric interpretations for the dynamics of the quiver. In the isoradial limit, the maximization of Mahler measure is shown to be equivalent to $a$-maximization. The Mahler measure and its derivative are closely related to the master space, leading to the property that the specular duals have the same functions as coefficients in their expansions, hinting the emergence of a free theory in the tropical limit. Moreover, they indicate the existence of phase transition. We also find that the Mahler measure should be invariant under Seiberg duality.
}

\begin{document} 
\maketitle

\newpage

\epigraph{Die Symphonie mu\ss~sein wie die Welt. Sie mu\ss~alles umfassen.}{Gustav Mahler}

\section{Introduction and Summary}\label{intro}
Quiver gauge theories and their string theory constructions \cite{Nakajima:1994nid,king1994moduli,Douglas:1996sw}, as an extension to the AdS/CFT correspondence, have been a centre of intensive studies over the last two decades. While their holographic realization as brane tilings \cite{Hanany:2005ve,Franco:2005rj,Franco:2005sm,Feng:2005gw} turn out to be extremely useful and productive (see reviews in \cite{Yamazaki:2008bt,Hanany:2010zza,He:2018jtw}), these combinatorial and geometrical  constructions are somewhat limited to a kinematic level, meaning that other than the famous $a$-maximization/volume minimization \cite{Intriligator:2003jj,Gubser:1998vd,Butti:2005vn,Martelli:2005tp}, such key quantities as partition functions etc., have been relatively untouched.
The purpose of the present work is to take the first step in a new direction which can be seen as an initial upgrade of the brane tilings paradigm to the dynamical level.

In order to extend brane tilings and explain the underlying structure of their quiver gauge theories such as the symmetries and dynamics, it is natural to think of the best candidates from statistical dimer models and its cousin, the crystal melting model \cite{Iqbal:2003ds,Okounkov:2003sp,Ooguri:2009ijd,Ooguri:2009ri}, as well as  their probabilistic and geometric aspects \cite{Kenyon:2003uj}. 
Bearing these in mind, the key role in our new picture is played by a universal measure, called the \emph{Mahler measure}, first introduced by Kurt Mahler \cite{mahler1962some}. 
While the original definition arose in number theory \cite{mahler1962some,boyd2002mahler,boyd2002mahlerII}, the Mahler measure can be seen as the scaled height function of the dimer model in the thermodynamic limit \cite{Zahabi:2018znq,Zahabi:2019kdm,Zahabi:2020hwu}, and is closely related to the limit shape of the crystal melting model and the Ronkin function. This measure controls the thermodynamics of the lattice models, through the computation of the growth rate and/or free energy of the model.

Our goal in this paper is to uncover the essence of the Mahler measure in quiver gauge theories, and address to what extent the dynamical aspects of the gauge theory are encoded therein and be unfolded via its geometry and analysis.
The heart of this study and our main ideas are around an observation which leads to our central theorem which we call the {\it Mahler flow} (\textbf{Equation \eqref{mfloweqn}} and \textbf{Proposition \ref{mflowprop}}): a monotonic decrease of the Mahler measure along the path from the tropical limit to the isoradial limit of the theory. This observation bears an overwhelming resemblance to renormalization group (RG) flow, as well as the $c$-theorem in the context of two-dimensional conformal field theories (CFTs). In fact, the Mahler measure is expected to show some unique and universal features along the flow, respecting dualities of the underlying gauge theories, including Seiberg/toric duality \cite{Seiberg:1994pq,Feng:2001bn} (\textbf{Conjecture \ref{Seibergconj}}) and specular duality \cite{Hanany:2012vc} (\textbf{Corollary \ref{speccor}}).

The richness of the Mahler measure comes with expected features as well as some physical and mathematical surprises: (1) the tropical geometry picture of the Mahler flow can be understood in terms of the amoeba and its bounded complements. More precisely, if we think of the Mahler measure as some sort of ``coupling'' changing along the flow, then the scale is the K\"ahler parameter controlling the size of the hole of the amoeba;
(2) the equivalence of the maximization of Mahler measure in the isoradial limit and $a$-maximization (\textbf{Proposition \ref{propmax}});
(3) its r\^ole in the exploration of the phase structure of the theory, as a parameter that points out the phase transition via the analysis of its derivative (\textbf{Propsition \ref{transitionprop}} and \textbf{Corollary \ref{transitioncor}}).

The novelty of the Mahler measure in the context of quiver gauge theories signals possible new directions of studies, and we are only at the beginning. The most fascinating one would be the universality of this measure as it connects a wide range of topics from the superconformal index to the entropy of black holes.

The paper is organized as follows. In \S\ref{background}, we review some background materials on Mahle measure, dimer models, crystal melting models, and tropical geometry. In \S\ref{mahlerquiver}, which is the heart of this article, we present the main new ideas and results about the applications and implications of Mahler measure in quiver gauge theory. In \S\ref{outlook}, possible future research directions are discussed.

\section{Prelude}\label{background}
Since we are attempting to connect a multitude of concepts from mathematics and physics, it is expedient to present an introductory summary here, as much to motivate the reader as to set notation. Let us start with the Mahler measure. Then we will see how this is connected to dimer models/brane tilings.

\subsection{The Mahler Measure}\label{mahler}
Originating in algebraic number theory, the Mahler measure \cite{mahler1962some} is a seemingly innocuous object. Given a Laurent polynomial in $n$ complex variables, the Mahler measure can be considered as an average on the $n$-torus:
\begin{definition}
For a (non-zero) Laurent polynomial $P(\bm{z})=P(z_1, \ldots, z_n) \in \mathbb{C}[z_1^{\pm1}, \ldots, z_n^{\pm1}]$, the {\bf Mahler measure} is
\begin{equation}
    m(P):=\int_0^1\dots\int_0^1\log|P(\exp(2\pi i\theta_1),\dots,\exp(2\pi i\theta_n))|\textup{d}\theta_1\dots\textup{d}\theta_n.\label{mahlerdef}
\end{equation}
By convention, we set $m(0)=\infty$. 
\end{definition}
We emphasize that the name Mahler measure often means the exponential $\exp(m(P))$ in the literature. However, since we will exclusively work with $m(P)$ in this paper, we will always refer to $m(P)$ in
\eqref{mahlerdef}
as the Mahler measure.

The Mahler measure enjoys many salient features, such as additivity, meaning that $m(PQ)=m(P)+m(Q)$ for any two Laurent polynomials $P,Q$.
Furthermore, for a univariate polynomial $P(z)=a \prod\limits_{i=1}^n(z-\alpha_i)$, we have Jensen's formula:
\begin{equation}
    m(P)=\log|a|+\sum_{i=1}^n\max\{0,\log|\alpha_i|\} \ .
\end{equation}
However, for more than one variable, the integral in \eqref{mahlerdef} is already highly non-trivial. Since in this paper we are mainly considering bivariate Laurent polynomial $P(z,w)$, and already no such simple formula as Jensen's is known\footnote{By writing $P(z,w)$ as $a(w)\prod\limits_{i=1}^n(z-\alpha_i(w))$, we can still use Jensen's formula to compute $m(P(z,w))$, but the expression is much more involved \cite{vandervelde2008mahler} and no analytic results are known explicitly.}.

What we do know about the $n$ variable case is that the Mahler measure is GL$(n,\mathbb{Z})$ invariant \cite{schinzel_2000,boyd2005small}:
\begin{theorem}
    Let $0\neq (M_{ij})_{n\times n}\in\textup{GL}(n,\mathbb{Z})$. Then
    \begin{equation}
        m(P(\bm{z}))
        =m(P(\bm{z}^M))
        =m(P(-\bm{z}^M)),
    \end{equation}
    where $\bm{z}^M=\left(\prod\limits_{i=1}^nz_i^{M_{i1}},\dots,\prod\limits_{i=1}^nz_i^{M_{in}}\right)$.
\end{theorem}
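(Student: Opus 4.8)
The plan is to reduce the general statement to two elementary building blocks: (i) invariance under a single monomial substitution coming from a matrix in $\textup{GL}(n,\mathbb{Z})$, and (ii) invariance under the sign flip $\bm{z}\mapsto-\bm{z}$, i.e. $z_i\mapsto-z_i$ for all $i$ (or, more flexibly, a sign on any chosen subset of the variables). Step (ii) is immediate from the definition \eqref{mahlerdef}: writing $z_j=\exp(2\pi i\theta_j)$, replacing $\theta_j$ by $\theta_j+\tfrac12$ is a measure-preserving translation on the circle $\mathbb{R}/\mathbb{Z}$, so the integral is unchanged; doing this simultaneously for all $j$ handles $m(P(\bm z))=m(P(-\bm z))$, and the same argument with $M$ inserted gives $m(P(\bm z^M))=m(P(-\bm z^M))$. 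So the real content is the first equality, $m(P(\bm z))=m(P(\bm z^M))$.

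For that, the key step is a change of variables on the torus $(\mathbb{R}/\mathbb{Z})^n$. Parametrize $z_i=\exp(2\pi i\theta_i)$ and note that the substitution $\bm z\mapsto\bm z^M$ corresponds, on the angle variables, to the linear map $\bm\theta\mapsto M^{T}\bm\theta$ (with entries read mod $1$), because $\prod_i z_i^{M_{ij}}=\exp\!\big(2\pi i\sum_i M_{ij}\theta_i\big)$. Since $M\in\textup{GL}(n,\mathbb{Z})$ has $\det M=\pm1$, the map $\bm\theta\mapsto M^{T}\bm\theta$ is a bijection of the torus $(\mathbb{R}/\mathbb{Z})^n$ onto itself with Jacobian determinant $\pm1$, hence it preserves Lebesgue measure. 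Therefore
\begin{equation}
m(P(\bm z^M))=\int_{[0,1]^n}\log\big|P(\exp(2\pi i(M^{T}\bm\theta)_1),\dots,\exp(2\pi i(M^{T}\bm\theta)_n))\big|\,\textup{d}\bm\theta=\int_{[0,1]^n}\log|P(\exp(2\pi i\phi_1),\dots,\exp(2\pi i\phi_n))|\,\textup{d}\bm\phi=m(P),
\end{equation}
after the substitution $\bm\phi=M^{T}\bm\theta$.

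The step I expect to need the most care is the integrability/measure-theory bookkeeping: for a nonzero Laurent polynomial $P$, the function $\log|P(\exp(2\pi i\bm\theta))|$ is in $L^1$ of the torus (a standard fact, provable by induction on $n$ using Jensen's formula in one variable at a time), so both sides of \eqref{mahlerdef} are finite and the change of variables is legitimate; one should also note that $P(\bm z^M)$ is again a genuine nonzero Laurent polynomial precisely because $M^{-1}$ has integer entries, so no spurious fractional powers appear and $m(P(\bm z^M))$ is well-defined in the same class of objects. A minor additional remark is that it suffices to prove the first equality for $M$ ranging over a generating set of $\textup{GL}(n,\mathbb{Z})$ (permutation matrices, the elementary transvection, and $\mathrm{diag}(-1,1,\dots,1)$) and then compose, using that $(\bm z^M)^{N}=\bm z^{MN}$; but since the one-line change-of-variables argument already works for arbitrary $M$, this reduction is not really necessary. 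Assembling (i) and (ii) gives all three displayed equalities.
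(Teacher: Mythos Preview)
Your argument is correct and is the standard proof of this classical fact. Note, however, that the paper does not actually give its own proof of this theorem: it is stated as a known result with citations to \cite{schinzel_2000,boyd2005small}, so there is no in-paper proof to compare against. Your change-of-variables argument (the map $\bm\theta\mapsto M^{T}\bm\theta$ on $(\mathbb{R}/\mathbb{Z})^n$ has Jacobian $\pm1$ and is a bijection because $M^{-1}\in\textup{GL}(n,\mathbb{Z})$) together with the trivial half-period shift for the sign is exactly the proof one finds in those references, and your remarks on $L^1$-integrability and on $P(\bm z^M)$ remaining a Laurent polynomial are the right points to check.
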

Other than the few nice properties mentioned above, perhaps the most extraordinary about the Mahler measure is that for certain polynomials, it evaluates to special values of zeta and $L$-functions \cite{villegas1999modular,boyd2002mahler,boyd2002mahlerII}.

In our paper, we will mainly apply the expansion of the integrand and residue theorem to calculate the integral. Writing $\bm{z} := (z_1, \ldots, z_n)$, and extract the constant term of the polynomial $P$ as $k$, i.e.,
\begin{equation}\label{Pkp}
    P(\bm{z}):= k - p(\bm{z}) \ , 
\end{equation}
we have the series expansion (formally in $p$)
\begin{equation}
    \log(k-p(\bm{z}))=\log k-\sum_{n=1}^\infty\frac{p^n(\bm{z})}{n}k^{-n}.
    \label{logexpansion}
\end{equation}
Since we are integrating on the $n$-torus, we need the restriction $|k|\geq\max\limits_{\bm{z}\in T^n}|p(\bm{z})|$. This then ensures the series expansion converges uniformly on the support of the integration path and hence we are also allowed to exchange the sum and integral in our calculation. Therefore, we may write the Mahler measure as
\begin{equation}
    m(P)=\textup{Re}\left(\frac{1}{(2\pi i)^n}\int_{|z_i|=1}\log(P(z_1,\dots,z_n))\frac{\textup{d}z_1}{z_1}\dots\frac{\textup{d}z_n}{z_n}\right).\label{mahlerre}
\end{equation}
By the residue theorem and integrating over $|k|>\max\limits_{\bm{z}\in T^n}|p(\bm{z})|$, only the constant term in \eqref{logexpansion} contributes. 
Therefore, we have that 
\begin{equation}
    m(P)=\textup{Re}\left(\log k-\int_0^{k^{-1}}(u_0(t)-1)\frac{\text{d}t}{t}\right) \ ,
\end{equation}
where
\begin{equation}
    u_0(k)=\frac{1}{(2\pi i)^n}\int_{|z_i|=1}\frac{1}{1-k^{-1}p(z_1,\dots,z_n)}\frac{\textup{d}z_1}{z_1}\dots\frac{\textup{d}z_n}{z_n} \ .
    \label{u0integral}
\end{equation}
For a concrete example of the details, see Appendix \ref{exF0}.

In fact, \eqref{u0integral} means that $u_0(k)$ is a period of a holomorphic 1-form $\omega_Y$ on the curve $Y$ defined by $1-k^{-1}P=0$ \cite{griffiths1969periods,villegas1999modular}, that is, $\int_{\gamma}\omega_Y$ where $\gamma$ is a 1-cycle on $Y$. We shall also refer to this as the period of the curve $Y$. Therefore, $u_0$ also satisfies the Picard-Fuchs equation:
\begin{equation}
    A(k)\frac{\text{d}^2u_0}{\text{d}k^2}+B(k)\frac{\text{d}u_0}{\text{d}k}+C(k)u_0=0,
\end{equation}
where $A(k),B(k),C(k)$ are polynomials in $k$.

We may also extend the definition of Mahler measure.
\begin{definition}
The generalized Mahler measure extends definition \eqref{mahlerdef} to
an arbitrary torus with variable sizes $a_i$:
\bea
\label{genMahler}
m(P;a_i)= \frac{1}{(2\pi i)^n}\int_{|z_i|=a_i}\log P(z_1,\dots,z_n)\frac{\textup{d}z_1}{z_1}\dots\frac{\textup{d}z_n}{z_n}.
\eea
\end{definition}

\subsection{Dimer Models}\label{dimer}
To discuss how $P(z,w)$ and Mahler measures are related to dimer models \cite{Kenyon:2003uj}\footnote{See also \cite{Stienstra:2005wz,Stienstra:2005ns} for relevant discussions on Mahler measures and dimers, including corresponding densities on the torus as well as counting paths in dimers.}, we shall first give a quick review on dimers. Let $\mathcal{G}$ be a bipartite graph, i.e., one whose vertices can be partitioned into two separate sets, which we can colour as black and white, so that each edge connects one black and one white node. We emphasize that our graphs are simple (no multiple edges between nodes) and undirected.

A {\bf perfect matching} of $\mathcal{G}$ is a collection of edges such that each vertex is incident to exactly one edge. The {\bf dimer model} is then the study of (random) perfect matchings of $\mathcal{G}$ \cite{Kenyon:2003uj,kenyon2003introduction}. Physically, the dimer models have a nice interpretation in terms of brane systems. Hence, they are also known as {\bf brane tilings} \cite{Hanany:2005ve,Franco:2005rj,Franco:2005sm,Feng:2005gw}.
Moreover, we shall always take $\mathcal{G}$ to be $\mathbb{Z}^2$-periodic, so that it constitutes a doubly-periodic tiling of the plane.
In other words, $\mathcal{G}$ is embedded in the $\mathbb{Z}^2$ lattice\footnote{More generally, one may also consider any 2-dimensional lattice instead of $\mathbb{Z}^2$. Indeed, one can consider high-genus tilings.}. Now, the plane quotiented by $\mathbb{Z}^2$ is a torus, of genus 1, and we will use $\mathcal{G}_1 := \mathcal{G}/(\mathbb{Z}^2)$ to denote the fundamental domain of the bipartite graph. More generally, we use $\mathcal{G}_n$ to denote the quotient $\mathcal{G}/(n\mathbb{Z}^2)$, where $n\mathbb{Z}^2$ is the $n$-times enlarged fundamental domain.

Given a perfect matching $M$, we can define a unit flow $\omega$ that flows by one along each edge in $M$ from white node to black node. 
Consider a reference perfect matching $M_0$ with flow $\omega_0$, and let $\gamma$ be a path from face $f_0$ to $f_1$ in the graph. Then for any matching $q$ with flow $\omega$, the total flux of $\omega-\omega_0$ across $\gamma$ is independent of $\gamma$ and defines a height function of $M$. The difference of height functions of any two perfect matchings is independent of the choice of $M_0$. A perfect matching $M_1$ on the fundamental domain $\mathcal{G}_1$ gives a periodic perfect matching $M$ on $\mathcal{G}$. The {\bf height change} of $M_1$ is defined to be $(h_x,h_y)$ if the horizontal and vertical height changes of $M$ for one period are $h_x$ and $h_y$ respectively.

\begin{example}
A square dimer model is given in Figure \ref{dimerexample}(a). The red square indicates a fundamental domain. In fact, this is one of the toric phases for the zeroth Hirzebruch surface $\mathbb{F}_0$. In Figure \ref{dimerexample}(b), we give two example perfect matchings in the fundamental region. If we choose the green one to be our reference matching, then the blue one would have height change $(h_x,h_y)=(0,-1)$ (following the horizontal and vertical arrows).
\begin{figure}[h]
    \centering
    \begin{subfigure}{4cm}
		\centering
		\includegraphics[width=3cm]{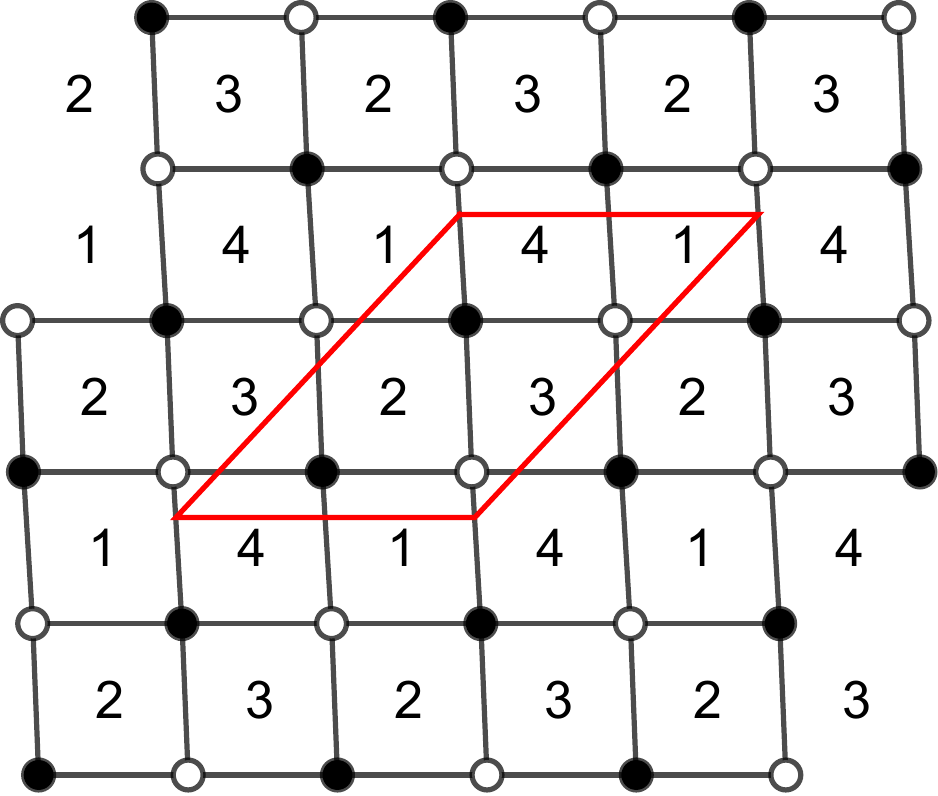}
		\caption{}
	\end{subfigure}
	\begin{subfigure}{4cm}
	\centering
\tikzset{every picture/.style={line width=0.75pt}}
\begin{tikzpicture}[x=0.75pt,y=0.75pt,yscale=-1,xscale=1]
\draw [color={rgb, 255:red, 199; green, 199; blue, 199 }  ,draw opacity=1 ]   (249.07,158.89) .. controls (286.91,146.98) and (300.12,151.75) .. (318.76,158.55) ;
\draw [color={rgb, 255:red, 199; green, 199; blue, 199 }  ,draw opacity=1 ]   (247.68,117.37) .. controls (282.73,104.78) and (295.95,108.18) .. (317.37,117.03) ;
\draw [color={rgb, 255:red, 199; green, 199; blue, 199 }  ,draw opacity=1 ]   (303.32,103.76) .. controls (314.03,102.74) and (314.03,183.06) .. (304.15,172.17) ;
\draw [color={rgb, 255:red, 199; green, 199; blue, 199 }  ,draw opacity=1 ]   (261.59,103.76) .. controls (273.69,101.38) and (271.61,191.22) .. (262.43,172.17) ;
\draw [line width=1.5]    (247.68,117.37) -- (317.37,117.03) ;
\draw [line width=1.5]    (249.07,158.89) -- (318.76,158.55) ;
\draw [line width=1.5]    (261.59,103.76) -- (262.43,172.17) ;
\draw [line width=1.5]    (303.32,103.76) -- (304.15,172.17) ;
\draw [color={rgb, 255:red, 0; green, 255; blue, 0 }  ,draw opacity=1 ][line width=1.5]    (261.83,116.97) -- (262.18,158.96) ;
\draw  [fill={rgb, 255:red, 255; green, 255; blue, 255 }  ,fill opacity=1 ] (265.53,116.65) .. controls (265.53,114.94) and (264.11,113.55) .. (262.36,113.55) .. controls (260.61,113.55) and (259.19,114.94) .. (259.19,116.65) .. controls (259.19,118.37) and (260.61,119.75) .. (262.36,119.75) .. controls (264.11,119.75) and (265.53,118.37) .. (265.53,116.65) -- cycle ;
\draw [color={rgb, 255:red, 0; green, 0; blue, 255 }  ,draw opacity=1 ][line width=1.5]    (262.36,158.86) -- (304.09,158.86) ;
\draw  [fill={rgb, 255:red, 0; green, 0; blue, 0 }  ,fill opacity=1 ] (265.53,158.86) .. controls (265.53,157.14) and (264.11,155.76) .. (262.36,155.76) .. controls (260.61,155.76) and (259.19,157.14) .. (259.19,158.86) .. controls (259.19,160.57) and (260.61,161.96) .. (262.36,161.96) .. controls (264.11,161.96) and (265.53,160.57) .. (265.53,158.86) -- cycle ;
\draw [color={rgb, 255:red, 0; green, 0; blue, 255 }  ,draw opacity=1 ][line width=1.5]    (306.56,117.33) -- (317.37,117.03) ;
\draw [color={rgb, 255:red, 0; green, 0; blue, 255 }  ,draw opacity=1 ][line width=1.5]    (247.68,117.37) -- (258.49,117.07) ;
\draw [color={rgb, 255:red, 0; green, 0; blue, 0 }  ,draw opacity=1 ][fill={rgb, 255:red, 155; green, 155; blue, 155 }  ,fill opacity=1 ]   (244.7,168.9) -- (335.2,168.24) ;
\draw [shift={(337.2,168.22)}, rotate = 539.5799999999999] [color={rgb, 255:red, 0; green, 0; blue, 0 }  ,draw opacity=1 ][line width=0.75]    (6.56,-1.97) .. controls (4.17,-0.84) and (1.99,-0.18) .. (0,0) .. controls (1.99,0.18) and (4.17,0.84) .. (6.56,1.97)   ;
\draw [color={rgb, 255:red, 0; green, 0; blue, 0 }  ,draw opacity=1 ][fill={rgb, 255:red, 155; green, 155; blue, 155 }  ,fill opacity=1 ]   (254.52,176.75) -- (253.84,97.75) ;
\draw [shift={(253.83,95.75)}, rotate = 449.51] [color={rgb, 255:red, 0; green, 0; blue, 0 }  ,draw opacity=1 ][line width=0.75]    (6.56,-1.97) .. controls (4.17,-0.84) and (1.99,-0.18) .. (0,0) .. controls (1.99,0.18) and (4.17,0.84) .. (6.56,1.97)   ;
\draw [color={rgb, 255:red, 0; green, 255; blue, 0 }  ,draw opacity=1 ][line width=1.5]    (303.74,116.86) -- (304.09,158.86) ;
\draw  [fill={rgb, 255:red, 0; green, 0; blue, 0 }  ,fill opacity=1 ] (306.56,117.33) .. controls (306.56,115.62) and (305.14,114.23) .. (303.39,114.23) .. controls (301.64,114.23) and (300.22,115.62) .. (300.22,117.33) .. controls (300.22,119.05) and (301.64,120.43) .. (303.39,120.43) .. controls (305.14,120.43) and (306.56,119.05) .. (306.56,117.33) -- cycle ;
\draw  [fill={rgb, 255:red, 255; green, 255; blue, 255 }  ,fill opacity=1 ] (307.25,158.86) .. controls (307.25,157.14) and (305.84,155.76) .. (304.09,155.76) .. controls (302.34,155.76) and (300.92,157.14) .. (300.92,158.86) .. controls (300.92,160.57) and (302.34,161.96) .. (304.09,161.96) .. controls (305.84,161.96) and (307.25,160.57) .. (307.25,158.86) -- cycle ;
\end{tikzpicture}
\caption{}
	\end{subfigure}
	\caption{An example dimer (a) and two of its perfect matchings (b).}\label{dimerexample}
\end{figure}
\end{example}

We can define a real function $\mathcal{E}(e)$ on the edges $e$ of $\mathcal{G}$. This is known as the {\bf energy} of the edges.
\begin{definition}
Given a perfect matching (or more generally, any set of edges) $M$, its energy is $\mathcal{E}(M):=\sum\limits_{e\in M}\mathcal{E}(e)$. For any edge $e$ in the graph, its {\bf edge weight} is defined to be $\textup{e}^{-\mathcal{E}(e)}$. Let $\mathcal{M}(\mathcal{G})$ be the set of perfect matchings on $\mathcal{G}$, then the {\bf partition function} of $\mathcal{M}$ is $Z(\mathcal{G}):=\sum\limits_{M\in\mathcal{M}(\mathcal{G})}\textup{e}^{-\mathcal{E}(M)}$.
\end{definition}
Given the edge weights, one can define the {\bf Kasteleyn matrix} $K$.
\begin{definition}
A Kasteleyn matrix has rows (columns) representing the white (black) nodes in $\mathcal{G}$. Its entries are the corresponding edge weights multiplied by $\pm1$ as follows. Around each face there are an odd number of edge weights multiplied by $-1$ if the face has $0~(\textup{mod }4)$ edges and an even number if it has $2~(\textup{mod }4)$ edges.
\end{definition}
It was shown in \cite{Kasteleyn1967} that this construction is always possible, and
\begin{theorem}
The absolute value of the determinant is the partition function, that is,
\begin{equation}
    |\det(K)|=Z(\mathcal{G})=\sum\limits_{M\in\mathcal{M}(\mathcal{G})}\textup{e}^{-\mathcal{E}(M)}.
\end{equation}
\end{theorem}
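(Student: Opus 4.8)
The plan is to expand $\det(K)$ by the Leibniz formula and show that, because of the sign rule in the definition of $K$, every perfect matching contributes to the expansion with one and the same overall sign. If $\mathcal{G}$ has unequal numbers of white and black nodes then $\mathcal{M}(\mathcal{G})=\emptyset$, $Z(\mathcal{G})=0$ and $K$ is not square, so assume the two colour classes have the same size. Then $\det(K)=\sum_{\sigma}\operatorname{sgn}(\sigma)\prod_{w}K_{w,\sigma(w)}$, the sum over bijections $\sigma$ from white nodes to black nodes; a term is nonzero exactly when $w$ is adjacent to $\sigma(w)$ for every $w$, in which case those edges form a perfect matching $M_\sigma$, and conversely each perfect matching arises from a unique $\sigma$. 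Writing $s(e)=\pm 1$ for the Kasteleyn sign of an edge $e$, this gives
\begin{equation}
    \det(K)=\sum_{M\in\mathcal{M}(\mathcal{G})}\epsilon(M)\,\textup{e}^{-\mathcal{E}(M)},\qquad
    \epsilon(M):=\operatorname{sgn}(\sigma_M)\prod_{e\in M}s(e)\in\{\pm1\}.
\end{equation}
Since $Z(\mathcal{G})\geq 0$, the theorem follows immediately once we show that $\epsilon(M)$ does not depend on $M$.

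To do that I would compare two arbitrary matchings $M,M'$. Their symmetric difference is a disjoint union of doubled edges (contributing nothing) and simple closed loops that alternate between $M$- and $M'$-edges; bipartiteness forces each loop to have even length. Because such ``loop flips'' connect all of $\mathcal{M}(\mathcal{G})$, it suffices to treat $M\triangle M'=C$, a single loop of length $2\ell$. Going from $M$ to $M'$ cyclically permutes the $\ell$ white vertices on $C$, so $\operatorname{sgn}(\sigma_{M'})=(-1)^{\ell-1}\operatorname{sgn}(\sigma_{M})$, whereas the edge-sign products satisfy $\bigl(\prod_{e\in M'}s(e)\bigr)\big/\bigl(\prod_{e\in M}s(e)\bigr)=\prod_{e\in C}s(e)$ since edges off $C$ cancel and $s(e)^2=1$. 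Hence $\epsilon(M')/\epsilon(M)=(-1)^{\ell-1}\prod_{e\in C}s(e)$, and everything reduces to computing $\prod_{e\in C}s(e)$.

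This last parity count is the crux and the sole place the face condition is used; I expect it to be the main obstacle. Let $D$ be the disk bounded by $C$, with $f$ bounded faces and $v$ vertices in its interior; planarity forces every $M$- or $M'$-edge at an interior vertex to lie inside $D$, so interior vertices are matched among themselves and $v$ is even. The face condition amounts to $\prod_{e\in\partial F}s(e)=(-1)^{m_F+1}$ for a face of length $2m_F$; multiplying this over all interior faces $F$, interior edges appear squared and edges of $C$ appear once, so $\prod_{e\in C}s(e)=(-1)^{\,f+\sum_F m_F}$. Now $2\sum_F m_F=\sum_F(\text{length of }F)=2E_{\mathrm{int}}+2\ell$ and Euler's formula for $D$ gives $E_{\mathrm{int}}=v+f-1$, whence $\prod_{e\in C}s(e)=(-1)^{v+\ell+1}$ and $\epsilon(M')/\epsilon(M)=(-1)^{\ell-1}(-1)^{v+\ell+1}=(-1)^{v}=1$. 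Thus $\epsilon$ is constant on $\mathcal{M}(\mathcal{G})$ and $|\det K|=Z(\mathcal{G})$. I would close with a remark that this argument presumes $C$ bounds a disk, which is automatic for a finite planar $\mathcal{G}$ but fails for homologically nontrivial loops on the torus; the periodic model therefore genuinely requires the sign-twisted variants of $K$, which are repackaged later through the Newton polynomial $P(z,w)$.
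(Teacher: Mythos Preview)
Your argument is correct and is essentially Kasteleyn's original proof. The paper, however, does not give its own proof of this statement at all: it simply records the theorem and attributes it to \cite{Kasteleyn1967}, since the result is quoted as background rather than derived. So there is nothing in the paper to compare your proof against beyond the citation.

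A couple of remarks on your write-up. First, the reduction ``it suffices to treat $M\triangle M'=C$ a single loop'' is fine, but you might say explicitly that flipping one alternating cycle at a time produces a chain of genuine perfect matchings, so that $\epsilon(M')/\epsilon(M)$ factors over the cycles. Second, your closing caveat is exactly right and worth emphasising in this paper's context: the dimers here live on $\mathcal{G}_1=\mathcal{G}/\mathbb{Z}^2$, a torus, where homologically nontrivial alternating cycles do occur. In that setting $|\det K|$ alone is \emph{not} the partition function; one needs the four spin structures, equivalently the values $\det K(\pm1,\pm1)$ of the magnetically altered matrix, which the paper introduces immediately afterwards and repackages into the Newton polynomial $P(z,w)$. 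So the theorem as stated should be read as the planar Kasteleyn theorem, with the toroidal refinement deferred to the $K(z,w)$ discussion.
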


\paragraph{Newton polygons} As $\mathcal{G}_1$ is embedded on a torus, let $\gamma_x$ and $\gamma_y$ be paths winding horizontally and vertically around the torus. Then we can multiply an edge weight by $z$ (or $z^{-1}$) if the $\gamma_x$ crosses the edge with the black node on its left (or right). Likewise, we multiply an edge weight by $w^{\pm1}$ if $\gamma_y$ crosses the edge. This leads to the ``magnetically altered'' Kasteleyn matrix $K(z,w)$ \cite{Kenyon:2003uj}. We may then construct a Laurent polynomial from this.
\begin{definition}
The {\bf Newton} or {\bf charateristic polynomial} of $\mathcal{G}$ is $P(z,w):=\det(K(z,w))$ in formal complex variables $z,w$. It defines a so-called {\bf spectral curve} $P(z,w)=0$, as a Riemann surface.
\end{definition}
For each monomial $c_{(m,n)}z^mw^n$ in $P(z,w)$ with coefficient $c_{(m,n)}$, we can associate a point $(m,n)$ on the lattice. These points form a lattice polygon known as the {\bf Newton polygon}. In toric geometry, every (convex) lattice polygon gives rise to a non-compact toric variety which is a Gorenstein singularity of (complex) dimension 3 \cite{hartshorne1977algebraic,cox2011toric}. Hence, the Newton polygon is also known as the toric diagram. In particular, each vertex/corner point in the polygon is associated with a toric divisor of the Gorenstein singularity.

\paragraph{Quivers} For each consistent brane tiling \cite{Gulotta:2008ef}, we can construct a quiver for the gauge theory from its dual graph (see quick review in \cite{He:2016fnb}). Each face in the tiling corresponds to a unitary gauge node in the quiver diagram. The gauge nodes are connected by arrows representing supermultiplets. These arrows $X_I$ are graph dual to the edges $e_I$ in the tiling. Each white or black node yields a superpotential term $\prod\limits_IX_I$. This term has a positive (or negative) sign if the arrows $X_I$ surround a white (or black) node. In Type IIB brane system, this is the worldvolume theory of a stack of D3-branes probing a Gorenstein singularity $\mathcal{X}$. The Gorenstein singularity is exactly the one encoded by the corresponding toric diagram. There is also a Type IIA picture of the tiling, and we will review it in \S\ref{crystal}.

\begin{example}
Let us again consider the dimer in Figure \ref{dimerexample}, which is reproduced in Figure \ref{dimerexample2}(a).
\begin{figure}[h]
    \centering
    \begin{subfigure}{4cm}
		\centering
		\tikzset{every picture/.style={line width=0.75pt}}      
\begin{tikzpicture}[x=0.75pt,y=0.75pt,yscale=-1,xscale=1]
\draw [color={rgb, 255:red, 199; green, 199; blue, 199 }  ,draw opacity=1 ]   (86.07,152.39) .. controls (123.91,140.48) and (137.12,145.25) .. (155.76,152.05) ;
\draw [color={rgb, 255:red, 199; green, 199; blue, 199 }  ,draw opacity=1 ]   (84.68,110.87) .. controls (119.73,98.28) and (132.95,101.68) .. (154.37,110.53) ;
\draw [color={rgb, 255:red, 199; green, 199; blue, 199 }  ,draw opacity=1 ]   (140.32,97.26) .. controls (151.03,96.24) and (151.03,176.56) .. (141.15,165.67) ;
\draw [color={rgb, 255:red, 199; green, 199; blue, 199 }  ,draw opacity=1 ]   (98.59,97.26) .. controls (110.69,94.88) and (108.61,184.72) .. (99.43,165.67) ;
\draw [line width=1.5]    (84.68,110.87) -- (154.37,110.53) ;
\draw [line width=1.5]    (86.07,152.39) -- (155.76,152.05) ;
\draw [line width=1.5]    (98.59,97.26) -- (99.43,165.67) ;
\draw [line width=1.5]    (140.32,97.26) -- (141.15,165.67) ;
\draw  [fill={rgb, 255:red, 255; green, 255; blue, 255 }  ,fill opacity=1 ] (102.53,110.15) .. controls (102.53,108.44) and (101.11,107.05) .. (99.36,107.05) .. controls (97.61,107.05) and (96.19,108.44) .. (96.19,110.15) .. controls (96.19,111.87) and (97.61,113.25) .. (99.36,113.25) .. controls (101.11,113.25) and (102.53,111.87) .. (102.53,110.15) -- cycle ;
\draw  [fill={rgb, 255:red, 255; green, 255; blue, 255 }  ,fill opacity=1 ] (144.25,152.36) .. controls (144.25,150.64) and (142.84,149.26) .. (141.09,149.26) .. controls (139.34,149.26) and (137.92,150.64) .. (137.92,152.36) .. controls (137.92,154.07) and (139.34,155.46) .. (141.09,155.46) .. controls (142.84,155.46) and (144.25,154.07) .. (144.25,152.36) -- cycle ;
\draw  [fill={rgb, 255:red, 0; green, 0; blue, 0 }  ,fill opacity=1 ] (143.56,110.83) .. controls (143.56,109.12) and (142.14,107.73) .. (140.39,107.73) .. controls (138.64,107.73) and (137.22,109.12) .. (137.22,110.83) .. controls (137.22,112.55) and (138.64,113.93) .. (140.39,113.93) .. controls (142.14,113.93) and (143.56,112.55) .. (143.56,110.83) -- cycle ;
\draw  [fill={rgb, 255:red, 0; green, 0; blue, 0 }  ,fill opacity=1 ] (102.53,152.36) .. controls (102.53,150.64) and (101.11,149.26) .. (99.36,149.26) .. controls (97.61,149.26) and (96.19,150.64) .. (96.19,152.36) .. controls (96.19,154.07) and (97.61,155.46) .. (99.36,155.46) .. controls (101.11,155.46) and (102.53,154.07) .. (102.53,152.36) -- cycle ;
\draw [color={rgb, 255:red, 0; green, 0; blue, 0 }  ,draw opacity=1 ][fill={rgb, 255:red, 155; green, 155; blue, 155 }  ,fill opacity=1 ]   (78.7,160.9) -- (169.2,160.24) ;
\draw [shift={(171.2,160.22)}, rotate = 539.5799999999999] [color={rgb, 255:red, 0; green, 0; blue, 0 }  ,draw opacity=1 ][line width=0.75]    (6.56,-1.97) .. controls (4.17,-0.84) and (1.99,-0.18) .. (0,0) .. controls (1.99,0.18) and (4.17,0.84) .. (6.56,1.97)   ;
\draw [color={rgb, 255:red, 0; green, 0; blue, 0 }  ,draw opacity=1 ][fill={rgb, 255:red, 155; green, 155; blue, 155 }  ,fill opacity=1 ]   (90.52,169.75) -- (89.84,90.75) ;
\draw [shift={(89.83,88.75)}, rotate = 449.51] [color={rgb, 255:red, 0; green, 0; blue, 0 }  ,draw opacity=1 ][line width=0.75]    (6.56,-1.97) .. controls (4.17,-0.84) and (1.99,-0.18) .. (0,0) .. controls (1.99,0.18) and (4.17,0.84) .. (6.56,1.97)   ;
\draw (99,123.5) node [anchor=north west][inner sep=0.75pt]  [font=\tiny] [align=left] {{\tiny 1}};
\draw (141.5,124.5) node [anchor=north west][inner sep=0.75pt]  [font=\tiny] [align=left] {2};
\draw (115,97) node [anchor=north west][inner sep=0.75pt]  [font=\tiny] [align=left] {3};
\draw (115,138) node [anchor=north west][inner sep=0.75pt]  [font=\tiny] [align=left] {4};
\draw (100.5,163) node [anchor=north west][inner sep=0.75pt]  [font=\tiny] [align=left] {5};
\draw (143,163) node [anchor=north west][inner sep=0.75pt]  [font=\tiny] [align=left] {6};
\draw (149,96.5) node [anchor=north west][inner sep=0.75pt]  [font=\tiny] [align=left] {7};
\draw (150,137.5) node [anchor=north west][inner sep=0.75pt]  [font=\tiny] [align=left] {8};
\end{tikzpicture}
		\caption{}
	\end{subfigure}
\begin{subfigure}{4cm}
    \centering
    \tikzset{every picture/.style={line width=0.75pt}}
\begin{tikzpicture}[x=0.75pt,y=0.75pt,yscale=-1,xscale=1]
\draw  [draw opacity=0] (259.5,59.7) -- (299.5,59.7) -- (299.5,99.7) -- (259.5,99.7) -- cycle ; \draw  [color={rgb, 255:red, 155; green, 155; blue, 155 }  ,draw opacity=1 ] (279.5,59.7) -- (279.5,99.7) ; \draw  [color={rgb, 255:red, 155; green, 155; blue, 155 }  ,draw opacity=1 ] (259.5,79.7) -- (299.5,79.7) ; \draw  [color={rgb, 255:red, 155; green, 155; blue, 155 }  ,draw opacity=1 ] (259.5,59.7) -- (299.5,59.7) -- (299.5,99.7) -- (259.5,99.7) -- cycle ;
\draw [line width=1.5]    (299.5,79.7) -- (279.5,59.7) ;
\draw [line width=1.5]    (259.5,79.7) -- (279.5,59.7) ;
\draw [line width=1.5]    (299.5,79.7) -- (279.5,99.7) ;
\draw  [fill={rgb, 255:red, 0; green, 0; blue, 0 }  ,fill opacity=1 ] (277.4,99.7) .. controls (277.4,98.54) and (278.34,97.6) .. (279.5,97.6) .. controls (280.66,97.6) and (281.6,98.54) .. (281.6,99.7) .. controls (281.6,100.86) and (280.66,101.8) .. (279.5,101.8) .. controls (278.34,101.8) and (277.4,100.86) .. (277.4,99.7) -- cycle ;
\draw  [fill={rgb, 255:red, 0; green, 0; blue, 0 }  ,fill opacity=1 ] (257.4,79.7) .. controls (257.4,78.54) and (258.34,77.6) .. (259.5,77.6) .. controls (260.66,77.6) and (261.6,78.54) .. (261.6,79.7) .. controls (261.6,80.86) and (260.66,81.8) .. (259.5,81.8) .. controls (258.34,81.8) and (257.4,80.86) .. (257.4,79.7) -- cycle ;
\draw  [fill={rgb, 255:red, 0; green, 0; blue, 0 }  ,fill opacity=1 ] (277.4,79.7) .. controls (277.4,78.54) and (278.34,77.6) .. (279.5,77.6) .. controls (280.66,77.6) and (281.6,78.54) .. (281.6,79.7) .. controls (281.6,80.86) and (280.66,81.8) .. (279.5,81.8) .. controls (278.34,81.8) and (277.4,80.86) .. (277.4,79.7) -- cycle ;
\draw  [fill={rgb, 255:red, 0; green, 0; blue, 0 }  ,fill opacity=1 ] (297.4,79.7) .. controls (297.4,78.54) and (298.34,77.6) .. (299.5,77.6) .. controls (300.66,77.6) and (301.6,78.54) .. (301.6,79.7) .. controls (301.6,80.86) and (300.66,81.8) .. (299.5,81.8) .. controls (298.34,81.8) and (297.4,80.86) .. (297.4,79.7) -- cycle ;
\draw  [fill={rgb, 255:red, 0; green, 0; blue, 0 }  ,fill opacity=1 ] (277.4,59.7) .. controls (277.4,58.54) and (278.34,57.6) .. (279.5,57.6) .. controls (280.66,57.6) and (281.6,58.54) .. (281.6,59.7) .. controls (281.6,60.86) and (280.66,61.8) .. (279.5,61.8) .. controls (278.34,61.8) and (277.4,60.86) .. (277.4,59.7) -- cycle ;
\draw [line width=1.5]    (279.5,99.7) -- (259.5,79.7) ;
\end{tikzpicture}
\caption{}
\end{subfigure}
	\begin{subfigure}{4cm}
		\centering
		\includegraphics[width=2.5cm]{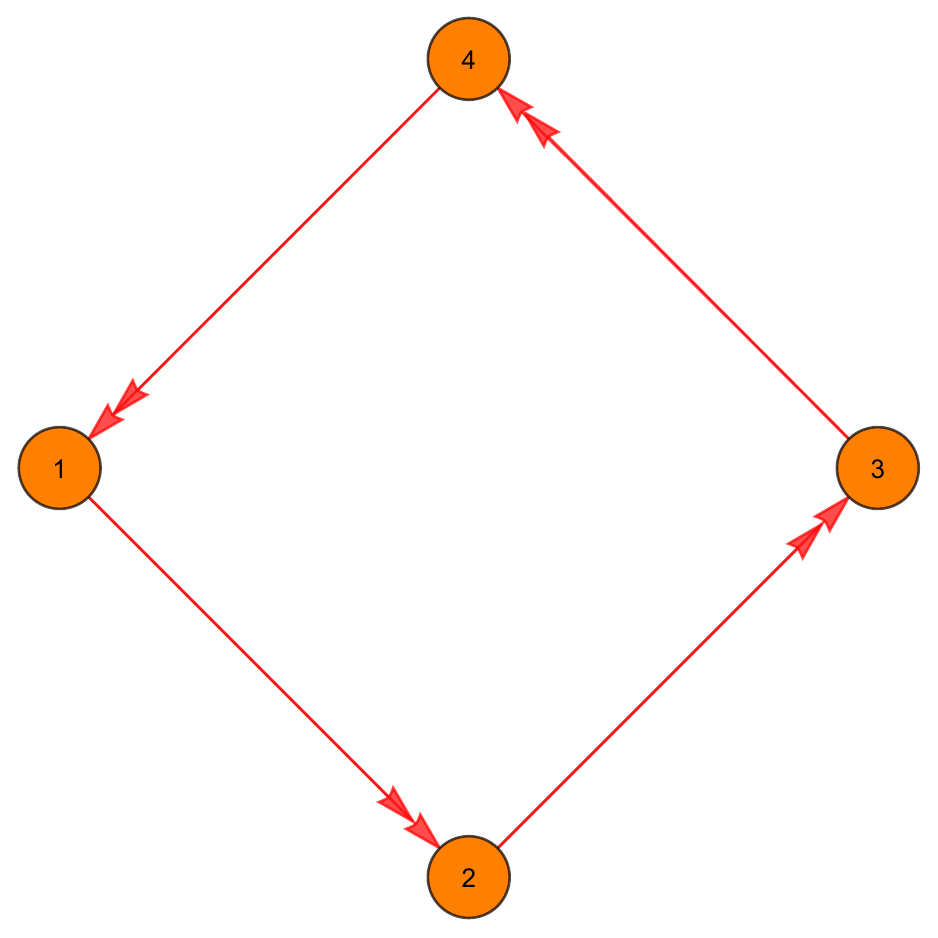}
		\caption{}
	\end{subfigure}
    \caption{(a) The dimer model. (b) The toric diagram. (c) The quiver diagram.}\label{dimerexample2}
\end{figure}
The fundamental region is the square where the numbers are the labels of the edges (rather than weights). Let us take the weight of each edge to be $\sqrt{2}$ (for the reason to be discussed shortly). Now consider for instance the vertex correpsonding to the monomial $z$ in the Newton polygon. Its perfect matching is composed of $X_2,X_5$ where $X_I$ is the arrow dual to edge $I$ \cite{Hanany:2012hi}. Therefore, this gives rise to $(-1)\times\sqrt{2}\times\sqrt{2}z=-2z$ in the spectral curve. Overall, one may check that this agrees with the Kasteleyn matrix
\begin{equation}
    K=\begin{pmatrix}
-\sqrt{2}+\sqrt{2}z & \sqrt{2}-\sqrt{2}w \\
-\sqrt{2}+\sqrt{2}w^{-1} & -\sqrt{2}+\sqrt{2}z^{-1} 
\end{pmatrix},
\end{equation}
where the signs and variables assigned to the edges are $\{1,-1,-1,-1,z,z^{-1},-w,w^{-1}\}$ (ordered by the labelling of edges). The curve is then given by
\begin{equation}
    -2z-2z^{-1}-2w-2w^{-1}+(2+2+2+2)=0,
\end{equation}
or equivalently,
\begin{equation}
    -z-z^{-1}-w-w^{-1}+4=0.\label{isoF0curve}
\end{equation}
It is straightforward to get the Newton polygon as in Figure \ref{dimerexample2}(b). The quiver in Figure \ref{dimerexample2}(c) is the dual graph of the dimer.
\end{example}

\subsubsection{Isoradial dimers}\label{isodimer}
We have introduced some rudiments of dimer models and brane tilings. Of particular interest is the isoradial 
embedding of a dimer model.

\begin{figure}[h]
	\centering
	\tikzset{every picture/.style={line width=0.75pt}}
\begin{tikzpicture}[x=0.75pt,y=0.75pt,yscale=-1,xscale=1]
\draw  [fill={rgb, 255:red, 65; green, 117; blue, 5 }  ,fill opacity=1 ] (478.11,147.4) -- (463.67,155.74) -- (463.67,139.06) -- (478.11,130.72) -- cycle ;
\draw  [fill={rgb, 255:red, 126; green, 211; blue, 33 }  ,fill opacity=1 ] (463.74,122.26) -- (478.18,130.6) -- (463.74,138.94) -- (449.3,130.6) -- cycle ;
\draw  [fill={rgb, 255:red, 139; green, 87; blue, 42 }  ,fill opacity=1 ] (478.26,114.05) -- (478.11,130.72) -- (463.74,122.26) -- (463.89,105.59) -- cycle ;
\draw  [fill={rgb, 255:red, 248; green, 231; blue, 28 }  ,fill opacity=1 ] (492.42,122.38) -- (477.98,130.72) -- (477.98,114.05) -- (492.42,105.71) -- cycle ;
\draw  [fill={rgb, 255:red, 245; green, 166; blue, 35 }  ,fill opacity=1 ] (492.55,122.39) -- (506.99,130.72) -- (492.55,139.06) -- (478.11,130.72) -- cycle ;
\draw  [fill={rgb, 255:red, 0; green, 0; blue, 255 }  ,fill opacity=1 ] (382.48,129.85) -- (396.92,121.51) -- (396.92,138.19) -- (382.48,146.53) -- cycle ;
\draw  [fill={rgb, 255:red, 0; green, 100; blue, 0 }  ,fill opacity=1 ] (382.48,129.85) -- (382.34,146.52) -- (367.97,138.06) -- (368.11,121.39) -- cycle ;
\draw  [fill={rgb, 255:red, 255; green, 0; blue, 0 }  ,fill opacity=1 ] (382.45,113.06) -- (396.86,121.45) -- (382.39,129.74) -- (367.98,121.34) -- cycle ;
\draw    (382.41,146.41) -- (382.42,129.74) ;
\draw    (382.42,129.74) -- (396.86,121.41) ;
\draw    (367.98,121.39) -- (382.42,129.74) ;
\draw    (118.2,137.11) -- (165.93,109.6) ;
\draw    (165.93,109.6) -- (213.63,137.18) ;
\draw    (165.9,164.69) -- (213.63,137.18) ;
\draw    (118.2,137.11) -- (165.9,164.69) ;
\draw   (118.17,192.2) -- (70.47,164.63) -- (70.51,109.53) -- (118.24,82.02) -- (165.93,109.6) -- (165.9,164.69) -- cycle ;
\draw   (213.59,192.27) -- (165.9,164.69) -- (165.93,109.6) -- (213.67,82.09) -- (261.36,109.67) -- (261.32,164.76) -- cycle ;
\draw  [fill={rgb, 255:red, 0; green, 0; blue, 0 }  ,fill opacity=1 ] (73.9,109.53) .. controls (73.9,107.66) and (72.39,106.14) .. (70.51,106.14) .. controls (68.64,106.14) and (67.12,107.66) .. (67.12,109.53) .. controls (67.12,111.41) and (68.64,112.92) .. (70.51,112.92) .. controls (72.39,112.92) and (73.9,111.41) .. (73.9,109.53) -- cycle ;
\draw  [fill={rgb, 255:red, 0; green, 0; blue, 0 }  ,fill opacity=1 ] (121.56,192.2) .. controls (121.56,190.33) and (120.04,188.81) .. (118.17,188.81) .. controls (116.29,188.81) and (114.77,190.33) .. (114.77,192.2) .. controls (114.77,194.08) and (116.29,195.6) .. (118.17,195.6) .. controls (120.04,195.6) and (121.56,194.08) .. (121.56,192.2) -- cycle ;
\draw  [fill={rgb, 255:red, 0; green, 0; blue, 0 }  ,fill opacity=1 ] (216.98,192.27) .. controls (216.98,190.4) and (215.46,188.88) .. (213.59,188.88) .. controls (211.72,188.88) and (210.2,190.4) .. (210.2,192.27) .. controls (210.2,194.14) and (211.72,195.66) .. (213.59,195.66) .. controls (215.46,195.66) and (216.98,194.14) .. (216.98,192.27) -- cycle ;
\draw  [fill={rgb, 255:red, 0; green, 0; blue, 0 }  ,fill opacity=1 ] (264.75,109.67) .. controls (264.75,107.79) and (263.23,106.27) .. (261.36,106.27) .. controls (259.48,106.27) and (257.97,107.79) .. (257.97,109.67) .. controls (257.97,111.54) and (259.48,113.06) .. (261.36,113.06) .. controls (263.23,113.06) and (264.75,111.54) .. (264.75,109.67) -- cycle ;
\draw  [fill={rgb, 255:red, 255; green, 255; blue, 255 }  ,fill opacity=1 ] (121.63,82.02) .. controls (121.63,80.15) and (120.12,78.63) .. (118.24,78.63) .. controls (116.37,78.63) and (114.85,80.15) .. (114.85,82.02) .. controls (114.85,83.89) and (116.37,85.41) .. (118.24,85.41) .. controls (120.12,85.41) and (121.63,83.89) .. (121.63,82.02) -- cycle ;
\draw  [fill={rgb, 255:red, 255; green, 255; blue, 255 }  ,fill opacity=1 ] (73.87,164.63) .. controls (73.87,162.75) and (72.35,161.23) .. (70.47,161.23) .. controls (68.6,161.23) and (67.08,162.75) .. (67.08,164.63) .. controls (67.08,166.5) and (68.6,168.02) .. (70.47,168.02) .. controls (72.35,168.02) and (73.87,166.5) .. (73.87,164.63) -- cycle ;
\draw  [fill={rgb, 255:red, 255; green, 255; blue, 255 }  ,fill opacity=1 ] (264.71,164.76) .. controls (264.71,162.89) and (263.19,161.37) .. (261.32,161.37) .. controls (259.45,161.37) and (257.93,162.89) .. (257.93,164.76) .. controls (257.93,166.63) and (259.45,168.15) .. (261.32,168.15) .. controls (263.19,168.15) and (264.71,166.63) .. (264.71,164.76) -- cycle ;
\draw  [fill={rgb, 255:red, 255; green, 255; blue, 255 }  ,fill opacity=1 ] (217.06,82.09) .. controls (217.06,80.21) and (215.54,78.7) .. (213.67,78.7) .. controls (211.79,78.7) and (210.27,80.21) .. (210.27,82.09) .. controls (210.27,83.96) and (211.79,85.48) .. (213.67,85.48) .. controls (215.54,85.48) and (217.06,83.96) .. (217.06,82.09) -- cycle ;
\draw  [dash pattern={on 0.84pt off 2.51pt}]  (118.2,137.11) -- (213.63,137.18) ;
\draw  [draw opacity=0] (157.99,160.66) .. controls (157.93,160.08) and (157.95,159.48) .. (158.06,158.88) .. controls (158.62,155.88) and (161.28,153.85) .. (163.99,154.36) .. controls (164.64,154.48) and (165.23,154.73) .. (165.75,155.09) -- (162.98,159.79) -- cycle ; \draw   (157.99,160.66) .. controls (157.93,160.08) and (157.95,159.48) .. (158.06,158.88) .. controls (158.62,155.88) and (161.28,153.85) .. (163.99,154.36) .. controls (164.64,154.48) and (165.23,154.73) .. (165.75,155.09) ;
\draw [color={rgb, 255:red, 255; green, 0; blue, 0 }  ,draw opacity=1 ][line width=1.5]    (165.93,109.6) -- (165.9,164.69) ;
\draw  [fill={rgb, 255:red, 0; green, 0; blue, 0 }  ,fill opacity=1 ] (169.33,109.6) .. controls (169.33,107.73) and (167.81,106.21) .. (165.93,106.21) .. controls (164.06,106.21) and (162.54,107.73) .. (162.54,109.6) .. controls (162.54,111.47) and (164.06,112.99) .. (165.93,112.99) .. controls (167.81,112.99) and (169.33,111.47) .. (169.33,109.6) -- cycle ;
\draw  [fill={rgb, 255:red, 255; green, 255; blue, 255 }  ,fill opacity=1 ] (169.29,164.69) .. controls (169.29,162.82) and (167.77,161.3) .. (165.9,161.3) .. controls (164.02,161.3) and (162.5,162.82) .. (162.5,164.69) .. controls (162.5,166.57) and (164.02,168.08) .. (165.9,168.08) .. controls (167.77,168.08) and (169.29,166.57) .. (169.29,164.69) -- cycle ;
\draw   (367.97,138.06) -- (353.53,129.72) -- (353.55,113.04) -- (367.99,104.71) -- (382.43,113.06) -- (382.42,129.74) -- cycle ;
\draw   (396.85,138.08) -- (382.42,129.74) -- (382.43,113.06) -- (396.88,104.73) -- (411.31,113.08) -- (411.3,129.76) -- cycle ;
\draw   (382.39,163.09) -- (367.96,154.74) -- (367.97,138.06) -- (382.42,129.74) -- (396.85,138.08) -- (396.84,154.76) -- cycle ;
\draw  [fill={rgb, 255:red, 255; green, 255; blue, 255 }  ,fill opacity=1 ] (369.69,104.71) .. controls (369.69,103.78) and (368.93,103.02) .. (367.99,103.02) .. controls (367.06,103.02) and (366.3,103.78) .. (366.3,104.71) .. controls (366.3,105.65) and (367.06,106.41) .. (367.99,106.41) .. controls (368.93,106.41) and (369.69,105.65) .. (369.69,104.71) -- cycle ;
\draw  [fill={rgb, 255:red, 255; green, 255; blue, 255 }  ,fill opacity=1 ] (355.23,129.72) .. controls (355.23,128.78) and (354.47,128.02) .. (353.53,128.02) .. controls (352.6,128.02) and (351.84,128.78) .. (351.84,129.72) .. controls (351.84,130.65) and (352.6,131.41) .. (353.53,131.41) .. controls (354.47,131.41) and (355.23,130.65) .. (355.23,129.72) -- cycle ;
\draw  [fill={rgb, 255:red, 255; green, 255; blue, 255 }  ,fill opacity=1 ] (384.11,129.74) .. controls (384.11,128.8) and (383.35,128.04) .. (382.42,128.04) .. controls (381.48,128.04) and (380.72,128.8) .. (380.72,129.74) .. controls (380.72,130.67) and (381.48,131.43) .. (382.42,131.43) .. controls (383.35,131.43) and (384.11,130.67) .. (384.11,129.74) -- cycle ;
\draw  [fill={rgb, 255:red, 255; green, 255; blue, 255 }  ,fill opacity=1 ] (412.99,129.76) .. controls (412.99,128.82) and (412.24,128.06) .. (411.3,128.06) .. controls (410.36,128.06) and (409.6,128.82) .. (409.6,129.76) .. controls (409.6,130.69) and (410.36,131.45) .. (411.3,131.45) .. controls (412.24,131.45) and (412.99,130.69) .. (412.99,129.76) -- cycle ;
\draw  [fill={rgb, 255:red, 255; green, 255; blue, 255 }  ,fill opacity=1 ] (369.65,154.74) .. controls (369.65,153.8) and (368.89,153.04) .. (367.96,153.04) .. controls (367.02,153.04) and (366.26,153.8) .. (366.26,154.74) .. controls (366.26,155.68) and (367.02,156.43) .. (367.96,156.43) .. controls (368.89,156.43) and (369.65,155.68) .. (369.65,154.74) -- cycle ;
\draw  [fill={rgb, 255:red, 255; green, 255; blue, 255 }  ,fill opacity=1 ] (398.54,154.76) .. controls (398.54,153.82) and (397.78,153.06) .. (396.84,153.06) .. controls (395.9,153.06) and (395.14,153.82) .. (395.14,154.76) .. controls (395.14,155.7) and (395.9,156.45) .. (396.84,156.45) .. controls (397.78,156.45) and (398.54,155.7) .. (398.54,154.76) -- cycle ;
\draw  [fill={rgb, 255:red, 255; green, 255; blue, 255 }  ,fill opacity=1 ] (398.57,104.73) .. controls (398.57,103.8) and (397.81,103.04) .. (396.88,103.04) .. controls (395.94,103.04) and (395.18,103.8) .. (395.18,104.73) .. controls (395.18,105.67) and (395.94,106.43) .. (396.88,106.43) .. controls (397.81,106.43) and (398.57,105.67) .. (398.57,104.73) -- cycle ;
\draw  [fill={rgb, 255:red, 0; green, 0; blue, 0 }  ,fill opacity=1 ] (355.24,113.04) .. controls (355.24,112.1) and (354.48,111.35) .. (353.55,111.35) .. controls (352.61,111.35) and (351.85,112.1) .. (351.85,113.04) .. controls (351.85,113.98) and (352.61,114.74) .. (353.55,114.74) .. controls (354.48,114.74) and (355.24,113.98) .. (355.24,113.04) -- cycle ;
\draw  [fill={rgb, 255:red, 0; green, 0; blue, 0 }  ,fill opacity=1 ] (384.12,113.06) .. controls (384.12,112.12) and (383.37,111.37) .. (382.43,111.37) .. controls (381.49,111.37) and (380.73,112.12) .. (380.73,113.06) .. controls (380.73,114) and (381.49,114.76) .. (382.43,114.76) .. controls (383.37,114.76) and (384.12,114) .. (384.12,113.06) -- cycle ;
\draw  [fill={rgb, 255:red, 0; green, 0; blue, 0 }  ,fill opacity=1 ] (413.01,113.08) .. controls (413.01,112.15) and (412.25,111.39) .. (411.31,111.39) .. controls (410.37,111.39) and (409.61,112.15) .. (409.61,113.08) .. controls (409.61,114.02) and (410.37,114.78) .. (411.31,114.78) .. controls (412.25,114.78) and (413.01,114.02) .. (413.01,113.08) -- cycle ;
\draw  [fill={rgb, 255:red, 0; green, 0; blue, 0 }  ,fill opacity=1 ] (369.67,138.06) .. controls (369.67,137.13) and (368.91,136.37) .. (367.97,136.37) .. controls (367.03,136.37) and (366.27,137.13) .. (366.27,138.06) .. controls (366.27,139) and (367.03,139.76) .. (367.97,139.76) .. controls (368.91,139.76) and (369.67,139) .. (369.67,138.06) -- cycle ;
\draw  [fill={rgb, 255:red, 0; green, 0; blue, 0 }  ,fill opacity=1 ] (398.55,138.08) .. controls (398.55,137.15) and (397.79,136.39) .. (396.85,136.39) .. controls (395.92,136.39) and (395.16,137.15) .. (395.16,138.08) .. controls (395.16,139.02) and (395.92,139.78) .. (396.85,139.78) .. controls (397.79,139.78) and (398.55,139.02) .. (398.55,138.08) -- cycle ;
\draw  [fill={rgb, 255:red, 0; green, 0; blue, 0 }  ,fill opacity=1 ] (384.09,163.09) .. controls (384.09,162.15) and (383.33,161.39) .. (382.39,161.39) .. controls (381.46,161.39) and (380.7,162.15) .. (380.7,163.09) .. controls (380.7,164.02) and (381.46,164.78) .. (382.39,164.78) .. controls (383.33,164.78) and (384.09,164.02) .. (384.09,163.09) -- cycle ;
\draw  [fill={rgb, 255:red, 208; green, 2; blue, 27 }  ,fill opacity=1 ] (492.48,139.18) -- (492.34,155.86) -- (477.97,147.4) -- (478.11,130.72) -- cycle ;
\draw   (477.97,147.4) -- (463.53,139.05) -- (463.55,122.37) -- (477.99,114.05) -- (492.43,122.39) -- (492.42,139.07) -- cycle ;
\draw  [fill={rgb, 255:red, 255; green, 255; blue, 255 }  ,fill opacity=1 ] (479.69,114.05) .. controls (479.69,113.11) and (478.93,112.35) .. (477.99,112.35) .. controls (477.06,112.35) and (476.3,113.11) .. (476.3,114.05) .. controls (476.3,114.98) and (477.06,115.74) .. (477.99,115.74) .. controls (478.93,115.74) and (479.69,114.98) .. (479.69,114.05) -- cycle ;
\draw  [fill={rgb, 255:red, 255; green, 255; blue, 255 }  ,fill opacity=1 ] (465.23,139.05) .. controls (465.23,138.11) and (464.47,137.35) .. (463.53,137.35) .. controls (462.6,137.35) and (461.84,138.11) .. (461.84,139.05) .. controls (461.84,139.99) and (462.6,140.75) .. (463.53,140.75) .. controls (464.47,140.75) and (465.23,139.99) .. (465.23,139.05) -- cycle ;
\draw  [fill={rgb, 255:red, 255; green, 255; blue, 255 }  ,fill opacity=1 ] (494.11,139.07) .. controls (494.11,138.13) and (493.35,137.37) .. (492.42,137.37) .. controls (491.48,137.37) and (490.72,138.13) .. (490.72,139.07) .. controls (490.72,140.01) and (491.48,140.77) .. (492.42,140.77) .. controls (493.35,140.77) and (494.11,140.01) .. (494.11,139.07) -- cycle ;
\draw  [fill={rgb, 255:red, 0; green, 0; blue, 0 }  ,fill opacity=1 ] (465.24,122.37) .. controls (465.24,121.44) and (464.48,120.68) .. (463.55,120.68) .. controls (462.61,120.68) and (461.85,121.44) .. (461.85,122.37) .. controls (461.85,123.31) and (462.61,124.07) .. (463.55,124.07) .. controls (464.48,124.07) and (465.24,123.31) .. (465.24,122.37) -- cycle ;
\draw  [fill={rgb, 255:red, 0; green, 0; blue, 0 }  ,fill opacity=1 ] (494.12,122.39) .. controls (494.12,121.46) and (493.37,120.7) .. (492.43,120.7) .. controls (491.49,120.7) and (490.73,121.46) .. (490.73,122.39) .. controls (490.73,123.33) and (491.49,124.09) .. (492.43,124.09) .. controls (493.37,124.09) and (494.12,123.33) .. (494.12,122.39) -- cycle ;
\draw  [fill={rgb, 255:red, 0; green, 0; blue, 0 }  ,fill opacity=1 ] (479.67,147.4) .. controls (479.67,146.46) and (478.91,145.7) .. (477.97,145.7) .. controls (477.03,145.7) and (476.27,146.46) .. (476.27,147.4) .. controls (476.27,148.33) and (477.03,149.09) .. (477.97,149.09) .. controls (478.91,149.09) and (479.67,148.33) .. (479.67,147.4) -- cycle ;
\draw (151,144.5) node [anchor=north west][inner sep=0.75pt]  [font=\fontsize{0.07em}{0.08em}\selectfont] [align=left] {{\fontsize{0.07em}{0.08em}\selectfont $\theta_I$}};
\draw (158,233.67) node [anchor=north west][inner sep=0.75pt]   [align=left] {(a)};
\draw (424,232.33) node [anchor=north west][inner sep=0.75pt]   [align=left] {(b)};
\end{tikzpicture}
    \caption{(a) The edge in red has length $l$. Its dashed dual edge is of length $\sqrt{4-l^2}$ which equals the weight of the edge. The rhombus angle is labelled by $\theta_I$. The corresponding internal angle $2\theta_I$ of the rombus gives the R-charge physically. (b) The left plot indicates $\sum2\theta_I=2\pi$ which corresponds to each superpotential term while the right plot indicates $\sum(\pi-2\theta_I)=2\pi$ which correpsonds to (the fields connected to) each node in the quiver.}\label{theta}
\end{figure}
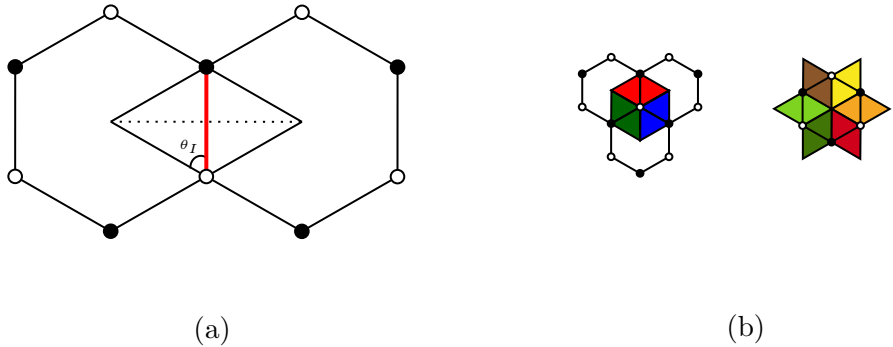

\begin{definition}
A dimer is {\bf isoradial} if every face is inscribed in a circle of the same radius, which we can take to be $1$. In this paper, we will mostly choose the weight of an edge to be $\sqrt{4-l^2}$ for an isoradial dimer where $l$ is the length of the edge.
\end{definition}
The reason for this choice is that the edge weight is equal to the distance of the circumcentres of the two faces adjacent to the edge (i.e., its dual, perpendicular, edge).
We illustrate this in Figure \ref{theta}(a) in a hexagonal tiling example.
As we will see shortly, we can always construct a spectral curve of certain kind (a so-called genus 0 Harnack curve) from such isoradial embedding of a dimer. According to \cite{Kenyon_2002}, this choice of edge weight is \emph{critical} in the sense that it uniquely maximizes the (normalized) determinant of Dirac operator. We will also later see that it is closely related to the Mahler measure for isoradial embeddings.

We can also express this edge weight in terms of the {\bf rhombus angle} $\theta_I$, as shown in Figure \ref{theta}(a). In our convention, $2\theta_I$ is the angle of the rhombus at the vertex that has in common with the edge. 
It is easy to see that our chosen edge weight is 
$\sqrt{4 - l^2} = 2\sin(\theta_I)$. In other words, $l=2\cos(\theta_I)$. The energy function associated to this edge $e_I$, recalling that edge weight is $e^{-\mathcal{E}}$, is then $\mathcal{E}(e_I)=-\log(2\sin(\theta_I))$.

Now, the internal angle $2\theta_I$ is essentially the R-charge of the corresponding chiral multiplet in the dual quiver gauge theory \cite{Franco:2006gc}: for a field $X_I$ with R-charge $R_I$,
\begin{equation}
    2\theta_I=\pi R_I.
\end{equation}
Indeed, we have (i), that $\sum2\theta_I=2\pi$, which is the geometric recasting of the condition on R-charges from the vanishing $\beta$-function, that $\sum R_I = 2$.
Likewise, we have (ii), that $\sum(\pi-2\theta_I)=2\pi$.
Notice the difference between the two sums: (i) is a sum over the angles whose edges are connected to the same black or white node while the (ii) is a sum of angles in the same face.
We depict this in Figure \ref{theta}(b), where a coloured rhombus has rhombus angle $\theta_I$. The left plot represents (i), a sum over the parts of rhombi surrounding a \emph{vertex} (drawn as white in the middle). Every such sum corresponds to a term in the superpotential. 
The right plot represents (ii), a sum over the part of rhombi surrounding 
the (circum)centre of the dimer \emph{face}. Each contributes an angle of $(\pi-2\theta_I)$ so that $\sum(\pi-2\theta_I)=2\pi$. Every such sum corresponds to arrows attached to a node in the dual quiver.

\begin{example}
Recall the dimer in Figure \ref{dimerexample2}.  Since the rhombus angles are all $\pi/4$, each edge weight equals $2\sin(\pi/4)=\sqrt{2}$.
\end{example}

\paragraph{Isoradial Spectral Curve and GLSM Fields}
It is straightforward to obtain the spectral curve in terms of the rhombus angles from Kasteleyn matrix. When taking determinant, each term we get is simply a product of edge weights $2\sin(\theta_I)$ contributed from the corresponding rhombus angles/R-charges\footnote{Note that so far by R-charges, we mean all possible trial R-charges that satisfy the conformality condition. In other words, the rhombus angles are still variables in the spectral curve. We will determine their exact values (and hence exact coefficients for the curve) in \S\ref{isomax}.}. Therefore, we need to figure out which edges contribute to each monomial in the Newton polynomial.
This can be seen from the perfect matching(s) associated to each lattice point in the Newton polygon. Physically, each perfect matching can be interpreted as a gauged linear sigma model (GLSM) field. Every lattice point in the Newton polygon is associated with one or more GLSM fields.

For a vertex/corner point $v_i$, we only have one corresponding GLSM field $p_i$. It can be written as $p_i=\sum\limits_IX_I$ where $X_I$'s are the arrows in the quiver. Recall that $X_I$'s are arrows dual to the edges $e_I$ in the dimer. When computing $\det(K)$, we would get the monomial corresponding to $v_i$ as a product of these $e_I$'s. Since they have weights $2\sin(\theta_I)$, this gives the term
\begin{equation}
    (-1)^{\delta}\prod_I2\sin(\theta_I)z^aw^b\label{monomial1}
\end{equation}
where the factor $z^aw^b$ can be directly read off from the Newton polygon, and we will explain what $\delta$ means shortly.

This can be generalized to any lattice point. For interior points and other boundary points, they correspond to multiple GLSM fields. Suppose one of such points is associated to GLSM fields $q_1,\dots,q_k$, then each $q_i$ can be written as $q_i=\sum\limits_IX_I$. As a result, each $q_i$ gives rise to a product of $2\sin(\theta_I)$ from the determinant. Then the corresponding monomial in the spectral curve is the sum of these products for every $q_i$,
\begin{equation}
    (-1)^\delta\sum_i\begin{bmatrix}\text{weight}\\\text{of~} q_i\end{bmatrix}z^aw^b=(-1)^\delta\sum_i\left(\prod_I2\sin(\theta_I)\right)z^aw^b.\label{monomial2}
\end{equation}

Now let us determine $\delta$. Given a reference perfect matching $M_0$, denote the horizontal (vertical) height change of the perfect matching $M$ to be $h_x$ ($h_y$). Then the above rules of writing the Newton polynomial should agree with the result in \cite{Kenyon:2003uj}:
\begin{equation}
    P(z,w)=\sum_M\text{e}^{-\mathcal{E}(M)}z^{h_x}w^{h_y}(-1)^{h_xh_y+h_x+h_y}.\label{monomials}
\end{equation}
It is straightforward to see that the energy of $M$ is consistent with \eqref{monomial1} and \eqref{monomial2}, that is, $\mathcal{E}(M)=\sum\limits_I\mathcal{E}(e_I)=-\log\left(\prod\limits_I2\sin(\theta_I)\right)$. Now different reference $M_0$ may give different signs for each term, but it would preserve certain properties of the spectral curve (such as its Mahler measure). Here, we will stick to the perfect matching corresponding to $a=b=0$ (i.e. the origin of the Newton polygon) as the reference $M_0$ so that the powers of variables agrees with \eqref{monomial1} and \eqref{monomial2} \footnote{Notice that in \cite{Kenyon:2003uj}, there is also a total factor $z^{x_0}w^{y_0}$ in the front of the right hand side in \eqref{monomials}, where $x_0$ and $y_0$ are the total flows across the horizontal and vertical cycles respectively. This would ensure that the overall powers $z^{x_0+h_x}w^{y_0+h_y}$ is the same as $z^ay^b$. For simplicity, we remove this factor in \eqref{monomials} as long as we choose the one with $x_0=y_0=0$ as our reference perfect matching.}. Then the parity of $(h_xh_y+h_x+h_y)$ is fully determined by $a=h_x$ and $b=h_y$. Thus, we may write $\delta$ as
\begin{equation}
    \delta=\begin{cases}
    0,&\text{both $a$ and $b$ are even}\\
    1,&\text{otherwise}
    \end{cases}.
\end{equation}

\begin{example}
Recall the example in Figure \ref{dimerexample}. Let us choose the green matching as the reference perfect matching. As the blue matching has height change $(0,-1)$, we have $\delta=1$ and this gives rise to the term $-2w^{-1}$. Altogether, we have the spectral curve
\begin{equation}
    -2z-2z^{-1}-2w-2w^{-1}+(2+2+2+2)=0,
\end{equation}
or equivalently,
\begin{equation}
    -z-z^{-1}-w-w^{-1}+4=0.
\end{equation}
This agrees with \eqref{isoF0curve}.
\end{example}

\subsubsection{Amoebae and Harnack Curves}\label{harnack}
Now that we have some familiarity with dimers and Newton polynomials/spectral curves, let us 
collect some facts on amoebae and Harnack curves \footnote{See more details in \cite{Bao:2021olg}, as well recent machine-learning results thereon.
}, adhering to the notation of \cite{Kenyon:2003uj,Kenyon:2003ui}. 

\begin{definition}
An {\bf amoeba} is the set of points in the real plane, of the logarithmic projection of the spectral curve $P=0$:
\begin{equation}
\cA_P=\Big\{ \big(\log|z|,\log|w|\big)\ \big|\ P(z,w)=0 \Big\}.
\end{equation}
\end{definition}
The definition of the amoeba can be easily extended to polynomials of more variables, but in this paper, we will only focus on Newton polynomials in two variables $(z,w)$, and everything will be planar: the Newton polynomial lives in $\mathbb{C}^2$; the amoeba lives in $\mathbb{R}^2$ and the toric diagram lives in $\mathbb{Z}^2$. Let us also introduce the spine as a deformation retract of the amoeba. For simplicity, we shall use the result in \cite{Feng:2005gw} as our definition. For the original definition, see \cite{passare2000amoebas}.
\begin{definition}
The {\bf spine} $\mathcal{S}$ of the amoeba is the dual $(p,q)$-web of the toric diagram associated to $P(z,w)$.
\end{definition}
In parallel, we have
\begin{definition}
A real algebraic curve $C\subset\mathbb{RP}^2$ of degree $d$ is an {\bf M-curve} if it has the maximal number of connected components, i.e., $\frac{(d-1)(d-2)}{2}+1$. Following \cite{Kenyon:2003ui}, we shall call the connected components {\bf ovals}. Ovals that do not intersect the coordinate axes are known as {\bf compact ovals}. An isolated real point on the curve is regarded as a {\bf degenerate oval}. The {\bf genus} $g$ is the number of non-degenerate compact ovals. For an M-curve, the genus is also maximal and equals $\frac{(d-1)(d-2)}{2}$.
\end{definition}

A Harnack curve is a special type of M-curve in the sense that its ovals have the ``best'' possible topological configurations (see Figure 2 in \cite{Kenyon:2003ui} for an illustration).
The definition of Harnack curves is quite intricate  \cite{mikhalkin2000real}. Here, we will take the following characterization as the working definition:
\begin{definition}
A {\bf Harnack curve} $C$ possesses the map
\begin{equation}
    C(\mathbb{C})\ni(z,w)\mapsto(\log|z|,\log|w|)\in\mathcal{A}_C
\end{equation}
such that it is 2-to-1 from the curve to its amoeba (except for a finite number of real nodes where it is 1-to-1). The amoeba of Harnack curve with genus $g$ has exactly $g$ holes (i.e., compact complementary regions). Hence, the number of holes for an amoeba is also called the genus of the amoeba.
\end{definition}
From \cite{Kenyon:2003uj}, we have a practical way to identify Harnack curves associated with amoebae and dimers.
\begin{theorem}
    For any choice of non-negative edge weights on a dimer, the spectral curve $P(z,w)=0$ is a Harnack curve of degree $d$ with $\frac{(d-1)(d-2)}{2}$ compact ovals.
\end{theorem}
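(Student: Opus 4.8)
The plan is to check the two defining properties in the working characterisation of a Harnack curve recorded above: that the logarithmic map $C(\mathbb{C})\to\mathcal{A}_C$ is generically $2$-to-$1$, becoming $1$-to-$1$ only over a finite set of real nodes, and that the genus of the curve equals the number of bounded complementary components (``holes'') of $\mathcal{A}_C$, this number being $\tfrac{(d-1)(d-2)}{2}$ when the Newton polygon is the standard triangle of size $d$ and in general the number of interior lattice points of the toric diagram $\Delta$. The first preliminary step is a reality observation: with the Kasteleyn signs fixed as in \eqref{monomials}, every coefficient of $P(z,w)$ equals $(-1)^{\delta}$ times a sum of products of non-negative edge weights, hence is real; so $C=\{P=0\}$ is defined over $\mathbb{R}$ and is invariant under complex conjugation $(z,w)\mapsto(\bar z,\bar w)$, which will serve as the deck transformation of the double cover.

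The heart of the argument---and the only place where non-negativity of the edge weights is genuinely used---is the $2$-to-$1$ estimate. I would fix a point $(x,y)$ in the interior of $\mathcal{A}_C$ and analyse the trigonometric function $(\theta,\varphi)\mapsto P\big(e^{x+i\theta},e^{y+i\varphi}\big)$ on the torus, showing that it has exactly one zero modulo the symmetry $(\theta,\varphi)\mapsto(-\theta,-\varphi)$, i.e.\ precisely two complex-conjugate preimages over $(x,y)$. The mechanism is a positivity/monotonicity property of $\det K(z,w)$: slicing the modulus torus by lines and combining the alternating parity sign pattern of \eqref{monomials} with the fact that every entry of $K$ carries a non-negative weight, one extracts a strict monotonicity of $\operatorname{Re}$ of a suitably normalised $\det K$ away from the conjugate pair, so that an argument-principle (Descartes-type) count for the induced univariate Laurent polynomial in a single phase variable caps the number of zeros at two. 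The real nodes are exactly the configurations in which the two conjugate preimages collide onto $C(\mathbb{R})$; these are algebraically forced to lie over $\partial\mathcal{A}_C$ and to be finite in number, and there the map is $1$-to-$1$.

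Granting the $2$-to-$1$ property, the count of compact ovals is soft. Since $\mathrm{Log}\colon C(\mathbb{C})\to\mathcal{A}_C$ is a double cover branched along $C(\mathbb{R})$, which maps onto $\partial\mathcal{A}_C$, each bounded complementary component of $\mathcal{A}_C$ is encircled by exactly one compact oval of $C(\mathbb{R})$, so the number of compact ovals equals the number of holes, which by the Definition above is the genus $g$. To evaluate $g$ I would use the order map $\mathbb{R}^2\setminus\mathcal{A}_C\to\Delta\cap\mathbb{Z}^2$, which is locally constant, injective on components, and sends bounded components into the interior lattice points; this gives $g\le\#(\operatorname{int}\Delta\cap\mathbb{Z}^2)$. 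Equality---i.e.\ that every interior lattice point is realised---I would obtain from the Mikhalkin--Rullg{\aa}rd criterion that a real curve is simple Harnack iff its amoeba attains the maximal area $\pi^2\operatorname{Area}(\Delta)$, checked here by identifying the Ronkin function of $P$ with the generalised Mahler measure $m(P;e^x,e^y)$ of \eqref{genMahler} and reading off its asymptotic spread (alternatively, by a direct Euler-characteristic count on $C(\mathbb{C})$). For the size-$d$ triangle, Pick's theorem gives $\#(\operatorname{int}\Delta\cap\mathbb{Z}^2)=\tfrac{(d-1)(d-2)}{2}$, as claimed.

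I expect the main obstacle to be the $2$-to-$1$ estimate of the second paragraph: it is the sole analytic input, it fails without the positivity hypothesis, and turning the sign combinatorics of the Kasteleyn determinant into the required monotonicity is delicate---this is essentially the Kenyon--Okounkov identification of dimer spectral curves with Harnack curves, and I would follow that route.
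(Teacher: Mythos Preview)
The paper does not prove this theorem: it is quoted as a result of Kenyon--Okounkov--Sheffield \cite{Kenyon:2003uj} and used as a black box. So there is no ``paper's own proof'' to compare against; your proposal is an attempt to reconstruct the cited external argument.

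That said, your sketch has two genuine problems. First, the oval count in your third paragraph conflates \emph{compact ovals} with \emph{non-degenerate} compact ovals. By the paper's definitions, an isolated real point is a degenerate oval, and the genus $g$ counts only the non-degenerate ones; holes of the amoeba are in bijection with non-degenerate compact ovals, so your argument yields $g$, not the total $\tfrac{(d-1)(d-2)}{2}$. These numbers can differ: already for $\mathbb{F}_0$ at $k=k_{\mathrm{iso}}=4$ the curve is Harnack of genus $0$, yet it has one (degenerate) compact oval corresponding to the single interior lattice point. So your attempt to force $g=\#(\mathrm{int}\,\Delta\cap\mathbb{Z}^2)$ via the maximal-area criterion is aiming at the wrong equality; what one needs is that each interior lattice point contributes a compact oval which may or may not be degenerate, and that is a consequence of the $2$-to-$1$ property plus Proposition~\ref{prophole}, not of a genus computation.

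Second, your $2$-to-$1$ argument is too schematic to be checkable: ``strict monotonicity of $\operatorname{Re}$ of a suitably normalised $\det K$'' and a Descartes-type count do not obviously combine to bound preimages on a two-torus, and this is not how the Kenyon--Okounkov(--Sheffield) proof proceeds. Their argument runs through the Ronkin function and surface tension: positivity of edge weights forces the free energy to be strictly convex on the interior of the amoeba, and an area/Monge--Amp\`ere computation then pins the preimage count at two. If you want to reconstruct the proof rather than cite it, that is the route to follow.
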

There is another remarkable theorem \cite{mikhalkin2001amoebas} that will be crucial to us:
\begin{theorem}
    A curve is Harnack if and only if its amoeba has the maximal possible area for a given Newton polygon $\Delta$. That is, $A(\mathcal{A}_P)=\pi^2A(\Delta)$ where $A(\Delta)$ is the unnormalized area of the Newton polygon.
\end{theorem}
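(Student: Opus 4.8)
The plan is to derive the inequality $A(\cA_P)\le\pi^2A(\Delta)$ from the convexity of the Ronkin function together with a sharp pointwise bound on its Monge--Amp\`ere density (following Passare--Rullg{\aa}rd), and then to settle the equality case using Mikhalkin's description of Harnack curves via the logarithmic Gauss map.

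First I would introduce the \emph{Ronkin function} $N_P(x,y)=\frac{1}{(2\pi)^2}\int_{[0,2\pi]^2}\log\big|P(\ee^{x+\ii\theta_1},\ee^{y+\ii\theta_2})\big|\,\textup{d}\theta_1\,\textup{d}\theta_2$, so that $N_P(0,0)=m(P)$, and record its standard properties (Ronkin): $N_P$ is convex on $\mathbb{R}^2$; on each connected component of $\mathbb{R}^2\setminus\cA_P$ it is affine with gradient a fixed integer point of $\Delta$ (the order of that component); and $\nabla N_P:\mathbb{R}^2\to\Delta$ is onto. For a convex function $u$ the Monge--Amp\`ere measure satisfies $\mathrm{MA}(u)(E)=|\partial u(E)|$ (Lebesgue measure of the subgradient image), so the total mass is $\mathrm{MA}(N_P)(\mathbb{R}^2)=|\nabla N_P(\mathbb{R}^2)|=A(\Delta)$; and since $N_P$ is affine off the amoeba, $\mathrm{MA}(N_P)$ is carried by $\overline{\cA_P}$, while on the full-measure set where $N_P$ is twice differentiable (Alexandrov) it has density $\det(\mathrm{Hess}\,N_P)$.

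The technical heart, and the step I expect to be the main obstacle, is the sharp pointwise lower bound $\det(\mathrm{Hess}\,N_P)(x,y)\ge\pi^{-2}$ for almost every $(x,y)\in\cA_P$. The mechanism is that $\log|P|$ is pluriharmonic off $\{P=0\}$, so $\partial^2_{x_i}\log|P|=-\partial^2_{\theta_i}\log|P|$ modulo the Dirac-type contribution of the zero locus; averaging over the torus directions annihilates the $\theta$-derivatives and rewrites the entries of $\mathrm{Hess}\,N_P$ as densities of the intersection of the spectral curve with the tori $\{|z|=\ee^{x},|w|=\ee^{y}\}$, after which one estimates the resulting $2\times 2$ determinant from below by $\pi^{-2}$; the constant is sharp, as seen on $P=1+z+w$, whose amoeba has area exactly $\pi^2/2=\pi^2A(\Delta)$. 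Granting this, $A(\Delta)=\mathrm{MA}(N_P)(\overline{\cA_P})\ge\int_{\cA_P}\det(\mathrm{Hess}\,N_P)\,\textup{d}x\,\textup{d}y\ge\pi^{-2}\,A(\cA_P)$, which is the desired inequality.

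It remains to characterise equality. If $A(\cA_P)=\pi^2A(\Delta)$ then the chain above is saturated, forcing $\mathrm{MA}(N_P)$ restricted to $\cA_P$ to equal $\pi^{-2}\,\textup{d}x\,\textup{d}y$ and, in particular, all of the intersection estimates to be simultaneously tight. I would translate this saturation into the statement that over a generic point of $\cA_P$ the fibre of $\mathrm{Log}:C(\mathbb{C})\to\cA_P$ has exactly two points, equivalently that every critical point of $\mathrm{Log}$ is a real point of $C$ --- the ``real logarithmic Gauss map'' condition, which by Mikhalkin's theorem is precisely the statement that $C$ is Harnack. Conversely, for a Harnack curve the two-sheeted structure of $\mathrm{Log}$ holds by the $2$-to-$1$ characterisation adopted above, so running the same computation in reverse yields $\det(\mathrm{Hess}\,N_P)\equiv\pi^{-2}$ on $\cA_P$ and hence $A(\cA_P)=\pi^2A(\Delta)$, closing the equivalence.
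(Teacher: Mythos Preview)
The paper does not give a proof of this theorem; it is quoted as a known result from the literature (attributed to Mikhalkin--Rullg{\aa}rd, \emph{Amoebas of maximal area}) in the review section, so there is no ``paper's own proof'' to compare against. Your outline is in fact the standard argument from that source together with Passare--Rullg{\aa}rd: the Ronkin function is convex with Monge--Amp\`ere mass $A(\Delta)$ supported on $\overline{\cA_P}$, the pointwise bound on the Hessian determinant gives the inequality $A(\cA_P)\le\pi^2 A(\Delta)$, and saturation is equivalent to the generic $2$-to-$1$ property of $\mathrm{Log}$, which the paper has adopted as its working \emph{definition} of Harnack.

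One caveat: you correctly flag the pointwise lower bound $\det(\mathrm{Hess}\,N_P)\ge\pi^{-2}$ as the technical heart, but the sketch you give for it (``pluriharmonicity plus averaging rewrites the Hessian entries as intersection densities, then estimate the $2\times2$ determinant'') is too telegraphic to count as a proof. In the actual argument one must identify the Hessian of $N_P$ with the pushforward under $\mathrm{Log}$ of the normalized area form on the curve and then invoke a sharp Cauchy--Schwarz/Wirtinger-type inequality on the angular fibre; the constant $\pi^{-2}$ does not fall out of a naive estimate. Similarly, the passage from ``equality in the area bound'' to ``generically two-sheeted'' requires knowing that equality in the pointwise Hessian bound forces the fibre to be real-symmetric (conjugate pairs), which is again a nontrivial rigidity statement. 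So your strategy is exactly right and matches the literature, but the two steps you label as obstacles genuinely are the whole content of the theorem, and your sketch does not yet discharge them.
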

In fact, the 2-to-1 feature for Harnack curves leads to the two important propositions \cite{Kenyon:2003uj}:
\begin{proposition}
The boundary of the amoeba is the image of the real locus of the spectral curve $P(z,w)=0$.
It follows that the amoeba of a Harnack curve can be determined by
\begin{equation}
    \prod_{n_1,n_2\in\mathbb{Z}^2}P\left((-1)^{n_1}\text{e}^{x},(-1)^{n_2}\text{e}^{y}\right)\leq0.
\end{equation}
\end{proposition}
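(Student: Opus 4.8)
The plan is to establish the two assertions in turn. For the second (the inequality) I would study the sign of the real-analytic function
\[ F(x,y):=\prod_{\epsilon_1,\epsilon_2\in\{\pm1\}}P(\epsilon_1\mathrm{e}^{x},\epsilon_2\mathrm{e}^{y}), \]
which is exactly the product displayed in the statement, since $(-1)^{n_1}$ and $(-1)^{n_2}$ only feel the parities of $n_1,n_2$ and so that product reduces to the four evaluations $P(\pm\mathrm{e}^x,\pm\mathrm{e}^y)$. Note $F$ is real-valued: the Newton polynomial of a dimer has real coefficients, hence $P(\bar z,\bar w)=\overline{P(z,w)}$ and each of the four evaluation points is fixed by complex conjugation, so each factor is real.

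For the first assertion I would invoke the Harnack structure recalled above: the logarithmic projection $C(\mathbb{C})\to\mathcal{A}_P$ is generically $2$-to-$1$ with the two preimages complex conjugate, and is $1$-to-$1$ precisely on the real locus $C(\mathbb{R})=\{P=0\}\cap(\mathbb{R}^*)^2$. If $(x,y)\in\mathrm{int}(\mathcal{A}_P)$ it has two distinct non-real preimages; if $(x,y)\in\partial\mathcal{A}_P$, approaching it along interior points forces the conjugate pairs of preimages to collide in the limit (otherwise the image would remain interior), so the limiting preimage equals its own conjugate, i.e.\ is real. Conversely a real point of the curve is a fold of the projection, where the two sheets coincide, and its image lies on the topological boundary. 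Hence $\log|C(\mathbb{R})|=\partial\mathcal{A}_P$. Since $F(x,y)=0$ iff $(\epsilon_1\mathrm{e}^x,\epsilon_2\mathrm{e}^y)\in C(\mathbb{R})$ for some choice of signs, this says exactly that the zero set of $F$ is $\partial\mathcal{A}_P$.

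It then remains to read off the sign of $F$ on $\mathbb{R}^2\setminus\partial\mathcal{A}_P$, on whose connected components $F$ is of constant sign. On a complementary component of $\mathcal{A}_P$ (bounded or unbounded) $P$ does not vanish on the torus $\{|z|=\mathrm{e}^x,|w|=\mathrm{e}^y\}$, so the loop $t\mapsto P(\mathrm{e}^{x}\mathrm{e}^{it},\mathrm{e}^{y})$ has a well-defined winding number $i$ about $0$, locally constant hence constant on the component, and likewise $j$ in the $w$-direction; the reflection symmetry $P(\bar z,\mathrm{e}^{y})=\overline{P(z,\mathrm{e}^{y})}$ makes this loop a union of two mirror-image arcs joining the real values $P(\pm\mathrm{e}^{x},\mathrm{e}^{y})$, which forces $\mathrm{sgn}\,P(-\mathrm{e}^{x},\mathrm{e}^{y})=(-1)^{i}\,\mathrm{sgn}\,P(\mathrm{e}^{x},\mathrm{e}^{y})$ and similarly in $w$, so the product of the four signs is $(-1)^{i+j+(i+j)}=+1$ and $F>0$ there (consistently, along an unbounded component $F$ is dominated by the corner monomial $c_{(m,n)}^{4}\mathrm{e}^{4mx+4ny}>0$). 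On $\mathrm{int}(\mathcal{A}_P)$, which is connected, I would fix a generic boundary point at which exactly one of the four evaluations vanishes to first order as one crosses $\partial\mathcal{A}_P$ while the other three stay nonzero, so $F$ changes sign there; since it is positive on the adjacent complementary component it is negative just inside, hence $F<0$ on all of $\mathrm{int}(\mathcal{A}_P)$. Combining the three cases, $F(x,y)\le0$ if and only if $(x,y)\in\mathcal{A}_P$ (which is closed), which is the claimed characterization; I would cite \cite{Kenyon:2003uj,mikhalkin2000real,mikhalkin2001amoebas}.

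The place where the Harnack hypothesis is genuinely used — and the main obstacle — is the first assertion, that $C(\mathbb{R})$ projects onto the boundary and not into the interior of the amoeba; the sign bookkeeping is routine once that, together with the connectedness of $\mathrm{int}(\mathcal{A}_P)$, is in hand. Secondary nuisances are the real nodes of $P=0$ and the non-generic boundary points where two of the four evaluations vanish simultaneously; both lie in a set of codimension $\ge1$ and do not affect the identity of closed regions, so one may either restrict to generic (smooth) edge weights or simply discard this exceptional set.
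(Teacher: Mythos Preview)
The paper does not actually prove this proposition: it simply records it as a consequence of the $2$-to-$1$ property of Harnack curves and cites \cite{Kenyon:2003uj}. So there is no in-paper argument to compare against; what you have written is a reasonable reconstruction of why the $2$-to-$1$ feature implies both assertions, and it is in the same spirit as what the paper gestures at.

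Your reduction of the infinite product to the four evaluations $P(\pm\mathrm{e}^x,\pm\mathrm{e}^y)$ is correct, as is the observation that each factor is real. The argument that $\partial\mathcal{A}_P=\log|C(\mathbb{R})|$ via the collapsing of conjugate preimage pairs is the standard one. Your winding-number computation on complementary components is right, though the line ``$(-1)^{i+j+(i+j)}$'' deserves one more sentence: you need that the $w$-winding number at $z=-\mathrm{e}^x$ equals the $w$-winding number at $z=\mathrm{e}^x$, which follows because on the complement $P$ is nonvanishing on the whole $2$-torus and the two loops are homotopic there. One genuine omission is that you invoke connectedness of $\mathrm{int}(\mathcal{A}_P)$ without justification; for Harnack curves this is true (the complement of the real ovals in $C(\mathbb{C})$ is connected and surjects onto the interior), but it should be stated and attributed rather than assumed. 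With that and the treatment of nodal/degenerate boundary points you already flag, the argument is complete.
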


\begin{proposition}
Any interior lattice point of the Newton polygon $\Delta$ corresponds to either a bounded complementary region (i.e., a hole) of the amoeba or an isolated real node in the spectral curve. In particular, the number of holes of the amoeba is equal to the genus $g$ of the curve.\label{prophole}
\end{proposition}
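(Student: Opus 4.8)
The plan is to reduce the statement to two ingredients: the injectivity of the \emph{order map} for the complement of an amoeba, and the genus formula for the curve combined with the structure theory of Harnack curves. First I would recall that every connected component $E$ of $\mathbb{R}^2\setminus\mathcal{A}_P$ carries a well-defined order $\nu(E)\in\Delta\cap\mathbb{Z}^2$ (Forsberg--Passare--Tsikh; see also \cite{Kenyon:2003uj}), obtained by integrating $\mathrm{d}\log P$ over the torus $\{|z|=\mathrm{e}^{x},\,|w|=\mathrm{e}^{y}\}$ for $(x,y)\in E$, and that this map $\nu$ is injective. Moreover $\nu(E)$ lies on $\partial\Delta$ exactly when $E$ is unbounded (and every vertex of $\Delta$ occurs this way), so $\nu$ restricts to an injection from the bounded components -- i.e. the holes -- into $\mathrm{int}(\Delta)\cap\mathbb{Z}^2$. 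In particular distinct holes carry distinct interior lattice points and $\#\{\text{holes}\}\leq i(\Delta):=\#(\mathrm{int}(\Delta)\cap\mathbb{Z}^2)$; this establishes the injective half of the claimed correspondence together with the identification of its target.

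Next I would close the count using the Harnack hypothesis. Let $\bar C$ be the closure of $\{P=0\}$ in the toric surface $X_\Delta$ attached to the Newton polygon. By the adjunction (Khovanskii) genus formula, $p_a(\bar C)=i(\Delta)$. Since $P$ comes from a dimer, $\bar C$ is a simple Harnack curve, so its only singularities are real solitary double points -- precisely the isolated real points of the spectral curve, the degenerate ovals -- each contributing exactly $1$ to the genus defect, and its geometric genus equals the number $g$ of non-degenerate compact ovals. Hence $i(\Delta)=g+\delta$, where $\delta$ is the number of isolated real nodes. On the other hand $g=\#\{\text{holes}\}$ is exactly the Harnack-curve characterisation already recorded (each non-degenerate compact oval maps, under $C(\mathbb{C})\to\mathcal{A}_C$, onto the boundary of a unique hole, and conversely). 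Combining these gives $i(\Delta)=\#\{\text{holes}\}+\#\{\text{isolated real nodes}\}$.

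Finally I would promote this to the stated bijection: the holes inject into $\mathrm{int}(\Delta)\cap\mathbb{Z}^2$ by the first step, and the equality of counts from the second step forces that injection to exhaust all interior points except those realised by isolated nodes; to attach a given solitary node to a specific interior lattice point one argues that it arises as the limit of a shrinking hole along a one-parameter family of spectral curves (e.g. scaling the constant term $k$), with $\nu$ locally constant and hence inherited in the limit. This yields ``interior point $\leftrightarrow$ hole or isolated real node'', and the last sentence $\#\{\text{holes}\}=g$ is then immediate -- indeed it was used above.

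The hard part will be the surjectivity implicit in the counting step: one must know that for a Harnack curve \emph{every} interior lattice point is accounted for, i.e. that the genus defect $p_a(\bar C)-g$ equals exactly the number of solitary real nodes, with no further contributions (no cusps, no conjugate pairs of nodes, no degeneration unrelated to an interior point). This is where the Harnack property is genuinely indispensable; it can be extracted either from Mikhalkin's structure theory of simple Harnack curves, or from the maximal-area characterisation $A(\mathcal{A}_P)=\pi^2 A(\Delta)$ together with the Passare--Rullgård area formula, which forces the amoeba and its complement to be combinatorially as large as the polygon allows. The remaining points -- that vertices of $\Delta$ always give unbounded components, that a bounded component has interior order, that one oval bounds one hole -- are standard and routine.
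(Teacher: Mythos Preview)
The paper does not supply a proof of this proposition: it is stated as a known fact, attributed to \cite{Kenyon:2003uj} (and implicitly to Mikhalkin's work on Harnack curves), alongside the companion statement about the amoeba boundary being the image of the real locus. So there is no ``paper's own proof'' to compare against; the authors are quoting a result from the literature as background.

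Your sketch is a correct outline of how the result is actually established in that literature. The order-map injectivity (Forsberg--Passare--Tsikh) gives the injection from holes into interior lattice points; Khovanskii's formula $p_a(\bar C)=i(\Delta)$ together with the structure theorem for simple Harnack curves (only solitary real nodes as singularities, geometric genus equal to the number of non-degenerate compact ovals) closes the count; and the identification of ovals with hole boundaries is the 2-to-1 characterisation. Your identification of the ``hard part'' is accurate: the fact that all singularities of a Harnack curve are solitary real nodes, and nothing worse, is precisely the content of Mikhalkin--Rullg\aa rd / Kenyon--Okounkov that cannot be bypassed. One minor point: the limiting argument you propose for attaching a specific interior lattice point to a specific node (via a one-parameter family shrinking a hole) is morally right but needs the existence of such a deformation within the space of Harnack curves, which is supplied by the Kenyon--Okounkov parametrisation of that space; you might cite that rather than leave it implicit.
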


Finally, we have a theorem from \cite{Kenyon:2003ui} for isoradial dimers:
\begin{theorem}
    A dimer corresponding to a genus zero Harnack curve is isoradial if and only if its amoeba contains the origin.
\end{theorem}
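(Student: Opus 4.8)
The plan is to translate the geometric condition ``the origin lies in the amoeba'' into the analytic condition that the spectral curve meets the unit torus, and then prove the two implications by running the dimer/Harnack--curve dictionary of \cite{Kenyon:2003uj,Kenyon:2003ui} together with the critical (isoradial) structure of \cite{Kenyon_2002}. The reformulation is immediate: $(0,0)\in\mathcal{A}_P$ if and only if there is a point $(z_0,w_0)$ with $|z_0|=|w_0|=1$ and $P(z_0,w_0)=0$; concretely, since a dimer curve is Harnack, the criterion for the amoeba recalled above specialises at the origin to the finite test
\begin{equation}
P(1,1)\,P(-1,1)\,P(1,-1)\,P(-1,-1)\leq 0 .
\end{equation}
One should keep in mind that in the genus-zero case an interior lattice point of the Newton polygon (if present) has collapsed to an isolated real node, and then ``origin in the amoeba'' is to be read as permitting the origin to sit at that node --- the situation of the $\mathbb{F}_0$ example, where the left-hand side above vanishes.

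\textbf{Isoradial $\Rightarrow$ origin in the amoeba.} Suppose $\mathcal{G}$ is embedded isoradially with the critical weights $2\sin\theta_I$, so that $P$ is the genus-zero Harnack polynomial built from the rhombus data. Around any vertex $v$ the circumcentres of the faces incident to $v$ lie on the unit circle centred at $v$ and form a closed polygon whose consecutive sides are the perpendicular duals $e^{*}$ of the edges at $v$ (the chord subtending the angle $2\theta_I$ has length $2\sin\theta_I$, which is exactly why $\sum 2\theta_I = 2\pi$ there). Closure of that polygon is the relation $\sum_{e\ni v}e^{*}=0$, so the constant vector is annihilated by the \emph{critical} Kasteleyn operator $\widetilde K$ whose edge weights are the complex numbers $e^{*}$ (suitably oriented and signed). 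Since $|e^{*}|=2\sin\theta_I$ equals the real weight carried by $K$, the per-edge ratio of weights has modulus one, so on the fundamental domain $\mathcal{G}_1$ the operators $K$ and $\widetilde K$ agree up to a gauge whose multipliers around the two cycles are unimodular; transporting the constant null vector through this gauge produces a nonzero element of $\ker K(z_0,w_0)$ for some $(z_0,w_0)$ with $|z_0|=|w_0|=1$. Hence $P(z_0,w_0)=\det K(z_0,w_0)=0$ and the origin lies in $\mathcal{A}_P$.

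\textbf{Genus-zero Harnack with origin in the amoeba $\Rightarrow$ isoradial.} Conversely, take a genus-zero Harnack curve $P=0$ whose amoeba contains the origin, so that the curve carries a point $(z_0,w_0)\in(S^1)^2$; being rational, it has a parametrisation $t\mapsto(z(t),w(t))$ with a parameter $t_0$ over $(z_0,w_0)$. From the winding of $\arg z(t)$ and $\arg w(t)$ near $t_0$, together with the slopes of the boundary arcs of the amoeba (equivalently the real ovals of the curve) and the tentacle directions (the outer normals of the Newton polygon), one reads off a system of train-track angles on $\mathcal{G}$; the task is to verify that these are the angles of an honest rhombic, hence isoradial, embedding of the given graph. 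The two constraints of Figure \ref{theta}(b), $\sum 2\theta_I=2\pi$ at each vertex and $\sum(\pi-2\theta_I)=2\pi$ at each face, translate into the statements that the Newton polygon closes up and that the curve has the maximal number of real components and no compact ovals --- genus zero together with Harnackness --- which hold by hypothesis. This is the content of the dimer/Harnack dictionary of \cite{Kenyon:2003ui} run backwards and restricted to the genus-zero stratum, the unit-torus point $(z_0,w_0)$ supplying precisely the null mode that the forward argument exploited.

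I expect the converse to be the real obstacle, for two reasons. First is \emph{consistency}: the angles extracted from the curve must be assignable coherently along every train track with no monodromy obstruction, and it is exactly genus zero that removes such obstructions, since a positive-genus amoeba carries extra bounded complementary cycles around which the rhombic data need not close up --- this is also the conceptual reason isoradiality forces genus zero in the first place. Second is \emph{rigidity}: one must check that the rhombic embedding so produced recovers the graph $\mathcal{G}$ one started from, rather than another dimer sharing the same spectral curve. By contrast the forward implication is comparatively soft; the only points requiring care are to pin the critical null mode precisely on the unit torus (the unimodular-gauge observation above) and to accommodate the degenerate case in which the origin coincides with the isolated node.
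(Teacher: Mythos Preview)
The paper does not supply a proof of this theorem at all: it is quoted verbatim from \cite{Kenyon:2003ui} (introduced as ``Finally, we have a theorem from \cite{Kenyon:2003ui} for isoradial dimers''), with no accompanying argument. There is therefore nothing in the present paper against which to compare your attempt.

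That said, a brief assessment of your sketch on its own merits: the forward implication is essentially the right idea --- the critical weights $2\sin\theta_I$ are the moduli of the complex ``discrete exponential'' weights of \cite{Kenyon_2002}, and the closing of the dual polygon around each vertex is exactly the statement that the constant function lies in the kernel of the complexified operator; passing to the unit torus via a unimodular gauge is the correct mechanism. Your converse, however, is not yet a proof but a wish list: you assert that from a torus point on a rational Harnack curve one can ``read off a system of train-track angles'' satisfying the rhombus constraints, and then identify those constraints with Harnackness and genus zero, but you do not actually carry out either step. In \cite{Kenyon:2003ui} the converse goes through the full parametrisation of genus-zero Harnack curves by positive edge weights modulo gauge, together with the identification of the isoradial locus inside that moduli space; reproducing that requires substantially more than the winding/slope heuristics you indicate. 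Your own caveats about consistency along train tracks and about recovering the original graph $\mathcal{G}$ are precisely the points where real work is needed.
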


On the log plane, we can always shift the amoeba so that it contains the origin, which corresponds to a rescaling of $z$ and $w$ of the spectral curve. This gives a canonical family of isoradial parameterizations for any genus-zero Harnack curve.
\begin{example}
The spectral curve \eqref{isoF0curve} is Harnack and of genus $0$. In fact, $-z-z^{-1}-w-w^{-1}+k=0$ is Harnack when $k\geq4$, with $g=0$ for $k=4$ and $g=1$ otherwise.
\end{example}

\paragraph{Ronkin functions} 
Closely related to the amoeba is the so-called Ronkin function. In fact, to probe different regions of the amoeba, we can use this analytic tool.
\begin{definition}
In two dimensions, the generalized Mahler measure \eqref{genMahler} with $a_1=\exp(x),\, a_2= \exp(y)$ defines the {\bf Ronkin function} $R(x,y) \coloneqq m(P;\mathrm{e}^x,\, \mathrm{e}^y)$. In particular, $R(0,0)=m(P;1,1)=m(P)$.
\end{definition}
Following \cite{ronkin2000zeros,passare2000amoebas,forsberg2000laurent,mikhalkin2004amoebas}, we have\footnote{We are focusing on $\mathbb{R}^2$ in this paper, but the discussions on Ronkin functions here can be directly extended to any $\mathbb{R}^n$.}
\begin{theorem}
    The Ronkin function $R(x,y)$ is convex. It is strictly convex over $\mathcal{A}_P$ and linear over each component of $\mathbb{R}^2\backslash\mathcal{A}_P$. The gradient $\nabla = (\partial_x, \partial_y)$ of the Ronkin function satisfies
    \begin{itemize}
        \item     $\textup{Int}(\Delta)\subset\nabla R(\mathbb{R}^2)\subset\Delta$, where $\Delta$ is the Newton polygon for $P$ and $\textup{Int}(\Delta)$ is its interior; 
        \item for each component $E_i$ of $\mathbb{R}^2\backslash\mathcal{A}_P$, $\nabla R(E_i)=(a_i,b_i)$, where $(a_i,b_i)$ is the lattice point in $\Delta$ corresponding to $E_i$.
    \end{itemize}
\end{theorem}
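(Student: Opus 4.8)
The plan is to extract all four assertions from the integral representation
\[
R(x,y)=\int_0^1\!\!\int_0^1\log\bigl|P(e^{\,x+2\pi i t_1},e^{\,y+2\pi i t_2})\bigr|\,\mathrm{d}t_1\,\mathrm{d}t_2 ,
\]
using only two classical facts: the plurisubharmonicity of the log-modulus of a holomorphic function, and the argument principle. For convexity, put $\Phi(\zeta_1,\zeta_2):=\log|P(e^{\zeta_1},e^{\zeta_2})|$; as the log-modulus of a holomorphic function on $\mathbb{C}^2$ it is plurisubharmonic, and it is $2\pi i\,\mathbb{Z}^2$-periodic in the imaginary directions. Hence the $\mathrm{Im}$-independent function $\widetilde R(\zeta):=R(\mathrm{Re}\,\zeta)$ equals, by this periodicity, $\int_{[0,1]^2}\Phi(\zeta+2\pi i t)\,\mathrm{d}t$, an average of translates of a plurisubharmonic function, hence plurisubharmonic (it is finite and continuous because $\log|P|$ is locally integrable over tori, so the average is not $\equiv-\infty$). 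Since a plurisubharmonic function depending only on $\mathrm{Re}\,\zeta$ is exactly a convex function of $\mathrm{Re}\,\zeta$, $R$ is convex.

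Next, to handle the complement and the second bullet, I would fix a connected component $E$ of $\mathbb{R}^2\setminus\mathcal{A}_P$. For $(x,y)\in E$ the polynomial $P$ is zero-free on the torus $\{|z|=e^x\}\times\{|w|=e^y\}$, so one may differentiate under the integral sign to get $\partial_x R=\mathrm{Re}\,\frac{1}{(2\pi i)^2}\int_{|z|=e^x}\int_{|w|=e^y}\frac{P_z}{P}\,\frac{\mathrm{d}w}{w}\,\mathrm{d}z$; by the argument principle the inner $z$-integral is an integer (the number of zeros of $P(\cdot,w)$ inside $|z|<e^x$ minus the order of the pole at $z=0$), which cannot jump as $(x,y)$ moves through the connected set $E$, and similarly for $\partial_y R$. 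Thus $R$ is affine on $E$ with $\nabla R|_E=(a,b)\in\mathbb{Z}^2$, and $(a,b)$ is precisely the order of the complementary component in the sense of \cite{forsberg2000laurent}; their results place it in $\Delta$ and make $E\mapsto(a,b)$ injective, which together with Proposition \ref{prophole} pins down the bounded components as the interior lattice points. To then obtain $\mathrm{Int}(\Delta)\subset\nabla R(\mathbb{R}^2)\subset\Delta$, I would compute the recession function $R_\infty(v)=\lim_{t\to\infty}R(tv)/t$: as $t\to\infty$ the monomial $c_{mn}z^mw^n$ with $(m,n)$ maximizing $mv_1+nv_2$ over $\Delta$ dominates $P$ on the corresponding torus, giving $R(tv)=t\,h_\Delta(v)+O(1)$ with $h_\Delta$ the support function of the Newton polygon; a convex function whose recession function is $h_\Delta$ has subdifferential image caught between $\mathrm{Int}(\Delta)$ and $\Delta$, which is the claim.

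The hard part will be the strict convexity on $\mathcal{A}_P$: because $\log|P|$ is only subharmonic and not convex, averaging does not deliver it for free, and one must show the full $2\times2$ Hessian of $R$, not merely its trace, is positive definite there. The route I would take is to compute the distributional Monge–Ampère (equivalently Laplacian) measure of $R$ and invoke the Ronkin / Passare–Rullg\aa rd theorem \cite{ronkin2000zeros,passare2000amoebas,mikhalkin2004amoebas} that it is a nonnegative measure supported exactly on $\overline{\mathcal{A}_P}$, vanishing on the complement (this re-derives the linearity above) and with strictly positive density on $\mathrm{Int}(\mathcal{A}_P)$, the positivity coming from the delta-masses that $\Delta_\zeta\log|P|$ acquires where the integration torus meets the curve $P=0$. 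Upgrading ``positive mass on every disc in $\mathcal{A}_P$'' to the absence of an affine segment is then a short convex-analysis step; the genuine content, and the place a careful proof has to do work, is showing that transversal intersection of the torus with the curve forces \emph{both} eigenvalues of the averaged Hessian to be strictly positive rather than just their sum.
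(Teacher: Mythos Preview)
The paper does not actually prove this theorem; it is stated with the preamble ``Following \cite{ronkin2000zeros,passare2000amoebas,forsberg2000laurent,mikhalkin2004amoebas}, we have'' and then used as a black box. Your proposal is therefore not competing against any argument in the paper --- you are in effect reconstructing the proofs from those cited references, and your outline tracks them closely: the plurisubharmonicity-plus-averaging argument for convexity is Ronkin's, the argument-principle computation of $\nabla R$ on complementary components is Forsberg--Passare--Tsikh, and the recession-function route to $\textup{Int}(\Delta)\subset\nabla R(\mathbb{R}^2)\subset\Delta$ is the Passare--Rullg\aa rd line.

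One genuine slip to flag: your parenthetical ``Monge--Amp\`ere (equivalently Laplacian)'' is wrong --- the real Monge--Amp\`ere measure is $\det(\mathrm{Hess}\,R)$, the Laplacian is its trace, and they are certainly not equivalent. You seem aware of the distinction later (``not merely its trace'', ``both eigenvalues \ldots rather than just their sum''), so this looks like a momentary conflation, but it matters precisely at the point you identify as hard. For strict convexity on $\mathcal{A}_P$ one wants to rule out affine segments, and positivity of the Laplacian alone is insufficient (a function can have positive Laplacian and still be affine along a line). The Passare--Rullg\aa rd argument you invoke does go through the Monge--Amp\`ere measure and shows its support is exactly $\overline{\mathcal{A}_P}$; making that yield strict convexity still requires showing the measure has no line of vanishing in the amoeba interior, which is where the structure of the map from the curve to the amoeba enters. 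Your final paragraph gestures at this correctly but the ``short convex-analysis step'' is where the actual content lives, so if you were writing this up in full that is the place to be careful.
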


Given a Harnack curve with ovals (either degenerate or non-degenerate), we can always shift the amoeba such that a hole or a critical point\footnote{By critical point, we mean that this point in the amoeba corresponds to a node in the spectral curve.} is located at the orgin. In terms of $P(z,w)$, this is a rescaling/redefinition of the variables $z,w$ (we will make this more precise in \S\ref{mahlerflow}). Since the Mahler measure is the Ronkin function at $(0,0)$ (which is always a lattice point in $\Delta$), we have $\nabla R(0,0)=(0,0)$ following this theorem.
Moreover, as the Ronkin function is always convex, we have
\begin{corollary}
Given a Laurent polynomial $P$ (with possible rescaling of variables), the Mahler measure $m(P)$ is the minimum of $R(x,y)$.
\end{corollary}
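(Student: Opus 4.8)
The plan is to combine the two facts about the Ronkin function that have just been recorded: its global convexity, and the value of its gradient on the complementary regions of the amoeba. The statement $m(P) = \min_{(x,y)\in\mathbb{R}^2} R(x,y)$ is then essentially the observation that a convex function attains its minimum at any point whose subdifferential contains the zero vector, and that $(0,0)$ is such a point after the rescaling of $z,w$ described above.

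First I would recall that, by definition, $R(0,0) = m(P;1,1) = m(P)$, and that $R$ is convex on all of $\mathbb{R}^2$ by the Ronkin-function theorem. For a convex function $R$ it suffices to show that $(0,0)$ is a global minimizer, i.e.\ that $0 \in \partial R(0,0)$ where $\partial R$ denotes the subdifferential. Next I would invoke the rescaling/translation of the variables $z,w$ discussed above, which can be chosen so that either a hole (a bounded complementary component of $\cA_P$) or a real node of the spectral curve sits at the origin of the log-plane. In either case the origin lies in the closure of the complementary region $E_i$ associated with the interior lattice point $(0,0)\in\Delta$ — this lattice point is present because the constant term $k$ of $P$ is non-zero. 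By the second bullet of the theorem on Ronkin functions, $\nabla R \equiv (0,0)$ on $E_i$, and since $R$ is affine and continuous there we get $(0,0)\in\partial R(0,0)$. Convexity of $R$ then forces $(0,0)$ to be a global minimizer, so $\min_{(x,y)} R(x,y) = R(0,0) = m(P)$, which is the claim.

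I expect the main obstacle to be the degenerate (genus-zero) case, in which the origin is placed at a real node of $P(z,w)=0$ rather than in the interior of a genuine hole of the amoeba: there one must check that the gradient statement still puts $0$ into the subdifferential at that pinch point. I would deal with this either by regarding the node as the limit, under the Kähler parameter that closes the hole, of a family of Harnack curves whose amoebae do contain an honest hole around the origin, and passing to the limit using continuity of $R$ jointly in $(x,y)$ and in the coefficients of $P$; or, more directly, by using that $R$ is linear on each component of $\mathbb{R}^2\setminus\cA_P$ with gradient equal to the corresponding lattice point and that $(0,0)$ is the relevant lattice point, so that the supporting affine function of $R$ built from that component is constant and touches $R$ at the origin, again yielding $0\in\partial R(0,0)$.
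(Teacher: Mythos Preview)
Your proposal is correct and follows essentially the same approach as the paper: use convexity of $R$ together with the fact that, after the rescaling placing a hole or node at the origin, the gradient (or subdifferential) of $R$ at $(0,0)$ is $(0,0)$, whence the convex function attains its global minimum there and equals $m(P)$. The paper's argument is the two-sentence version of this; your treatment of the degenerate genus-zero case via subdifferentials or a limiting argument is more careful than what the paper actually writes, but the underlying idea is identical.
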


\subsubsection{Crystal Melting and D-branes}\label{crystal}
Another physical system, in contrast to quiver gauge theories of brane tiling, that arise from dimer models is the so-called crystal melting model, which counts certain BPS bound states and from which toric geometry emerges \cite{Ooguri:2009ijd,Ooguri:2009ri}.

In Type IIA string theory, consider D0- and D2-branes with a single D6 on a toric Gorenstein 3-fold $\mathcal{X}$. More generally, we also include D4-branes \cite{Szabo:2009vw,Aganagic:2010qr,Nishinaka:2013mba}. Denote the charges of D$p$-branes as $Q_p$, where $Q_{6,4}$ are magnetic while $Q_{2,0}$ are electric. In this configuration, the D4s wrap an ample divisor class $[C]=\sum\limits_{i=1}^{b_2(\mathcal{X})}Q_{4,i}[C_i]\in H_4(\mathcal{X},\mathbb{Z})$ where $C_i$ is a basis of the 4-cycles and $b_2=b_4$. Likewise, the D2s wrap the 2-cycle $[S]=\sum\limits_{i=1}^{b_2(\mathcal{X})}Q_{2,i}[S_i]\in H_2(\mathcal{X},\mathbb{Z})$. Then, the D6-D4-D2-D0 bound states are counted by the partition function \cite{Szabo:2009vw}
\begin{equation}
    Z_\text{BPS}=\sum_{Q_0,\bm{Q}_2}\Omega(Q_0,\bm{Q}_2,\bm{Q}_4,Q_6)\text{e}^{-Q_0\phi_0-\bm{Q}_2\bm{\phi}_2},
\end{equation}
where $\phi_p$ are the chemical potentials for D$p$-branes and $\Omega$ is the degeneracy (the Witten index) of the bound states. In fact, the chemical potentials $\phi_0$ and $\phi_2$ can be identified with string coupling $g_s$ and K\"ahler moduli respectively \cite{Ooguri:2009ri}.

The profound results of \cite{Okounkov:2003sp,Ooguri:2009ijd,Ooguri:2009ri} then relate the BPS states to melting crystals. A crystal is also a dual diagram of the dimer in the following sense. In the crystal, there are different types of atoms. Each type corresponds to a node in the quiver, and the chemical bonds between atoms are represented by the arrows.

Given an initial crystal, one can melt it by removing atoms from the top of it. The BPS degeneracy is equal to the number of melting crystal configurations with $Q_0$ being the total number of atoms removed and $\bm{Q}_2$ the numbers of atoms of different types.

In the thermodynamic limit where a large number of atoms are removed, the shape of the molten crystal is exactly the (minus) Ronkin function, whose 2d projection is the amoeba of Newton polynomial $P$ associated to the Gorenstein 3-fold $\mathcal{X}$. Therefore, using saddle point approximation, we have \cite{Ooguri:2009ri}
\begin{proposition}
In the thermodynamic limit,
\begin{equation}
    Z_\textup{BPS}\sim\textup{exp}\left(\int\textup{d}x\,\textup{d}y\,R(x,y)\right)\label{ZBPS}
\end{equation}
(where we have omitted a factor of $4/g_s^2$ in the exponential). We may then define the {\bf free energy} as $F\equiv-\log Z_\textup{BPS}$ \cite{Kenyon:2003uj}.
\end{proposition}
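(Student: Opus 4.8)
\emph{Proof strategy.} The statement is a leading-order saddle-point evaluation, so the plan is to follow the large-deviations argument of \cite{Okounkov:2003sp,Ooguri:2009ijd,Ooguri:2009ri}, grounded in the dimer limit-shape theory of \cite{Kenyon:2003uj}. First I would encode each molten-crystal configuration as a monotone stepped surface, i.e.\ a discrete height function $h$ over a two-dimensional base whose locally admissible increments are dictated by the crystal's unit cell (equivalently, by the periodic bipartite graph $\cG_1$ dual to $\cX$). Under this dictionary the number of removed atoms $Q_0$ and the type counts $\bm Q_2$ are \emph{linear} functionals of $h$, so the weight $\textup{e}^{-Q_0\phi_0-\bm Q_2\bm\phi_2}$ becomes $\exp(-\mathcal L[h])$ for a linear functional $\mathcal L$ assembled from the chemical potentials; recall $\phi_0$ plays the role of $g_s$ and $\bm\phi_2$ of the K\"ahler moduli, so $\mathcal L$ carries an overall $1/g_s$.

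Next I would take the thermodynamic limit by rescaling the lattice so that macroscopic regions carry $O(g_s^{-2})$ atoms, replacing $h$ by a Lipschitz profile $h(x,y)$ with gradient constrained to the Newton polygon $\Delta$. Counting the discrete configurations with a prescribed coarse-grained slope supplies an entropy density, and combining it with $\mathcal L$ gives the variational formula
\begin{equation}
\log Z_{\textup{BPS}}\;\sim\;\frac{1}{g_s^{2}}\,\sup_{h}\Big(-\!\int\big(\sigma(\nabla h)+V(h)\big)\,\textup{d}x\,\textup{d}y\Big),
\end{equation}
where $\sigma$ is the dimer surface tension and $V$ the linear potential coming from $\bm\phi_2$. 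The crucial input is the theorem of \cite{Kenyon:2003uj} that $\sigma$ is the Legendre dual of the Ronkin function $R$ of $P$; this fixes the Euler--Lagrange equation of the variational problem and links its solution to the amoeba $\cA_P$ (the would-be ``liquid'' region) and to the linear pieces of $R$ over the complementary regions, as recorded in the Ronkin theorem above.

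Then I would identify the extremiser. For the ``empty-room'' initial crystal the maximiser $h_\star$ is the crystal limit shape, and the content of \cite{Ooguri:2009ijd,Ooguri:2009ri} (re-derived through \cite{Kenyon:2003uj}) is that $h_\star(x,y)=-R(x,y)$, up to the overall constant $4/g_s^{2}$ that has been omitted in the statement. Substituting $h_\star=-R$ back and using the Legendre duality between $\sigma$ and $R$ to collapse the surface-tension term against the Euler--Lagrange relation, the functional reduces to $\int R(x,y)\,\textup{d}x\,\textup{d}y$, which is the asserted asymptotic; the final identity $F=-\log Z_{\textup{BPS}}$ is then just the stated definition.

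The main obstacle is the middle step: making the large-deviations passage precise --- in particular quantifying the entropy of crystal configurations with a fixed asymptotic slope in the presence of the quiver periodicity, and justifying the interchange of the infinite sum with the saddle-point approximation uniformly in $g_s$ --- together with invoking the limit-shape identification $h_\star=-R$. Both ultimately rest on the dimer machinery of \cite{Kenyon:2003uj,Kenyon:2003ui}; once these are granted, the evaluation of the action at the saddle is short.
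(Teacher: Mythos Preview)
Your sketch is correct and is exactly the saddle-point/limit-shape argument underlying the result; note, however, that the paper does not supply a proof of this proposition at all. It simply records the statement as a fact from \cite{Ooguri:2009ri}, preceded by the sentence ``using saddle point approximation, we have'' and the earlier remark that the limit shape of the molten crystal is minus the Ronkin function. In that sense you have written out what the cited references do rather than reproduced anything in the paper itself, and your expansion (height function encoding, surface tension as Legendre dual of $R$, identification $h_\star=-R$) is the standard route from \cite{Kenyon:2003uj,Okounkov:2003sp,Ooguri:2009ri}.
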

In general, \eqref{ZBPS} is divergent. Hence, we need to normalize the partition function by $Z/Z_0$ where $Z_0$ is the partition function of the initial unmolten crystal. Then the volume between the Ronkin functions for $Z$ and $Z_0$ would remain finite\footnote{Note that this volume is different from the volume under Ronkin function discussed in \cite{Kenyon:2003ui}.}.

In particular, the phase structures of crystals are given by amoebae.
\begin{definition}
An unbounded complementary region of the amoeba corresponds to an unmolten part in the crystal and is hence called the {\bf solid phase}. For the parts where atoms are removed in the crystal, the interior of the amoeba is known as the {\bf liquid phase} while a bounded complementary region of the amoeba is known as the {\bf gas phase}.
\end{definition}
With D6-D2-D0 bound states, there would only be liquid and solid phases. Gas phases would appear when we further add D4 branes.

Following Proposition \ref{prophole}, the number of gas phases of a dimer/crystal model is equal to the genus of $P(z,w)=0$. In general, every solid/frozen phase corresponds to a boundary point on $\Delta$ and every gas phase corresponds to an interior point (except for degenerate cases).

\begin{example}
The Ronkin function and amoeba with $P=k-z-z^{-1}-w-w^{-1}$ ($k>4$) are sketched in Figure \ref{Ronkinamoeba}.
\begin{figure}[h]
    \centering
    \begin{subfigure}{6cm}
		\centering
		\includegraphics[width=5cm]{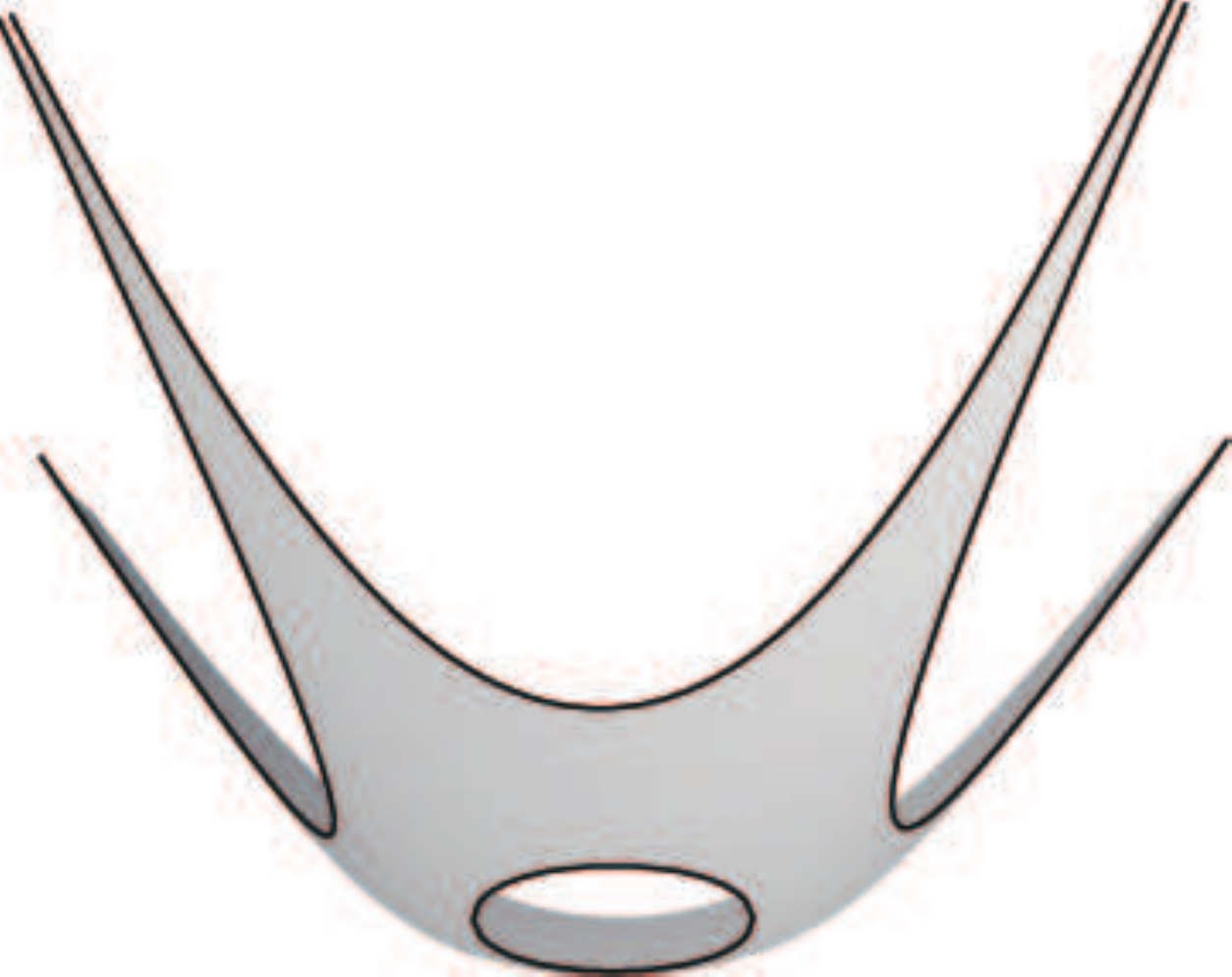}
		\caption{}
	\end{subfigure}
	\begin{subfigure}{6cm}
		\centering
		\includegraphics[width=5cm]{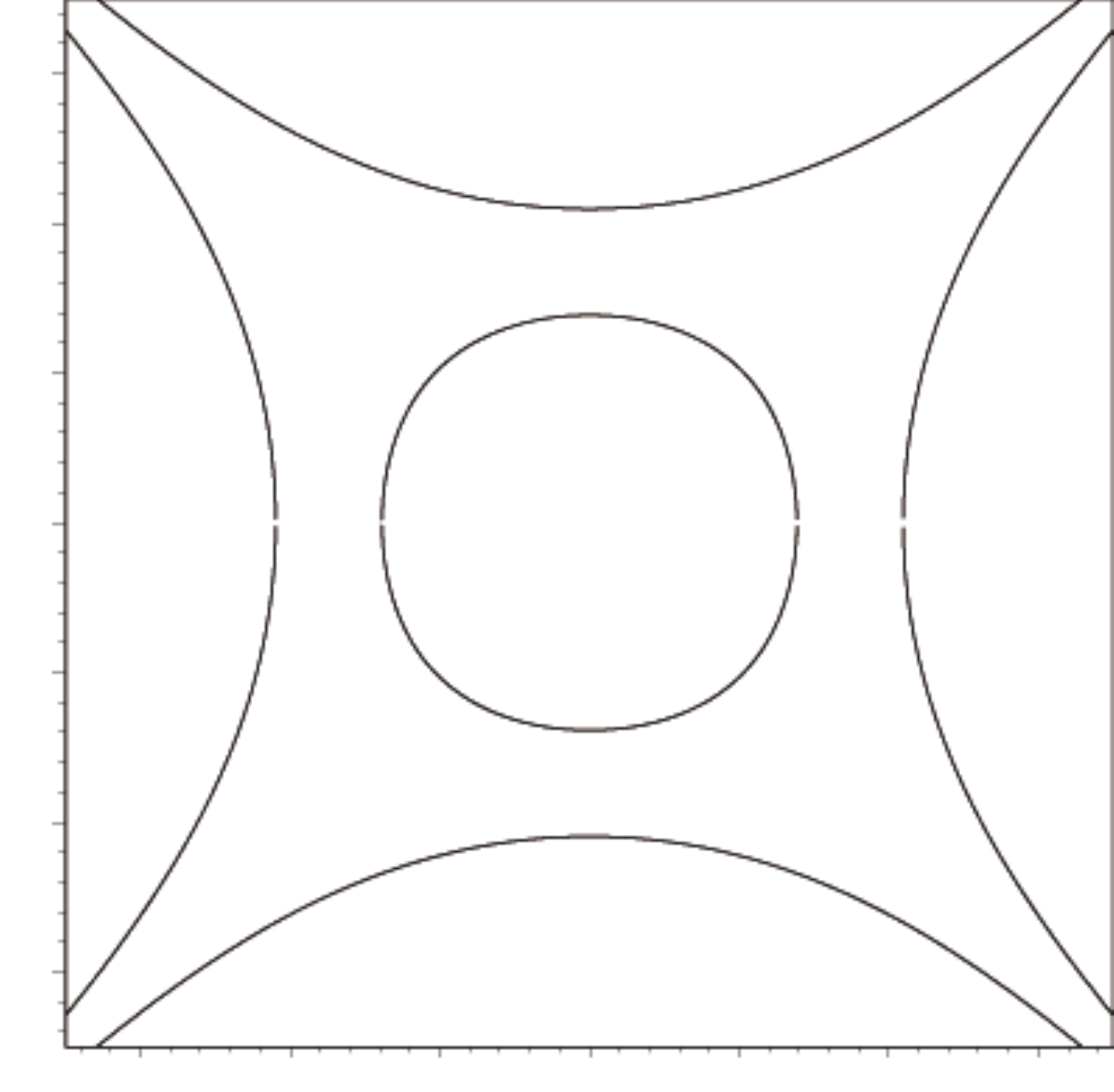}
		\caption{}
	\end{subfigure}
	\caption{The Ronkin function (a) and the amoeba (b) of $\mathbb{F}_0$. Figures are taken from \cite{Kenyon:2003uj} (with slight modifications). The grey part in (a) gives the interior of amoeba in (b), which corresponds to the liquid phase. The hole in (a)/(b) is the gas phase. The unbounded white parts in (a) gives the unbounded complementary regions in (b). They correspond to solid phases.}\label{Ronkinamoeba}
\end{figure}
As we can see, the bounded (unbounded) linear facets in a Ronkin function correspond to the bounded (unbounded) complementary regions in the amoeba while the non-linear part in the Ronkin function is projected to the interior of the amoeba. The (minus) Ronkin function is the limit shape of the crystal.
\end{example}

\paragraph{Quiver quantum mechanics} Let us think of $\mathcal{X}$ as a fibre bundle of $\mathbb{T}^2\times\mathbb{R}$ over the $\mathbb{R}^3$ base. Then we can recover the quiver and brane tiling by performing T-dualities along the $\mathbb{T}^2$ directions. The low energy effective 1d quantum mechanics is in fact the dimensional reduction of the 4d $\mathcal{N}=1$ gauge theory. In the toric diagram, its boundaries specify the singular loci where the $\mathbb{T}^2$ fibre degenrates to a circle. This becomes the NS5-branes under T-dualities, which are (straightened) zig-zag paths on the brane tiling. The D0s become D2s wrapping the whole torus while the D2s are still D2s but restricted in certain domains separated by the NS5 branes. In some of these domains, there would also be NS5s stretched parallel to the D2s. Based on the charges of these NS5s, different domains correspond to black/white nodes and faces in the dimer model. Thus, we can also get the quiver as the dual graph of the dimer. Readers are referred to Figure 1 and 2 in \cite{Ooguri:2009ijd} for illustrations.

For the single D6 brane, as it fills the whole Gorenstein 3-fold, it will become a point on the torus after T-dualities. Hence, it acts as a flavour brane and there is a flavour node added to the quiver. Likewise, the D4s will become flavour D2-branes which are again points on the torus. These would lead to flavour D4-nodes in the quiver diagram \cite{Nishinaka:2013mba}.

\section{Mahler Measure in Quiver Gauge Theories}\label{mahlerquiver}
After going over some fundamentals of Mahler measures and dimer models, we can now study the roles Mahler measures play in quiver gauge theories.

As discussed in \eqref{Pkp}, we can recast Newton polynomials into form (up to shifting the Newton polygon and/or overall multiplication of sign)
\begin{equation}
 P(z,w)=k-p(z,w) \ ,    \qquad
 k>0,
\end{equation}
where $p(z,w)$ has no constant terms and no free parameters. When we start to increase $k$, holes might appear in the amoeba of $P$.
For any dimer, let us call the weights $2\sin(\pi R_I/2)$ the {\bf canonical weight choice}. Nevertheless, let us start with a more general set-up where all coefficients $c_{(m,n)}$ depend on $k$ in the following definition.
\begin{definition}
For a spectral curve associated to a dimer with one free parameter $k$, write it as $P(z,w)=\sum\limits_{(m,n)}c_{(m,n)}(k)z^mw^n$. The {\bf isoradial limit} is defined to be $k=k_\textup{iso}$ such that $c_{(m,n)}(k)$ agrees with the coefficients from the canonical choice.
\end{definition}
\begin{remark}
So far, the canonical choice only has special properties for isoradial dimers. As we will see in \S\ref{Seiberg}, the weights $2\sin(\pi R_I/2)$ which physically come from R-charges also have interesting features for non-isoradial dimers. One may then view a non-isoradial dimer as some sort of ``isoradial dimer'' with ``zero'' or ``negative'' edge lengths. Therefore, we shall always call it the isoradial limit as long as the edge weights follow the canonical choice $2\sin(\pi R_I/2)$ for any dimer regardless of its isoradiality.
\end{remark}
\begin{remark}
For $P(z,w)=k-p(z,w)$, the amoeba would have $g=0$ when $k\leq k_\textup{iso}$. If $k>k_\textup{iso}$, the holes would emerge in the amoeba. In particular, the number of holes would always be the same as the number of interior points in the Newton polygon. These holes would evolve simultaneously when we vary $k$.
\end{remark}

Another interesting limit in the parametrization $P=k-p$ would be the large $k$ limit. First, we introduce a well-known concept.
\begin{definition}
The {\bf Hausdorff distance} between two closed sets $A,B\subset\mathbb{R}^n$ is
\begin{equation}
    d_H=\max\{\sup_{a\in A}(d_E(a,B)),\sup_{b\in B}(d_E(b,A))\},
\end{equation}
where $d_E$ is the usual Euclidean distance.
\end{definition}
We shall define a tropical limit using Hausdorff distance.
\begin{definition}
For a Harnack curve associated to a dimer with one free parameter $k$, write it as $P(z,w)=\sum\limits_{(m,n)}c_{(m,n)}(k)z^mw^n$. Denote its amoeba to be $\mathcal{A}(k)$ and the spine to be $\mathcal{S}$. The {\bf tropical limit} is $k=k_\textup{trop}$ such that $d_H(\mathcal{A}(k),\mathcal{S})$ is minimized at $k_\textup{trop}$.
\end{definition}

\begin{proposition}
For $P(z,w)=k-p(z,w)$, $k\rightarrow\infty$ is a tropical limit.
\end{proposition}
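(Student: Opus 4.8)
The plan is to show that as $k\to\infty$, the amoeba $\mathcal{A}(k)$ of $P(z,w)=k-p(z,w)$ converges in Hausdorff distance to its spine $\mathcal{S}$, so that $d_H(\mathcal{A}(k),\mathcal{S})\to 0$ and the infimum of $d_H$ over $k$ is approached (if not attained) in the large-$k$ regime; this is precisely what ``$k\to\infty$ is a tropical limit'' means. The underlying mechanism is the classical one relating amoebae to tropical (non-archimedean) amoebae: rescaling the single large coefficient $k$ of the constant monomial against the fixed coefficients of $p$ is the same as sending a base parameter $t\to\infty$ in a one-parameter family whose tropicalization is the spine. I would carry out the argument in the following steps.

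First, I would recall that the spine $\mathcal{S}$ is by definition (Definition in \S\ref{harnack}) the dual $(p,q)$-web of the Newton polygon $\Delta$ of $P$, equivalently the corner locus of the tropical polynomial $\max\{0,\ \max_{(m,n)}(mx+ny+\log|c_{(m,n)}|)\}$ built from $P=k-p$; since only the constant coefficient $k$ carries the scale, I would write $x=\tilde x$, $y=\tilde y$ but rescale so that the relevant comparison is between $\log k$ and the fixed quantities $\log|c_{(m,n)}|$ coming from $p$. Second, I would invoke the standard estimate bounding the distance from a point of the amoeba to the spine: for a point $(x,y)\in\mathcal{A}_P$, one term $k$ or some monomial of $p$ must dominate the others up to a bounded factor depending only on the number of monomials $N$, so $(x,y)$ lies within distance $O(\log N / \min\text{(edge data)})$ — and crucially, the relevant ``gaps'' in the tropical polynomial that control this distance all grow linearly in $\log k$ as $k\to\infty$ while the ``width'' of the transition regions stays bounded, hence the Hausdorff distance is $O(1/\log k)\to 0$. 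Concretely I would cite/adapt the inclusion $\mathcal{S}\subset\mathcal{A}_P$ together with the complementary bound that every point of $\mathcal{A}_P$ is within $C(N)$ of $\mathcal{S}$ after the coordinates are dilated by $\log k$, which is a form of Viro's patchworking / the Mikhalkin–Rullgård theorem on convergence of amoebae to tropical varieties.

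Third, to make this self-contained in the paper's language I would argue directly: a hole of $\mathcal{A}(k)$ corresponding to the interior lattice point $(0,0)$ of $\Delta$ (Proposition \ref{prophole}) has, by the Ronkin-function description (the theorem on $\nabla R$), a size that grows like $\log k$; more generally each complementary region $E_i$ with gradient $(a_i,b_i)$ is a translate that recedes linearly in $\log k$, so after normalizing the amoeba's linear scale its complement fills out the whole complement of the $(p,q)$-web. Hence the boundary curves of $\mathcal{A}(k)$ — which by the Proposition on Harnack boundaries are the images of the real locus — get squeezed onto the web $\mathcal{S}$. I would also note the generalized-Mahler/Ronkin viewpoint gives the cleanest quantitative handle: $R(x,y)=\log k + O(k^{-1})$ on the unbounded region containing the origin's reflection, and the transition zone where $R$ is strictly convex has width $O(1/\log k)$, directly bounding $d_H(\mathcal{A}(k),\mathcal{S})$.

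The main obstacle I anticipate is uniformity: I must ensure the bound on the width of the transition region (the part of $\mathcal{A}(k)$ that is not already ``on'' the spine) is controlled uniformly as $k\to\infty$, independent of how the various holes move, and that no pair of nearly-parallel edges of the web causes the transition zones to fail to shrink. This requires knowing that the coefficients $c_{(m,n)}$ of $p$ stay fixed (which is exactly the hypothesis $P=k-p$ with $p$ having no free parameters), so the tropical polynomial's Newton subdivision and all its edge slopes are frozen, and only the single term $\log k$ moves — this degeneracy is mild and the convergence is monotone in an appropriate sense. A secondary, more bookkeeping-level issue is confirming that the infimum defining $k_\textup{trop}$ is genuinely realized in the limit $k\to\infty$ rather than at some finite $k$ — but since $d_H(\mathcal{A}(k),\mathcal{S})$ is bounded below by $0$, approaches $0$ as $k\to\infty$, and $\mathcal{S}\subset\mathcal{A}(k)$ always (so $d_H$ equals the one-sided sup over $\mathcal{A}(k)$ of the distance to $\mathcal{S}$, which is strictly positive for finite $k$ since the amoeba has positive area $\pi^2 A(\Delta)$ by Mikhalkin's theorem), the limit $k\to\infty$ is indeed the (asymptotic) minimizer, establishing the claim.
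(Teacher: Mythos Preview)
Your approach is broadly correct and shares the same underlying mechanism as the paper's, but the paper's proof is far more compressed: it simply introduces the rescaled amoeba $\mathcal{A}_k$ obtained by replacing $\log$ with $\log_k$, so that $\mathcal{A}_k = \frac{1}{\log k}\mathcal{A}$, and then invokes Maslov dequantization (Mikhalkin) to conclude that $\mathcal{A}_k$ converges to the tropical curve $\mathrm{trop}(P)$ as $k\to\infty$; since $\mathcal{A}$ and $\mathcal{A}_k$ are similar, this is read as ``$\mathcal{A}$ retracts to its spine.'' What you do differently is unpack this dequantization by hand---via domination of monomials, Ronkin-facet growth, and the fixed-area constraint---which is instructive but not needed once one cites the result. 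Your extra final paragraph, arguing that $d_H>0$ for every finite $k$ (positive amoeba area) while $d_H$ tends to its infimum as $k\to\infty$, actually supplies a point the paper's sketch leaves implicit.

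One genuine loose end in your write-up: the claim that $d_H(\mathcal{A}(k),\mathcal{S})=O(1/\log k)$ does not follow from ``the gaps grow like $\log k$ while the width of the transition regions stays bounded.'' If the width stays bounded, you get $d_H$ bounded, not shrinking; indeed, in the \emph{unrescaled} plane the amoeba near each spine vertex looks like a fixed local amoeba (e.g.\ $\mathbb{C}^3$-type) whose maximal distance to the local spine is a $k$-independent constant. What Maslov dequantization gives is convergence \emph{after} dividing by $\log k$, i.e.\ $d_H(\mathcal{A},\mathcal{S})=o(\log k)$, which is also all the paper's argument literally establishes. For the purposes of the proposition (that $k\to\infty$ realizes the infimum of $d_H$) you should either argue monotonicity of $d_H$ in $k$ directly, or note---as the paper effectively does---that the statement is meant at the level of the rescaled picture where the convergence is to the spine itself.
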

\begin{proof}
Consider the modified amoeba defined by
\begin{equation}
    \mathcal{A}_k:(z,w)\mapsto(\log_k|z|,\log_k|w|)=\frac{1}{\log(k)}(\log|z|,\log|w|).
\end{equation}
Therefore, $\mathcal{A}_k$ is just a rescaling of $\mathcal{A}$, i.e., $\mathcal{A}$ and $\mathcal{A}_k$ are similar in the sense of Euclidean geometry. Following Maslov dequantization in \cite{mikhalkin2004decomposition} (see also \cite{bogaardintroduction}), we learn that $\mathcal{A}_k$ converges to the tropical curve trop$(P)$ when $k\rightarrow\infty$. Hence, $\mathcal{A}$ retracts to its spine when $k\rightarrow\infty$.
\end{proof}

\begin{remark}
Notice that $k\rightarrow\infty$ is {\bf a} tropical limit, but may or may not be {\bf the} only tropical limit. See Figure \ref{flow_amoeba}(b) for example. Nevertheless, in this paper, we will mainly focus on the tropical limit at infinity.
\end{remark}

Since most of the relevant objects diverge at $k_\text{trop}=\infty$, we will mainly discuss sufficiently large $k$.
\begin{definition}
Given an amoeba $\mathcal{A}_P$, denote the set of all vertices $v_i$ of the spine as $\mathcal{V}$. Let $V\subset\mathcal{V}$ be a non-empty proper subset of $\mathcal{V}$. We say that $\mathcal{A}_P$ is {\bf locally} an amoeba $\mathcal{A}_\textup{loc}$ around $V$ if in a neighbourhood of $V$, $P$ can be approximated by dropping some of its terms. The dropped terms correspond to the vertices $v^*$ in the dual graph that are outside the neighbourhood. Moreover, the approximated Newton polynomial has amoeba $\mathcal{A}_\textup{loc}$.

Let $\mathcal{P}(\mathcal{V})\neq\{\mathcal{V}\}$ be a non-trivial partition of $\mathcal{V}$. If $\mathcal{A}_P$ is locally some $\mathcal{A}_\textup{loc}$ for every $V\in\mathcal{P}(\mathcal{V})$, then we say $k$ is {\bf subtropical}. If $\mathcal{P}(\mathcal{V})=\{\{v_1\},\{v_2\},\dots,\{v_n\}\}$, i.e., there is a local amoeba around every single vertex $v_i$, then we say $k$ is {\bf high-subtropical}\footnote{Analogous to the term ``tropical'', we also borrow words ``subtropics'' and ``subtropical high'' from climate science.}.
\end{definition}

\subsection{The Mahler Flow}\label{mahlerflow}
As mentioned above, the Newton polynomials here are constructed by writing down an initial $P_\text{iso}=k_\text{iso}-p(z,w)$ with certain choice of edge weights. Then we simply vary $k$ to get a family of curves.

When varying $k$, the Mahler measure changes continuously. We shall refer to this as the {\bf Mahler flow}. As discussed before, $k_{\min}$ gives the isoradial limit while a tropical limit is reached when $k\rightarrow\infty$.

Recall that when $k>\max(|p(z,w)|)=p(1,1)$, we can compute $u_0(k)$ from \eqref{u0integral} using its Taylor expansion just like $m(P)$. It is not hard to see that
\begin{equation}
    \frac{\text{d}m(P)}{\text{d}\log k}=u_0(k).\label{mfloweqn}
\end{equation}

It is worth noting that this {\bf Mahler flow equation} has the same form as the RG equation, where the coupling and $\beta$-function are replaced by $m(P)$ and $u_0(k)$ respectively, and $k$ controls the scale here instead of energy scale. We may also integrate the Mahler flow equation to get
\begin{equation}
    \int_{m_{k_0}}^{m_{k_1}}\frac{\text{d}m(P)}{u_0(k)}=\log\frac{k_1}{k_0},
\end{equation}
where we shall take $k_0=p(1,1)$. Then from their Taylor expansions, we can see that $u_0(k)$ grows no faster than $m(P)$. As a result, both sides would diverge at large $k$.

For $k\leq p(1,1)$, the behaviour of $u_0(k)$ could be very different. In fact, the right hand side of \eqref{mfloweqn} is not the same $u_0$ as the integral of $\frac{1}{1-k^{-1}p}$ any more. This is because the Taylor expansion has radius of convergence $k>p(1,1)$. Consequently, the right hand side in \eqref{mfloweqn} is no longer a period of the elliptic curve for sufficiently small $k$. Hence, we shall take \eqref{mfloweqn} as the definition of $u_0(k)$ for \emph{any} $k$.

When $k>p(1,1)$, it is straightforward to see that the left hand side of \eqref{mfloweqn} is positive since $u_0(k)$ should be positive as a period. Its positivity can also been seen from its Taylor expansion $u_0(k)=\sum\limits_{n=0}^\infty\frac{p^n(z,w)}{k^n}$. Therefore\footnote{Alternatively, we may also take the derivative of $m(P)$ with respect to $k$:
\begin{equation}
    \frac{\text{d}}{\text{d}k}m(P)=\int_0^1\int_0^1\frac{\text{d}}{\text{d}k}\left(\log|k-p|\right)\text{d}\theta\text{d}\phi.\label{dmdk}
\end{equation}
Now,
\begin{equation}
    \frac{\text{d}}{\text{d}k}\log|k-p|=\frac{1}{|k-p|}\frac{\text{d}}{\text{d}k}\left((k-p)(k-\bar{p})\right)^{1/2}=\frac{k-\text{Re}(p)}{|k-p|^2}.
\end{equation}
This leads to the same result as $\max(|p|)=\max(\text{Re}(p))=p(1,1)$.},
\begin{lemma}
The Mahler measure strictly increases when $k$ increases along the Mahler flow, from $k_0=\max\limits_{|z|=|w|=1}(|p(z,w)|)$ to $k\rightarrow\infty$.
\end{lemma}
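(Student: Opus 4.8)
The plan is to establish the strict monotonicity directly from the Mahler flow equation \eqref{mfloweqn}, since the statement is equivalent to showing $u_0(k)>0$ for all $k>k_0=p(1,1)$. First I would recall the series representation: in the regime $k>\max_{|z|=|w|=1}|p(z,w)|=p(1,1)$ (the last equality holding because $p$ has non-negative real coefficients in the dimer setting, so its modulus on the torus is maximised at $z=w=1$), the geometric expansion of $1/(1-k^{-1}p)$ converges uniformly on $T^2$, and one may integrate term by term in \eqref{u0integral} to obtain $u_0(k)=\sum_{n\geq 0}\frac{1}{(2\pi i)^2}\int_{|z|=|w|=1}\frac{p(z,w)^n}{k^n}\frac{\mathrm{d}z}{z}\frac{\mathrm{d}w}{w}$. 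Each term in this sum is $k^{-n}$ times the constant Fourier coefficient of $p^n$, i.e.\ the number of ways (with multiplicity given by the positive weights) of writing $0$ as a sum of $n$ lattice vectors appearing in $p$; this is manifestly non-negative, and the $n=0$ term contributes exactly $1$. Hence $u_0(k)\geq 1>0$, and \eqref{mfloweqn} gives $\frac{\mathrm{d}m(P)}{\mathrm{d}\log k}>0$, so $m(P)$ is strictly increasing in $\log k$, equivalently in $k$, on $(k_0,\infty)$.

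An alternative route, sketched in the footnote, differentiates the integral representation \eqref{mahlerdef} directly: $\frac{\mathrm{d}}{\mathrm{d}k}m(P)=\int_0^1\!\!\int_0^1\frac{k-\mathrm{Re}(p)}{|k-p|^2}\,\mathrm{d}\theta\,\mathrm{d}\phi$. For $k>p(1,1)\geq\mathrm{Re}(p(z,w))$ pointwise on the torus (again using non-negativity of coefficients), the integrand is strictly positive except possibly on a measure-zero set, so the integral is strictly positive. I would present the series argument as the main proof and mention this as a cross-check, being careful to justify differentiation under the integral sign (dominated convergence, valid since $k>p(1,1)$ keeps $|k-p|$ bounded away from $0$ uniformly).

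The main obstacle is not any deep analysis but rather pinning down precisely why $\max_{|z|=|w|=1}|p(z,w)|=p(1,1)$, since this is what makes $k_0$ the correct lower endpoint and what guarantees positivity of the Fourier coefficients of $p^n$. This uses that in the dimer/brane-tiling construction the non-constant part $p(z,w)$ has non-negative real coefficients (the edge weights $2\sin\theta_I>0$, with signs arranged so that, after the sign convention in \eqref{monomials} and the choice $P=k-p$, the monomials of $p$ carry positive coefficients), so by the triangle inequality $|p(z,w)|\leq\sum|c_{(m,n)}|=p(1,1)$ on $T^2$. I would state this as a hypothesis inherited from the setup (it is exactly the ``$P=k-p$, $k>0$'' normalisation introduced at the start of \S\ref{mahlerquiver}) and then the rest is routine. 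A minor secondary point is the interchange of sum and integral in the series argument, but this is already justified in the excerpt by uniform convergence for $|k|>\max_{T^n}|p|$, so it may be invoked directly.
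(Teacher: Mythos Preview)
Your proposal is correct and follows essentially the same approach as the paper. The paper's argument is brief: it invokes the Mahler flow equation \eqref{mfloweqn} and asserts $u_0(k)>0$ for $k>p(1,1)$, citing both ``positivity as a period'' and the Taylor expansion $u_0(k)=\sum_{n\geq 0}[p^n]_0\,k^{-n}$, with the direct differentiation $\frac{\mathrm{d}}{\mathrm{d}k}\log|k-p|=\frac{k-\mathrm{Re}(p)}{|k-p|^2}$ relegated to a footnote. You reproduce exactly this structure, but with more care: you make explicit that $[p^n]_0\geq 0$ requires non-negative coefficients in $p$, you obtain the sharper bound $u_0(k)\geq 1$ rather than merely $u_0(k)>0$, and you flag the identification $\max_{T^2}|p|=p(1,1)$ as the real content to justify, tracing it back to the sign/weight conventions of \S\ref{mahlerquiver}. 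The paper takes this last point for granted, so your discussion is an improvement in rigor rather than a different route.
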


In many cases, $k_\text{iso}<k_0$. In terms of amoeba, this means that the holes would not open up at the origin. Then we can always shift the amoeba by
\begin{equation}
    (\log|z|,\log|w|)\rightarrow(\log|z|-\log a,\log|w|-\log b)
\end{equation}
for some positive numbers $a$ and $b$ such that a node is moved to the origin. This gives a rescaling/redefinition of the variables in $P(z,w)$ \footnote{As this is just a shift of the amoeba, the 2-to-1 property between the spectral curve and amoeba still holds. Hence, the curve is still Harnack.}:
\begin{equation}
    k_\text{iso}-p(z,w)\rightarrow k_\text{iso}-p(z/a,w/b).\label{rescale}
\end{equation}
Since $k_\text{iso}-p(z,w)=0$ is Harnack, the pair $(a,b)$ is unique by the 1-to-1 property between amoeba and spectral curve at nodes. In other words, there is a unique solution to $k_\text{iso}-\Tilde{p}(z,w)=0$ where $\Tilde{p}(z,w)\equiv p(z/a,w/b)$. As the parametrization of $k-\Tilde{p}$ (for fixed $\Tilde{p}$) is continuous, we can see that this unique solution is given by
\begin{equation}
    k_\text{iso}=\max_{|z|=|w|=1}(|\Tilde{p}(z,w)|)=\Tilde{p}(1,1)=p(1/a,1/b).
\end{equation}
More importantly, this shows that a hole would now open up at the origin for the amoeba when $k$ gets increased from $k_\text{iso}$, and we have $k\geq\max\limits_{|z|=|w|=1}(|\Tilde{p}|)$ for any $k\geq k_\text{iso}$. Therefore, we can now rewrite the above lemma as
\begin{proposition}
The Mahler measure (with possible rescaling of variables) strictly increases when $k$ increases along the Mahler flow, from $k_\textup{iso}$ to $k\rightarrow\infty$.\label{mflowprop}
\end{proposition}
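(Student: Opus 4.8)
The plan is to obtain the proposition as an immediate consequence of the Lemma above, once the variables have been normalized so that the hole of the amoeba opens at the origin. That Lemma already gives strict monotonicity of $m(P)$ on the ray $k>k_0$ with $k_0=\max_{|z|=|w|=1}|p(z,w)|=p(1,1)$, so the only thing to do is to check that, after the admissible change of variables \eqref{rescale}, the lower endpoint $k_0$ becomes exactly $k_\text{iso}$.

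First I would dispose of the case $k_\text{iso}=k_0$ (for example $\mathbb{F}_0$, where both equal $4$): here no rescaling is required and the Lemma applies verbatim on $(k_\text{iso},\infty)$. In the generic case $k_\text{iso}<k_0$, the amoeba of $k_\text{iso}-p$ misses the origin, but by Proposition \ref{prophole} each interior lattice point of the Newton polygon corresponds, at genus zero, to an isolated real node of the spectral curve; choosing one such node and applying the shift $(\log|z|,\log|w|)\mapsto(\log|z|-\log a,\log|w|-\log b)$ that carries it to the origin amounts to replacing $p$ by $\tilde{p}(z,w):=p(z/a,w/b)$. Since $k_\text{iso}-p=0$ is Harnack and the projection to its amoeba is $1$-to-$1$ at nodes, the pair $(a,b)$ is unique; and by continuity in $k$ of the family $k-\tilde{p}$, the value of $k$ at which the node sits at the origin is
\begin{equation}
    k_\text{iso}=\max_{|z|=|w|=1}|\tilde{p}(z,w)|=\tilde{p}(1,1)=p(1/a,1/b).
\end{equation}
Thus for the rescaled polynomial $k_\text{iso}$ coincides with the threshold $k_0$ of the Lemma, and $k\ge k_\text{iso}$ forces $k\ge\max_{|z|=|w|=1}|\tilde{p}(z,w)|$; applying the Lemma with $p$ replaced by $\tilde{p}$ gives strict increase of $m(k-\tilde{p})$ --- equivalently $\mathrm{d}m/\mathrm{d}\log k=u_0(k)>0$ by \eqref{mfloweqn} --- all the way from $k_\text{iso}$ to $k\to\infty$. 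If the Newton polygon has no interior point then $g\equiv 0$, there is nothing to rescale, and $k_\text{iso}=k_0$ by the same continuity argument, so the Lemma again applies directly.

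The step I expect to be the real content is the identity $k_\text{iso}=\tilde{p}(1,1)=\max_{|z|=|w|=1}|\tilde{p}(z,w)|$. It rests on two points that must be made precise: that the maximum modulus of these dimer polynomials on the unit torus is attained at the all-ones point, so that the origin first meets the amoeba exactly at $k=\tilde{p}(1,1)$ and a hole begins to open there; and that the translation sending a node to the origin is unambiguous, which is precisely the $1$-to-$1$ clause of the Harnack property. Granting these, positivity of $u_0(k)$ --- immediate from its expansion $\sum_{n\ge0}p^n/k^n$ integrated over the torus, valid for $k>\tilde{p}(1,1)$ --- and hence the strict monotonicity, are inherited unchanged from the Lemma.
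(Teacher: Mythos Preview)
Your proposal is correct and follows essentially the same route as the paper: reduce to the Lemma by rescaling variables so that a node sits at the origin, use the Harnack $1$-to-$1$ property to pin down $(a,b)$, and identify $k_\text{iso}=\tilde p(1,1)=\max_{|z|=|w|=1}|\tilde p|$ so that the Lemma's threshold becomes $k_\text{iso}$. Your handling of the edge cases (no interior point, $k_\text{iso}=k_0$) is slightly more explicit than the paper's, but the argument is otherwise identical.
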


\begin{remark}
Although the Mahler measure would vary under the rescaling of $z,w$, this is essentially a translation on the $xy$-plane for the Ronkin function as the amoeba is shifted. The change of Mahler measure is just indicating different points on $R(x,y)$. Therefore, the physics would not change. The partition function, which is the volume under $R(x,y)$, remains invariant under the rescaling of variables.
\end{remark}

More generally, we may also consider any $k>0$ (without any rescaling of variables) and compute the integration for Mahler measure numerically. Although the spectral curve is non-Harnack and hence the correspondence between solid/liquid phases and regions of amoeba is not clear, we find that $m(P)$ always increases monotonically for all Newton polynomials $P(z,w)$ we have encountered along the Mahler flow. Thus, we are led to:
\begin{conjecture}
The Mahler measure monotonically increases when $k$ increases along the Mahler flow, from $k=0$ to $k \rightarrow \infty$.\label{mflowconj1}
\end{conjecture}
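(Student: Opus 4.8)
The plan is to convert the conjecture into a single positivity statement about a Cauchy transform and then attack it by potential theory. Differentiating under the integral in \eqref{mahlerdef} — legitimate because $\{P=0\}\cap\{|z|=|w|=1\}$ is, for all but finitely many $k$, a finite set of points near which the integrand has only an integrable $O(|k-p|^{-1})$ singularity — gives, exactly as in the footnote to \eqref{dmdk},
\begin{equation}
\frac{\mathrm{d}m(P)}{\mathrm{d}k}=\mathrm{Re}\!\int_{0}^{1}\!\!\int_{0}^{1}\frac{\mathrm{d}\theta\,\mathrm{d}\phi}{k-p(\mathrm{e}^{2\pi\mathrm{i}\theta},\mathrm{e}^{2\pi\mathrm{i}\phi})}=\mathrm{Re}\,C_\nu(k),\qquad C_\nu(k):=\int_{\mathbb{C}}\frac{\mathrm{d}\nu(\zeta)}{k-\zeta},
\end{equation}
where $\nu:=p_\ast\mu$ is the push-forward to $\mathbb{C}$ of the normalised Haar measure $\mu$ on the torus — a compactly supported probability measure, invariant under $\zeta\mapsto\bar\zeta$ — and $C_\nu$ is its Cauchy transform. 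Consistently, $u_0(k)=k\,\mathrm{d}m/\mathrm{d}k=k\,\mathrm{Re}\,C_\nu(k)$, which recovers \eqref{mfloweqn} and explains why the ``period'' is automatically real. Thus Conjecture \ref{mflowconj1} is \emph{equivalent} to: $\mathrm{Re}\,C_\nu(k)\ge 0$ for every $k>0$.

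The range $k\ge k_0=\max_{|z|=|w|=1}\mathrm{Re}\,p=p(1,1)$ is Proposition \ref{mflowprop}, and it is in fact the easy regime: for $k_0\le k_1\le k_2$ one has the \emph{pointwise} domination $|k_1-p(z,w)|\le|k_2-p(z,w)|$ on the torus, since this inequality is equivalent to $\mathrm{Re}\,p(z,w)\le\tfrac12(k_1+k_2)$, which holds because $\mathrm{Re}\,p\le k_0\le k_1$. Integrating $\log|k_1-p|\le\log|k_2-p|$ then gives $m(k_1-p)\le m(k_2-p)$ directly (and in this regime $u_0$ is a genuine period of $\omega_Y$, hence $\ge 0$). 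The entire difficulty is the interval $0<k<k_0$ — equivalently $0\in\mathcal{A}_P$, so $\{P=0\}\cap T^{2}\neq\varnothing$ and the curve is in general non-Harnack — where the pointwise comparison \emph{fails} (there exist $(z,w)$ with $\mathrm{Re}\,p(z,w)>k_1$, so $|k_1-p|>|k_2-p|$ there) and one must extract genuine cancellation from the torus average.

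For the hard interval I would work with $g(k):=m(k-p)=\int_{T^2}\log|k-p|\,\mathrm{d}\mu$, which is precisely (minus) the logarithmic potential of $\nu$: it is subharmonic on $\mathbb{C}$ with $\Delta g=2\pi\nu$, harmonic on $\mathbb{C}\setminus K$ with $K=\mathrm{supp}\,\nu=\overline{p(T^2)}$, symmetric under $k\mapsto\bar k$, and $g(k)=\log|k|+o(1)$ as $|k|\to\infty$; note $C_\nu=2\,\partial_k g$, so the claim is that $g$ is non-decreasing on $(0,\infty)$. On the part of $(0,\infty)$ lying in $\mathbb{C}\setminus K$ — in particular the unbounded component, which contains $(k_0,\infty)$ and where $g$ is harmonic and $\partial_k g\sim k^{-1}$ at infinity — I would try to control $\partial_k g$ by the maximum principle together with the boundary behaviour of $g$ along $\partial K$ (the real locus of $P=0$, via the boundary-of-amoeba proposition) and its growth at infinity. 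The heart of the matter, and the step I expect to be the genuine obstacle, is $(0,\infty)\cap K$, where $g$ is strictly subharmonic: here one must show that the two competing contributions to $\int_{T^2}\frac{k-\mathrm{Re}\,p}{|k-p|^{2}}$ coming from $\{\mathrm{Re}\,p<k\}$ and $\{\mathrm{Re}\,p>k\}$ always net out non-negatively. This seems to demand structural information about $\nu=p_\ast\mu$ that ``$P$ is a dimer characteristic polynomial'' does not obviously supply — for instance a monotone mass-flux property of the level-set foliation of $\mathrm{Re}\,p$ on $T^{2}$, or a convexity statement for $g$ restricted to $K\cap\mathbb{R}$.

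A possibly more tractable alternative for $0<k<k_0$ is Jensen's formula applied to $P(z,w)=a(z)\prod_j\bigl(w-\beta_j(z;k)\bigr)$ as a polynomial in $w$, giving $g(k)=\int_{|z|=1}\bigl(\log|a(z)|+\sum_j\log^{+}|\beta_j(z;k)|\bigr)\tfrac{|\mathrm{d}z|}{2\pi}$; between the discrete $k$-values at which some $|\beta_j|$ crosses $1$ (where the contributions are continuous since $\log^{+}$ vanishes there) one has $g'(k)=\int_{|z|=1}\sum_{j:\,|\beta_j|>1}\mathrm{Re}\bigl(\overline{\beta_j}\,\partial_k\beta_j/|\beta_j|^{2}\bigr)\tfrac{|\mathrm{d}z|}{2\pi}$, and the task becomes to prove that, on average over $|z|=1$, the roots that have escaped the unit circle move outward. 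I would calibrate both approaches against the exactly solvable case $p=z+z^{-1}$ (where $g\equiv0$ on $[0,2]$, so the inequality is sharp and only non-strict) and $\mathbb{F}_0$, and against any centrally symmetric toric diagram, where $\nu$ also has $\zeta\mapsto-\zeta$ symmetry, $g$ is even, and the claim collapses to ``$k=0$ is a global minimum of $g$''. In every approach it is the non-Harnack regime that needs a new idea, which is precisely why the statement is stated here as a conjecture rather than a proposition.
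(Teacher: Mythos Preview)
The statement you address is labelled a \textbf{Conjecture} in the paper, and the paper offers no proof of it. The only support given is numerical: ``we find that $m(P)$ always increases monotonically for all Newton polynomials $P(z,w)$ we have encountered along the Mahler flow''. So there is no argument in the paper to compare yours against.

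Your proposal is not a proof either --- and you say so in your final sentence --- but it is a correct and rather sharp \emph{reduction} that goes well beyond what the paper does. The reformulation $\mathrm{d}m/\mathrm{d}k=\mathrm{Re}\,C_\nu(k)$ with $\nu=p_\ast\mu$, the identification of $g(k)=m(k-p)$ as (minus) a logarithmic potential with $\Delta g=2\pi\nu$, and the Jensen/root-tracking alternative are all legitimate and potentially useful. Your pointwise argument for $k\ge k_0$ (via $\mathrm{Re}\,p\le k_0\le\tfrac12(k_1+k_2)\Rightarrow|k_1-p|\le|k_2-p|$) is in fact cleaner than the paper's version of the same step, which is the footnote near \eqref{dmdk} feeding into the Lemma before Proposition~\ref{mflowprop}.

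You have correctly located the obstruction: on $(0,k_0)$ the curve is non-Harnack, the pointwise domination fails, and one needs genuine cancellation in the torus average --- which neither the potential-theory route nor the Jensen route supplies without extra structural input about $p$. That is precisely why the paper leaves it as a conjecture. Your calibration cases are well chosen; the $\mathbb{F}_0$ picture in Appendix~\ref{exF0} (Figure~\ref{flow_amoeba}(a)) exhibits exactly the even-in-$k$ behaviour with minimum at $k=0$ that your symmetric-polygon remark predicts. In short: no error to flag, no paper proof to diverge from; your write-up is a sound roadmap toward an open problem.
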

 
From the viewpoint of crystal melting in the thermodynamic limit, it is natural to expect the increasing of Mahler measure when we increase $k$ since more atoms are removed from the crystal. In terms of amoeba, the size of the hole is controlled by the value of $k$. When $k$ increases, the hole would also become larger and larger, which is consistent with the growing gas phases. When $k$ is (sub)tropical, the gas phase would become dominant. Moreover, $k_\text{iso}$ is the critical point for the existence of the holes/gas phases. The holes would open up for $k>k_\text{iso}$ (even if the holes do not appear from the origin). For $k\leq k_\text{iso}$ (though only $k_\text{iso}$ gives a Harnack curve), the amoeba is of genus zero, and its area would become larger when increasing $k$.

Moreover, the partition function for the crystal melting model should also become larger when increase $k$ \footnote{Again, for $k<k_\text{iso}$, the physical interpretation of Ronkin functions, in particular for different phases, may not be clear. Nevertheless, this would still make sense mathematically. More importantly, it is still possible that Ronkin functions for non-Harnack curves are closely related to crystal melting etc in physics, but in a more subtle way.}. In terms of \eqref{ZBPS}, this implies that we may extend the above conjecture to Ronkin functions.
\begin{conjecture}
The Ronkin function $R(x,y)$ (for any fixed $(x,y)$) does not decrease when we increase $k$ along the Mahler flow for $k>0$. More precisely, when $k_2>k_1$,
\begin{equation}
    \begin{cases}
        R_{k_2}(x,y)>R_{k_1}(x,y),&(x,y)\text{ in a non-linear or bounded linear region for }k_2;\\
        R_{k_2}(x,y)=R_{k_1}(x,y),&(x,y)\text{ in an unbounded linear facet for }k_2.
    \end{cases}
\end{equation}
Notice that for $k_2>k_1\geq k_\textup{iso}$, the non-linear region is the liquid phase and a bounded (unbounded) facet is a gas (solid) phase.\label{mflowconj2}
\end{conjecture}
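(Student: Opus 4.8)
The plan is to differentiate the Ronkin function with respect to the flow parameter $k$ at a fixed point $(x,y)$ and to control the sign of the derivative. Writing $R_k(x,y)=\frac{1}{(2\pi)^2}\int_0^{2\pi}\!\int_0^{2\pi}\log\bigl|k-p(e^{x+\mathrm{i}\theta_1},e^{y+\mathrm{i}\theta_2})\bigr|\,\textup{d}\theta_1\textup{d}\theta_2$ and differentiating under the integral sign (legitimate because, exactly as in the footnote to \eqref{dmdk}, $\partial_k\log|k-p|=(k-\mathrm{Re}\,p)/|k-p|^2$ is locally dominated by an integrable function, the only singularity over $\mathcal{A}_P$ being integrable), one obtains
\begin{equation}
\frac{\partial R_k(x,y)}{\partial k}=\frac{1}{(2\pi)^2}\int_0^{2\pi}\!\int_0^{2\pi}\frac{k-\mathrm{Re}\,p}{|k-p|^2}\,\textup{d}\theta_1\textup{d}\theta_2=\mathrm{Re}\,\frac{1}{(2\pi\mathrm{i})^2}\oint_{|z|=e^x}\oint_{|w|=e^y}\frac{1}{k-p(z,w)}\frac{\textup{d}z}{z}\frac{\textup{d}w}{w},
\end{equation}
i.e.\ $\partial_{\log k}R_k(x,y)=\mathrm{Re}\,u_0^{(x,y)}(k)$, the period-type integral \eqref{u0integral} taken over the shifted torus $|z|=e^x,\,|w|=e^y$. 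The conjecture then follows from: (i) this quantity is $\geq 0$ for every $k>0$, with equality precisely when $(x,y)$ lies in an unbounded complementary region of $\mathcal{A}(k)$; and (ii) integrating in $k$ from $k_1$ to $k_2$, using that the amoebae $\mathcal{A}(k)$ and their individual holes are monotonically nested in $k$ (so that, as $k$ grows, a fixed $(x,y)$ passes from solid to liquid to gas and never back) — a fact that should follow from the area-maximality theorem of \S\ref{harnack} and the remarks in \S\ref{mahlerflow}.

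For the equality statement, the Ronkin gradient theorem says that on the unbounded complementary region $E_i$ attached to a boundary lattice point $(a_i,b_i)\neq(0,0)$ one has $R_k(x,y)=a_ix+b_iy+\log|c_{(a_i,b_i)}|$, whose coefficient $c_{(a_i,b_i)}$ is $k$-independent in the family $P=k-p$; hence $\partial_kR_k\equiv0$ there. The version to actually be proved is the residue computation behind this: for $w$ on its circle, the poles of $\frac{1}{(k-p)z}$ in $z$ are the roots of $k-p(\cdot,w)=0$, while the residues at $z=0$ and $z=\infty$ both vanish because $p$ contains both positive and negative powers of $z$; thus the residue sum over roots inside $|z|=e^x$ equals minus the sum over those outside, and on an unbounded facet a Rouch\'e estimate places all roots on the same side, forcing the sum — hence $\partial_kR_k(x,y)$ — to be $0$.

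For the strict increase, the gas phase around the origin is immediate: there $R_k$ equals the constant $m(P_k)$ (equal to $\log k-\sum_{n\geq1}c_n/(nk^n)$ in the Harnack regime, with $c_n=[z^0w^0]p^n$), which strictly increases in $k$ by Proposition \ref{mflowprop}. For the liquid phase, $(x,y)\in\mathcal{A}(k)$, one must show $\mathrm{Re}\,u_0^{(x,y)}(k)>0$. When $k>\max_{|z|=e^x,|w|=e^y}|p|$ (the Harnack regime) this is clean: the series expansion of $1/(1-k^{-1}p)$ gives $u_0^{(x,y)}(k)=\sum_{n\geq0}c_nk^{-n}$, the \emph{same} positive series as at the origin, since the constant terms $c_n$ do not depend on the torus radii. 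The hard case is $(x,y)$ in the interior of $\mathcal{A}(k)$, where this series diverges and (for $k<k_{\mathrm{iso}}$) the curve is not even Harnack: here the residue computation above must be pushed to show that the proper, nonempty, $w$-dependent sub-sum $\sum_{\alpha_j(w):\,|\alpha_j|<e^x}\mathrm{Res}_{\alpha_j}$ has positive real part \emph{after} integrating over $\arg w$.

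That last sign analysis is the main obstacle, and the reason the statement is posed as a conjecture: once $(x,y)\in\mathcal{A}(k)$ the integrand $(k-\mathrm{Re}\,p)/|k-p|^2$ is genuinely sign-indefinite on the torus, so no term-by-term positivity is available. A complete proof would seem to require either a global symmetrization argument exploiting the real, reciprocal ($z\leftrightarrow z^{-1}$, $w\leftrightarrow w^{-1}$) structure typical of the $p$ arising from dimers, or a deformation connecting a general $k$ to the Harnack regime while keeping $\mathrm{Re}\,u_0^{(x,y)}$ positive throughout. A secondary, more tractable technical point is the monotone nestedness of the amoebae and holes in $k$ used in step (ii), which is what upgrades the pointwise inequality $\partial_kR_k\geq0$ into the stated dichotomy between $R_{k_1}(x,y)$ and $R_{k_2}(x,y)$.
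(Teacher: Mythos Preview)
The statement in question is labelled a \emph{conjecture} in the paper, and the paper does not supply a proof of it. The only argument the paper offers is the observation, recorded immediately afterwards as a proposition, that $R_k(x,y)$ for $P$ coincides with the Mahler measure of the shifted polynomial $\tilde P(z,w)=P(e^xz,e^yw)$, so that Conjecture~\ref{mflowconj2} is equivalent to Conjecture~\ref{mflowconj1}; neither conjecture is established beyond the Harnack range covered by Proposition~\ref{mflowprop}.

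Your write-up goes considerably further than anything in the paper: you differentiate under the integral, identify $\partial_{\log k}R_k(x,y)$ with the real part of a shifted-torus period $u_0^{(x,y)}(k)$, settle the unbounded-facet equality via the $k$-independence of the intercepts (Theorem~\ref{intercept}), and dispose of the gas phase by the same series argument underlying Proposition~\ref{mflowprop}. Your observation that the Fourier constant term $c_n=[p^n]_0$ is independent of the torus radii (because the exponent constraint $\sum m_j=\sum n_j=0$ kills the $e^{mx+ny}$ weights) is correct and is precisely the content of the paper's equivalence proposition, stated more explicitly. You then correctly isolate the genuine gap: on the interior of $\mathcal{A}(k)$ and in the non-Harnack regime $k<k_{\textup{iso}}$, the integrand $(k-\mathrm{Re}\,p)/|k-p|^2$ is sign-indefinite and no positivity argument is available. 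This is exactly why the paper leaves the statement as a conjecture, so your identification of the obstacle is on target.

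One small caveat: your treatment of ``the gas phase around the origin'' via Proposition~\ref{mflowprop} does not immediately cover bounded complementary regions attached to interior lattice points $(a_i,b_i)\neq(0,0)$, since for those the intercept of the linear facet is not given by Theorem~\ref{intercept} and its $k$-dependence must be argued separately. The shifted-polynomial trick (moving such a hole to the origin by rescaling $z,w$, as in \eqref{rescale}) handles this, but it should be said explicitly.
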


Since the Ronkin function at $(x,y)$ is essentially the Mahler measure for $P(\text{e}^xz,\text{e}^yw)$, $R(x,y)$ for $P(z,w)$ is exactly the Mahler measure for $\Tilde{P}(z,w)=P(\text{e}^xz,\text{e}^yw)$ and shifted amoeba. Hence, we conclude that
\begin{proposition}
Conjecture \ref{mflowconj1} and Conjecture \ref{mflowconj2} are equivalent.
\end{proposition}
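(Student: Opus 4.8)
The proposition is essentially the change-of-variables observation that precedes it, so the plan is to make that precise and then match up the two case-distinctions. Write the Mahler-flow family as $P_k(z,w)=k-p(z,w)$ with $p$ free of a constant term, and for $(x,y)\in\mathbb{R}^2$ put $p_{x,y}(z,w):=p(\mathrm{e}^xz,\mathrm{e}^yw)$. Substituting $z\mapsto\mathrm{e}^xz,\ w\mapsto\mathrm{e}^yw$ in \eqref{genMahler} gives $R_k(x,y)=m(P_k;\mathrm{e}^x,\mathrm{e}^y)=m(k-p_{x,y})$; here $p_{x,y}$ is again free of a constant term, has the same Newton polygon $\Delta$ (only its nonzero coefficients are rescaled by $\mathrm{e}^{ma+nb}$), and the amoeba of $k-p_{x,y}$ is the amoeba of $P_k$ translated by $(-x,-y)$. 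On the dimer side the substitution $z\mapsto\mathrm{e}^xz$, $w\mapsto\mathrm{e}^yw$ simply multiplies the weight of each edge crossed by $\gamma_x$ (resp.\ $\gamma_y$) by $\mathrm{e}^x$ (resp.\ $\mathrm{e}^y$), so the weights stay non-negative and $\{k-p_{x,y}\}_{k>0}$ is itself the Mahler flow of the same (reweighted) brane tiling. In other words, as a function of $k$ the Ronkin value $R_k(x,y)$ \emph{is} the Mahler measure along a Mahler flow.

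Given this dictionary, both implications are short. For Conjecture \ref{mflowconj2} $\Rightarrow$ Conjecture \ref{mflowconj1}, evaluate Conjecture \ref{mflowconj2} at $(x,y)=(0,0)$: since $R_k(0,0)=m(P_k)$ is the minimum of the Ronkin function, and, with the convention of \S\ref{mahlerflow} under which the hole/node sits at the origin, the origin lies in a non-linear or bounded region (never an unbounded facet), Conjecture \ref{mflowconj2} gives $R_{k_2}(0,0)>R_{k_1}(0,0)$ for $k_2>k_1$, which is Conjecture \ref{mflowconj1}. For Conjecture \ref{mflowconj1} $\Rightarrow$ Conjecture \ref{mflowconj2}, fix $(x,y)$; by the dictionary $k\mapsto R_k(x,y)=m(k-p_{x,y})$ is the Mahler measure of a Mahler flow, so Conjecture \ref{mflowconj1} gives monotonicity in $k$. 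The strict/equality split is then read off from the Ronkin-function theorem: over the unbounded facet attached to a boundary point $(a,b)\neq(0,0)$ one has $R_k(x,y)=ax+by+\log|c_{(a,b)}|$ with $c_{(a,b)}$ independent of $k$ (only the constant coefficient of $k-p$ carries $k$), giving the equality $R_{k_2}(x,y)=R_{k_1}(x,y)$; over a non-linear or bounded-linear region of $\mathcal{A}(k_2)$, strict convexity of $R_{k_2}$ there, together with the fact that for $k_1<k_2$ the graph of $R_{k_1}$ lies on or below a supporting affine function of $R_{k_2}$ at that point, upgrades monotonicity to the strict inequality.

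The substitutions and the bookkeeping of which lattice point the origin represents are routine; the step I expect to need the most care is the last one, i.e.\ matching Conjecture \ref{mflowconj2}'s sharp/flat dichotomy with the behaviour of $m(k-p_{x,y})$. One must check that a point lying in an unbounded facet at the larger parameter $k_2$ lies in the \emph{same} facet for every smaller $k_1>0$ (so $R$ is genuinely pinned there), treat separately the degenerate case where the relevant boundary point is the origin itself --- whose facet coefficient $k$ \emph{does} depend on $k$, so the flat conclusion must be replaced by strict growth $R_k=\log k$ --- and verify that a point which has entered the liquid or gas region of $\mathcal{A}(k_2)$ has left the linear facet of $\mathcal{A}(k_1)$ strictly. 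All of this reduces to the elementary monotonicity of the complementary regions of $\mathcal{A}(k)$ in $k$ for the family $k-p$, but it is where a clean write-up has to be precise rather than heuristic.
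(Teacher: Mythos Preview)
Your approach is the same as the paper's: the whole content of the equivalence is the change-of-variables identity $R_k(x,y)=m(k-p_{x,y})$, which turns ``Ronkin value at a fixed point'' into ``Mahler measure along a Mahler flow of a reweighted dimer''. The paper states exactly this one sentence and declares the proposition; you have written out the same idea with considerably more care, including the observation that rescaling $z,w$ only rescales edge weights (so $k-p_{x,y}$ is again a legitimate Mahler flow). One small overreach: in Conj~\ref{mflowconj2} $\Rightarrow$ Conj~\ref{mflowconj1} you need not invoke the ``hole at the origin'' convention --- Conj~\ref{mflowconj1} only asserts non-decrease, and that already follows from evaluating Conj~\ref{mflowconj2} at $(0,0)$ in either case of its dichotomy (the paper's own Remark right after the proposition says exactly this: the origin \emph{may} sit in an unbounded facet). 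Your final paragraph, flagging that the strict/equality split of Conj~\ref{mflowconj2} needs the monotonicity of complementary regions in $k$ and the special treatment of the facet attached to $(0,0)$, is accurate and in fact goes beyond what the paper spells out.
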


\begin{remark}
Notice that the increase of Mahler measure is strict in Proposition \ref{mflowprop} while the increase is monotonic (i.e., only non-decreasing required) in Conjecture \ref{mflowconj1}. The reason for non-strict increasing is more clear in terms of Conjecture \ref{mflowconj2}: the Mahler measure $m(P)=R(0,0)$ may lie in an unbounded linear facet of the Ronkin function.
\end{remark}

\begin{example}
For $\mathbb{F}_0$ with $P=4-z-z^{-1}-w-w^{-1}$, we have
\begin{equation}
    m(P)=\log k-2k^{-2}{}_4F_3\left(1,1,\frac{3}{2},\frac{3}{2};2,2,2;16k^{-2}\right),\quad u_0(k)={}_2F_1\left(\frac{1}{2},\frac{1}{2};1;16k^{-2}\right)
\end{equation}
for $k\geq4$. The detailed steps can be found in Appendix \ref{exF0}. One may check that they satisfy the Mahler flow equation. At $k=k_\textup{iso}=4$, we have
\begin{equation}
    m(P)=\frac{4\mathcal{K}}{\pi};\quad u_0(4)\rightarrow\infty\label{m4}
\end{equation}
where $\mathcal{K}$ is Catalan's constant. At $k\rightarrow k_\textup{trop}=\infty$, we have
\begin{equation}
    m(P)\rightarrow\log k_\textup{trop}=\infty;\quad u_0(k_\textup{trop})=1.
\end{equation}
We can also plot the Mahler flow and $u_0(k)$ from $k_\textup{iso}$ to $k_\textup{trop}$ as follows:
\begin{equation}
    \includegraphics[width=6cm]{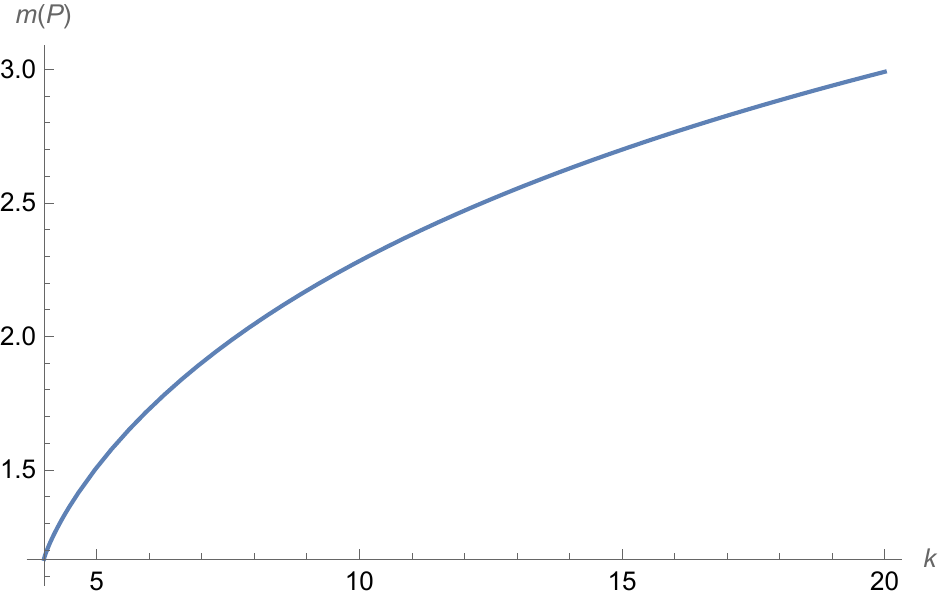},
    \includegraphics[width=6cm]{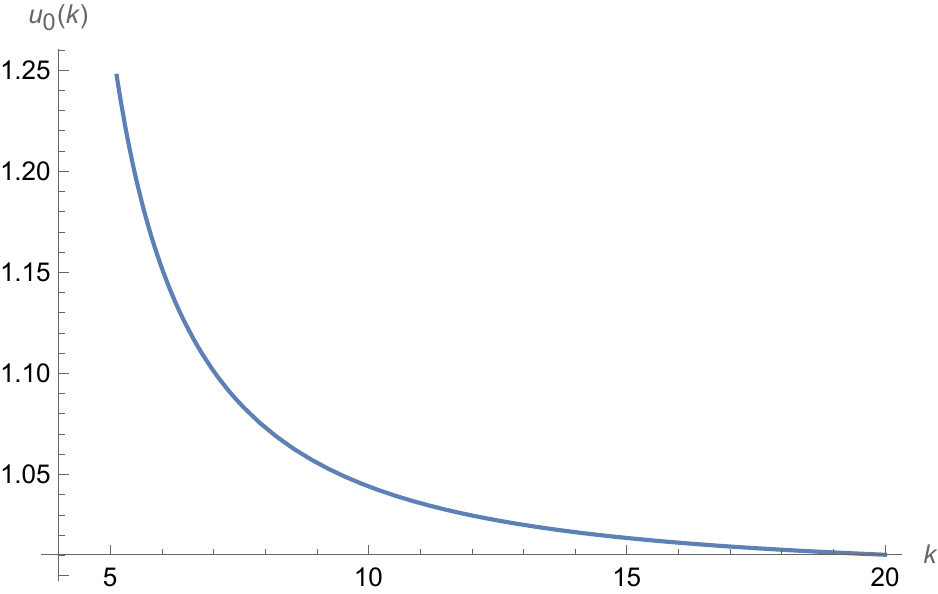}.
\end{equation}
\end{example}
\subsection{Tropical Geometry of the Mahler Flow}\label{holes}
A geometric interpretation of the Mahler flow could be revealed by the holes of the amoeba. In general, it is hard to determine the area(s) of the holes $A_h$. However, when $k$ is sufficiently large, we might be able to calculate $A_h$ using the spines as a tropical limit of the amoeba.

Consider an example, say, $Y^{2,2}$ with vertices of $\Delta$ being $\{
(0,0), (1,0), (0,-1), (1,-1), (-1,-1)
\}$.
The associated  $P=-w-z^{-1}w^{-1}-zw^{-1}-2w^{-1}+k$. For very large $k$, we find that the amoeba is close to its spine as in Figure \ref{hole_ex}(a).
\begin{figure}[h]
    \centering
    \begin{subfigure}{4cm}
		\centering
		\includegraphics[width=4cm]{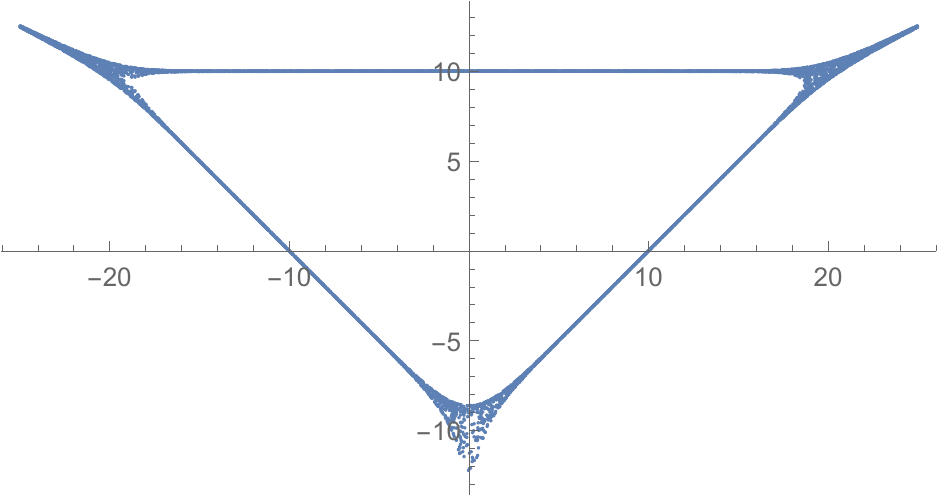}
		\caption{}
	\end{subfigure}
	\begin{subfigure}{5cm}
		\centering
		\tikzset{every picture/.style={line width=0.75pt}}
\begin{tikzpicture}[x=0.75pt,y=0.75pt,yscale=-1,xscale=1]
\draw  [draw opacity=0] (40.5,99.56) -- (200.5,99.56) -- (200.5,179.56) -- (40.5,179.56) -- cycle ; \draw  [color={rgb, 255:red, 155; green, 155; blue, 155 }  ,draw opacity=1 ] (60.5,99.56) -- (60.5,179.56)(80.5,99.56) -- (80.5,179.56)(100.5,99.56) -- (100.5,179.56)(120.5,99.56) -- (120.5,179.56)(140.5,99.56) -- (140.5,179.56)(160.5,99.56) -- (160.5,179.56)(180.5,99.56) -- (180.5,179.56) ; \draw  [color={rgb, 255:red, 155; green, 155; blue, 155 }  ,draw opacity=1 ] (40.5,119.56) -- (200.5,119.56)(40.5,139.56) -- (200.5,139.56)(40.5,159.56) -- (200.5,159.56) ; \draw  [color={rgb, 255:red, 155; green, 155; blue, 155 }  ,draw opacity=1 ] (40.5,99.56) -- (200.5,99.56) -- (200.5,179.56) -- (40.5,179.56) -- cycle ;
\draw [line width=1.5]    (140.5,160.2) -- (120.5,140.2) ;
\draw [line width=1.5]    (120.5,140.2) -- (100.5,160.2) ;
\draw [line width=1.5]    (120.5,120.2) -- (100.5,160.2) ;
\draw [line width=1.5]    (140.5,160.2) -- (120.5,120.2) ;
\draw [line width=1.5]    (120.5,140.2) -- (120.5,120.2) ;
\draw [line width=1.5]    (140.5,160.2) -- (100.5,160.2) ;
\draw [color={rgb, 255:red, 255; green, 0; blue, 0 }  ,draw opacity=1 ][line width=1.5]    (80.5,120.2) -- (160.5,120.2) ;
\draw [color={rgb, 255:red, 255; green, 0; blue, 0 }  ,draw opacity=1 ][line width=1.5]    (120.5,160.2) -- (160.5,120.2) ;
\draw [color={rgb, 255:red, 255; green, 0; blue, 0 }  ,draw opacity=1 ][line width=1.5]    (80.5,120.2) -- (120.5,160.2) ;
\draw [color={rgb, 255:red, 255; green, 0; blue, 0 }  ,draw opacity=1 ][line width=1.5]    (40.5,99.7) -- (80.5,120.2) ;
\draw [color={rgb, 255:red, 255; green, 0; blue, 0 }  ,draw opacity=1 ][line width=1.5]    (160.5,120.2) -- (200.5,99.7) ;
\draw [color={rgb, 255:red, 255; green, 0; blue, 0 }  ,draw opacity=1 ][line width=1.5]    (120.5,159.7) -- (120.5,179.7) ;
\end{tikzpicture}
\caption{}
\end{subfigure}
\begin{subfigure}{4cm}
		\centering
		\includegraphics[width=4cm]{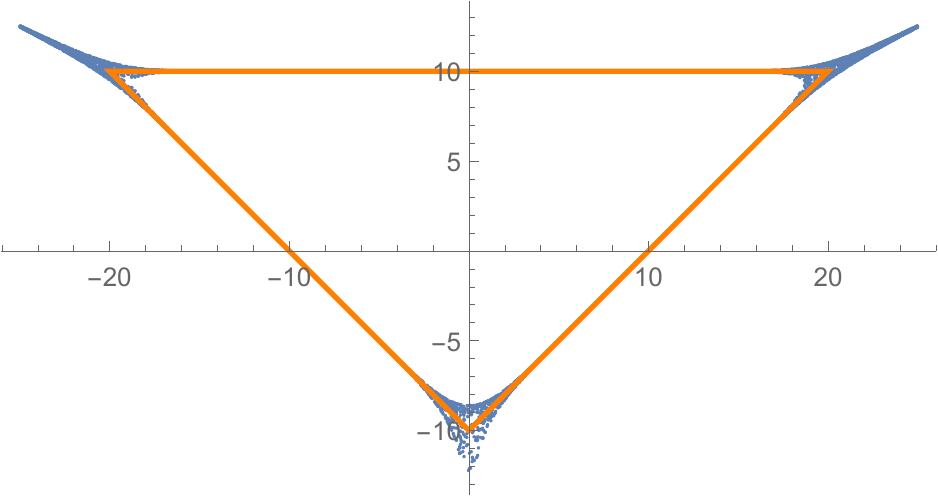}
		\caption{}
	\end{subfigure}
\caption{(a) The amoeba for the Newton polynomial $P(z,w)=-w-z^{-1}w^{-1}-zw^{-1}-2w^{-1}+k$. As an example, we choose $k=\text{e}^{10}$. (b) The spine (in red) is the dual of the triangulated Newton polygon $\Delta$ (in black). (c) The internal triangle in the spine is drawn explicitly in orange in the amoeba. In this example, we can see that the three vertices of the orange triangle are $(20,10)$, $(-20,10)$ and $(0,-10)$ respectively.}\label{hole_ex}
\end{figure}
As further shown in Figure \ref{hole_ex}(b), the spine (in red) is the dual of the triangulation (in black) of the Newton polygon $\Delta$. Of particular interest here is the red triangle which is the dual of the three internal lines of the triangulation of $\Delta$ as shown. This is made more clear in Figure \ref{hole_ex}(c): the interior of the orange triangle (i.e., the bounded lines of the spine) consists of the hole and certain parts of the amoeba. At large $k$ here, the hole approaches to this triangle.

Quantitatively, we observe that the three vertices in the spine are $(2\log k,\log k)$, $(-2\log k,\log k)$ and $(0,-\log k)$. 
Note that this is not only true for large $k$ but also for {\bf any} $k$ since this is the consequence of the spine. Around each vertex of the spine, the amoeba locally looks like an amoeba $\mathcal{A}_\text{loc}$ whose Newton polygon is the corresponding subdivision in $\Delta$. These local parts then connect with each other through their thin tentacles. For instance, the upper left part in Figure \ref{hole_ex} is locally a $\mathbb{C}^3$-amoeba. 
For a (global) $\mathbb{C}^3$-amoeba, its tentacles would become thinner and thinner (i.e., asymptotic to the spines) when it goes to infinity. This ensures that the area of the amoeba remains finite ($\pi^2/2$). 

Now, in the $Y^{2,2}$ amoeba, the local $\mathbb{C}^3$ part becomes semi-infinite. The thin finite tentacles will become longer and thinner when $k$ is increased. Therefore, it is natural to conjecture that the area of the local amoeba would be divided equally by the spine. One may check that the sum of areas of the local parts is equal to the area of the whole amoeba since they are all proportional to the areas of the Newton polygons and the local parts correspond to subdivisions of the whole polygon. Then the area of the hole $A_h$ for large $k$ may be computed as
\begin{equation}
    A_h\simeq A(\blacktriangle)-2\times\frac{1}{3}A(\mathcal{A}_{\mathbb{C}^3})-\frac{1}{3}A(\mathcal{A}_{Y^{1,1}}),
\end{equation}
where $\blacktriangle$ denotes the bounded polygon in the spine of $\Delta$ (e.g. the orange triangle here), and $Y^{1,1}$ corresponds to the black triangle at the bottom in the tessellation in Figure \ref{hole_ex}(b). Therefore,
\begin{equation}
    A_h\simeq 4\log^2k-2\times\frac{1}{3}\times\frac{\pi^2}{2}-\frac{1}{3}\times\pi^2=4\log^2k-\frac{2}{3}\pi^2.
\end{equation}

If we increase the coefficient for the term $w^{-1}$, we find that the shape of the amoeba (especially its tentacles) would change as shown in Figure \ref{hole_ex2}(a,b).
\begin{figure}[h]
    \centering
    \begin{subfigure}{4cm}
		\centering
		\includegraphics[width=4cm]{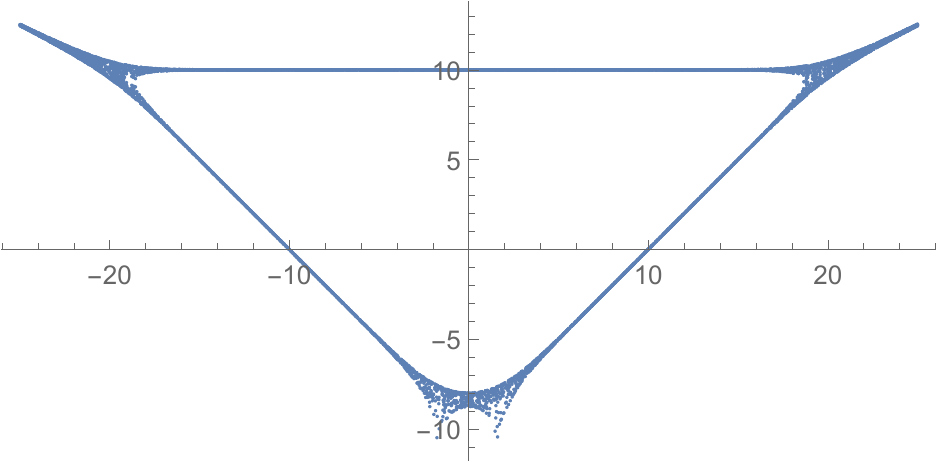}
		\caption{}
	\end{subfigure}
	\begin{subfigure}{4cm}
		\centering
		\includegraphics[width=4cm]{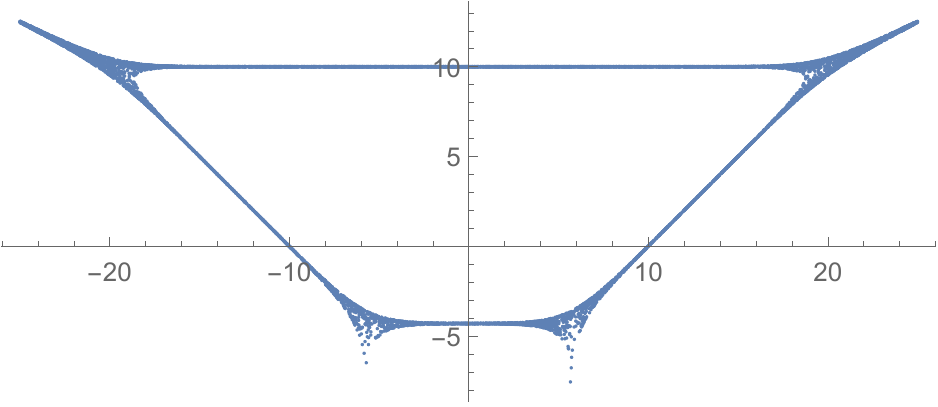}
		\caption{}
	\end{subfigure}
	\begin{subfigure}{5cm}
		\centering
		\tikzset{every picture/.style={line width=0.75pt}}
\begin{tikzpicture}[x=0.75pt,y=0.75pt,yscale=-1,xscale=1]
\draw  [draw opacity=0] (40.5,99.56) -- (200.5,99.56) -- (200.5,179.56) -- (40.5,179.56) -- cycle ; \draw  [color={rgb, 255:red, 155; green, 155; blue, 155 }  ,draw opacity=1 ] (60.5,99.56) -- (60.5,179.56)(80.5,99.56) -- (80.5,179.56)(100.5,99.56) -- (100.5,179.56)(120.5,99.56) -- (120.5,179.56)(140.5,99.56) -- (140.5,179.56)(160.5,99.56) -- (160.5,179.56)(180.5,99.56) -- (180.5,179.56) ; \draw  [color={rgb, 255:red, 155; green, 155; blue, 155 }  ,draw opacity=1 ] (40.5,119.56) -- (200.5,119.56)(40.5,139.56) -- (200.5,139.56)(40.5,159.56) -- (200.5,159.56) ; \draw  [color={rgb, 255:red, 155; green, 155; blue, 155 }  ,draw opacity=1 ] (40.5,99.56) -- (200.5,99.56) -- (200.5,179.56) -- (40.5,179.56) -- cycle ;
\draw [line width=1.5]    (140.5,160.2) -- (120.5,140.2) ;
\draw [line width=1.5]    (120.5,140.2) -- (100.5,160.2) ;
\draw [line width=1.5]    (120.5,120.2) -- (100.5,160.2) ;
\draw [line width=1.5]    (140.5,160.2) -- (120.5,120.2) ;
\draw [line width=1.5]    (120.5,140.2) -- (120.5,120.2) ;
\draw [line width=1.5]    (140.5,160.2) -- (100.5,160.2) ;
\draw [color={rgb, 255:red, 255; green, 0; blue, 0 }  ,draw opacity=1 ][line width=1.5]    (80.5,120.2) -- (160.5,120.2) ;
\draw [color={rgb, 255:red, 255; green, 0; blue, 0 }  ,draw opacity=1 ][line width=1.5]    (127.5,152.75) -- (160.5,120.2) ;
\draw [color={rgb, 255:red, 255; green, 0; blue, 0 }  ,draw opacity=1 ][line width=1.5]    (80.5,120.2) -- (112.5,152.45) ;
\draw [color={rgb, 255:red, 255; green, 0; blue, 0 }  ,draw opacity=1 ][line width=1.5]    (40.5,99.7) -- (80.5,120.2) ;
\draw [color={rgb, 255:red, 255; green, 0; blue, 0 }  ,draw opacity=1 ][line width=1.5]    (160.5,120.2) -- (200.5,99.7) ;
\draw [color={rgb, 255:red, 255; green, 0; blue, 0 }  ,draw opacity=1 ][line width=1.5]    (112.5,152.45) -- (112.5,179.7) ;
\draw [line width=1.5]    (120.5,160.2) -- (120.5,140.2) ;
\draw [color={rgb, 255:red, 255; green, 0; blue, 0 }  ,draw opacity=1 ][line width=1.5]    (112.5,152.45) -- (127.5,152.75) ;
\draw [color={rgb, 255:red, 255; green, 0; blue, 0 }  ,draw opacity=1 ][line width=1.5]    (127.5,152.75) -- (127.5,180) ;
\end{tikzpicture}
	\caption{}
	\end{subfigure}
    \caption{(a) The amoeba for the Newton polynomial $P(z,w)=-w-z^{-1}w^{-1}-zw^{-1}-2\text{e}w^{-1}+\text{e}^{10}$. (b) The amoeba for the Newton polynomial $P(z,w)=-w-z^{-1}w^{-1}-zw^{-1}-2\text{e}^5w^{-1}+\text{e}^{10}$. (c) The spine (in red) is the dual of the triangulated Newton polygon $\Delta$ (in black).}\label{hole_ex2}
\end{figure}
As we can see, the tentacle at the bottom has now been divided into two. The subdivision of the Newton polygon and its dual spine have changed as in Figure \ref{hole_ex2}. Now the bottom local part becomes the triangulated $Y^{1,1}$ for (a) and two $\mathbb{C}^3$'s for (b).

In general, it turns out that for the canonical choice of coefficients, there is no split in the spines caused by boundary lattice points that are not at the corner. In the Ronkin function, this means that the unbounded linear facets would only appear for vertices in the Newton polygon.

We shall make the above discussion more rigorous using the definition for (sub)tropical $k$. The amoebae sketched in Figure \ref{hole_ex}(a) and Figure \ref{hole_ex2}(a,b) all have subtropical $k$. However, only Figure \ref{hole_ex}(a) and Figure \ref{hole_ex2}(b) have high-subtropical $k$. Now consider $P=-w-z^{-1}w^{-1}-zw^{-1}-2w^{-1}+k_\textup{st}$ in Figure \ref{hole_ex}. When $|w|$ is large enough while $|z|$ is small enough such that $\log|w|\sim\mathcal{O}(\log(k_\textup{st}))$, $\log|z|\sim\mathcal{O}(1/\log(k_\textup{st}))$ and $1/|zw|\sim\mathcal{O}(1)$, the Newton polynomial can be approximated by $-w-z^{-1}w^{-1}+k_\textup{st}$. This corresponds to the local $\mathcal{A}_{\mathbb{C}^3}$ at the upper left corner in Figure \ref{hole_ex}(a).

For (high-sub)tropical $k$, we may try to compute $A_h$ for any amoeba using the above method. We first need to determine the vertices of $\blacktriangle$. Let us consider $\mathbb{C}^3$ whose Harnack curve $c_1z+c_2w+k$ (for fixed $c_{1,2}$) as an example. Then by looking at its asymptotic behaviour, we can find its spine as follows:
\begin{equation}
    \begin{split}
        &x\rightarrow0:~y=\log|w|=\log|k/c_2|;\\
        &y\rightarrow0:~x=\log|z|=\log|k/c_1|;\\
        &x,y\rightarrow\infty,z/w\sim\mathcal{O}(1):~y=\frac{x+\log|c_1|}{\log|c_2|}.
    \end{split}
\end{equation}
Hence, the intersection point of the three lines is $(\log|k/c_1|,\log|k/c_2|)$. Now we can use the following theorem \cite{theobald2002computing,bogaardintroduction}.

\begin{theorem}
	For $(\alpha,M)\in(\mathbb{C}^*)^2\rtimes\textup{GL}(2,\mathbb{Z})$ and $P(\bm{z})\equiv P(z,w)\in\mathbb{C}\left[z,w,z^{-1},w^{-1}\right]$, the map $\Psi:(\mathbb{C}^*)^2\rtimes\textup{GL}(2,\mathbb{Z})\rightarrow\textup{Aut}\left(\mathbb{C}\left[z,w,z^{-1},w^{-1}\right]\right)$ defined by $\Psi(\alpha,M)(P(\bm{z}))=P\left(\alpha\cdot\bm{z}^M\right)$ is an isomorphism. Moreover, their Newton polytopes satisfy $\Delta(\Psi(P))=M\cdot\Delta(P)$.
	Denote the amoeba of $P$ as $\mathcal{A}_P$, then for $\det(M)\neq0$, we have $\mathcal{A}_P=M\mathcal{A}_{\Psi(P)}-\textup{Log}(\alpha)$.
\end{theorem}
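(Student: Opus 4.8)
The plan is to reduce all three claims to a single computation: under the substitution $\bm{z}\mapsto\alpha\cdot\bm{z}^M$, a monomial transforms as
\begin{equation}
\bm{z}^{\bm{v}}\ \longmapsto\ \alpha^{\bm{v}}\,\bm{z}^{M\bm{v}},
\end{equation}
where $\alpha^{\bm{v}}\in\mathbb{C}^*$ and $M\bm{v}$ is $M$ applied to the exponent vector $\bm{v}$ (viewed as a column vector). The ring-theoretic statement then follows by linearity, the Newton-polytope statement by taking convex hulls, and the amoeba statement by applying the logarithmic projection.

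First I would check that $\Psi$ is a group isomorphism onto the monomial (i.e.\ multiplicative) automorphisms. Each $\Psi(\alpha,M)$ is precomposition with the map $\phi_{\alpha,M}\colon\bm{z}\mapsto\alpha\cdot\bm{z}^M$, which is invertible precisely because $M\in\textup{GL}(2,\mathbb{Z})$, so $\Psi(\alpha,M)\in\textup{Aut}(\mathbb{C}[z^{\pm1},w^{\pm1}])$. Composing two substitutions, $\alpha_2\cdot(\alpha_1\cdot\bm{z}^{M_1})^{M_2}=\alpha_2\,\alpha_1^{M_2}\,\bm{z}^{M_1M_2}$, which matches the semidirect-product multiplication on $(\mathbb{C}^*)^2\rtimes\textup{GL}(2,\mathbb{Z})$ up to the reversal of order inherent in precomposition; with the group-law convention implicit in the statement this makes $\Psi$ a homomorphism. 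Injectivity follows by applying $\Psi(\alpha,M)$ to the coordinates $z,w$, which returns $\alpha$ and the columns of $M$, while surjectivity onto the monomial automorphisms is immediate from the definition. This gives the isomorphism assertion.

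For the Newton polytope, write $P=\sum_{\bm{v}}c_{\bm{v}}\bm{z}^{\bm{v}}$; then $\Psi(\alpha,M)(P)=\sum_{\bm{v}}c_{\bm{v}}\alpha^{\bm{v}}\bm{z}^{M\bm{v}}$, and since $\alpha^{\bm{v}}\neq0$ no term is killed or created, so $\mathrm{supp}(\Psi(P))=M\cdot\mathrm{supp}(P)$ and, $M$ being linear, $\Delta(\Psi(P))=M\cdot\Delta(P)$. For the amoeba, the crucial observation is that $\phi_{\alpha,M}$ descends through $\textup{Log}=(\log|\cdot|,\log|\cdot|)$ to the affine map $L\colon\bm{x}\mapsto\textup{Log}(\alpha)+M^{\top}\bm{x}$ on $\mathbb{R}^2$, i.e.\ $\textup{Log}\circ\phi_{\alpha,M}=L\circ\textup{Log}$; when $\det M\neq0$ both $\phi_{\alpha,M}$ and $L$ are bijective, hence $\textup{Log}\circ\phi_{\alpha,M}^{-1}=L^{-1}\circ\textup{Log}$. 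Since $V(\Psi(P))=\phi_{\alpha,M}^{-1}(V(P))$, applying $\textup{Log}$ gives $\mathcal{A}_{\Psi(P)}=L^{-1}(\mathcal{A}_P)$, that is $\mathcal{A}_P=L(\mathcal{A}_{\Psi(P)})=M^{\top}\mathcal{A}_{\Psi(P)}+\textup{Log}(\alpha)$, which is the stated identity once the convention on $\bm{z}^M$ (row versus column exponents, fixing $M$ versus $M^{\top}$) and on $\alpha$ (versus $\alpha^{-1}$) is read as in the source.

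I do not expect a genuine conceptual obstacle: the whole proof is monomial bookkeeping together with one descent argument. The point that must be handled with care is the amoeba step, where one cannot simply interchange $\textup{Log}$ with $\phi_{\alpha,M}^{-1}$ because $\textup{Log}$ is far from injective; the clean way through is the commuting square $\textup{Log}\circ\phi_{\alpha,M}=L\circ\textup{Log}$ with $L$ affine and bijective, after which everything is formal. The only other thing to be vigilant about is keeping the $\textup{GL}(2,\mathbb{Z})$-action on exponent vectors consistent with its action on the log-plane, so that the transpose and the sign in the final amoeba formula come out in the form stated.
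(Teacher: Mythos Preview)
The paper does not prove this theorem: it is quoted from \cite{theobald2002computing,bogaardintroduction} and used as a black box, so there is no ``paper's own proof'' to compare against. Your argument is the standard one and is essentially correct: reduce to the monomial transformation $\bm{z}^{\bm{v}}\mapsto\alpha^{\bm{v}}\bm{z}^{M\bm{v}}$, read off the action on supports/polytopes, and for the amoeba use the commuting square $\textup{Log}\circ\phi_{\alpha,M}=L\circ\textup{Log}$ with $L$ affine bijective.

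One point worth flagging rather than deferring to ``conventions'': with the paper's explicit definition $\bm{z}^M=\bigl(\prod_i z_i^{M_{i1}},\ldots,\prod_i z_i^{M_{in}}\bigr)$, one computes $\textup{Log}(\alpha\cdot\bm{z}^M)=\textup{Log}(\alpha)+M^{\top}\textup{Log}(\bm{z})$, and hence exactly your formula $\mathcal{A}_P=M^{\top}\mathcal{A}_{\Psi(P)}+\textup{Log}(\alpha)$. This differs from the displayed statement $\mathcal{A}_P=M\mathcal{A}_{\Psi(P)}-\textup{Log}(\alpha)$ in both the transpose and the sign. Rather than absorbing this into an unspecified convention, it would be cleaner to state which convention you are using and note that the displayed formula in the theorem matches only under the alternative convention $(\bm{z}^M)_j=\prod_i z_i^{M_{ji}}$ together with $\alpha\mapsto\alpha^{-1}$ (or equivalently a different semidirect-product action). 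Your derivation is right; the discrepancy is in the quoted statement, not in your proof.
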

Then for $M=(M_{ij})$ (and $\alpha=(1,1)$), the curve becomes $P=c_1z^{M_{11}}w^{M_{21}}+c_2z^{M_{12}}w^{M_{22}}+k$. The vertex has coordinates
\begin{equation}
    \frac{1}{\det M}(M_{22}\log|k/c_1|-M_{21}\log|k/c_2|,-M_{12}\log|k/c_1|+M_{11}\log|k/c_2|).
\end{equation}
For instance, the local approximation $-w-z^{-1}w^{-1}+k$ is a GL$(2,\mathbb{Z})$ transformation of $-z-w+k$ given by $\begin{pmatrix}-1&0\\-1&1\end{pmatrix}$. This gives the vertex $(-2\log k,\log k)$ for $Y^{2,2}$ in Figure \ref{hole_ex}. For any local amoebas, the corresponding vertex in the spine can be obtained in this way. Once we know the coordinates of the vertices, we may compute the area of the hole.

\begin{conjecture}
Given a Newton polygon with Newton polynomial $P(z,w)=k-p(z,w)$ and $k$ high-subtropical, the area of a hole (labelled by $i$) in the amoeba reads
\begin{equation}
    A_{h,i}\simeq A(\blacktriangle_i)-\sum_{v_j\in\mathcal{V}_i}\frac{1}{n_j}A(\mathcal{A}_{\textup{loc},j})=A(\blacktriangle_i)-\sum_{v_j\in\mathcal{V}_i}\frac{\pi^2}{n_j}A(\Delta_j),
\end{equation}
where $\blacktriangle_i$ denotes the corresponding separated polygon in the spine and $A(\blacktriangle_i)\propto\log^2(k)$. Moreover, $\mathcal{V}_i$ is the set of spine vertices surrounding the hole $i$. The local amoeba $\mathcal{A}_{\textup{loc},j}$ around the vertex $v_j$ of the spine corresponds to the $n_j$-gon $\Delta_j$ in the tessellation of $\Delta$.

The total area $A_h$ of the holes is then
\begin{equation}
    A_h\simeq A(\blacktriangle)-\sum_{v_j\in\mathcal{V}}\frac{m_j}{n_j}A(\mathcal{A}_{\textup{loc},j})=\sum_iA(\blacktriangle_i)-\sum_{v_j\in\mathcal{V}}\frac{m_j\pi^2}{n_j}A(\Delta_j),
\end{equation}
where $m_j$ is the number of $\blacktriangle_i$'s that have $v_j$ as a vertex, and $\mathcal{V}$ is the set of all vertices of the spine.
\end{conjecture}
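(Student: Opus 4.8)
The plan is to split $A_{h,i}$ into a leading piece coming from the bounded spine polygon $\blacktriangle_i$ and an $O(1)$ correction coming from the local amoebae that sit at the vertices of $\blacktriangle_i$. First I would use the hypothesis that $k$ is high-subtropical: by definition, in a neighbourhood of each spine vertex $v_j$ the curve $P=k-p$ is well approximated, after dropping the monomials dual to the spine vertices outside that neighbourhood, by a Harnack curve whose Newton polygon is the $n_j$-gon $\Delta_j$ of the tessellation of $\Delta$ dual to the spine. The $n_j$ legs of the spine emanating from $v_j$ are in bijection with the edges of $\Delta_j$, the bounded legs corresponding to interior edges of $\Delta$ and the unbounded legs to boundary edges; the ``no splitting at non-corner boundary points'' property for the canonical weights is what guarantees that these cells, hence the numbers $n_j$, are well defined. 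Using the $\textup{GL}(2,\mathbb{Z})$-covariance theorem stated above — in particular the explicit vertex $\tfrac{1}{\det M}(M_{22}\log|k/c_1|-M_{21}\log|k/c_2|,\,-M_{12}\log|k/c_1|+M_{11}\log|k/c_2|)$ obtained from the local approximation $c_1 z^{M_{11}}w^{M_{21}}+c_2 z^{M_{12}}w^{M_{22}}+k$ — I would write down all the vertices of $\blacktriangle_i$, observe that each equals $\log k$ times a fixed lattice direction up to a $k$-independent shift, and conclude $A(\blacktriangle_i)=c_i\log^2 k+O(\log k)$ for an explicit constant $c_i$; this is the origin of the claimed $A(\blacktriangle_i)\propto\log^2 k$.

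Next I would show that, up to an error of lower order in $\log k$, the region enclosed by $\blacktriangle_i$ decomposes into the hole, the thin tentacles joining neighbouring local amoebae (each of bounded area, hence negligible compared with $\log^2 k$), and one ``inward'' sector of the local amoeba $\mathcal{A}_{\textup{loc},j}$ at each vertex $v_j\in\mathcal{V}_i$, that sector being cut out by the two legs of the spine at $v_j$ that bound $\blacktriangle_i$. By the maximal-area characterization of Harnack curves stated above, and since each $\mathcal{A}_{\textup{loc},j}$ is a standard toric amoeba up to a unimodular linear map and a translation (both area-preserving), one has $A(\mathcal{A}_{\textup{loc},j})=\pi^2 A(\Delta_j)$. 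Granting that the $n_j$ spine legs at $v_j$ partition $\mathcal{A}_{\textup{loc},j}$ into pieces of equal area $\tfrac{1}{n_j}A(\mathcal{A}_{\textup{loc},j})$, summing the identity $A(\blacktriangle_i)=A_{h,i}+\sum_{v_j\in\mathcal{V}_i}\tfrac{1}{n_j}A(\mathcal{A}_{\textup{loc},j})+o(\log^2 k)$ yields the per-hole formula. For the total area, the polygons $\blacktriangle_i$ tessellate the bounded part of the complement of the spine, so $A(\blacktriangle)=\sum_i A(\blacktriangle_i)$; each spine vertex $v_j$ is shared by exactly $m_j$ of the $\blacktriangle_i$, so adding the per-hole formulas over $i$ converts $\sum_i\sum_{v_j\in\mathcal{V}_i}$ into $\sum_{v_j\in\mathcal{V}}m_j$ and produces $A_h\simeq A(\blacktriangle)-\sum_{v_j\in\mathcal{V}}\tfrac{m_j}{n_j}A(\mathcal{A}_{\textup{loc},j})$ exactly as stated; the example $Y^{2,2}$ in Figure \ref{hole_ex}, where every $n_j=3$ and $m_j=1$, is the special case $A_h\simeq 4\log^2 k-\tfrac{2}{3}\pi^2$.

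The main obstacle is the equidistribution claim that the spine legs at $v_j$ cut $\mathcal{A}_{\textup{loc},j}$ into $n_j$ pieces of equal area. This is exact only when the cell $\Delta_j$ is symmetric enough: for an elementary triangle ($\mathbb{C}^3$, $n_j=3$) the amoeba is invariant under the order-$3$ linear map induced by permuting $(1,z,w)$, which is area-preserving, so its three tentacle-sectors have area $\tfrac13\cdot\tfrac{\pi^2}{2}$ each and the claim holds on the nose; for a genuinely asymmetric cell such as the bottom $Y^{1,1}$ triangle in Figure \ref{hole_ex} there is no such symmetry and the sector areas are only approximately equal, which is precisely why the statement is phrased with ``$\simeq$''. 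Making it exact would require either a finer analysis of the linear facets of the Ronkin function at $v_j$ — their slopes are the lattice-point labels of $\Delta_j$, which constrains but does not pin down the sector areas — or a direct estimate of $\int_{\mathcal{A}_{\textup{loc},j}\cap\sigma}\textup{d}x\,\textup{d}y$ over each sector $\sigma$. The remaining technical points are to quantify the tentacle-area error and to check that in the high-subtropical regime the Hausdorff-distance control of the earlier tropical-limit proposition is uniform enough to justify the decomposition of $\blacktriangle_i$; the equidistribution input at least becomes exact in the sub-case where $\Delta$ is given an elementary (area-$\tfrac12$) triangulation, so that every cell is $\textup{GL}(2,\mathbb{Z})$-equivalent to the standard simplex.
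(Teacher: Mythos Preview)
The statement you are attempting to prove is labelled a \emph{conjecture} in the paper, not a theorem or proposition; the paper offers no proof. What precedes the conjecture in \S\ref{holes} is exactly the heuristic you reproduce: the $Y^{2,2}$ worked example, the observation that the spine vertices sit at points of the form $(\textup{const})\cdot\log k$ so that $A(\blacktriangle_i)\propto\log^2 k$, the $\textup{GL}(2,\mathbb{Z})$-covariance formula for locating those vertices, and the informal assertion that ``the area of the local amoeba would be divided equally by the spine'' so that each sector contributes $\tfrac{1}{n_j}A(\mathcal{A}_{\textup{loc},j})$. Your proposal is thus not a different route but a faithful reconstruction of the paper's own motivating argument, with the added virtue that you isolate and name the step the paper silently assumes.

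You are right that the equidistribution claim is the genuine gap, and right that it is exact for unimodular-triangle cells by the $S_3$ symmetry of the $\mathbb{C}^3$ amoeba but fails in general (your $Y^{1,1}$ observation). The paper does not address this; it simply writes ``it is natural to conjecture'' and proceeds. So there is nothing further to compare: your proposal correctly identifies why the statement is a conjecture rather than a theorem, and any upgrade to a proof would require either restricting to tessellations by elementary triangles (where your argument goes through) or replacing $\tfrac{1}{n_j}$ by the actual sector fractions, which the paper does not attempt.
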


\begin{example}
For polygons with a single interior lattice point, we have $m_j=1$, and there is a single $\blacktriangle$.
\end{example}

\begin{remark}
It is often more useful to consider the simplification as follows. For Newton polynomials of form $P(z,w)=k-p(z,w)$ considered in this paper, the (high-sub)tropical $k$ is also the large $k$ limit. We always have the dominating contribution $A_h\sim\log^2(k)$. We may also recast the Mahler flow equation in terms of the area of the hole $\frac{\text{d}m(P)}{\text{d}A_h}$. Then at large $k$, $\frac{\text{d}m(P)}{\text{d}A_h}\sim\frac{1}{2\log k}>0$.
\end{remark}

\paragraph{Integral approximations} As a byproduct, this helps us understand certain integrals in the large $k$ limit. For instance, for $\mathbb{F}_0$, by solving $\text{e}^{y}+\text{e}^{-y}-\text{e}^{x}-\text{e}^{-x}-1=0$, we get part of the boundary of $\mathcal{A}_{\mathbb{F}_0}$ (i.e., one solution to the equation) which reads
\begin{equation}
    y=\log\left(\frac{1}{2}\text{e}^{-x}\left(1+\text{e}^{2x}+k\text{e}^{x}+\sqrt{-4\text{e}^{2x}+(-1-\text{e}^{2x}-k\text{e}^{x})^2}\right)\right).
\end{equation} 
As we can see, this is the upper right boundary of the amoeba:
\begin{equation}
    \includegraphics[width=5cm]{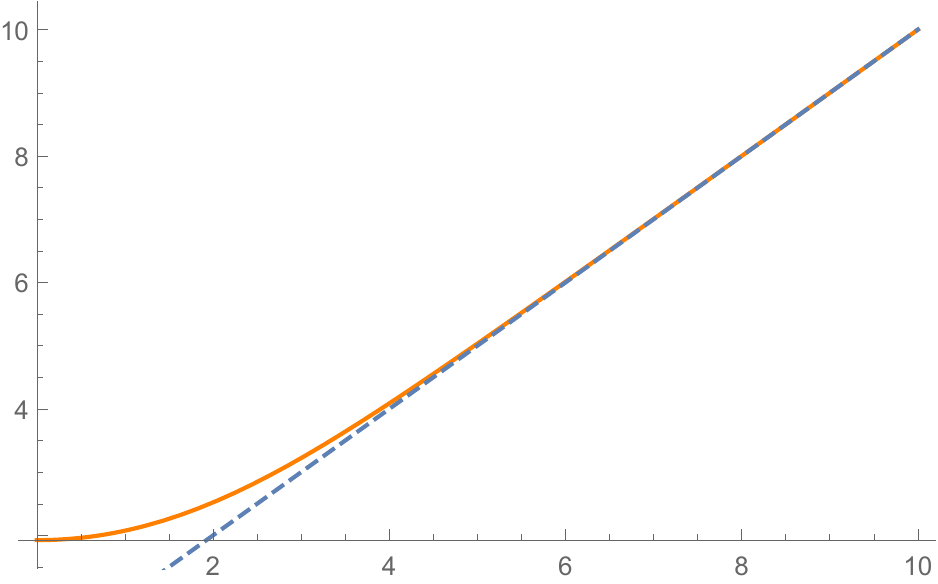},
\end{equation}
where we have used $k=5$ to illustrate this and the dashed line indicates the spine. Therefore, the area of the hole is
\begin{equation}
    A_h=8\left(\int_0^\infty y\text{d}x-\int_0^\infty x\text{d}x\right)-A(\mathcal{A}_{\mathbb{F}_0})=8\int_0^\infty(y-x)\text{d}x-2\pi^2.
\end{equation}
In general, it is not straightforward to determine the large $k$ behaviour for such kind of integral with integrand
\begin{equation}
    I\equiv y-x=\log\left(\frac{1+\text{e}^{2x}+k\text{e}^{x}+\sqrt{-4\text{e}^{2x}+(-1-\text{e}^{2x}-k\text{e}^{x})^2}}{2\text{e}^{2x}}\right)
\end{equation}
because $x$ is also integrated to $\infty$. Nevertheless, from the above analysis, we learn that $A_h\sim\log^2(k)$ for large $k$. Therefore,
\begin{equation}
    \int_0^{\infty}I\,\text{d}x\sim\log^2(k)
\end{equation}
in the large $k$ limit.

\paragraph{Higher dimensions} It would also be natural to conjecture that similar patterns would happen for any dimension $n$.
\begin{conjecture}
Let $P(\bm{z})=k-p(\bm{z})$ be a Laurent polynomial of $\bm{z}\in\mathbb{C}^n$. In the large $k$ limit, we have $V_\textup{bdd}\sim\log^nk$, where $V_\textup{bdd}$ is the volume of a bounded complementary region of the amoeba $\mathcal{A}(P)$.
\end{conjecture}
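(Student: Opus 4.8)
\emph{Proof proposal.}
The strategy is to sandwich a given bounded complementary region of $\mathcal{A}(P)$ between two explicit bodies that both have volume $\sim\log^n k$, exploiting the fact that in $P=k-p(\bm{z})$ only the constant coefficient depends on $k$. Fix an interior lattice point $v$ of the Newton polytope $\Delta$. By the Remark following the definition of the isoradial limit (where the number of holes is the number of interior points), for $k$ large enough $\mathcal{A}(P)$ has a bounded complementary region $E_v$ labelled by $v$ in the sense of Proposition~\ref{prophole}, and the corresponding cell $\blacktriangle_v$ of the spine $\mathcal{S}$ --- the region of $\mathbb{R}^n$ on which the tropical polynomial $R_{\textup{trop}}(\bm{x})=\max_{(m,n)}\bigl(\langle(m,n),\bm{x}\rangle+\log|c_{(m,n)}|\bigr)$ equals its $v$-linear piece --- is a bounded, full-dimensional polytope (bounded because $v$ is interior; full-dimensional because the hole exists).

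First I would analyse $\blacktriangle_v$. Writing out its defining inequalities (one of which, from the constant term, reads $\langle v,\bm{x}\rangle+\log|c_v|\ge\log k$) and rescaling $\bm{x}\mapsto\bm{x}/\log k$, the polytope $(\log k)^{-1}\blacktriangle_v$ is cut out by $\langle v-(m,n),\bm{x}\rangle\ge(\log|c_{(m,n)}|-\log|c_v|)/\log k$ together with $\langle v,\bm{x}\rangle\ge 1-\log|c_v|/\log k$, and hence converges, as $k\to\infty$, to the fixed polytope
\[
Q_v=\{\,\bm{y}:\ \langle v-(m,n),\bm{y}\rangle\ge 0\ \text{for all }(m,n)\neq 0,v,\ \text{ and }\ \langle v,\bm{y}\rangle\ge 1\text{ if }v\neq 0\,\}
\]
(with the obvious replacement $\langle(m,n),\bm{y}\rangle\le 1$ for all $(m,n)\neq 0$ when $v=0$, as for $\mathbb{F}_0$), which depends only on the marked lattice points of $\Delta$. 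Since $v$ is interior one checks $\textup{conv}(\text{marked points}\setminus\{v\})=\Delta$, so the recession cone of $Q_v$ is $\{0\}$: $Q_v$ is bounded with $\textup{Vol}(Q_v)>0$. By Hausdorff convergence of these convex polytopes, $\textup{Vol}(\blacktriangle_v)=(\log k)^n\bigl(\textup{Vol}(Q_v)+o(1)\bigr)$. (This reproduces the examples: $Q_v$ is the unit square for $\mathbb{F}_0$, giving $4\log^2 k$, and the triangle with vertices $(\pm2,1)$ and $(0,-1)$ for $Y^{2,2}$, again area $4$.)

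Now the two-sided estimate. For the upper bound, $\mathcal{S}$ is a deformation retract of $\mathcal{A}(P)$, so $\mathbb{R}^n\setminus\mathcal{A}(P)\subseteq\mathbb{R}^n\setminus\mathcal{S}$, and matching the labellings of complementary components by interior lattice points gives $E_v\subseteq\blacktriangle_v$, hence $\textup{Vol}(E_v)\le(\log k)^n(\textup{Vol}(Q_v)+o(1))$. For the lower bound, consider the concave piecewise-linear \emph{dominance gap} $g(\bm{x})=\langle v,\bm{x}\rangle+\log|c_v|-\max_{(m,n)\neq v}\bigl(\langle(m,n),\bm{x}\rangle+\log|c_{(m,n)}|\bigr)$, which is $\ge 0$ on $\blacktriangle_v$ and vanishes on $\partial\blacktriangle_v$. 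On the convex set $\{\bm{x}\in\blacktriangle_v:\ g(\bm{x})>\log N\}$, with $N$ the number of non-$v$ monomials, the triangle inequality gives $|P(\bm{z})|\ge|c_v\bm{z}^v|(1-Ne^{-g(\bm{x})})>0$ for every $\bm{z}$ on the torus with $\textup{Log}|\bm{z}|=\bm{x}$; so this set avoids $\mathcal{A}(P)$, and since moreover $\tfrac12|c_v\bm{z}^v|\le|P(\bm{z})|\le 2|c_v\bm{z}^v|$ there, the affine, integer-gradient Ronkin function on the complementary component containing this set must have gradient exactly $v$, identifying that component as $E_v$. Since $g$ is concave with integer-vector linear pieces, $g(\bm{x})\ge c(\Delta)\,\textup{dist}(\bm{x},\partial\blacktriangle_v)$, so $\{g>\log N\}$ is $\blacktriangle_v$ with an $O(1)$-wide collar of $\partial\blacktriangle_v$ removed, of volume $\textup{Vol}(\blacktriangle_v)-O((\log k)^{n-1})$. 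Combining, $\textup{Vol}(E_v)=(\log k)^n\,\textup{Vol}(Q_v)+O((\log k)^{n-1})\sim\log^n k$, with leading constant $\textup{Vol}(Q_v)$; summing over the interior points gives the total bounded volume.

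The main obstacle is making ``$k$ large enough'' precise and honest: one must show that past a threshold (depending on $p$) the interior point $v$ genuinely indexes a full-dimensional spine cell and an actual bounded complementary region --- for arbitrary coefficients not every interior lattice point need do so --- and it is the special form $P=k-p$ together with the Maslov-dequantization statement $(\log k)^{-1}\mathcal{A}(P)\to\textup{trop}(P)$ used in the Proposition that $k\to\infty$ is a tropical limit that should guarantee this. A secondary point is the collar estimate, i.e.\ the uniform-in-$k$ lower bound on the growth of $g$ away from $\partial\blacktriangle_v$; this is routine for fixed $\Delta$ but wants care near vertices of $\blacktriangle_v$, where cells of the induced subdivision of $\Delta$ with many vertices could a priori produce slowly varying directions. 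Beyond these I expect no new ideas are needed, and the bound $V_{\textup{bdd}}=\textup{Vol}(Q_v)(\log k)^n+O((\log k)^{n-1})$ should follow.
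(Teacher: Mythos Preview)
First, note that the paper states this as a \emph{conjecture} and offers no proof; the preceding material is a heuristic (and itself conjectural) discussion of the two-dimensional case via local amoebae and spine vertices. Your sandwich argument is therefore attempting considerably more than the paper does, and for the principal case --- the bounded component $E_0$ attached to the constant term --- it is essentially correct and sharper than anything in the paper: the dominance-gap lower bound and the spine-cell upper bound pin down the volume to $\textup{Vol}(Q_0)(\log k)^n+O((\log k)^{n-1})$, with $Q_0$ the polar-type body $\{y:\langle\alpha,y\rangle\le1\ \text{for }\alpha\in\text{supp}(p)\}$. (A minor point: the inclusion $E_v\subseteq\blacktriangle_v$ is cleanest via the Ronkin function --- on $E_v$ one has $R(x)=\langle v,x\rangle+\log|c_v|$, and convexity of $R$ together with the known intercepts on the other components gives the inequalities --- rather than via ``spine $\subset$ amoeba'', which for the full tropical variety need not hold.)

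There is, however, a genuine gap when $v\ne0$. Your assertion that ``the recession cone of $Q_v$ is $\{0\}$: $Q_v$ is bounded with $\textup{Vol}(Q_v)>0$'' conflates boundedness with full-dimensionality. If $0$ is interior to $\Delta$ (the typical situation), then $v$ lies in the convex hull of the other exponents of $p$, so the homogeneous constraints $\langle v-(m,n),y\rangle\ge0$ already force $y$ into a set with empty interior: $\textup{Vol}(Q_v)=0$. Concretely, in $P=k-p$ only the face of $\blacktriangle_v$ coming from the constant term depends on $k$, and it moves \emph{inward}. Take $\Delta$ with vertices $(2,0),(-1,\pm1)$ and interior points $0,(1,0)$: the $(1,0)$-cell is trapped between $x_1\ge\log k-\log|c_{(1,0)}|$ (from the constant) and $x_1\le\log|c_{(1,0)}|-\log|c_{(2,0)}|$ (from $z^2$), which are incompatible for large $k$. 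Thus any hole labelled by $v\ne0$, if it persists at all, has volume $O(1)$, not $\sim\log^nk$. Your acknowledged ``main obstacle'' --- that $v$ must genuinely index a full-dimensional spine cell --- is therefore not a technicality but the crux: for $P=k-p$ with $k\to\infty$ only the $v=0$ cell scales like $\log^nk$, and the conjecture should be read as a statement about that region (equivalently, about the total bounded volume, which it dominates).
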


\subsection{The K\"ahler Parameter}\label{kahler}
So far, the variable $k$ has not been endowed with any further physical interpretations, except being the scale of the Mahler flow and controlling the size of the hole in the amoeba. In this subsection, we shall discuss the physical interpretations of the parameter $k$ in quiver theories.

It is well-known that the variables in the coefficients of $P(z,w)$ are related to K\"ahler moduli of the toric Gorenstein singularity \cite{Hori:2000kt,Hori:2000ck}. In quiver quantum mechanics, these K\"ahler moduli are Fayet-Iliopoulos (FI) parameters. Since every FI parameter is associated with a gauge node in the quiver, or equivalently, a face in the dimer, they should be related to edge weights/energies and magnetic fluxes on the dimer model as pointed out in \cite{Yamazaki:2010fz}.

Hence, it would be natural to relate $k$ to the K\"ahler/FI parameters. However, there are more than one FI paramter in general while we only have one variable $k$ in our Newton polynomial. In fact, the discrepancy between the numbers of parameters are compensated by D-term relations. Let $G$ be the number of nodes and $E$ be the number of arrows in the quivers. In terms of the arrows $X_I$, the D-terms are \cite{Witten:1993yc}
\begin{equation}
    D_i=-e^2\left(\sum_Id_{iI}|X_I|^2-\zeta_i\right),
\end{equation}
where $e$ is the gauge coupling\footnote{In general, the gauge coupling should be $e_i$, but one often sets $e=e_i$ for all $i$ in GLSM.} and $d$ is the incidence matrix\footnote{An incidence matrix has rows and columns represent gauge nodes and arrows in the quiver respectively. The entry is $\pm1$ if the arrow is attached to the node (either coming in or out). Otherwise, it is zero.} of size $G\times E$. Here, $\zeta_i$ denotes the FI parameters. The moment map is reproduced in the matrix form \cite{Feng:2000mi}
\begin{equation}
    \delta\cdot|X_I|^2=\bm{\zeta},
\end{equation}
where $\delta$ is the reduced quiver matrix obtained by removing one row in $d$. Hence, $\delta$ is a $(G-1)\times E$ matrix. This is because there is always one row in $d$ that is redundant. Physically, this indicates the removal of an overall U(1) factor. This can also be related to perfect matching matrix $\mathcal{P}$ via $\delta=Q_D\mathcal{P}^\text{T}$. The matrix $Q_D$ with $(G-1)$ rows encodes the GLSM charges under D-term relations, and hence the name D-term charge matrix.

As the coefficients in the Newton polynomial are obtained from perfect matchings/GLSM fields in the dimer model, there is only one free parameter left with the constraints from $Q_D$. Varying $k$ along the Mahler flow can therefore be interpreted as varying this free parameter. In general, all the coefficients should be functions of these FI parameters while we consider a simplification where we only have one variable $k$ as constant term in this paper.

\paragraph{General coefficients} Let us also make a brief comment on more general choices of coefficients in $P(z,w)$. Since these coefficients are determined by the FI parameters, they would be related to wall crossings in the moduli space of the quiver quantum mechanics. Following the 4d/1d correspondence \cite{Yamazaki:2012cp}, wall crossing corresponds to Seiberg duality in the 4d $\mathcal{N}=1$ gauge theories. As the Ronkin function is closely related to the 4d superconformal index and topological string partition function (see \S\ref{universal} and also \cite{Yamazaki:2012cp}) which are invariant under Seiberg duality and wall crossing respectively, we expect that the Mahler measure and Ronkin function would enjoy certain property under Seiberg duality/wall crossing. In \S\ref{Seiberg}, we will discuss this explicitly for the isoradial limit.

\paragraph{Intercepts of Ronkin functions} Different coefficients in the Newton polynomial would also lead to different intercepts of the linear facets in the Ronkin function. When a solid phase corresponds to a vertex/corner point $(m,n)$ in the Newton polygon, the facet would have slope $(m,n)$ in the Ronkin function. Besides, the Newton polynomial would contain a term $c_{(m,n)}z^mw^n$. According to \cite{passare2000amoebas},
\begin{theorem}
The intercept of this facet is $\log|c_{(m,n)}|$.\label{intercept}
\end{theorem}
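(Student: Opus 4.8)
The plan is to combine two ingredients that are already at our disposal: the structure theorem for the Ronkin function quoted above --- on each component of $\mathbb{R}^2\setminus\mathcal{A}_P$ the function $R$ is affine, with gradient the associated lattice point of $\Delta$ --- together with the elementary fact that at a \emph{vertex} $(m,n)$ of the Newton polygon the monomial $c_{(m,n)}z^mw^n$ dominates $P$ far out in the matching unbounded complementary region. Let $E$ be the unbounded component of $\mathbb{R}^2\setminus\mathcal{A}_P$ corresponding to the given solid phase, so that its associated lattice point is the corner $(m,n)$. By the cited theorem $R|_E$ is affine with $\nabla R|_E=(m,n)$, hence $R(x,y)=mx+ny+c$ on $E$ for some constant $c$, and the content of the claim is exactly $c=\log|c_{(m,n)}|$ (the slope being $(m,n)$ is already the gradient statement).

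First I would factor out the corner monomial,
\begin{equation}
    P(z,w)=c_{(m,n)}\,z^m w^n\bigl(1+Q(z,w)\bigr),\qquad Q(z,w):=\sum_{(a,b)\neq(m,n)}\frac{c_{(a,b)}}{c_{(m,n)}}\,z^{a-m}w^{b-n},
\end{equation}
the sum running over the remaining lattice points of $\Delta$. For $(x,y)\in E$ the torus $|z|=\mathrm e^x,\,|w|=\mathrm e^y$ avoids the curve $P=0$, so on it $\log|P|=\log|c_{(m,n)}|+mx+ny+\log|1+Q|$ pointwise, and averaging over that torus gives
\begin{equation}
    R(x,y)=\log|c_{(m,n)}|+mx+ny+\frac{1}{(2\pi i)^2}\int_{|z|=\mathrm e^x}\int_{|w|=\mathrm e^y}\log\bigl|1+Q(z,w)\bigr|\,\frac{\mathrm dz}{z}\,\frac{\mathrm dw}{w}.
\end{equation}
Comparing with $R(x,y)=mx+ny+c$ shows that the residual double integral equals the constant $c-\log|c_{(m,n)}|$ throughout $E$.

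It then remains to evaluate that constant by letting $(x,y)\to\infty$ inside $E$. Since $(m,n)$ is a vertex of $\Delta$, the cone $\sigma$ of those $u\in\mathbb R^2$ for which $\langle u,\cdot\rangle$ attains its maximum over $\Delta$ uniquely at $(m,n)$ is full-dimensional, and it is the recession cone of the convex set $E$ \cite{forsberg2000laurent}. Fixing $(x_0,y_0)\in E$ and $v\in\mathrm{int}(\sigma)$, the ray $(x,y)=(x_0,y_0)+tv$ remains in $E$, and $\langle v,(a-m,b-n)\rangle<0$ for every $(a,b)\in\Delta\setminus\{(m,n)\}$, so each exponent $(a-m)x+(b-n)y\to-\infty$ and therefore $\sup_{|z|=\mathrm e^x,\,|w|=\mathrm e^y}|Q|\to0$ as $t\to\infty$; hence the residual integral tends to $0$. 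Being constant on $E$, it must be identically $0$, i.e.\ $c=\log|c_{(m,n)}|$.

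The main obstacle is exactly the amoeba-geometric input invoked last: that the unbounded complementary region carrying a corner $(m,n)$ has recession cone the (appropriately oriented) normal cone of $\Delta$ there, so that the monomial domination $|Q|\to0$ genuinely takes place inside $E$ and not merely in some half-plane or sector. This is classical \cite{passare2000amoebas,forsberg2000laurent}, but it is where all the weight of the argument sits; once it is granted the rest is the uniform-convergence bookkeeping above. A side remark that re-derives the ``constant on $E$'' step without the limit: where $|Q|<1$ one may expand $\log(1+Q)=-\sum_{k\ge1}(-Q)^k/k$, and since every monomial of $Q$ --- hence of each power $Q^k$ --- satisfies $\langle v,\cdot\rangle<0$ for $v\in\mathrm{int}(\sigma)$, none of them is the constant monomial, so the torus integral of $\log|1+Q|$ extracts only its (vanishing) constant Fourier coefficient.
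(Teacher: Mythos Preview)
Your argument is correct. The paper does not supply its own proof of this statement; it quotes the result directly from \cite{passare2000amoebas}. What you wrote is essentially the standard proof found in that literature (and in \cite{forsberg2000laurent}): factor out the vertex monomial, use that the Ronkin function is affine with the known gradient on the component $E$, and identify the constant by letting $(x,y)\to\infty$ inside $E$ along the normal cone of the vertex so that the residual term $Q$ vanishes uniformly on the torus. Your side remark recovering the vanishing via the Fourier/constant-term argument is also standard and correct --- the point that no power $Q^k$ has a constant monomial is exactly the extremality of $(m,n)$ in $\Delta$: a convex combination of lattice points of $\Delta$ equal to the vertex forces every summand to be that vertex.

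The only place you flagged as delicate is indeed the only nontrivial input: that the unbounded component attached to a \emph{vertex} has recession cone equal to (the negative of) the normal cone there, so that the domination $|Q|\to0$ happens inside $E$. You correctly attribute this to \cite{forsberg2000laurent,passare2000amoebas}, and with that granted the rest is straightforward. There is nothing to add from the paper's side, since it treats the theorem as an imported fact.
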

As a result, if the coefficients of these linear facets do not equal, their extension would not meet at the origin in the plot of Ronkin function\footnote{If these (extensions of) facets meet at the same point but not the origin, we can always rescale the whole Newton polynomial to make $c_{(m,n)}=1$ and hence shift it to $(0,0,0)$.}. In crystal melting, this means that the unmolten crystal would not have a point as the tip. Instead, the top of the crystal would be some ridge or face. In \cite{Ooguri:2009ijd}, the one-to-one correspondence between crystal melting and dimer model was proven for the crystal with a single atom at the tip. Nevertheless, different coefficients for crystal melting has also been considered in various literature, and it would be natural to expect that this one-to-one correspondence could be extended to such situation for crystal melting\footnote{We should emphasize that the analysis of quiver gauge theories should not be affected since they can be directly read off from the dimers with different edge weights.}. In light of \cite{Kenyon:2003uj}, this would imply a non-trivial magnetic field.

When Newton polynomials cannot be rescaled to have coefficients 1 for corner points, there are actually two different types.
\begin{example}
The Newton polynomial for $L^{1,3,1}/\mathbb{Z}_2$ $(0,1,1,1)$ reads
\begin{equation}
    P=-\frac{AB}{C}(zw+w)-\frac{C}{AB}(zw^{-1}+z^{-2}w^{-1}+3w^{-1}+3z^{-1}w^{-1})-(2z+2z^{-1})+k,
\end{equation}
where
\begin{equation}
    A=\sin\left(\frac{5-\sqrt{7}}{12}\pi\right),B=\sin^2\left(\frac{5-\sqrt{7}}{6}\pi\right),C=\sin^3\left(\frac{1+\sqrt{7}}{12}\pi\right).\label{L131Z2}
\end{equation}
Although it does not have same coefficients for the corner points, we can absorb the extra factor by $\frac{AB}{C}w\rightarrow w$ so that the four coefficients would all become $1$. As we can see, this is a rescaling of $z,w$.

However, there is another type whose coefficients cannot be made the same even with such rescaling. For instance, the Newton polynomial for $\textup{PdP}_{4a}$ reads
\begin{equation}
    P=-B_1^2B_2^2(z+z^{-1})-B_2^2B_3B_4(zw^{-1}+w^{-1})-B_5^4z^{-1}w^2-2B_1B_2B_5^2(z^{-1}w+w)+k,
\end{equation}
where $B_i=\sin(\pi b_i/2)$ and
\begin{equation}
    \begin{split}
        &b_1\approx0.427\text{ is a root of }3x^3-340x^2-24x+72=0;\\
        &b_2\approx0.725\text{ is a root of }3x^3-134x^2+228x-96=0;\\
        &b_3\approx0.298\text{ is a root of }3x^3+206x^2-384x+96=0;\\
        &b_4\approx0.596\text{ is a root of }3x^3+412x^2-1536x+768=0;\\
        &b_5\approx0.550\text{ is a root of }3x^3+250x^2-124x-8=0.
    \end{split}\label{PdP4a}
\end{equation}\label{twotypes}
It turns out that no matter how we rescale the variables, the five vertices $z$, $z^{-1}$, $zw^{-1}$, $w^{-1}$ and $z^{-1}w^2$ would not have same coefficients.
\end{example}

\subsection{Isoradial Limit}\label{isolim}
Now, let us focus on one of the special points, that is, $k_\text{iso}$, in the Mahler flow. When the embedding is isoradial with $2\sin(\theta_I)$ as (canonical) edge weights, we shall call $m(P)$ and  $R(x,y)$ the isoradial Mahler measure and Ronkin function respectively. 
For a different choice of coefficients in the Newton polynomial, it is still possible to have an isoradial point $k=k_\text{iso}$. Now we will show that there is a set of $\theta_I^*$ that maximizes the isoradial Mahler measure and this coincides with the R-charges in $a$-maximization.

More importantly, after our discussion on Seiberg duality in \S\ref{Seiberg}, we will find that the followings in this subsection can be applied to any toric quivers and brane tilings regardless of isoradiality.

\subsubsection{Isoradial Mahler measure}\label{isom}
It is useful to introduce a simple and remarkable formula for isoradial Mahler measure obtained in \cite{de2007partition}:
\begin{equation}
    m(P)=\sum_I\left(\frac{\theta_I}{\pi}\log(2\sin(\theta_I))+\frac{1}{\pi}\Lambda(\theta_I)\right),\label{isomahler}
\end{equation}
where
\begin{equation}
\begin{split}
    \Lambda(x)&=-\int_0^x\log(2\sin(t))\text{d}t\\
    &=\frac{i}{12}\pi^2-\frac{i}{2}x^2+x\log(1-\text{e}^{2ix})-x\log(\sin(x))-\frac{i}{2}\text{Li}_2(\text{e}^{2ix})
\end{split}
\end{equation}
is the {\it Lobachevsky function}. 
In terms of the energy functions, we may also write \eqref{isomahler} as
\begin{equation}
    m(P)=-\frac{1}{\pi}\sum_{I=1}^m\left(\theta_I\mathcal{E}(e_I)+\int_0^{\theta_I}\log(2\sin(t))\text{d}t\right).
\end{equation}
It is important to notice that
\begin{equation}
    \frac{\partial m(P)}{\partial\theta_I}=\frac{\theta_I}{\pi}\cot(\theta_I).
\end{equation}
\begin{example}
Consider our running example of $m(4-z-z^{-1}-w-w^{-1})$ for $\mathbb{F}_0$. Since $\theta_I=\pi/4$ for all $I$, we get
\begin{equation}
    \begin{split}
        m(P)&=\frac{1}{(2\pi i)^2}\int_0^1\int_0^1\left(\log(8-2z-2z^{-1}-2w-2w^{-1})-\log(2)\right)\frac{\textup{d}z}{z}\frac{\textup{d}w}{w}\\
        &=m(8-2z-2z^{-1}-2w-2w^{-1})-\frac{\log(2)}{(2\pi i)^2}\int_0^1\int_0^1\frac{\textup{d}z}{z}\frac{\textup{d}w}{w}\\
        &=8\left(\frac{\pi}{4\pi}\log(2\sin(\pi/4))+\frac{1}{\pi}\Lambda(\pi/4)\right)-\log(2)\\
        &=\frac{4\mathcal{K}}{\pi}.
    \end{split}
\end{equation}
This agrees with our result in \eqref{m4}.
\end{example}

\subsubsection{Maximization of Isoradial Mahler Measure}\label{isomax}
When finding R-charges under RG trajectory in 4d, we need to maximize the $a$-function,
\begin{equation}
    a=\sum_I(R_I-1)^3=\sum_I\left(\frac{2\theta_I}{\pi}-1\right)^3,
\end{equation}
where we have normalized the coefficient and omitted the part from number of quiver nodes (which corresponds to faces in the dimer). On the other hand, the isoradial Mahler measure is also explicitly given in terms of $\theta_I$ by \eqref{isomahler}. 
As $\theta_I$ is a rhombus angle in the dimer model, we have $\theta_I\in[0,\pi/2]$. The shapes of the two functions are sketched in Figure \ref{mahlera}, heuristically drawn against 1 and 2 of the $\theta_I$ angles.
\begin{figure}[h]
    \centering
    \begin{subfigure}{6cm}
		\centering
		\includegraphics[width=6cm]{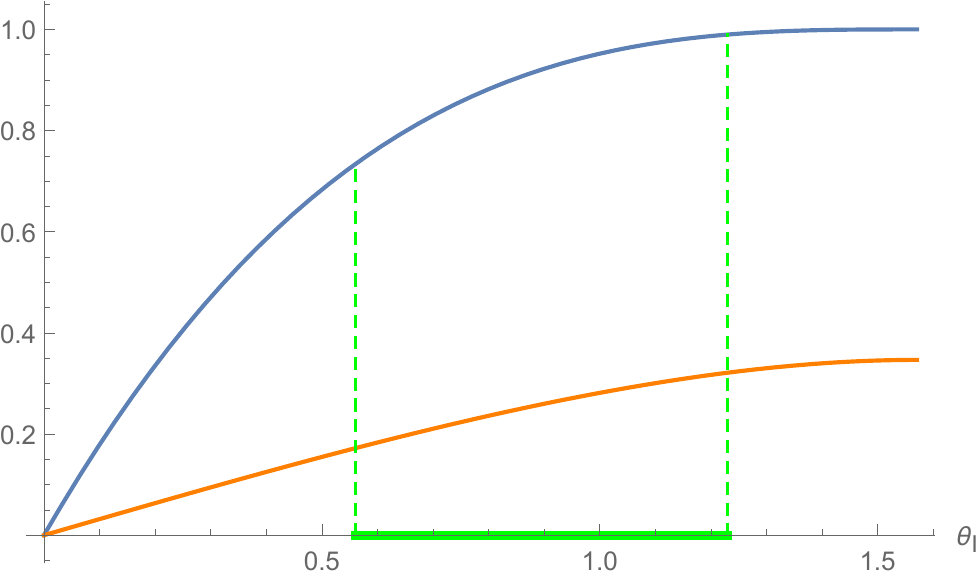}
		\caption{}
	\end{subfigure}
	\begin{subfigure}{6cm}
		\centering
		\includegraphics[width=6cm]{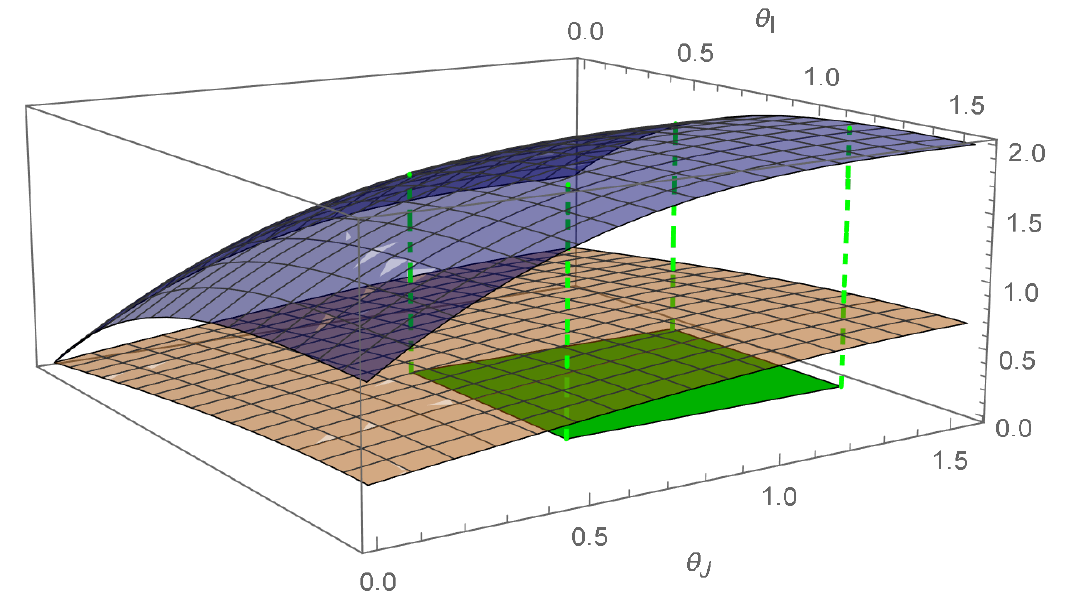}
		\caption{}
	\end{subfigure}
	\caption{The $a$-function (blue) and isoradial Mahler measure (orange) plotted as functions of $\theta_I$. We can only visualize the (a) 1d and (b) 2d versions while physically there should be at least 3 $\theta_I$'s ($R_I$'s) in our theory. Notice that we have shifted the $a$-function by (a) +1 and (b) +2 for better visualization and comparison. The green region is the conformal manifold on the $\theta_I$ axis/$\theta_I$-$\theta_J$ plane.}\label{mahlera}
\end{figure}
The derivatives for the two functions are respectively
\begin{equation}
    \frac{\partial a}{\partial\theta_I}=\frac{6}{\pi}\left(\frac{2\theta_I}{\pi}-1\right)^2>0 \ ,
    \quad 
    \frac{\partial m_\text{iso}}{\partial\theta_I}=\frac{\theta_I}{\pi}\cot(\theta_I)>0 \ ;
    \qquad
    \theta_I\in[0,\pi/2] \ .    
\end{equation}
Hence, both of the functions are strictly increasing for any R-charges. For $a$-maximization, the rhombus angles are also subject to the conditions $\sum2\theta_I=2\pi$ and $\sum(\pi-2\theta_I)=2\pi$ as discussed in Sec.~\ref{isodimer}. 
The region of rhombus angles/R-charges that satisfy these conditions is the so-called conformal manifold $\mathcal{M}$. 
Pictorially, we sketch the conformal manifold in green in the $\theta_I$-space in Figure \ref{mahlera}, and the dashed lines cut out the possible values for $a$ and $m_\text{iso}$.
In other words, we need to find the maximum of $a$ for the set of points $\{(\theta_1,\dots,\theta_n)\in\mathcal{M}\}$. Suppose that $a$ is maximized at $(\theta_1^*,\dots,\theta_n^*)$, then the isoradial Mahler measure would also reach its maximum at $(\theta_1^*,\dots,\theta_n^*)$ on the conformal manifold as $a$ and $m_\text{iso}$ are both monotonically increasing. Therefore, we have shown that
\begin{proposition}\label{propmax}
For toric quiver gauge theories, we have
\begin{equation}
    m_\textup{iso}\text{-maximization}=a\text{-maximization}.
\end{equation}
\end{proposition}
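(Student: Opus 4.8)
The plan is to recast both maximizations as constrained optimization of \emph{separable, strictly concave} functionals over one and the same polytope, establish uniqueness of each optimum, and then compare them. First, using the formula \eqref{isomahler} together with the expression $a=\sum_I\left(\frac{2\theta_I}{\pi}-1\right)^3$, note that both functionals are sums over the edges $I$ of the tiling of a single one-variable function of the rhombus angle: $a=\sum_I f_a(\theta_I)$ with $f_a(\theta)=(2\theta/\pi-1)^3$, and $m_\textup{iso}=\sum_I f_m(\theta_I)$ with $f_m(\theta)=\frac{\theta}{\pi}\log(2\sin\theta)+\frac1\pi\Lambda(\theta)$. Both optimizations run over the conformal manifold $\mathcal{M}$: the compact convex polytope obtained by intersecting the box $\theta_I\in[0,\pi/2]$ with the affine subspace defined by the conditions $\sum2\theta_I=2\pi$ (one per superpotential term, i.e.\ per node of the bipartite graph) and $\sum(\pi-2\theta_I)=2\pi$ (one per quiver node, i.e.\ per face of the tiling) recalled in \S\ref{isodimer}.

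Second, I would prove that each functional has a \emph{unique} maximizer on $\mathcal{M}$. From the derivatives already recorded, $f_a'(\theta)=\frac6\pi(2\theta/\pi-1)^2\ge0$ and $f_m'(\theta)=\frac{\theta}{\pi}\cot\theta\ge0$ on $[0,\pi/2]$; one more differentiation gives $f_a''(\theta)=\frac{24}{\pi^2}(2\theta/\pi-1)<0$ and $f_m''(\theta)=\frac{1}{\pi\sin^2\theta}\left(\tfrac12\sin2\theta-\theta\right)<0$ on $(0,\pi/2)$, the latter because $\sin2\theta<2\theta$ there. Hence $a$ and $m_\textup{iso}$ are each strictly concave on the interior of the box, so each attains its maximum over the convex set $\mathcal{M}$ at a single point; unitarity (all $R_I=2\theta_I/\pi$ strictly in $(0,1)$) places these points in the relative interior of $\mathcal{M}$, where the maximum is characterized by the first-order condition that the gradient lie in the span of the normals to the defining hyperplanes of $\mathcal{M}$.

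Finally comes the step that promotes the ``both functions increase'' observation into a genuine proof, and which I expect to be the main obstacle: one must show the two maximizers actually coincide. Strict coordinatewise monotonicity of $f_a$ and $f_m$ is suggestive but not by itself sufficient --- two functions that are strictly increasing in every coordinate can be maximized at different points of a polytope --- so the clean route is to exhibit, for every $\theta\in\mathcal{M}$, a path inside $\mathcal{M}$ to the $a$-maximizer $\theta^\ast$ along which no coordinate decreases; along such a path $m_\textup{iso}$ also rises, forcing its maximizer to be $\theta^\ast$ as well. Equivalently, one must check that the two first-order systems, which differ only in $f_a'$ versus $f_m'$ (and these are not proportional), have the same solution in $\mathcal{M}$. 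To close this I would bring in the structure special to toric quivers: either (a) invoke the known equivalence of $a$-maximization with Sasaki-Einstein volume minimization / $Z$-minimization and match the Euler-Lagrange equation for $m_\textup{iso}$ to that same variational problem, or (b) pass to the zig-zag-path parametrization of the trial R-charges (in which $2\theta_I$ is the angular gap between the two zig-zag paths crossing edge $I$ and the constraints of $\mathcal{M}$ are solved automatically) and verify that extremizing $a$ and extremizing $m_\textup{iso}$ in those free angular variables produce identical equations. Given the role monotonicity plays here, I would attempt route (b) first and keep (a) as a fallback.
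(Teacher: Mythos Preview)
Your proposal is more careful than the paper's own argument. The paper's entire proof is precisely the observation you flag as insufficient: it computes $\partial a/\partial\theta_I=\frac{6}{\pi}(2\theta_I/\pi-1)^2>0$ and $\partial m_\textup{iso}/\partial\theta_I=\frac{\theta_I}{\pi}\cot\theta_I>0$, notes that both functions are therefore strictly increasing in every coordinate, and concludes from this alone that the maximum of $a$ over the conformal manifold $\mathcal{M}$ must coincide with the maximum of $m_\textup{iso}$. There is no concavity step, no uniqueness argument, and no attempt to rule out the counterexample scenario you describe (two coordinatewise-monotone functions maximized at different points of a polytope). The paper offers no further structure --- no zig-zag parametrization, no reduction to $Z$-minimization --- to close this.

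So your assessment of the logical situation is accurate: coordinatewise monotonicity is the whole of the paper's reasoning, and it does not by itself force the two maximizers to agree. Your added observations (separability, strict concavity of $f_a$ and $f_m$ on $(0,\pi/2)$, hence uniqueness of each maximizer in the relative interior of $\mathcal{M}$) are correct and already sharpen the picture beyond what the paper provides. The remaining step --- actually identifying the two critical points --- is genuinely open in your write-up, as you acknowledge. Of your two routes, (b) via the zig-zag angular parametrization is the more direct to attempt, but be warned that neither $f_a'$ nor $f_m'$ becomes linear in those variables, so you should not expect the first-order systems to be literally identical; you would still need an argument specific to the geometry of $\mathcal{M}$ (e.g.\ that from any point of $\mathcal{M}$ there is a coordinatewise-nondecreasing path to $\theta^\ast$, which is a statement about the face structure of the polytope and not automatic). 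Route (a) is safer but imports heavier machinery. Either way, you will be proving more than the paper does.
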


\begin{remark}
Following this proposition, we may also conclude that $m_\textup{iso}$-maximization is equivalent to volume minimization \cite{Gubser:1998vd} and K-semistability (for product test configurations) \cite{Collins:2012dh}. We will not expound upon this here, and readers are referred to \cite{Collins:2012dh,collins2019sasaki,Collins:2016icw,Bao:2020ugf} for more details.
\end{remark}

\begin{remark}
Since the weights for the dimer edges are always non-negative for any trial R-charges, the spectral curve is always Harnack. Hence, with different trial R-charges, the amoeba is deformed but its area is invariant. Furthermore, it would always be genus $0$.
\end{remark}

We emphasize that this proposition is still valid for non-isoradial dimers. Though for the latter we do not have a well-defined rhombus angle, but we can write the edge weights as $2\sin(\pi R_I/2)$. In other words, we replace $\theta_I$ by $\pi R_I/2$ in \eqref{isomahler}. Then \eqref{isomahler}, which we shall still call $m_\text{iso}$ for convenience, again reaches its maximum at $R_I^*$, where $R_I^*$ maximizes the $a$-function. Therefore, it is always true that $m_\text{iso}$-maximization is equivalent to $a$-maximization\footnote{Strictly speaking, so far we can only say that maximizing \eqref{isomahler} (rather than $m_\text{iso}$) is equivalent to $a$-maximization as we have not shown that \eqref{isomahler} gives the correct Mahler measures for non-isoradial embeddings with canonical weight choices. We will come back to this point in \S\ref{Seiberg}.}.

\begin{example}
Let us vary the shape of the isoradial dimer for $\mathbb{F}_0$ in the following way:
\begin{equation}
    \tikzset{every picture/.style={line width=0.75pt}}
    \begin{tikzpicture}[x=0.75pt,y=0.75pt,yscale=-1,xscale=1]
\draw  [color={rgb, 255:red, 155; green, 155; blue, 155 }  ,draw opacity=1 ][dash pattern={on 4.5pt off 4.5pt}] (506.94,126.85) .. controls (506.94,98.21) and (529.77,75) .. (557.93,75) .. controls (586.09,75) and (608.92,98.21) .. (608.92,126.85) .. controls (608.92,155.49) and (586.09,178.7) .. (557.93,178.7) .. controls (529.77,178.7) and (506.94,155.49) .. (506.94,126.85) -- cycle ;
\draw  [line width=1.5]  (575.75,79.6) -- (575.75,174.1) -- (540.1,174.1) -- (540.1,79.6) -- cycle ;
\draw  [color={rgb, 255:red, 155; green, 155; blue, 155 }  ,draw opacity=1 ][dash pattern={on 4.5pt off 4.5pt}] (391.94,125.85) .. controls (391.94,97.21) and (414.77,74) .. (442.93,74) .. controls (471.09,74) and (493.92,97.21) .. (493.92,125.85) .. controls (493.92,154.49) and (471.09,177.7) .. (442.93,177.7) .. controls (414.77,177.7) and (391.94,154.49) .. (391.94,125.85) -- cycle ;
\draw  [line width=1.5]  (465.24,82.35) -- (465.24,169.35) -- (420.62,169.35) -- (420.62,82.35) -- cycle ;
\draw  [color={rgb, 255:red, 155; green, 155; blue, 155 }  ,draw opacity=1 ][dash pattern={on 4.5pt off 4.5pt}] (162.94,123.85) .. controls (162.94,95.21) and (185.77,72) .. (213.93,72) .. controls (242.09,72) and (264.92,95.21) .. (264.92,123.85) .. controls (264.92,152.49) and (242.09,175.7) .. (213.93,175.7) .. controls (185.77,175.7) and (162.94,152.49) .. (162.94,123.85) -- cycle ;
\draw  [line width=1.5]  (170.43,101.54) -- (257.43,101.54) -- (257.43,146.16) -- (170.43,146.16) -- cycle ;
\draw  [color={rgb, 255:red, 155; green, 155; blue, 155 }  ,draw opacity=1 ][dash pattern={on 4.5pt off 4.5pt}] (47.94,122.85) .. controls (47.94,94.21) and (70.77,71) .. (98.93,71) .. controls (127.09,71) and (149.92,94.21) .. (149.92,122.85) .. controls (149.92,151.49) and (127.09,174.7) .. (98.93,174.7) .. controls (70.77,174.7) and (47.94,151.49) .. (47.94,122.85) -- cycle ;
\draw  [line width=1.5]  (51.68,105.02) -- (146.18,105.02) -- (146.18,140.67) -- (51.68,140.67) -- cycle ;
\draw  [color={rgb, 255:red, 155; green, 155; blue, 155 }  ,draw opacity=1 ][dash pattern={on 4.5pt off 4.5pt}] (276.94,124.85) .. controls (276.94,96.21) and (299.77,73) .. (327.93,73) .. controls (356.09,73) and (378.92,96.21) .. (378.92,124.85) .. controls (378.92,153.49) and (356.09,176.7) .. (327.93,176.7) .. controls (299.77,176.7) and (276.94,153.49) .. (276.94,124.85) -- cycle ;
\draw  [color={rgb, 255:red, 0; green, 0; blue, 0 }  ,draw opacity=1 ][line width=1.5]  (293.17,89.5) -- (362.69,89.5) -- (362.69,160.2) -- (293.17,160.2) -- cycle ;
\draw  [fill={rgb, 255:red, 255; green, 255; blue, 255 }  ,fill opacity=1 ] (296.34,89.5) .. controls (296.34,87.79) and (294.92,86.4) .. (293.17,86.4) .. controls (291.42,86.4) and (290,87.79) .. (290,89.5) .. controls (290,91.22) and (291.42,92.6) .. (293.17,92.6) .. controls (294.92,92.6) and (296.34,91.22) .. (296.34,89.5) -- cycle ;
\draw  [fill={rgb, 255:red, 255; green, 255; blue, 255 }  ,fill opacity=1 ] (365.86,160.2) .. controls (365.86,158.48) and (364.44,157.1) .. (362.69,157.1) .. controls (360.94,157.1) and (359.52,158.48) .. (359.52,160.2) .. controls (359.52,161.91) and (360.94,163.3) .. (362.69,163.3) .. controls (364.44,163.3) and (365.86,161.91) .. (365.86,160.2) -- cycle ;
\draw  [fill={rgb, 255:red, 0; green, 0; blue, 0 }  ,fill opacity=1 ] (296.34,160.2) .. controls (296.34,158.48) and (294.92,157.1) .. (293.17,157.1) .. controls (291.42,157.1) and (290,158.48) .. (290,160.2) .. controls (290,161.91) and (291.42,163.3) .. (293.17,163.3) .. controls (294.92,163.3) and (296.34,161.91) .. (296.34,160.2) -- cycle ;
\draw  [fill={rgb, 255:red, 0; green, 0; blue, 0 }  ,fill opacity=1 ] (365.86,89.5) .. controls (365.86,87.79) and (364.44,86.4) .. (362.69,86.4) .. controls (360.94,86.4) and (359.52,87.79) .. (359.52,89.5) .. controls (359.52,91.22) and (360.94,92.6) .. (362.69,92.6) .. controls (364.44,92.6) and (365.86,91.22) .. (365.86,89.5) -- cycle ;
\draw  [fill={rgb, 255:red, 255; green, 255; blue, 255 }  ,fill opacity=1 ] (173.6,101.54) .. controls (173.6,99.83) and (172.18,98.44) .. (170.43,98.44) .. controls (168.68,98.44) and (167.26,99.83) .. (167.26,101.54) .. controls (167.26,103.25) and (168.68,104.64) .. (170.43,104.64) .. controls (172.18,104.64) and (173.6,103.25) .. (173.6,101.54) -- cycle ;
\draw  [fill={rgb, 255:red, 255; green, 255; blue, 255 }  ,fill opacity=1 ] (260.6,146.16) .. controls (260.6,144.45) and (259.18,143.06) .. (257.43,143.06) .. controls (255.68,143.06) and (254.26,144.45) .. (254.26,146.16) .. controls (254.26,147.87) and (255.68,149.26) .. (257.43,149.26) .. controls (259.18,149.26) and (260.6,147.87) .. (260.6,146.16) -- cycle ;
\draw  [fill={rgb, 255:red, 0; green, 0; blue, 0 }  ,fill opacity=1 ] (173.6,146.16) .. controls (173.6,144.45) and (172.18,143.06) .. (170.43,143.06) .. controls (168.68,143.06) and (167.26,144.45) .. (167.26,146.16) .. controls (167.26,147.87) and (168.68,149.26) .. (170.43,149.26) .. controls (172.18,149.26) and (173.6,147.87) .. (173.6,146.16) -- cycle ;
\draw  [fill={rgb, 255:red, 0; green, 0; blue, 0 }  ,fill opacity=1 ] (260.6,101.54) .. controls (260.6,99.83) and (259.18,98.44) .. (257.43,98.44) .. controls (255.68,98.44) and (254.26,99.83) .. (254.26,101.54) .. controls (254.26,103.25) and (255.68,104.64) .. (257.43,104.64) .. controls (259.18,104.64) and (260.6,103.25) .. (260.6,101.54) -- cycle ;
\draw  [fill={rgb, 255:red, 255; green, 255; blue, 255 }  ,fill opacity=1 ] (54.85,105.02) .. controls (54.85,103.31) and (53.43,101.92) .. (51.68,101.92) .. controls (49.93,101.92) and (48.51,103.31) .. (48.51,105.02) .. controls (48.51,106.74) and (49.93,108.13) .. (51.68,108.13) .. controls (53.43,108.13) and (54.85,106.74) .. (54.85,105.02) -- cycle ;
\draw  [fill={rgb, 255:red, 255; green, 255; blue, 255 }  ,fill opacity=1 ] (149.35,140.67) .. controls (149.35,138.96) and (147.93,137.57) .. (146.18,137.57) .. controls (144.43,137.57) and (143.01,138.96) .. (143.01,140.67) .. controls (143.01,142.39) and (144.43,143.78) .. (146.18,143.78) .. controls (147.93,143.78) and (149.35,142.39) .. (149.35,140.67) -- cycle ;
\draw  [fill={rgb, 255:red, 0; green, 0; blue, 0 }  ,fill opacity=1 ] (54.85,140.67) .. controls (54.85,138.96) and (53.43,137.57) .. (51.68,137.57) .. controls (49.93,137.57) and (48.51,138.96) .. (48.51,140.67) .. controls (48.51,142.39) and (49.93,143.78) .. (51.68,143.78) .. controls (53.43,143.78) and (54.85,142.39) .. (54.85,140.67) -- cycle ;
\draw  [fill={rgb, 255:red, 0; green, 0; blue, 0 }  ,fill opacity=1 ] (149.35,105.02) .. controls (149.35,103.31) and (147.93,101.92) .. (146.18,101.92) .. controls (144.43,101.92) and (143.01,103.31) .. (143.01,105.02) .. controls (143.01,106.74) and (144.43,108.13) .. (146.18,108.13) .. controls (147.93,108.13) and (149.35,106.74) .. (149.35,105.02) -- cycle ;
\draw  [fill={rgb, 255:red, 255; green, 255; blue, 255 }  ,fill opacity=1 ] (423.78,82.35) .. controls (423.78,80.64) and (422.37,79.25) .. (420.62,79.25) .. controls (418.87,79.25) and (417.45,80.64) .. (417.45,82.35) .. controls (417.45,84.06) and (418.87,85.45) .. (420.62,85.45) .. controls (422.37,85.45) and (423.78,84.06) .. (423.78,82.35) -- cycle ;
\draw  [fill={rgb, 255:red, 255; green, 255; blue, 255 }  ,fill opacity=1 ] (468.41,169.35) .. controls (468.41,167.64) and (466.99,166.25) .. (465.24,166.25) .. controls (463.49,166.25) and (462.07,167.64) .. (462.07,169.35) .. controls (462.07,171.06) and (463.49,172.45) .. (465.24,172.45) .. controls (466.99,172.45) and (468.41,171.06) .. (468.41,169.35) -- cycle ;
\draw  [fill={rgb, 255:red, 0; green, 0; blue, 0 }  ,fill opacity=1 ] (423.78,169.35) .. controls (423.78,167.64) and (422.37,166.25) .. (420.62,166.25) .. controls (418.87,166.25) and (417.45,167.64) .. (417.45,169.35) .. controls (417.45,171.06) and (418.87,172.45) .. (420.62,172.45) .. controls (422.37,172.45) and (423.78,171.06) .. (423.78,169.35) -- cycle ;
\draw  [fill={rgb, 255:red, 0; green, 0; blue, 0 }  ,fill opacity=1 ] (468.41,82.35) .. controls (468.41,80.64) and (466.99,79.25) .. (465.24,79.25) .. controls (463.49,79.25) and (462.07,80.64) .. (462.07,82.35) .. controls (462.07,84.06) and (463.49,85.45) .. (465.24,85.45) .. controls (466.99,85.45) and (468.41,84.06) .. (468.41,82.35) -- cycle ;
\draw  [fill={rgb, 255:red, 255; green, 255; blue, 255 }  ,fill opacity=1 ] (543.27,79.6) .. controls (543.27,77.89) and (541.85,76.5) .. (540.1,76.5) .. controls (538.35,76.5) and (536.94,77.89) .. (536.94,79.6) .. controls (536.94,81.31) and (538.35,82.7) .. (540.1,82.7) .. controls (541.85,82.7) and (543.27,81.31) .. (543.27,79.6) -- cycle ;
\draw  [fill={rgb, 255:red, 255; green, 255; blue, 255 }  ,fill opacity=1 ] (578.92,174.1) .. controls (578.92,172.39) and (577.5,171) .. (575.75,171) .. controls (574,171) and (572.59,172.39) .. (572.59,174.1) .. controls (572.59,175.81) and (574,177.2) .. (575.75,177.2) .. controls (577.5,177.2) and (578.92,175.81) .. (578.92,174.1) -- cycle ;
\draw  [fill={rgb, 255:red, 0; green, 0; blue, 0 }  ,fill opacity=1 ] (543.27,174.1) .. controls (543.27,172.39) and (541.85,171) .. (540.1,171) .. controls (538.35,171) and (536.94,172.39) .. (536.94,174.1) .. controls (536.94,175.81) and (538.35,177.2) .. (540.1,177.2) .. controls (541.85,177.2) and (543.27,175.81) .. (543.27,174.1) -- cycle ;
\draw  [fill={rgb, 255:red, 0; green, 0; blue, 0 }  ,fill opacity=1 ] (578.92,79.6) .. controls (578.92,77.89) and (577.5,76.5) .. (575.75,76.5) .. controls (574,76.5) and (572.59,77.89) .. (572.59,79.6) .. controls (572.59,81.31) and (574,82.7) .. (575.75,82.7) .. controls (577.5,82.7) and (578.92,81.31) .. (578.92,79.6) -- cycle ;
\end{tikzpicture},
\end{equation}
where we are restricting to rectangles as an illustration. Suppose the two distinct edges have weights $2\nu_1^{1/2}$ and $2\nu_2^{1/2}$, then we have $\nu_I=\sin^2(\theta_I)$ and $\nu_1+\nu_2=1$. The Mahler measure reads
\begin{equation}
    m_\textup{iso}=\int_0^{2\pi}\int_0^{2\pi}\log|2\nu_1+2\nu_2-\nu_1\textup{e}^{is}-\nu_1\textup{e}^{-is}-\nu_2\textup{e}^{it}-\nu_2\textup{e}^{-it}|\textup{d}s\textup{d}t.
\end{equation}
The isoradial Mahler measure as a function of $\nu_1$ and its derivative are plotted in Figure \ref{isomaxex}.
\begin{figure}[h]
	\centering
	\begin{subfigure}{5cm}
	    \centering
	    \includegraphics[width=5cm]{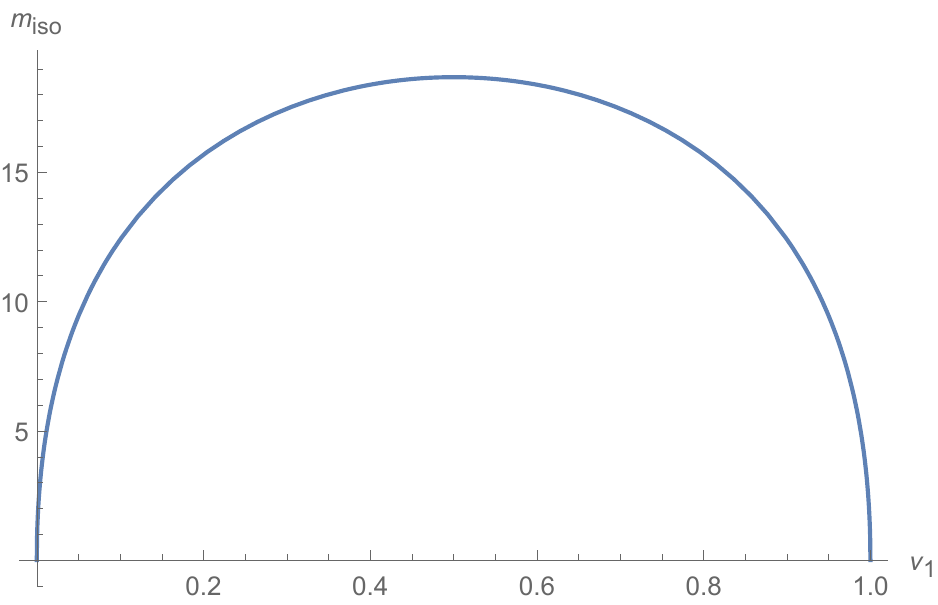}
	    \caption{}
	\end{subfigure}
	\begin{subfigure}{5cm}
	    \centering
	    \includegraphics[width=5cm]{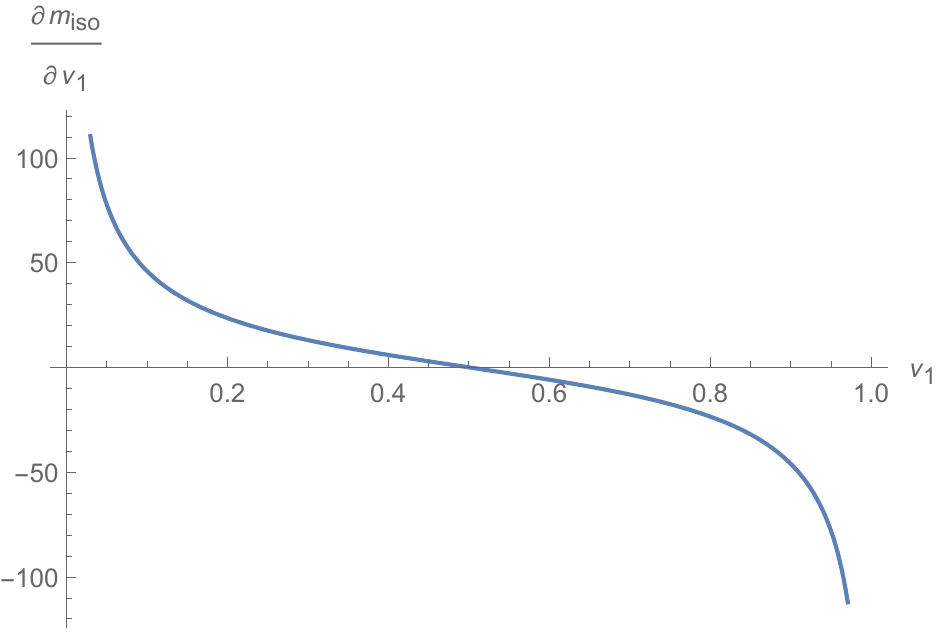}
	    \caption{}
	\end{subfigure}
	\caption{The plot of (a) $m_\text{iso}$ and (b) $\partial m_\text{iso}/\partial\nu_1$. Here $\nu_2=1-\nu_1$.}\label{isomaxex}
\end{figure}
We can see that the maximum is reached at $\nu_1=1/2$. Indeed, \eqref{isomahler} yields
\begin{equation}
    m_\textup{iso}=4\left(\frac{\theta_1}{\pi}\log(2\sin(\theta_1))+\frac{1}{\pi}\Lambda(\theta_1)\right)+4\left(\frac{\theta_2}{\pi}\log(2\sin(\theta_2))+\frac{1}{\pi}\Lambda(\theta_2)\right)
\end{equation}
with $\theta_2=\pi/2-\theta_1$. Its derivatives read
\begin{equation}
    \begin{split}
        &\frac{\partial m_\textup{iso}}{\partial\theta_1}=\frac{4\theta_1}{\pi}\cot(\theta_1)+\frac{4\theta_1-2\pi}{\pi}\tan(\theta_1),\\
        &\frac{\partial^2m_\textup{iso}}{\partial\theta_1^2}=\frac{4\cot(\theta_1)-4\theta_1\csc^2(\theta_1)+2\sec^2(\theta)(2\theta_1+2\sin(\theta_1)-\pi)}{\pi}.
    \end{split}
\end{equation}
One can find that the first derivative vanishes when $\theta_1=\pi/4$ (and the second derivative equals $8/\pi-4<0$). Therefore, the Mahler measure is maximized at $\theta_1=\theta_2=\pi/4$ (or equivalently, $\nu_1=\nu_2=1/2$). This agrees with the result from $a$-maximization. In general, if the faces in the dimer are not rectangles but general isoradial quadrilaterals, the Mahler measure should still be maximized at $\theta_{1,\dots,8}=\pi/4$, that is, $R_{1,\dots,8}=1/2$.
\end{example}

\subsubsection{Seiberg Duality}\label{Seiberg}
Theories engineered by D-branes on toric Gorenstein singularities enjoy certain dualities including Seiberg duality and specular duality. In this subsection, we show that the Mahler measure also exhibits certain properties under these dualities. 
Let us start with Seiberg duality \cite{Seiberg:1994pq,Feng:2001bn,Beasley:2001zp}. In terms of quivers, this is essentially mutations (plus non-trivial superpotentials). For brane tilings, this gives rise to so-called urban renewal \cite{propp2003generalized,Franco:2005rj}. Seiberg duals correspond to the same toric diagram as moduli space. However, there is no reason why the Newton polynomials for the duals would remain the same since the coefficients are obtained from different brane tilings whose perfect matchings can change. Nevertheless, we will see that at the isoradial point, the polynomials are the same.

Assume that the dimer is embedded isoradially on the torus, then we can use our canonical choice of edge weights to write down the Newton polynomial. It turns out that for all the duals we checked, although the total numbers of multiplets/rhombus angles  change, the coefficient for every single monomial, viz, $\sum\text{e}^{-\mathcal{E}(M)}=\sum\left(\prod\limits_I2\log(\sin(2\theta_I))\right)$, is invariant up to a factor of $2^{n_2-n_1}$ for some integers $n_i$. In fact, $n_1$ and $n_2$ are nothing but the numbers of edges/multiplets in each perfect matching/GLSM field for the two tilings respectively. Therefore, we can simply cancel the factor $2^{n_1}$ or $2^{n_2}$ from every monomial in the two Newton polynomials.

Note that there is a subtlety in the above discussion. In general, a dimer (which is a dual of an isoradial one) does not admit an isoradial embedding. Nevertheless, let us still use $2\sin(\pi R_I/2)$ as the edge weights\footnote{We can only write the weights in terms of the R-charges as the concept of rhombus angle is not really well-defined for non-isoradial embeddings.}. It turns out that the resulting Newton polynomial is again the same as its Seiberg dual(s). To ensure that this is the desired $P(z,w)$ even though the embedding is non-isoradial, one may check that in such case the Mahler measure is equal to the value computed from \eqref{isomahler} using $\theta_I=\pi R_I/2$ for the physical R-charges $R_I$. In fact, this is because the non-isoradial dimer has been continuously deformed so that some of the edges shrink to zero (or even negative lengths). After such deformation, the non-isoradial dimer degenerates to an isoradial dimer with some of the rhombus angles being $\pi/2$ (or even obtuse). Therefore, the formula \eqref{isomahler} for isoradial Mahler measures would still work in this situation.

\begin{conjecture}
Seiberg duals have exactly the same Newton polynomials. Hence, Mahler measure is trivially invariant under Seiberg duality.

Equivalently, we may keep the two factors $2^{n_1}$ and $2^{n_2}$ such that the Newton polynomials are related by $P_2(z,w)=2^{n_2-n_1}P_1(z,w)$, where $n_i$ is the number of edges/multiplets in each perfect matching/GLSM field. Then we can say that the Mahler measure is invariant under Seiberg duality up to $\log(2^{n_2-n_1})$, that is, $m_2=m_1+\log(2^{n_2-n_1})$.\label{Seibergconj}
\end{conjecture}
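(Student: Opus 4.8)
The plan is to split the statement in two. Once we know the Seiberg--dual Newton polynomials obey $P_2(z,w)=2^{n_2-n_1}P_1(z,w)$, the assertion about Mahler measure is immediate from additivity, since $m(cP)=m(c)+m(P)=\log|c|+m(P)$ for a nonzero constant $c$, so $m_2=m_1+(n_2-n_1)\log 2$; and after the rescaling of variables that normalizes one corner coefficient to $1$ (cf.\ Theorem~\ref{intercept} and the remark after Proposition~\ref{mflowprop}) the correction disappears entirely. Hence the real content is the equality of Newton polynomials, and I would prove it one urban--renewal move at a time: a general Seiberg duality between toric phases is a composition of dualizations at $4$-valent faces of the brane tiling, each realized as an urban renewal (``spider move''), and both the invariance and the exponents $n_i$ behave additively under composition, so it suffices to treat a single move. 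Fix a consistent tiling $\mathcal{G}$ with a square face $f$ (carrying $2$ white and $2$ black boundary vertices and four edges $e_1,\dots,e_4$) and let $\mathcal{G}'$ be the result of urban renewal at $f$.

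First I would localize the computation via $P=\det K(z,w)$. Urban renewal modifies the tiling only inside a bounded window around $f$, so $K'(z,w)$ and $K(z,w)$ differ in only finitely many rows and columns (and possibly by the determinant-preserving contractions coming from any $2$-valent vertices created by the move). The weights of all affected edges are fixed by the canonical choice $2\sin(\pi R_I/2)$ applied to the Seiberg--dual R-charges: a ``dual-quark'' edge replacing a field of R-charge $R_I$ acquires R-charge $1-R_I$, hence weight $2\sin\!\bigl(\tfrac{\pi}{2}(1-R_I)\bigr)=2\cos(\pi R_I/2)$, while a ``meson'' edge built from the two adjacent fields of R-charges $R_I,R_J$ acquires R-charge $R_I+R_J$, hence weight $2\sin\!\bigl(\tfrac{\pi}{2}(R_I+R_J)\bigr)$; all other weights are unchanged since the spectator fields keep their R-charge. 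Because the monomial $z^aw^b$ attached to a perfect matching depends only on its height change $(h_x,h_y)$, and the move induces a local bijection of matchings up to the $2$-valent contractions, it suffices to compare, class by class in $(h_x,h_y)$, the local factors that each window contributes to $\det K$ and $\det K'$.

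The heart of the argument is then a trigonometric telescoping. At the isoradial point the rhombus-angle constraints hold: $\sum R=2$ over the edges around any vertex and, around the square face $f$, $\sum_{i=1}^{4}(\pi-2\theta_i)=2\pi$, i.e.\ $R_1+R_2+R_3+R_4=2$. I would show by direct evaluation on the finitely many local matching configurations of the two windows, using sum-to-product formulas such as $2\sin\tfrac{\pi(R_I+R_J)}{2}=2\sin\tfrac{\pi R_I}{2}\cos\tfrac{\pi R_J}{2}+2\cos\tfrac{\pi R_I}{2}\sin\tfrac{\pi R_J}{2}$, that within each $(h_x,h_y)$-class the sum over new matchings of $\prod_I 2\sin(\pi R'_I/2)$ equals $2^{\,n_2-n_1}$ times the corresponding sum over old matchings; the square-face relation $R_1+\cdots+R_4=2$ is precisely what cancels the cross terms and pins down the power of $2$, with $n_2-n_1$ the integer dictated by the local combinatorics of the move. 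Propagating the signs $(-1)^{h_xh_y+h_x+h_y}$ of \eqref{monomials} is then routine, since (with the reference matching chosen compatibly) urban renewal shifts height changes only by the lattice vector dual to the mutated node.

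The main obstacle is twofold. First, the Seiberg dual of an isoradial tiling need not itself admit an isoradial embedding, so the $\theta_I$ are no longer genuine rhombus angles; following the discussion in the text, one regards such a dimer as a degeneration of an isoradial one in which some rhombus angles are pushed to $\pi/2$ or beyond (formally, zero or negative edge lengths), so that \eqref{isomahler} still computes $m(P)$ --- making this precise requires a continuity argument for both sides of the telescoping identity as the angles degenerate, together with the combinatorial fact that the degeneration can always be realized through consistent tilings. Second, and more seriously, upgrading the current case-by-case evidence to a theorem means establishing the telescoping identity uniformly for all local windows rather than one tiling at a time; I expect this to be the genuinely hard step, and the most promising route is to recast urban renewal as a cluster mutation of the dimer face weights and prove directly at the level of the Kasteleyn operator that $\det K$ transforms by the predicted constant (times a monomial absorbed by a $\mathrm{GL}(2,\mathbb{Z})$ change of variables), thereby bypassing the enumeration of perfect matchings altogether.
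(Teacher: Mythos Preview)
The paper does not prove this statement: it is explicitly labeled a \emph{Conjecture} and is supported only by worked examples ($L^{1,3,1}/\mathbb{Z}_2$ and $\mathrm{PdP}_{4a}$), in which the authors compute the canonical-weight Newton polynomials of the Seiberg-dual tilings directly from the perfect-matching data and observe that they coincide after clearing the $2^{n_i}$ factors. No general argument is offered; the non-isoradial case is handled by the same degeneration heuristic you invoke (some rhombus angles pushed to $\pi/2$ or beyond), justified again only by the examples.

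Your proposal is therefore not competing with a proof in the paper but attempting to upgrade the conjecture to a theorem, and the outline is reasonable: reduce to a single urban-renewal move, localize $\det K(z,w)$ to the affected window, and verify that the canonical weights $2\sin(\pi R_I/2)$ transform under the spider move exactly as the known dimer weight rules require, so that the spectral curve is preserved up to the scalar $2^{n_2-n_1}$. The R-charge assignments you give for dual quarks ($1-R_I$) and mesons ($R_I+R_J$) are correct, and the face constraint $R_1+\cdots+R_4=2$ is precisely the vanishing $\beta$-function at the dualized node, so the needed trigonometric input is available. One correction: urban renewal does \emph{not} shift the height change $(h_x,h_y)$ --- the local bijection of matchings preserves homology class --- so the sign-tracking is simpler than you say. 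The two obstacles you identify are the genuine ones; a cleaner route than a case-by-case ``telescoping'' is probably to quote the Goncharov--Kenyon result that the spider move preserves the spectral curve under the cluster-mutation weight transformation, and then check the single algebraic claim that $2\sin(\pi R/2)\mapsto 2\cos(\pi R/2)$ and $2\sin(\pi R_I/2),\,2\sin(\pi R_J/2)\mapsto 2\sin\!\bigl(\tfrac{\pi}{2}(R_I+R_J)\bigr)$ instantiate that transformation once $\sum_{i=1}^4 R_i=2$ is imposed.
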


\begin{remark}
Since the Newton polynomial $P(z,w)=k_\textup{iso}-p(z,w)$ is invariant, if we leave the isoradial point and increase $k$, the resulting $P=k-p$ and its Mahler measure would also be the same for Seiberg duals.
\end{remark}

\begin{example}
Let us consider $L^{1,3,1}/\mathbb{Z}_2$ $(0,1,1,1)$ as an example. It has two toric phases whose brane tilings are
\begin{equation}
    \includegraphics[width=4cm]{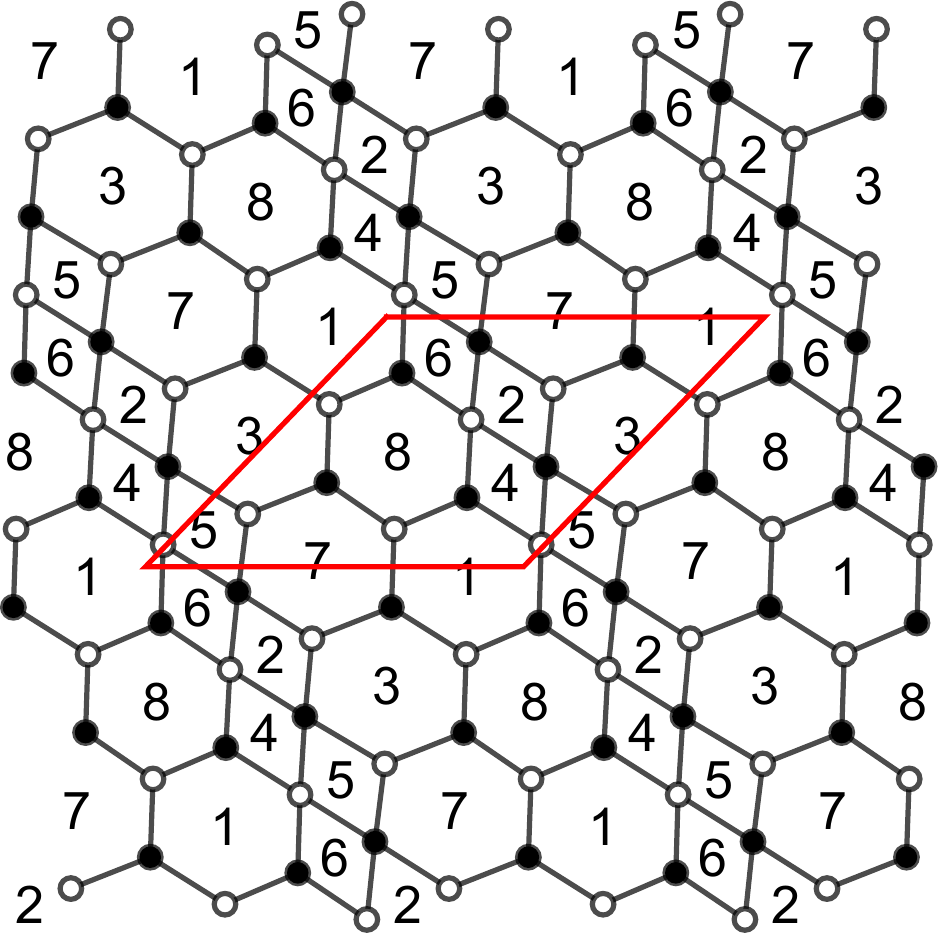},~~~~~~
    \includegraphics[width=4cm]{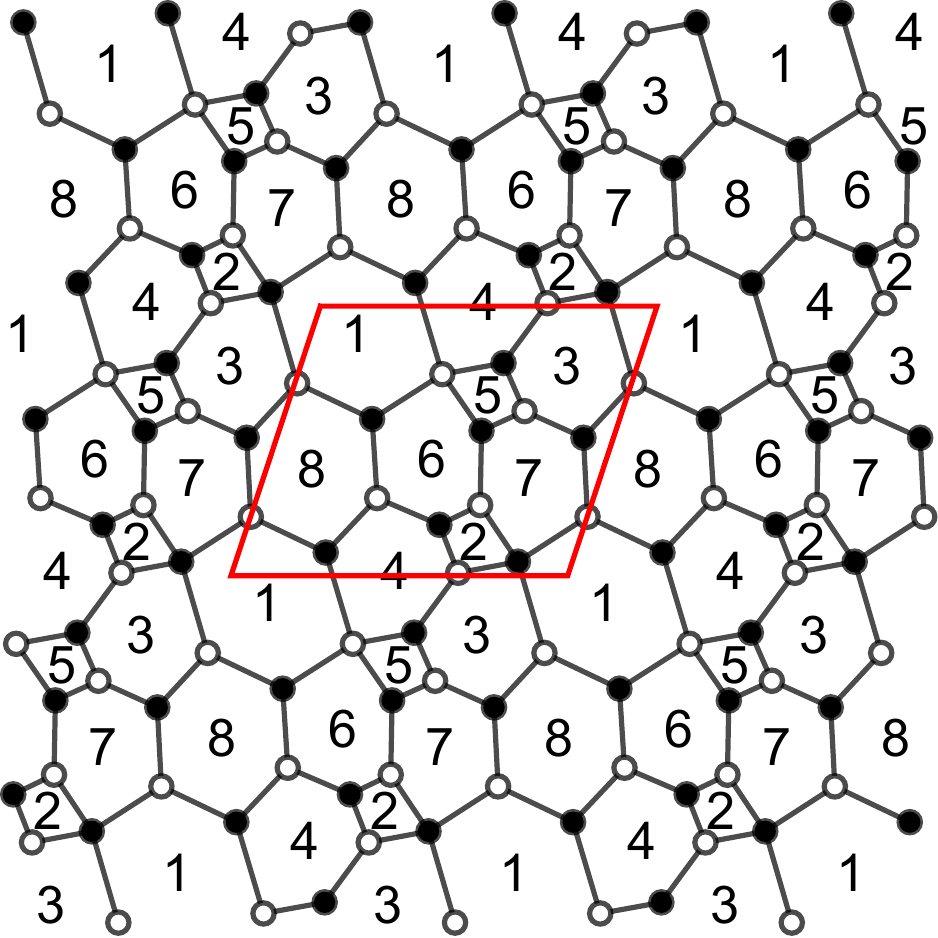}.
\end{equation}
Using the data including perfect matrices in \cite{Hanany:2012hi}, we find that both of them yield (after cancelling a common factor of $2^n$ for each case respectively)
\begin{equation}
    P=-\frac{AB}{C}(zw+w)-\frac{C}{AB}(zw^{-1}+z^{-2}w^{-1}+3w^{-1}+3z^{-1}w^{-1})-(2z+2z^{-1})+12,
\end{equation}
where $A,B,C$ are given in \eqref{L131Z2}. The dual theories have $20$ and $22$ bifundamentals respectively, but the corresponding edges weighted by their R-charges lead to the same Newton polynomial. In particular, the second dimer does not admit an isoradial embedding. However, we find that two of the R-charges in this case become $1$. This is equivalent to two of the edges shrinking to zero in the dimer. One edge $e_{34}$ is between face 3 and 4 while the other $e_{67}$ is between face 6 and 7. After such deformation, the dimer degenerates to an isoradial embedding with two rhombus angles being $\pi/2$. One may also check that \eqref{isomahler} in this case agrees with the result from the first dimer and gives the correct Mahler measure.

It is worth noting that when the two edges shrink in the second brane tiling, it does not degenerate to the first tiling but instead some dimer that is ``partially urban renewed''. More precisely, from the second tiling to the first one under Seiberg duality, $e_{34,67,64}$ would vanish while a new edge $e_{73}$ would be created. In the ``partially renewed'' dimer, only the removal of $e_{34}$ and $e_{67}$ has been done. In terms of the quivers, Seiberg duality flips node 2 in the quiver. The ``partially renewed'' dimer has bifundamentals $X_{34}$ and $X_{67}$ removed in the quiver while $X_{64}$ has not yet been removed and $X_{73}$ has not not been added. Therefore, this partially mutated quiver theory is anomalous\footnote{We should emphasize that when we say the tiling \emph{degenerates}, it only ``looks like'' the ``partially renewed'' dimer but does \emph{not} ``become'' that dimer. The tiling still gives an anomaly-free physical quiver theory. Only the two edges $e_{34,67}$ have length 0 (viz, weight 2) due to the R-charges for $X_{34,67}$ being 1.}.
\end{example}

In fact, one may also check that some of the matter fields have $R_I>1$ in some toric phases. In such cases, we find that the above discussion still holds. We may therefore say that the weights are assigned to be $2\sin(\theta_I)$ with $\theta=\pi R_I/2>\pi/2$. Eqvuialently, we can also regard the edges as of ``negative'' lengths, that is, $2\cos(\pi R_I/2)<0$.

\begin{example}
Let us consider $\textup{PdP}_{4a}$ as an example. It has three toric phases whose brane tilings are
\begin{equation}
    \includegraphics[width=4cm]{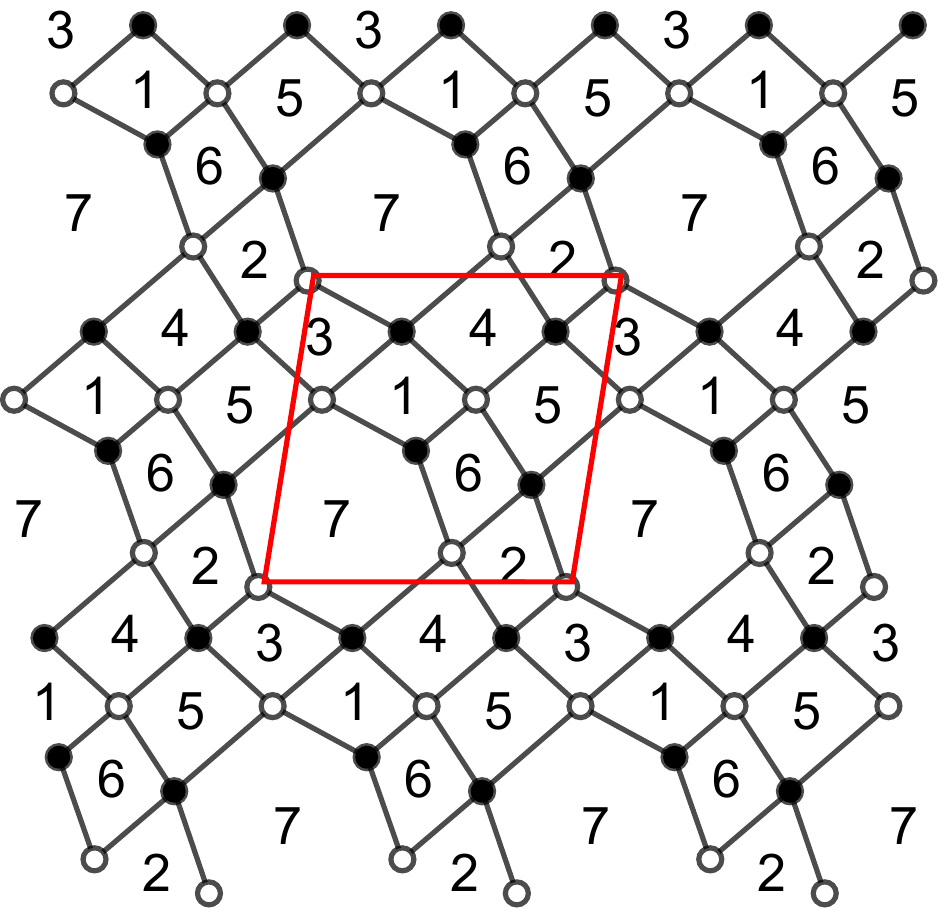},~~~
    \includegraphics[width=4cm]{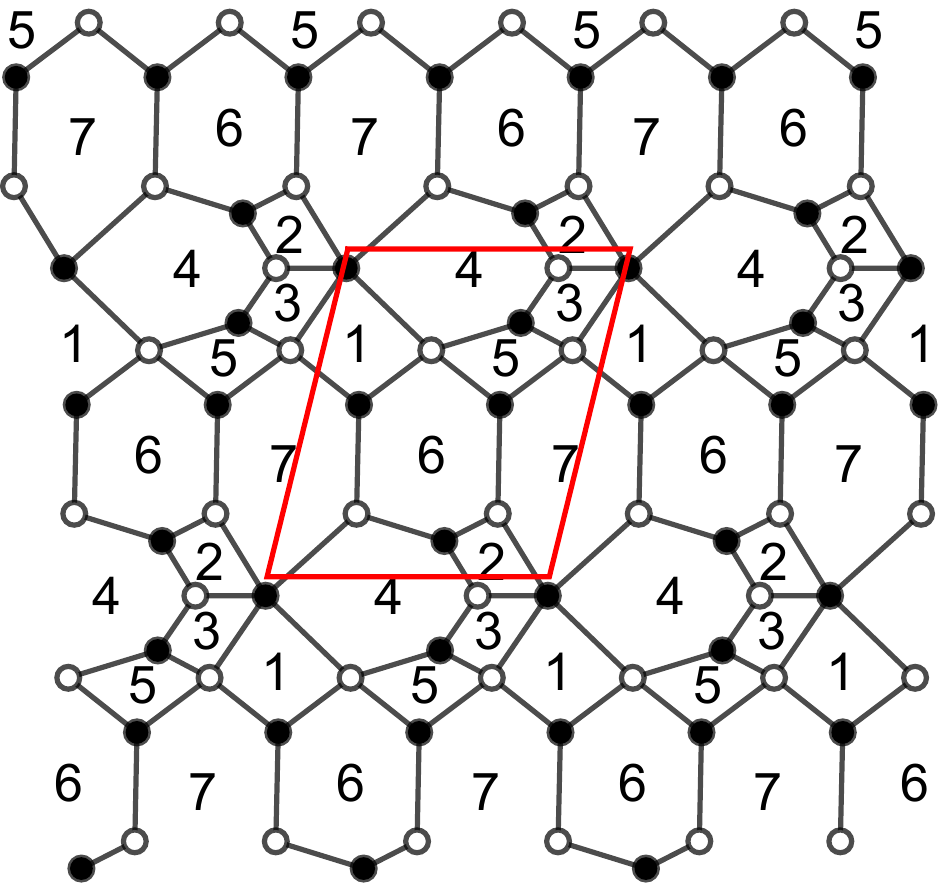},~~~
    \includegraphics[width=4cm]{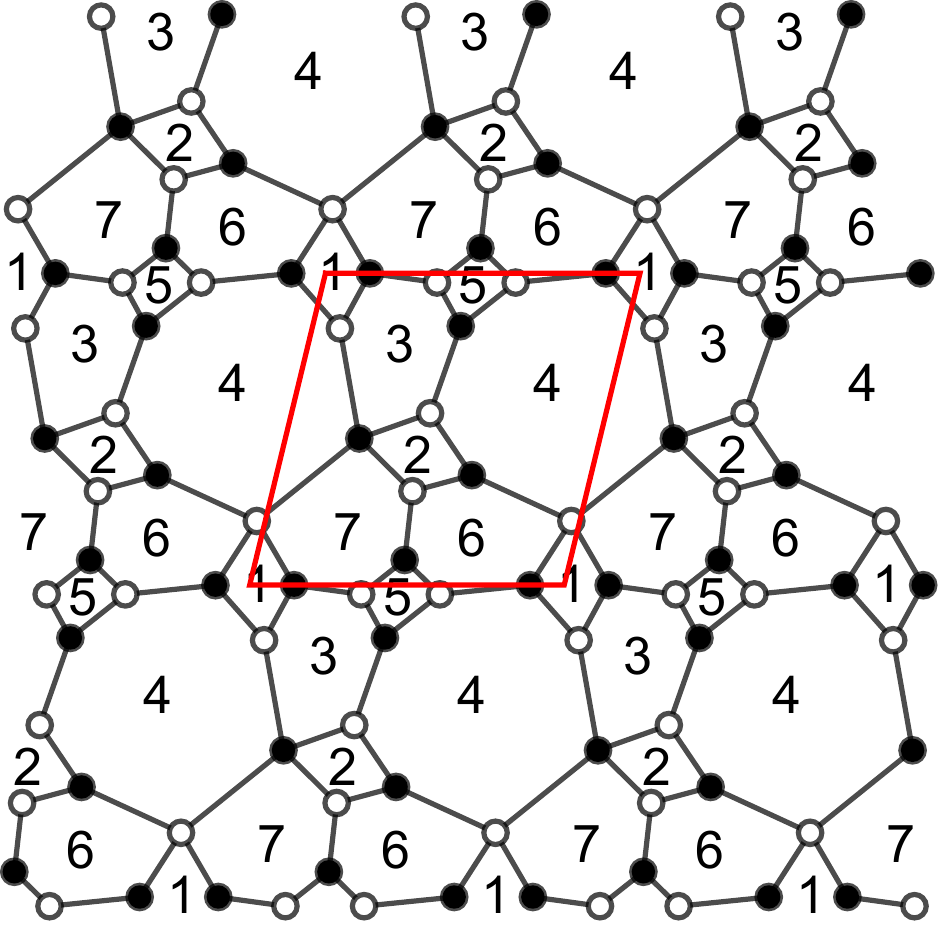}.\label{PdP4atilings}
\end{equation}
Using the data including perfect matrices in \cite{Hanany:2012hi}, we find that all of them yield (after cancelling a common factor of $2^n$ for each case respectively)
\begin{equation}
    \begin{split}
        P=&-B_1^2B_2^2(z+z^{-1})-B_2^2B_3B_4(zw^{-1}+w^{-1})-B_5^4z^{-1}w^2-2B_1B_2B_5^2(z^{-1}w+w)\\
        &+2B_1^2B_2^2+4B_1B_2B_3B_5+2B_2^2B_4B_5+B_3^2B_5^2,\label{PdP4a}
    \end{split}
\end{equation}
where $B_i=\sin(\pi b_i/2)$ are given in \eqref{PdP4a}. The dual theories have $15$, $17$ and $19$ bifundamentals respectively, but the corresponding edges weighted by their R-charges lead to the same Newton polynomial.

In particular, for the second tiling, $X_{67}$ has R-charge $R_{67}=\mathfrak{R}$, where
\begin{equation}
    \mathfrak{R}\approx1.023\text{ is a root of }x^3+24x^2-96x+72=0.
\end{equation}
Therefore, the edge $e_{67}$ (not to be confused with $e_{76}$) has ``negative'' length in the dimer. When performing Seiberg duality between the second tiling to the first one. Node 2 in the quiver is flipped. Therefore, the white node where face 2, 6, 7 meet in the second dimer would become black. Then the edge $e_{67}$ would have its white and black nodes reversed, and hence of negative length. Notice that $e_{67}$ vanishes in the first dimer. This is because the ``isoradial'' embedding for the second dimer with negative $e_{67}$ degenerates to one being ``partially urban renewed''.

Likewise, in the third tiling, the fields $X_{37}$ and $X_{67}$ would both have R-charge $\mathfrak{R}>1$. Again, compared to the first dimer, the two edges $e_{37}$ and $e_{67}$ would have their white and black nodes reversed, and hence of negative lengths.
\end{example}

From the above discussion, we have actually found a canonical choice for edge weights regardless of isoradiality.
\begin{remark}
For any brane tiling, the canonical choice (in the sense of $m_\textup{iso}$-/$a$-maximization and Seiberg duality) for the weight of edge $e_I$ is $2\sin(\pi R_I/2)$. The Mahler measure for the corresponding Newton polynomial can be computed using \eqref{isomahler}.

For isoradial embedding, this edge weight is just the critical choice (in the sense of \cite{Kenyon_2002}) $2\sin(\theta_I)$. For non-isoradial embedding, we have two equivalent viewpoints:
\begin{itemize}
    \item The dimer still has the original shape with positive edge lengths, and hence non-isoradial. We are just assigning the so-called canonical weights to the edges.
    \item As before, we may also say that the dimer degenerates and becomes isoradial. However, as we have emphasized, it is \emph{different} from the dimer that leads to anomalous theory. Some of the edges are \emph{not} removed, and they just have zero or negative lengths.
\end{itemize}
\end{remark}
The first point of view emphasizes the {\bf universal} weight choice for all (both isoradial and non-isoradial) embeddings. On the other hand, the second one explains why we can always apply \eqref{isomahler} in the calculations of Mahler measures for any toric quiver theories.

In fact, when computing Mahler measures using \eqref{isomahler} for Seiberg duals, they would imply some non-trivial mathematical identities. For instance,
\begin{remark}
In the example for $\textup{PdP}_{4a}$ above, we have
\begin{equation}
    \alpha\log(2\sin(\alpha))+\Lambda(\alpha)+\beta\log(2\sin(\beta))+\Lambda(\beta)=\pi\log(2),
\end{equation}
where
\begin{equation}
    \begin{split}
        \frac{2\alpha}{\pi}&=10-\frac{24(1-i\sqrt{3})}{\left(\frac{1}{2}(233+i\sqrt{1007})\right)^{1/3}}-(1+i\sqrt{3})\left(\frac{1}{2}(233+i\sqrt{1007})\right)^{1/3}\approx0.977473,\\
        \frac{2\beta}{\pi}&=-8-\frac{24(1-i\sqrt{3})}{\left(\frac{1}{2}(-233+i\sqrt{1007})\right)^{1/3}}-(1+i\sqrt{3})\left(\frac{1}{2}(-233+i\sqrt{1007})\right)^{1/3}\approx1.02253.
    \end{split}
\end{equation}
When there are some R-charges equal to $1$ (such as the above example for $L^{1,3,1}/\mathbb{Z}_2$), we always have the same identity
\begin{equation}
    \Lambda(\pi/2)\equiv-\int_0^{\pi/2}\log(2\sin(t))\textup{d}t=0.
\end{equation}
This is expected due to the periodicity of $\sin(t)$.
\end{remark}

To end this subsection, let us make a comment on the Ronkin functions. Although there is no simple formula like \eqref{isomahler} for Mahler measures,
\begin{conjecture}
The Ronkin function is invariant under Seiberg duality since the Newton polynomial (with the canonical weight choice) does not change (regardless of the isoradiality of the dimer).
\end{conjecture}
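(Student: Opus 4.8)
The plan is to reduce the statement to Conjecture \ref{Seibergconj} together with the elementary observation that the Ronkin function is a functional of the Newton polynomial and of nothing else. First I would write, for $n=2$ and $(a_1,a_2)=(\mathrm{e}^x,\mathrm{e}^y)$ in \eqref{genMahler},
\begin{equation}
R(x,y)=m(P;\mathrm{e}^x,\mathrm{e}^y)=\frac{1}{(2\pi i)^2}\int_{|z|=\mathrm{e}^x,\,|w|=\mathrm{e}^y}\log P(z,w)\,\frac{\mathrm{d}z}{z}\frac{\mathrm{d}w}{w},
\end{equation}
understood with a real part as in \eqref{mahlerre}. This is manifestly determined by the Laurent polynomial $P$ alone: it does not see the number of edges or faces of the tiling, nor the explicit list of perfect matchings, only the coefficients $c_{(m,n)}$ packaged into $P$. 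Hence it suffices to know that two Seiberg-dual brane tilings, equipped with the canonical weights $2\sin(\pi R_I/2)$, yield the same $P(z,w)$ — which is precisely Conjecture \ref{Seibergconj}. The overall factor is harmless: if $P_2=2^{n_2-n_1}P_1$, additivity of the Mahler measure applied under the integral gives $R_2(x,y)=R_1(x,y)+(n_2-n_1)\log 2$ for all $(x,y)$, and after dividing out the uniform factor $2^{n_i}$ from each monomial one has $P_1\equiv P_2$ and therefore $R_1\equiv R_2$ on the nose. This identification also persists away from the isoradial point, since the common polynomial $P=k_\textup{iso}-p$ has a common $p(z,w)$, and replacing $k_\textup{iso}$ by a general $k$ along the Mahler flow keeps $p$, hence $R_k(x,y)$, dual-invariant.

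The genuine obstacle is that Conjecture \ref{Seibergconj} is itself so far only verified in examples, so a complete proof of the present statement must establish it. To do that I would analyse the single urban-renewal (square) move: a four-valent face surrounded by alternating black and white nodes, with its four incident edges rewired. The key claim is that for every lattice point $(m,n)$ of the (dual-invariant) toric diagram, the generating sum $\sum_{M}\mathrm{e}^{-\mathcal{E}(M)}$ over perfect matchings $M$ of height change $(m,n)$ — i.e. the coefficient $c_{(m,n)}$ in \eqref{monomials} — is preserved under the move up to the uniform factor $2^{n_2-n_1}$. This splits into two ingredients: (i) the classical Kasteleyn-level ``spider''/urban-renewal identity, which provides a weighted correspondence between perfect matchings of the two tilings restricted to the modified plaquette and is purely combinatorial; and (ii) the input that the $R_I$ are not free parameters but the fixed $a$-maximisation values, and that under Seiberg duality they transform so that the products $\prod_I 2\sin(\pi R_I/2)$ entering each monomial actually match.

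I expect ingredient (ii), rather than the rewiring (i), to be the hard part: it is exactly the source of the nontrivial Lobachevsky-function identities recorded in the Remark after Conjecture \ref{Seibergconj} (e.g. $\alpha\log(2\sin\alpha)+\Lambda(\alpha)+\beta\log(2\sin\beta)+\Lambda(\beta)=\pi\log 2$ for the $\textup{PdP}_{4a}$ phases), so it carries genuine arithmetic content rather than being a formal manipulation. A cleaner alternative route, which I would pursue in parallel, is to prove $P$-invariance abstractly: deduce it from the invariance of a finer structure — the cokernel of the magnetically-altered Kasteleyn operator up to gauge, or the relation $\delta=Q_D\mathcal{P}^{\mathrm T}$ between the reduced quiver matrix, the D-term charge matrix and the perfect-matching matrix, combined with the criticality characterisation of the canonical weights from \cite{Kenyon_2002} and Proposition \ref{propmax} — thereby bypassing a case-by-case move analysis altogether. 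Either way, once $P$-invariance (up to the $2^{n_2-n_1}$ scaling) is in hand, the Ronkin statement follows from the displayed integral formula without further work.
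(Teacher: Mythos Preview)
Your reduction is exactly the paper's own reasoning: the statement is recorded as a \emph{conjecture}, not a proved result, and the paper's entire justification is the clause ``since the Newton polynomial (with the canonical weight choice) does not change'' --- i.e., Conjecture~\ref{Seibergconj} together with the observation that $R(x,y)$ depends only on $P$. So your first paragraph matches the paper's argument precisely, including the handling of the overall $2^{n_2-n_1}$ factor and the extension along the Mahler flow via the common $p(z,w)$.

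Where you go further is in sketching a strategy to actually \emph{prove} Conjecture~\ref{Seibergconj} (the urban-renewal/spider identity plus the transformation of the $a$-maximising $R_I$, or alternatively an abstract Kasteleyn-cokernel argument). The paper does not attempt any of this: it leaves Conjecture~\ref{Seibergconj} open, supported only by worked examples such as $L^{1,3,1}/\mathbb{Z}_2$ and $\textup{PdP}_{4a}$, and correspondingly leaves the Ronkin statement as a conjecture too. Your identification of ingredient (ii) --- the matching of the products $\prod_I 2\sin(\pi R_I/2)$ under the $a$-maximised R-charges --- as the substantive obstruction is consistent with the paper's remark that equality of the isoradial Mahler measures across phases already forces nontrivial dilogarithm/Lobachevsky identities. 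In short: your proposal is correct as a conditional proof and coincides with the paper's stated reasoning; the additional proof outline for the underlying conjecture is original relative to the paper, which offers no such argument.
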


\begin{remark}
We mainly focused on $k=k_\text{iso}$ in this subsection. However, we may also leave the isoradial point and consider general $P(z,w)=k-p(z,w)$. Since Seiberg duals have the same starting isoradial point, the Newton polynomials, Mahler measures and Ronkin functions would also be the same for these dual theories along the Mahler flow.
\end{remark}

\subsection{The Master Space}\label{master}
So far, we have engaged in many discussions on Mahler measures at $k=k_\text{iso}$. Let us now leave the isoradial point and treat $k$ as a general variable. As we are now going to see, this would encode certain information of the master space.

Recall that for $k>\max\limits_{|z|=|w|=1}(|p(z,w)|)=p(1,1)$, we have the expansions
\begin{equation}
     m(P)=\log k-\sum_{n=2}^\infty\frac{f_n}{nk^n};\quad u_0(k)=1+\sum_{n=2}^\infty\frac{f_n}{k^n} ,
\end{equation}
for some expansion coefficients $f_n$.
Since the coefficients of $p(z,w)$ come from the perfect matchings/GLSM fields, we shall write it as
\begin{equation}
    p(z,w)=\sum_{i}r_iz^{m_i}w^{n_i},
\end{equation}
where each $r_i$ denotes a perfect matching that does not correspond to a constant term in $P(z,w)$. Likewise, we shall denote the perfect matchings giving a constant term in $P$ as $s_i$. Henceforth, we will refer to them as $r$-matchings and $s$-matchings respectively. Note that here we are keeping the coefficients general (rather than just the canonical choices). Following the multinomial theorem, we have \cite{villegas1999modular}
\begin{equation}
    f_n=\sum_{\substack{\bm{l}=(l_1,\dots,l_N)\in\mathbb{Z}_{\geq0}^N\\l_1+\dots+l_N=n\\M\bm{l}=0}}\frac{n!}{l_1!\dots l_N!}r_1^{l_1}\dots r_N^{l_N},
\end{equation}
where $N$ is the number of $r$-matchings and
\begin{equation}
    M=\begin{pmatrix}
m_1 & m_2 & \dots & m_N\\
n_1 & n_2 & \dots & n_N
\end{pmatrix}
\end{equation}
is the $2\times N$ matrix of the corresponding lattice points. Notice that there could be duplicated columns in $M$ since some lattice points can correspond to multiple ($r$-)matchings.
We have that
\begin{proposition}\label{masteru0}
The period $u_0(k)$ is a generating function of the master space in terms of $F$-term charge matrix.
\end{proposition}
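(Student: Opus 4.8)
The plan is to expand the period integral \eqref{u0integral} as a power series in $k^{-1}$, identify its coefficients $f_n$ with a weighted lattice‑point count over the perfect matchings, and then recognise that count as the Hilbert‑series data of the (irreducible toric component of the) master space, in which the perfect matchings are the GLSM coordinates quotiented by the $F$‑term charge matrix $Q_F$.

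First I would expand the geometric series inside \eqref{u0integral}. At $|k|>p(1,1)=\max_{|z|=|w|=1}|p(z,w)|$ the series $\tfrac{1}{1-k^{-1}p}=\sum_{n\ge 0}k^{-n}p^n$ converges uniformly on $T^2$, so
\begin{equation}
    u_0(k)=\sum_{n\ge 0}k^{-n}\,\frac{1}{(2\pi i)^2}\oint_{|z|=1}\oint_{|w|=1}p(z,w)^n\,\frac{\textup{d}z}{z}\frac{\textup{d}w}{w}=\sum_{n\ge 0}\frac{[z^0w^0]\,p(z,w)^n}{k^n}\,.
\end{equation}
Substituting $p=\sum_{i=1}^N r_iz^{m_i}w^{n_i}$ and applying the multinomial theorem, the $z^0w^0$‑coefficient of $p^n$ is $f_n=\sum_{\bm{l}\ge 0,\ |\bm{l}|=n,\ M\bm{l}=0}\binom{n}{\bm{l}}r_1^{l_1}\cdots r_N^{l_N}$, reproducing the formula quoted above, so $u_0(k)=\sum_n f_n k^{-n}$ is their ordinary generating function.

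Next I would feed in the forward‑algorithm data. The $r$‑ and $s$‑matchings together are the perfect matchings/GLSM fields $p_\alpha$ ($\alpha=1,\dots,c$), the chiral fields are $X_I=\prod_\alpha p_\alpha^{\mathcal{P}_{I\alpha}}$, and the irreducible component of the master space ${}^{\mathrm{Irr}}\cF^{\flat}$ is the affine toric variety $\mathbb{C}^c/\!\!/_{Q_F}$ with the $p_\alpha$ as (redundant) homogeneous coordinates and $Q_F$ the $F$‑term charge matrix. Grading $\cF^{\flat}$ by the lattice position $(m_i,n_i)$ of each perfect matching in the toric diagram together with the total‑multiplicity grading $|\bm{l}|$, a monomial $\prod_\alpha p_\alpha^{l_\alpha}$ that is neutral in the first grading and of degree $n$ in the second is exactly an $\bm{l}$ with $M\bm{l}=0$ and $|\bm{l}|=n$; hence $f_n$, read as a polynomial in the formal perfect‑matching fugacities $r_i$, is the level‑$n$ piece of the refined master‑space generating function, i.e. the $Q_F$‑Molien integral
\begin{equation}
    g\bigl({}^{\mathrm{Irr}}\cF^{\flat};t_\alpha\bigr)=\oint\prod_j\frac{\textup{d}x_j}{2\pi i\,x_j}\;\frac{1}{\prod_{\alpha=1}^{c}\bigl(1-t_\alpha\,\textstyle\prod_j x_j^{(Q_F)_{j\alpha}}\bigr)}\,,
\end{equation}
specialised by $t_\alpha=k^{-1}r_i$ on the $r$‑matchings, $t_\alpha=1$ on the $s$‑matchings, and projected onto the neutral mode of the residual $(z,w)$‑torus. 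This exhibits $u_0(k)$ as a generating function of the master space written through $Q_F$.

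The main obstacle I expect is making that last identification exact. The coefficients $f_n$ carry the \emph{multinomial} weights $\binom{n}{\bm{l}}$ — equivalently, $f_n$ counts \emph{ordered} closed lattice walks assembled from the step vectors $(m_i,n_i)$ with weights $r_i$ — whereas a Hilbert series counts toric‑ideal monomials with weight one, so one must pin down precisely which multi‑grading and which specialisation of $g({}^{\mathrm{Irr}}\cF^{\flat};t_\alpha)$ reproduces $\sum_n f_nk^{-n}$ on the nose: in particular one has to treat the lattice points carrying several perfect matchings (the ``duplicated columns of $M$''), show that the $s$‑matchings collapse cleanly into the overall $k$‑grading, and confirm that one lands on the irreducible toric component rather than the full $F$‑flat scheme. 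Reconciling the ordered‑walk combinatorics with the toric description of $\cF^{\flat}$ is the crux; the remaining steps are a term‑by‑term comparison of contour integrals, valid for $|k|>p(1,1)$.
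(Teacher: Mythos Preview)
Your approach is genuinely different from the paper's, and the obstacle you flag at the end is precisely the reason it does not close. You are trying to identify $u_0(k)$ literally with (a specialisation of) the Molien/Hilbert series of ${}^{\mathrm{Irr}}\mathcal{F}^{\flat}$; as you correctly note, the multinomial weights $\binom{n}{\bm l}$ obstruct this, and you do not propose a mechanism to remove them. The paper does not make this identification and never needs to.

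The paper interprets ``generating function of the master space'' in a weaker, combinatorial sense: the monomials $r_1^{l_1}\cdots r_N^{l_N}$ appearing in $f_n$ \emph{encode} the rows of $Q_F$. The key step you are missing is an edge-regrouping argument on the dimer. A term with $M\bm l=0$ means the multiset of edges $l_1r_1+\cdots+l_Nr_N$ on $n$ copies of the fundamental domain has total height change $(0,0)$; since every perfect matching covers the same vertex set, this multiset of edges can be rearranged as $l'_1s_1+\cdots+l'_{N'}s_{N'}$ with the $s_j$ the internal ($s$-)matchings. Each such identity $\sum_i l_i r_i=\sum_j l'_j s_j$ is a linear relation among GLSM fields, and the paper then checks that the rows of $Q_F$ are exactly of this form (either directly $\sum r_i=\sum s_j$, or relations among $r$'s that reduce to these after substitution). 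The multinomial factors are interpreted merely as the number of orderings of the same relation and carry no further content.

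In short: you aim for a Hilbert-series equality and get stuck on the combinatorial weights; the paper instead proves that each surviving monomial in the expansion is an $F$-term relation in disguise, via the regrouping $\sum l_ir_i=\sum l'_js_j$. If you want to salvage your route, you would need an exponential/plethystic reweighting to kill the $\binom{n}{\bm l}$, but that is not what the proposition asserts.
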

\begin{proof}
The {\bf master space} $\mathcal{F}^{\flat}$ \cite{Forcella:2008eh,Forcella:2008bb} is the spectrum of the coordinate ring quotiented by F-term relations, $\mathbb{C}[X_I]/\langle\partial W=0\rangle$. Its largest irreducible component is known as the {\bf coherent component} ${}^\text{Irr}\mathcal{F}^{\flat}$, which can be written as a symplectic quotient in terms of GLSM fields $q_i$:
\begin{equation}
    {}^\text{Irr}\mathcal{F}^{\flat}=\mathbb{C}[q_i]//Q_F,
\end{equation}
where $Q_F$ is the F-term charge matrix which is equal to the kernel of perfect matching matrix \cite{Feng:2000mi}.

Consider the expansion of $u_0(k)$ as
\begin{equation}
    u_0(k)=1+\sum_{n=2}^\infty\frac{[p^n(z,w)]_0}{k^n}.
\end{equation}
Recall that on the fundamental domain $\mathcal{G}_1$ of the dimer, $p(z,w)$ has coefficients $r_i$. (If we have a monomial with coefficient say $r_i+r_j$, we then treat it as two terms.) Moreover, each $r$-matching is composed of edges $e_I$ on $\mathcal{G}_1$.

At order $n$, the coefficients of $p^n(z,w)$ (up to multinomial coefficients) are
\begin{equation}
    \prod_{l_1+\dots+l_N=n}r_i^{l_i},
\end{equation}
where $l_i$ denotes the multiplicity of the matching $r_i$. They correspond to ``perfect matchings'' on the $n$ copies of the fundamental domain $n\mathcal{G}_1$ (not to be confused with $\mathcal{G}_n$). Therefore, let us write such perfect matching on $n\mathcal{G}_1$ as $l_1r_1+l_2r_2+\dots+l_Nr_N$. Since we are considering the constant terms $[p^n(z,w)]_0$, they correspond to ``$s$-matchings'' on $n\mathcal{G}_1$. By regrouping the edges in this internal perfect matching, we can write
\begin{equation}
    l_1r_1+l_2r_2+\dots+l_Nr_N=\sum_Ie_I=l'_1s_1+l'_2s_2+\dots+l'_Ns_N,\label{rsreln}
\end{equation}
where $s_i$ are $s$-matchings on $\mathcal{G}_1$. This is always plausible because the number of edges $e_I$ in each perfect matching is the same.

In the F-term charge matrix, there are two types of relations.
\begin{itemize}
    \item If there is a row giving the relation $\sum r_i=\sum s_j$, then this is simply from $l_i=l'_j=1$.
    \item If there is a relation involving only $r$-matchings (without $s$-matchings), say $\sum r_i=\sum r'_j$, then we can replace $r'_j$'s using the $s$-matchings following the relations in the first type. This would reduce the relation of second type to \eqref{rsreln} with possibly non-trivial multiplicities.
\end{itemize}
Hence, all the relations in $Q_F$ are revealed by coefficients in the expansion of $u_0(k)$. The multinomial coefficients $\frac{n!}{l_1!\dots l_N!}$ are just the numbers of equivalent ways of writing/ordering the $r$-matchings. This follows directly from the combinatorial interpretation of multinomial coefficients.
\end{proof}

\begin{remark}
At the lowest non-zero order $n=2$, the relations are of first type. The relations of second type would appear in higher orders. All the other relations are just (redundant) relations of those two-type relations.
\end{remark}

\begin{remark}
Since we are working with general coefficients for $P(z,w)$, the canonical choice would certainly satisfy this proposition. We can simply replace $r_i$ with $\prod\limits_{I}2\sin(\pi R_I/2)$ for every perfect matching.
\end{remark}

\begin{remark}
Since we always have D$\,4$-branes as flavour branes in the system when the Taylor expansion is valid, the superpotential would change from $W_0$ to $(W_0+W_\textup{flav})$. Nevertheless, the F-term relations would still be $\partial W_0/\partial X_I=0$ as shown in \cite{Nishinaka:2013mba}. Alternatively, $Q_F$ is only determined by perfect matchings on the dimer which remain unchanged regardless the existence of D$\,4$-branes. Therefore, the above proposition should always hold\footnote{Incidentally, the moduli space of D4-D2-D0 states is just a subspace of the moduli space of D6-D2-D0 states \cite{Nishinaka:2013mba}.}.
\end{remark}

It is best to illustrate the foregoing discussions with an example.
\begin{example}
Consider $Y^{2,2}$ whose Newton polynomial is $P=k-(r_1w+r_2w^{-1}+r_3z^2+r_4z^{-1}+r_5z^{-1})$. Then
\begin{equation}
    u_0(k)=1+\frac{2r_3r_4+2r_3r_5}{k^2}+\frac{12r_1r_2r_3^2+12r_3^2r_4r_5+6r_3^2r_4^2+6r_3^2r_5^2}{k^4}+\dots
\end{equation}
The F-term charge matrix is
\begin{equation}
    Q_F=
\left(
\begin{array}{ccccc|cccc}
 r_{1} & r_{2} & r_{3} & r_{4} & r_{5} & s_{1} & s_{2} & s_{3} & s_{4} \\
   \hline
 0 & 0 & -1 & -1 & 0 & 1 & 1 & 0 & 0 \\
 0 & 0 & -1 & 0 & -1 & 0 & 0 & 1 & 1 \\
 1 & 1 & 0 & -1 & -1 & 0 & 0 & 0 & 0 
\end{array}
\right).
\end{equation}
At order two, the coefficients give the first two rows in $Q_F$: $r_3+r_4=s_1+s_2$ and $r_3+r_5=s_3+s_4$. The factors $2$ are just equivalent ways of writing the relations, e.g. $r_3+r_4$ and $r_4+r_3$. At order four, $r_1r_2r_3^2$ decodes the third row in $Q_F$ since
\begin{equation}
    r_1+r_2=r_4+r_5=s_1+s_2-r_3+s_3+s_4-r_3.
\end{equation}
Again, the factor $12$ is just the number of ways to arrange $r_1$, $r_2$ and two $r_3$'s. The second term $12r_3^2r_4r_5$ reveals the same relation but with $r_4+r_5+2r_3=s_1+s_2+s_3+s_4$ instead. The remaining two terms are (redundant) relations of lower-order relations: $2(r_3+r_4)=2(s_1+s_2)$ and $2(r_3+r_5)=2(s_3+s_4)$.
\end{example}

\subsubsection{Specular Duality}\label{specular}
For toric quiver gauge theories, a duality known as specular duality was proposed in \cite{Hanany:2012hi,Hanany:2012vc}. In general, specular duality does not preserve the mesonic moduli spaces (except for self-dual cases) although Hilbert series for duals are the same up to some fugacity map. Instead,
\begin{definition}
Specular duality is a duality that preserves master spaces.
\end{definition}
Therefore, a consequence of Proposition \ref{masteru0} is:
\begin{corollary}\label{speccor}
Given a pair of specular duals $a$ and $b$, suppose the GLSM fields $p^a_i$ and $p^b_i$ are mapped under
\begin{equation}
    p^a_1\leftrightarrow p^b_1,\quad p^a_2\leftrightarrow p^b_2,\quad\dots,\quad p^a_L\leftrightarrow p^b_L.
\end{equation}
If their Newton polynomials are $P_a(z,w)=k-p_a(z,w)$ and $P_b(z,w)=k-p_b(z,w)$, then for $k\geq\max\limits_{|z|=|w|=1}(|p_a(z,w)|,|p_b(z,w)|)$, the two Mahler measures have the series expansions
\begin{equation}
    m(P_a)=\log(k)-\sum_{n=2}^\infty\frac{f_n(p^a_i)}{nk^n},\quad m(P_b)=\log(k)-\sum_{n=2}^\infty\frac{f_n(p^b_i)}{nk^n}.
\end{equation}
Likewise,
\begin{equation}
    u_0(P_a)=1+\sum_{n=2}^\infty\frac{f_n(p^a_i)}{k^n},\quad u_0(P_b)=1+\sum_{n=2}^\infty\frac{f_n(p^b_i)}{k^n}.
\end{equation}
Here, $f_n$ are functions of $p_i^{a,b}$ whose variables are ordered as
\begin{equation}
    f_n(p^a_1,p^a_2,\dots,p^a_L)\text{ and }f_n(p^b_1,p^b_2,\dots,p^b_L).
\end{equation}
\end{corollary}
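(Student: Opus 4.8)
The plan is to reduce the statement entirely to Proposition \ref{masteru0}. The starting point is the expansion already established there: for $k > \max_{|z|=|w|=1}|p(z,w)| = p(1,1)$, combining \eqref{logexpansion} with the residue computation \eqref{u0integral} and the multinomial theorem gives
\begin{equation}
    u_0(k) = 1 + \sum_{n=2}^\infty \frac{f_n}{k^n}, \qquad m(P) = \log k - \sum_{n=2}^\infty \frac{f_n}{n\,k^n}, \qquad f_n = \sum_{\substack{\bm{l}\in\mathbb{Z}_{\geq0}^N\\ l_1+\dots+l_N=n\\ M\bm{l}=0}} \frac{n!}{l_1!\cdots l_N!}\, r_1^{l_1}\cdots r_N^{l_N}.
\end{equation}
Hence the entire content of the corollary is the assertion that the \emph{polynomial} $f_n$ in the perfect-matching variables is the same for the two theories $a$ and $b$ once we identify variables via the specular map $p^a_i\leftrightarrow p^b_i$: if we know that, substituting the common $f_n$ into the two expansions above immediately produces the claimed formulas for $m(P_a),u_0(P_a)$ versus $m(P_b),u_0(P_b)$, with the variables ordered as stated.

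So the remaining task is to show that the combinatorial data defining $f_n$ — which exponent vectors $\bm{l}$ solve $M\bm{l}=0$, together with their multinomial weights — is preserved under specular duality. Here I would invoke Proposition \ref{masteru0}, which identifies exactly this data with the $F$-term charge matrix: the monomials appearing in $f_n$ are in bijection (the multinomial coefficients counting orderings) with the relations of $Q_F$ generated at order $n$, after using the first-type relations $\sum r_i=\sum s_j$ to rewrite any relation purely among $r$-matchings in the reduced form \eqref{rsreln}. Equivalently, $\{f_n\}_{n\ge 2}$ is a generating function for the coherent component ${}^\text{Irr}\mathcal{F}^{\flat}=\mathbb{C}[q_i]//Q_F$ written in the GLSM coordinates $q_i$.

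Finally I would feed in the defining property of specular duality, namely that it preserves the master space. Concretely, the specular map provides the bijection of GLSM fields $p^a_i\leftrightarrow p^b_i$ under which ${}^\text{Irr}\mathcal{F}^{\flat}_a$ and ${}^\text{Irr}\mathcal{F}^{\flat}_b$ are isomorphic as symplectic quotients in these coordinates, so that $Q_F^a$ and $Q_F^b$ agree up to the induced column permutation. Running this through Proposition \ref{masteru0} in reverse, the order-$n$ relations match, hence so do the exponent vectors $\bm{l}$ and the multinomial weights, and therefore $f_n(p^a_1,\dots,p^a_L)$ and $f_n(p^b_1,\dots,p^b_L)$ are values of one and the same polynomial $f_n$. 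This gives the corollary.

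The main obstacle is this last step: one must verify that the ``same master space'' statement of \cite{Hanany:2012hi,Hanany:2012vc} is compatible with the chosen ordering $p^a_i\leftrightarrow p^b_i$ at the level of $Q_F$, and in particular that the bijection does not spoil the $r$-versus-$s$ bookkeeping of Proposition \ref{masteru0} — specular duality need not send $r$-matchings to $r$-matchings, so in general one has to re-run the first-/second-type reduction on both sides before comparing the resulting relations. A secondary technical point is the domain $k\geq\max_{|z|=|w|=1}(|p_a(z,w)|,|p_b(z,w)|)$ stated in the corollary: on this common domain both Taylor expansions converge uniformly on the integration torus, which is precisely what licenses the term-by-term comparison of the two series.
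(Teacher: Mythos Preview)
Your proposal is correct and follows essentially the same approach as the paper: invoke Proposition~\ref{masteru0} to identify the coefficients $f_n$ with the $F$-term relations encoded in $Q_F$, then use the defining property of specular duality (preservation of the master space) to conclude that these coefficients agree as functions of the GLSM fields under the bijection $p^a_i\leftrightarrow p^b_i$. The paper's proof is much terser---essentially two sentences---whereas you have spelled out the steps and even flagged the $r$-versus-$s$ bookkeeping subtlety, but the underlying logic is identical.
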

\begin{proof}
Since specular duality preserves the master space and $u_0(k)$ is a generating function whose expansion reveals the F-term relations, the coefficients for $u_0^a$ and $u_0^b$ should match as functions of perfect matchings. It is straightforward to see that this is the same for the Mahler measures.
\end{proof}

\begin{remark}
Since the Newton polynomials and Mahler measures are invariant under Seiberg duality, Corollary \ref{speccor} is transitive. If a toric phase of polygon $\Delta_1$ is specular dual to toric phase A of $\Delta_2$ and a toric phase of $\Delta_3$ is dual to toric phase B of $\Delta_2$, then the Mahler measures and $u_0$ for $\Delta_1$ and $\Delta_3$ would also satisfy Corollary \ref{speccor}.
\end{remark}

\begin{example}
One of the toric phases for $\mathbb{F}_0$ is specular dual to the single phase for $Y^{2,2}$. Their Newton polynomials are
\begin{equation}
    P_{\mathbb{F}_0}=k-8(z+w+z^{-1}+w^{-1}),\quad P_{Y^{2,2}}=k-9(z+z^{-1}w^{-1}+z^{-1}w+2z^{-1}),
\end{equation}
where the coefficients are taken to be the canonical choice from R-charges, and $k_\textup{iso}$ is reached at $k=32,36$ respectively. For instance, at order 2, one of the ``internal perfect matchings'' on $2\mathcal{G}^{\mathbb{F}_0}_1$ is
\begin{equation}
\tikzset{every picture/.style={line width=0.75pt}}        
.
\end{equation}
Notice that now the red and orange ones are internal perfect matchings while the blue and green ones are external on $\mathcal{G}_1^{Y^{2,2}}$.

Overall, at order 2, we have
\begin{equation}
    \begin{split}
        f_2=&2\times2^8\sin\left(\frac{\pi R_H}{2}\right)\sin\left(\frac{\pi R_A}{2}\right)\sin\left(\frac{\pi R_B}{2}\right)\sin\left(\frac{\pi R_G}{2}\right)\\
        &\times\left(\sin\left(\frac{\pi R_K}{2}\right)\sin\left(\frac{\pi R_L}{2}\right)\sin\left(\frac{\pi R_C}{2}\right)\sin\left(\frac{\pi R_D}{2}\right)\right.\\
        &\qquad\left.+\sin\left(\frac{\pi R_I}{2}\right)\sin\left(\frac{\pi R_J}{2}\right)\sin\left(\frac{\pi R_E}{2}\right)\sin\left(\frac{\pi R_F}{2}\right)\right),
    \end{split}
\end{equation}
where the subscripts of the R-charges follow the notations in the perfect matching matrices in \cite[\S4.2]{Hanany:2012vc}. One may check that $f_n$ for the two theories should match at any order $n$.
\end{example}

In the above example, if we plug in the R-charge values for the two theories, we find that
\begin{equation}
    \begin{split}
        &m(P_{\mathbb{F}_0})=\log(k)-\frac{128}{k^2}-\frac{36864}{k^4}-\frac{5242800}{3k^6}-\frac{10276044800}{k^8}-\frac{34093450395648}{k^{10}}-\dots,\\
        &m(P_{Y^{2,2}})=\log(k)-\frac{162}{k^2}-\frac{59049}{k^4}-\frac{35429400}{3k^6}-\frac{52732233225}{2k^8}-\frac{110712378300552}{5k^{10}}-\dots
    \end{split}
\end{equation}
At different orders, the ratios of the coefficients are
\begin{equation}
    \sigma_2=\frac{81}{64},~\sigma_4=\frac{6561}{4096},~\sigma_6=\frac{531441}{262144},~\sigma_8=\frac{43046721}{16777216},~\sigma_{10}=\frac{3486784401}{1073741824},\dots
\end{equation}
One can actually find that
\begin{equation}
    \sigma_n=\frac{f_n^{Y^{2,2}}}{f_n^{\mathbb{F}_0}}=\left(\frac{9}{8}\right)^n.
\end{equation}
This means that we can rescale/normalize the Newton polynomial by dividing $P_{\mathbb{F}_0}$ ($P_{Y^{2,2}}$) by a factor of 8 (9):
\begin{equation}
    P_{\mathbb{F}_0}=k-(z+w+z^{-1}+w^{-1}),\quad P_{Y^{2,2}}=k-(z+z^{-1}w^{-1}+z^{-1}w+2z^{-1}),
\end{equation}
where we have absorbed the factor for the constant term under the redefinition $k\rightarrow k/8$ ($k\rightarrow k/9$). Then the two normalized Newton polynomials have equal Mahler measures (as well as $u_0(k)$).

More generally, if the duals have Mahler measures
\begin{equation}
    m(P_a)=\log(k)-\sum_{n=1}^\infty\frac{\mathfrak{f}_a^nC_n}{nk^n},\quad m(P_b)=\log(k)-\sum_{n=1}^\infty\frac{\mathfrak{f}_b^nC_n}{nk^n}
\end{equation}
for $C_n,\mathfrak{f}^n_{a,b}\in\mathbb{C}$, then we can normalize $P_{a,b}=k-p_{a,b}$ to be $P_{a,b}^\text{norm}=k-p_{a,b}^\text{norm}=k-\frac{p_{a,b}}{\mathfrak{f}_{a,b}}$ such that the two would have equal Mahler measures. However, this is not true in general, and we have found several counterexamples. For instance, the third tiling for PdP$_{4a}$ in \eqref{PdP4atilings} is dual to the single toric phase of PdP$_{4b}$ (see \cite{Hanany:2012hi,Hanany:2012vc} for its details). It turns out that no matter how we normalize the Newton polynomials, their Mahler measure expansions would not have the same numerical coefficients though the discrepancies are very small\footnote{To avoid any possible confusion, we should emphasize that only the numerical coefficients differ in these cases. Corollary \ref{speccor} holds for any polygons.}.

Remarkably, for all the examples whose \emph{numerical} $f_n$'s are different from their specular duals (under any normalization), we find that they coincide with polynomials of second type in Example \ref{twotypes}. It would be interesting to explore this more deeply in future.

It is also natural to ask how specular duals are related at isoradial point. Since the number of chiral multiplets is invariant under specular duality, the number of summands in \eqref{isomahler} would not change. Moreover, a {\bf zig-zag path} is mapped to a face in the specular dual tiling \cite{Hanany:2012vc}. A zig-zag path is a collection of edges that forms a closed path on the brane tiling. It maximally turns left/right at a black/white node. The winding number $(p,q)$ of the zig-zag path corresponds to a direction in the dual web of the toric diagram. Physically, zig-zag paths can be interpreted as gauge invariant operators \cite{Hanany:2005ss}. On the other hand, a node in the tiling is mapped to a node in the dual tiling. In terms of superpotentials, this reverses the order of half the terms based on the convention of untwisting the zig-zag paths. Now that we have the $\text{zig-zag}\leftrightarrow\text{face}$ and $\text{node}\leftrightarrow\text{node}$ mappings, we can write down how Mahler measures would transform.
\begin{proposition}\label{specprop}
Suppose a brane tiling $G$ has $c$ perfect matchings. Then for the specular dual tiling $G'$ with edges $e_I$, the Mahler measure is
\begin{equation}
    m=\sum_I\left(\frac{\theta_I}{\pi}\log(2\sin(\theta_I))+\frac{1}{\pi}\Lambda(\theta_I)\right),
\end{equation}
where $\theta_I$ (and $m$) can be obtained by maximizing $m$ subject to the conditions
\begin{equation}
    \sum_{e_I\in\mathcal{I}}\theta_I=\pi,~\sum_{e_I\in\mathcal{Z}}(\pi-2\theta_I)=2\pi
\end{equation}
for all nodes $\mathcal{I}$ in $G$ \footnote{In this case, ``$\in$'' indicates the edges attached to a node.} and all zig-zag paths $\mathcal{Z}$ in $G$.
\end{proposition}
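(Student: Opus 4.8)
The plan is to reduce the statement to the isoradial formula \eqref{isomahler} applied directly to $G'$, identify the angles $\theta_I$ with the physical R-charges of the $G'$ theory via Proposition \ref{propmax}, and then translate the conformality constraints of $G'$ into constraints on $G$ using the specular-duality dictionary.

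First I would invoke the results of \S\ref{Seiberg}: for \emph{any} brane tiling, isoradial or not, the canonical weight assignment $2\sin(\pi R_I/2)$ to each edge $e_I$ yields a well-defined Newton polynomial whose Mahler measure is computed by \eqref{isomahler}, where one may regard $G'$ either as genuinely isoradial with rhombus angles $\theta_I=\pi R_I/2$ (allowing $\theta_I\ge \pi/2$, i.e.\ edges of zero or negative length) or as non-isoradial with the canonical weights imposed. Applied to $G'$ with edges $e_I$ this gives $m(P_{G'})=\sum_I\left(\tfrac{\theta_I}{\pi}\log(2\sin\theta_I)+\tfrac{1}{\pi}\Lambda(\theta_I)\right)$ with $\theta_I=\pi R_I/2$; and since the number of chiral multiplets (hence of summands), as well as the number $c$ of perfect matchings, is a specular-duality invariant, these counts agree with those of $G$. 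Next I would pin down the $\theta_I$ as the physical R-charges: by Proposition \ref{propmax} and the remark following it — which is stated to hold for non-isoradial dimers as well — $m_\textup{iso}$-maximization coincides with $a$-maximization, so the physical $\theta_I$ are obtained by maximizing \eqref{isomahler}, viewed as a function of the trial angles, over the conformal manifold of $G'$.

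That conformal manifold is cut out, as in \S\ref{isodimer}, by the two standard families of linear conditions for $G'$: one condition $\sum_{e_I\,\in\,\text{node of }G'}2\theta_I=2\pi$ per superpotential term (black/white node of $G'$), and one condition $\sum_{e_I\,\in\,\text{face of }G'}(\pi-2\theta_I)=2\pi$ per gauge node (face of $G'$). The combinatorial heart of the argument is then to feed in the specular-duality dictionary of \cite{Hanany:2012vc}: under untwisting the edge set is preserved (up to the stated bijection); each node of $G'$ corresponds to a node $\mathcal{I}$ of $G$ with the same incident edges, the untwisting only reversing the cyclic order of half of them, which is irrelevant for an angle sum; and each face of $G'$ corresponds to a zig-zag path $\mathcal{Z}$ of $G$, bounded precisely by the edges traversed by $\mathcal{Z}$. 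Substituting, the node conditions of $G'$ become $\sum_{e_I\in\mathcal{I}}\theta_I=\pi$ for every node $\mathcal{I}$ of $G$, and the face conditions of $G'$ become $\sum_{e_I\in\mathcal{Z}}(\pi-2\theta_I)=2\pi$ for every zig-zag path $\mathcal{Z}$ of $G$, which is exactly the asserted constraint set. This completes the reduction.

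The main obstacle I anticipate is the combinatorial step above: making precise and verifying the claim that a face of $G'$ is bounded by exactly the edges of a zig-zag path of $G$ and that node--edge incidences are preserved, since this rests on the detailed definition of untwisting (including the behaviour of the bipartite coloring and of boundary edges) rather than on anything proved in this excerpt. A secondary point needing care is consistency of the resulting linear system — that edge multiplicities within a zig-zag path are handled correctly and that no spurious relations appear — together with uniqueness of the maximizer; the latter follows from strict monotonicity of \eqref{isomahler} in each $\theta_I\in[0,\pi/2]$ (established in \S\ref{isomax}) combined with the known uniqueness of $a$-maximization for consistent toric quivers.
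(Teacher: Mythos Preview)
Your proposal is correct and follows essentially the same approach as the paper: the paragraph immediately preceding the proposition serves as its justification, invoking the isoradial formula \eqref{isomahler} for $G'$, the invariance of the edge count under specular duality, and the dictionary $\text{node}\leftrightarrow\text{node}$, $\text{zig-zag}\leftrightarrow\text{face}$ from \cite{Hanany:2012vc} to translate the $G'$ conformality constraints into the stated conditions on $G$. Your version is more explicit about the role of Proposition~\ref{propmax} in fixing the $\theta_I$ via maximization, but the logical skeleton is the same.
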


Notice that specular duality, in particular for non-reflexive polygons, may require the dimers to be embedded on a Riemann surface of any genus $g$ rather than just a torus. Nevertheless, the discussions in this subsection would still hold for any polygons. This is because Corollary \ref{speccor} only requires the invariance of the master spaces, and for Proposition \ref{specprop}, \eqref{isomahler} is always true for dimers which are (doubly) periodic on general bi-dimensional lattices \cite{de2007partition}.

\subsection{Tropical Limit}\label{troplim}
In this part, we shall focus on another special point in the Mahler flow, that is, $k_\text{trop}$. Recall that in the large/(sub)tropical $k$ limit, the amoeba tends to its spine and the gas phase becomes dominant. In particular, in this limit the $k^{-n}$ terms in the Mahler measure tend to zero and we have $m(P)\sim\log k$. Moreover, the area of the hole in the amoeba is $A_h\sim\log^2k$. Therefore, in the large $k$ limit, the Mahler measure in the tropical limit follows an area law
\begin{equation}
    m(P)\sim A_h^{1/2}.
\end{equation}
Likewise, the Ronkin function would be dominated by the linear facets. More precisely,
\begin{equation}
    R(x,y)\simeq\log|c_{(m,n)}|+mx+ny
\end{equation}
for $P=\sum\limits_{(m,n)}c_{(m,n)}z^mw^n$. For each linear facet, this equality is exact. For the liquid phases, it gives a good approximation as the non-linear regions tends to the spine at large $k$. In fact, as the amoeba $\mathcal{A}$ in the (sub)tropical limit is a union of local amoebae $\mathcal{A}_\text{loc}$, the (local) non-linear part of $R(x,y)$ for $\mathcal{A}$ would be the translations and rotations of the Ronkin functions for $\mathcal{A}_\text{loc}$, where the detailed translations and rotations can be determined by the neighbour linear facets.

From the perspective of crystal melting, (almost) the whole crystal is molten. In other words, the system becomes a gas of atoms. Since the linear facets of the Ronkin function are sent to infinity, the partition function would also diverge as expected.

In this large $k$ limit, we may then estimate the free energy in \eqref{ZBPS} as
\begin{equation}
    F=-\log Z\sim-A_h\log k\sim-\log^3k.
\end{equation}
In other words,
\begin{equation}
    F\sim-m^3(P).
\end{equation}

Now we shall consider the meaning of tropical limit from the perspective of gauge theories. The F-term relations are encoded by $u_0(k)$ for the master space. In the large $k$ limit, we have $u_0\rightarrow1$. Therefore, all the constraints on the GLSM fields from $Q_F$ are lost in the tropical limit. As a result, the master space would become the trivial $\mathbb{C}^n$, and all the GLSM fields become free.

This is also reflected by the amoeba. In the large $k$ limit, the amoeba is composed of local vertices in the spine. These vertices are only connected by thin long channels/lines.

\subsection{Phase Transition}\label{transition}
We have now discussed the physics at $k_\text{iso}$ and $k_\text{(sub)trop}$. Let us consider an interesting intermediate phenomenon for the Mahler measure along the Mahler flow.

When $k\leq k_\text{iso}$, there is no hole in the amoeba. In particular, for $k=k_\text{iso}$, we know that the amoeba and its complements are precisely the liquid and solid phases respectively. When $k>k_\text{iso}$, holes would open up in the amoeba\footnote{For the Newton polynomials $P(z,w)=k-p(z,w)$ considered in this paper, all possible holes would appear when we cross the isoradial point. In other words, the number of gas phase regions would be equal to the number of interior points in the Newton polygon. When we further increase $k$, the sizes of all holes would increase.}.
\begin{remark}
The isoradial point $k_\textup{iso}$ is the critical point of liquid-gas phase transition. Gas phases would appear when $k>k_\textup{iso}$. Mathematically, $k_\textup{iso}$ is the critical point of the spectral curve transition from non-Harnack to Harnack. From the perspective of D-branes, there are only D\,$2$- and D\,$0$-branes when $k\leq k_\textup{iso}$. The D\,$4$-branes are added to the system when $k>k_\textup{iso}$.
\end{remark}

Therefore, we might think of the Mahler measure receiving contributions from both phases.
\begin{definition}
Let $m_l(P)$ and $m_g(P)$ be the liquid and gas phase contributions to the Mahler measure respectively, which satisfy the following properties:
\begin{itemize}
    \item They give the full contribution to Mahler measure, i.e.,
    \begin{equation}
        m(P)=m_g(P)+m_l(P);
    \end{equation}
    \item When there is no gas phase, only $m_l(P)$ would contribute, i.e.,
    \begin{equation}
        m(P)=m_l(P),\quad m_g(P)=0
    \end{equation}
    for $k\leq k_\textup{iso}$ \footnote{When the spectral curve is non-Harnack, let us still define the ``liquid'' contribution as $m_l=m$ and the ``gas'' contribution as $m_g=0$ since the amoeba has no holes and the crystal has no gas phases. Although how the amoeba region and its complements are related to different crystal phases is still not clear, it should not affect the well-definedness here.};
    \item When there are gas phases, only the gas phase contribution would change  along the Mahler flow, i.e.,
    \begin{equation}
        m_l(P)=m_\textup{iso}(P)=\textup{const},
    \end{equation}
    for $k>k_\textup{iso}$, and $m_\textup{iso}(P)$ can be computed using \eqref{isomahler}.
\end{itemize}
\end{definition}
Following the definition, it is straightforward to see that along the Mahler flow, we have
\begin{equation}
    \begin{cases}
    m_{l,k_1}(P)<m_{l,k_2}(P)\leq m_{l,\max}(P)=m_\text{iso}(P),&\quad k_1<k_2\leq k_\text{iso}\\
    m_{g,k_1}(P)<m_{g,k_2}(P)\text{ and }m_l(P)=m_\text{iso}(P)=\text{const},&\quad k_2>k_1\geq k_\text{iso}
    \end{cases}.
\end{equation}

Recall that for the area of amoeba, we have
\begin{equation}
    A(\mathcal{A}_{k_1})<A(\mathcal{A}_{k_2})\leq\pi^2A(\Delta),\quad A_{\max}(\mathcal{A})=A(\mathcal{A}_\text{iso})=\pi^2A(\Delta),
\end{equation}
when $k_1<k_2<k_\text{iso}$, where $\Delta$ is the corresponding Newton polygon. When $k\geq k_\text{iso}$, the curve is always Harnack. Hence,
\begin{equation}
    A(\mathcal{A})=A(\mathcal{A}_\text{iso})=\pi^2A(\Delta).
\end{equation}
Therefore,
\begin{conjecture}
Given a Newton polynomial $P(z,w)=k-p(z,w)$ with fixed $p(z,w)$, $m_l(P)$ is only determined by the area of the amoeba $A(\mathcal{A})$, and $m_g(P)$ is only determined by the area of its holes $A_h$.
\end{conjecture}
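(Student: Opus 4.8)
The plan is to reduce the conjecture to three monotonicity statements plus an elementary inversion argument, handling the regimes $k\le k_\textup{iso}$ and $k>k_\textup{iso}$ separately and gluing at $k=k_\textup{iso}$. Since $p(z,w)$ is fixed, every quantity in play---$m(P)$, $m_l(P)$, $m_g(P)$, the amoeba area $A(\mathcal{A})$ and the total area of the holes $A_h$---is a function of the single parameter $k\in(0,\infty)$, and $m(P)$ is continuous in $k$. What the conjecture asserts is that $m_l$ factors through $k\mapsto A(\mathcal{A})$ and $m_g$ factors through $k\mapsto A_h$; once both composite maps are continuous and strictly monotone, the desired factorizations are immediate by inverting.

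For $m_l$: on $[k_\textup{iso},\infty)$ the curve is Harnack, so $A(\mathcal{A})=\pi^2 A(\Delta)$ is constant, while by the defining properties of $m_l$ we have $m_l(P)=m_\textup{iso}(P)$, also constant and computable from \eqref{isomahler}; hence $m_l$ is the constant function of $A(\mathcal{A})$ there. On $(0,k_\textup{iso})$ one has $m_l(P)=m(P)$, and I would use that $k\mapsto m(P)$ (Conjecture \ref{mflowconj1}) and $k\mapsto A(\mathcal{A}(k))$ are both continuous and strictly increasing, with $A(\mathcal{A}(k))\uparrow\pi^2 A(\Delta)$ and $m(P)\uparrow m_\textup{iso}(P)$ as $k\uparrow k_\textup{iso}$. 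Strict monotonicity and continuity give a continuous inverse $k=k(A(\mathcal{A}))$, so $m_l=m\bigl(k(A(\mathcal{A}))\bigr)$ is single valued; the one-sided limits at $k_\textup{iso}$ agree with the Harnack branch, so the two pieces glue to a function of $A(\mathcal{A})$ on the whole flow.

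For $m_g$: on $(0,k_\textup{iso}]$ there is no hole, so $A_h=0$ and $m_g(P)=0$, and there is nothing to prove. On $(k_\textup{iso},\infty)$ we have $m_g(P)=m(P)-m_\textup{iso}(P)$, which is continuous and strictly increasing by Proposition \ref{mflowprop} (since $m_\textup{iso}$ is constant), while $k\mapsto A_h(k)$ is continuous, vanishes at $k_\textup{iso}$, and is strictly increasing (consistent with the growth of the gas phase and with $A_h\sim\log^2 k$ at large $k$). Inverting $k=k(A_h)$ as before yields $m_g=m_g\bigl(k(A_h)\bigr)$, a function of $A_h$ alone. Note that when there are several holes, $A_h$ denotes their sum, and the argument only needs this sum to be a strictly monotone function of $k$---no separate control of the individual $A_{h,i}$ is required.

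\textbf{Main obstacle.} The real work lies in the two area-monotonicity claims: that $A(\mathcal{A}(k))$ is strictly increasing on $(0,k_\textup{iso})$ and $A_h(k)$ is strictly increasing on $(k_\textup{iso},\infty)$. For the former, one would like either a set inclusion $\mathcal{A}(k_1)\subseteq\mathcal{A}(k_2)$ for $k_1<k_2$ (after absorbing a possible translation of the amoeba) or a direct sign computation of $\tfrac{\textup{d}}{\textup{d}k}A(\mathcal{A}(k))$, e.g.\ by differentiating the implicit equation of $\partial\mathcal{A}$ or using the Monge--Amp\`ere (Ronkin) description of $\mathcal{A}$. For the latter, the text only establishes the infinitesimal statement $\tfrac{\textup{d}m(P)}{\textup{d}A_h}\sim\tfrac{1}{2\log k}>0$ in the large-$k$ regime; promoting this to all $k>k_\textup{iso}$, uniformly in $\Delta$, is the crux and may itself require an additional conjecture. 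A secondary caveat is that the sub-interval $k\in(k_\textup{iso},p(1,1))$ relies on Conjecture \ref{mflowconj1} for the strict monotonicity of $m(P)$ (the clean argument via $u_0(k)>0$ only covers $k\ge p(1,1)$), so the present statement is, at best, conditional on that earlier conjecture.
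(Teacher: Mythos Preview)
The paper does not prove this statement: it is stated as a conjecture, motivated by the parallel observations that on $(0,k_\textup{iso}]$ both $m_l$ and $A(\mathcal{A})$ increase to their maxima while on $[k_\textup{iso},\infty)$ both are constant, and likewise for $m_g$ and $A_h$. No argument beyond this heuristic is offered. So there is no ``paper's own proof'' to compare against; your proposal is an attempt to supply what the paper leaves open.

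Your reduction to monotonicity-plus-inversion is the natural move, and you correctly isolate the two area-monotonicity claims as the crux. But note that what this strategy would actually establish is rather thin: once $p$ is fixed, every quantity in sight is a function of the single parameter $k$, and any two strictly monotone continuous functions of $k$ are trivially functions of one another. So under your reading the conjecture is \emph{equivalent} to the conjunction of (i) strict monotonicity of $k\mapsto A(\mathcal{A}(k))$ on $(0,k_\textup{iso})$, (ii) strict monotonicity of $k\mapsto A_h(k)$ on $(k_\textup{iso},\infty)$, and (iii) Conjecture~\ref{mflowconj1} on the sub-isoradial range. You have not proved any of these---you have only restated the problem. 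The paper gives neither (i) nor (ii); for (ii) it only has the asymptotic $A_h\sim\log^2 k$, and for (i) it has nothing at all (indeed, for $k<k_\textup{iso}$ the curve is non-Harnack and the paper explicitly flags that the amoeba/phase correspondence is unclear there). Your ``proof'' is therefore a conditional reformulation, not a resolution.

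There is also a question of intent you should keep in mind. If the authors mean only the weak, fixed-$p$ statement, then as above the conjecture collapses to monotonicity. If they intend something stronger---say, that the functional dependence $m_l=m_l(A(\mathcal{A}))$ is universal across different $p$, or admits an explicit formula---then your inversion argument, which produces a $p$-dependent inverse $k=k_p(A)$, does not address that at all.
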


For (sub)tropical $k$,
\begin{equation}
    m(P)\sim m_g(P)\sim\log k,\quad m_g(P)\gg m_l(P).
\end{equation}
In terms of the hole of the amoeba, we have
\begin{equation}
    m_g(P)\sim\log k\sim A_h^{1/2}.
\end{equation}

\begin{remark}
We shall again emphasize that $P(z,w)=0$ is not Harnack for $k<k_\textup{iso}$. We do not have the 2-to-1 property between spectral curves and amoebae. Hence, the solid/liquid phase separation may not precisely corrspond to the amoeba boundary. Nevertheless, the area of the amoeba would still increase when we increase $k$. Moreover, we would not expect any gas phase since the Mahler flow is continuous and there is no gas phase for the Harnack $P_\textup{iso}(z,w)=0$. Therefore, it is still natural to expect that the Mahler measure (and Ronkin function) would account for liquid and solid phases in terms of the area of the amoeba in a more subtle way. In other words, the amoeba and its complementary regions may still be related to different phases, but the separation of these phases may not coincide with the boundary of amoeba.
\end{remark}

Now let us describe how $m(P)$ serves an indicator of the phase transition. When computing the Mahler measure using Taylor expansion, this is only valid for $k\geq k_0=\max\limits_{|z|=|w|=1}(|p|)=p(1,1)$. Therefore, one would expect some changes for $m(P)$ and $u_0(k)$ at $k_0$. Mathematically, $k_0$ is the point such that for $k<k_0$,
\begin{equation}
    m(P)=\text{Re}\left(\int_{\mathbb{T}^2}\log(k-p)\frac{\text{d}z}{z}\frac{\text{d}w}{w}\right)\neq\int_{\mathbb{T}^2}\log(k-p)\frac{\text{d}z}{z}\frac{\text{d}w}{w}
\end{equation}
and $\frac{\text{d}m(P)}{\text{d}\log k}$ is no longer the period of the holomorphic differential on the curve $1-k^{-1}p=0$. In fact,
\begin{proposition}
For $m(P)$ and $u_0(k)$ with generic $P(z,w)$, the changes of their behaviours at $k=k_0$ indicate the liquid-gas phase transition. Since the transition of the system starts at $k=k_\textup{iso}\leq k_0$, there is a ``delay'' in the indication by Mahler measure along the Mahler flow. By ``delay'', we mean that the changes of $m(P)$ and $u_0(k)$ appear later than the emergence of gas phases in the whole system.\label{transitionprop}
\end{proposition}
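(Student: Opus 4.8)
The plan is to split the statement into three pieces and assemble them. First, identify $k_0=p(1,1)=\max_{|z|=|w|=1}|p(z,w)|$ as \emph{exactly} the radius of convergence of the expansions \eqref{logexpansion}--\eqref{u0integral}. Second, show that on $(k_0,\infty)$ both $m(P)$ and $u_0(k)$ are real-analytic and carry the ``period'' interpretation, whereas at $k=k_0$ this structure breaks, so that their behaviour genuinely changes there. Third, show $k_{\textup{iso}}\le k_0$, so that holes of the amoeba --- equivalently the gas phase, equivalently the D\,$4$-branes --- are already present for $k$ slightly above $k_{\textup{iso}}$, strictly before the change at $k_0$ becomes visible in $m(P)$ whenever $k_{\textup{iso}}<k_0$; this is precisely the asserted ``delay'' $k_0-k_{\textup{iso}}\ge 0$.

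For the first two pieces: since $p$ has only positive coefficients, $|p|\le p(1,1)=k_0$ on $T^2$, with equality only at $(1,1)$ for a genuine (primitive, two-dimensional) Newton polygon, so for $k>k_0$ we have $\textup{Re}(k-p)\ge k-k_0>0$ on a neighbourhood of $T^2$. Hence $\log(k-p)$ is the single-valued principal branch, $m(P)=\tfrac{1}{(2\pi i)^2}\int_{T^2}\log(k-p)\,\frac{\mathrm{d}z}{z}\,\frac{\mathrm{d}w}{w}$ with the $\textup{Re}$ of \eqref{mahlerre} redundant, $\tfrac{1}{1-k^{-1}p}$ is holomorphic near $T^2$, $u_0(k)$ is an honest period of $\omega_Y$ on $Y\colon 1-k^{-1}p=0$, and both functions are real-analytic in $k$ on $(k_0,\infty)$, obeying \eqref{mfloweqn} and the Picard--Fuchs equation. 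I would then show all of this fails at $k=k_0$: there the contour $T^2$ meets the polar locus $\{p=k_0\}$ at $(1,1)$, so the series \eqref{logexpansion}, \eqref{u0integral} diverge, the $\textup{Re}$ in \eqref{mahlerre} becomes essential (as $\log(k-p)$ is no longer single-valued), $\mathrm{d}m(P)/\mathrm{d}\log k$ ceases to be a period of $1-k^{-1}p=0$, and $m(P),u_0(k)$ are no longer real-analytic --- $k=k_0$ is a vanishing-cycle/discriminant point of the family $\{1-k^{-1}p=0\}$, around which the $k>k_0$ period has nontrivial monodromy. In the symmetric situations where $\nabla p(1,1)=0$ (e.g.\ $\mathbb{F}_0$) the non-analyticity is dramatic, $u_0(k)\to\infty$ as $k\to k_0$, reproducing $u_0(4)\to\infty$ of \eqref{m4}; in general (when $\nabla p(1,1)\ne 0$) it is milder but still present, and extracting its precise local form via Picard--Lefschetz is a clean subproblem. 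A complementary check of ``behaviour changes'' is to differentiate the Mahler integral directly: using the identity in the footnote to \eqref{dmdk}, the integrand $\big(k-\textup{Re}(p)\big)/|k-p|^2$ has only integrable $\mathcal{O}(|\theta|^{-1})$-type singularities for $k\ne k_0$ but a non-integrable one at $(1,1)$ when $k=k_0$.

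For the third piece: recall from the remarks around the definition of the isoradial limit that the amoeba of $P=k-p$ has genus $0$ for $k\le k_{\textup{iso}}$ and carries all its holes --- one per interior lattice point of $\Delta$ --- for $k>k_{\textup{iso}}$, and that the origin is an interior lattice point of $\Delta$ since it carries the constant term / reference perfect matching of \eqref{Pkp}. Now $0\in\mathcal{A}(k)$ iff $p$ attains the value $k$ on $T^2$; by continuity, connectedness of $\overline{p(T^2)}$, and $p(1,1)=k_0=\max_{T^2}|p|$, the set of $k>0$ with $0\in\mathcal{A}(k)$ is exactly $(0,k_0]$. Hence for every $k>k_0$ the origin lies in a bounded complementary region of $\mathcal{A}(k)$, so a hole is present and therefore $k>k_{\textup{iso}}$; letting $k\downarrow k_0$ gives $k_{\textup{iso}}\le k_0$. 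Moreover, when the first hole to open is not the one that eventually engulfs the origin the inequality is strict and $k_0-k_{\textup{iso}}>0$, while when the only interior point is the origin that hole opens exactly at $k_0$ and the delay vanishes --- consistently with $k_{\textup{iso}}=k_0=4$ for $\mathbb{F}_0$.

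The main obstacle is the third piece. For $k\ne k_{\textup{iso}}$ the Laurent polynomial $k-p$ need not arise from a dimer with non-negative edge weights, so the Harnack machinery --- and Proposition \ref{prophole}, which is stated for Harnack curves --- does not apply off the isoradial value; one must instead argue by continuity in $k$ anchored at the Harnack curve $k_{\textup{iso}}-p=0$, together with monotonicity of the amoeba in $k$ (via Proposition \ref{mflowprop} and the discussion around Conjecture \ref{mflowconj2}), to guarantee that a hole, once born, persists, so that ``a hole is present'' is equivalent to $k>k_{\textup{iso}}$. A secondary obstacle is to make ``generic $P$'' precise for the singularity analysis of the second piece --- namely that, outside a measure-zero set of admissible coefficient vectors, $(1,1)$ is a nondegenerate maximum of $\textup{Re}\,p$ on $T^2$ and $k_0$ is an isolated discriminant point of the family --- with degenerate configurations yielding a different but still singular local model at $k_0$.
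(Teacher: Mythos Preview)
Your proposal is correct and reaches the same conclusion as the paper, but the route is quite different in style and length. The paper's proof is a three-sentence geometric observation: by the 2-to-1 property of Harnack curves (1-to-1 on the boundary), the origin $(0,0)$ lies in the amoeba for $k\le k_0$ and exits into a hole precisely at $k=k_0$; since $m(P)=R(0,0)$, the change in behaviour of $m(P)$ and $u_0$ at $k_0$ is nothing other than the origin crossing from liquid to gas, which happens after the global onset of gas phases at $k_{\textup{iso}}$. Your proof arrives at the same picture through analytic machinery --- radius of convergence of \eqref{logexpansion}, loss of the period interpretation, Picard--Lefschetz/monodromy at the discriminant, non-integrability of the differentiated kernel --- none of which the paper invokes. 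What you gain is a genuine explanation of \emph{why} $m(P)$ and $u_0$ become non-analytic at $k_0$, which the paper leaves as ``the changes discussed above''; what the paper's argument buys is brevity and a direct tie to the amoeba/Ronkin phase picture without any singularity analysis.

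One point where you create unnecessary work for yourself: you flag as the ``main obstacle'' that for $k\ne k_{\textup{iso}}$ the curve $k-p=0$ may not be Harnack, so Proposition~\ref{prophole} might not apply. In the paper's setup this is not an issue: it is stated (in the remarks preceding \S\ref{transition} and reiterated there) that for $P=k-p$ with the canonical weights the spectral curve is Harnack for all $k\ge k_{\textup{iso}}$, so the 2-to-1 characterization and the identification of bounded complementary components with interior lattice points are available on the entire range $[k_{\textup{iso}},\infty)$. With that granted, your continuity-in-$k$ argument is unnecessary, and the inequality $k_{\textup{iso}}\le k_0$ follows immediately exactly as the paper uses it.
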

\begin{proof}
Since the spectral curve and the amoeba $\mathcal{A}$ are 2-to-1 for int$(\mathcal{A})$ and 1-to-1 for $\partial\mathcal{A}$, the origin would transit to the gas phase precisely at $k_0=\max\limits_{|z|=|w|=1}(|p|)$. Hence, although the gas phases start to appear at $k=k_\text{iso}\leq k_0$, the origin would only be included in a hole of the amoeba when $k\geq k_0$. The changes of $m(P)$ and $u_0$ at $k_0$ discussed above would then account for this since the Mahler measure is the Ronkin function at $(0,0)$.
\end{proof}

\begin{example}
As an example, consider $\mathbb{C}^3/(\mathbb{Z}_4\times\mathbb{Z}_2)$ $(1,0,3)(0,1,1)$ whose Newton polynomial is
\begin{equation}
    P=-zw-zw^{-1}-z^{-3}w^{-1}-2z-2z^{-1}-4w^{-1}-6z^{-1}w^{-1}-4z^{-2}w^{-1}+k.\label{C3Z4Z2}
\end{equation}
The isoradial point is $k_\textup{iso}=12$ while $k_0=21>k_\textup{iso}$. We plot $m(P)$ and $u_0(k)$ along the Mahler flow numerically:
\begin{equation}
    \includegraphics[width=4cm]{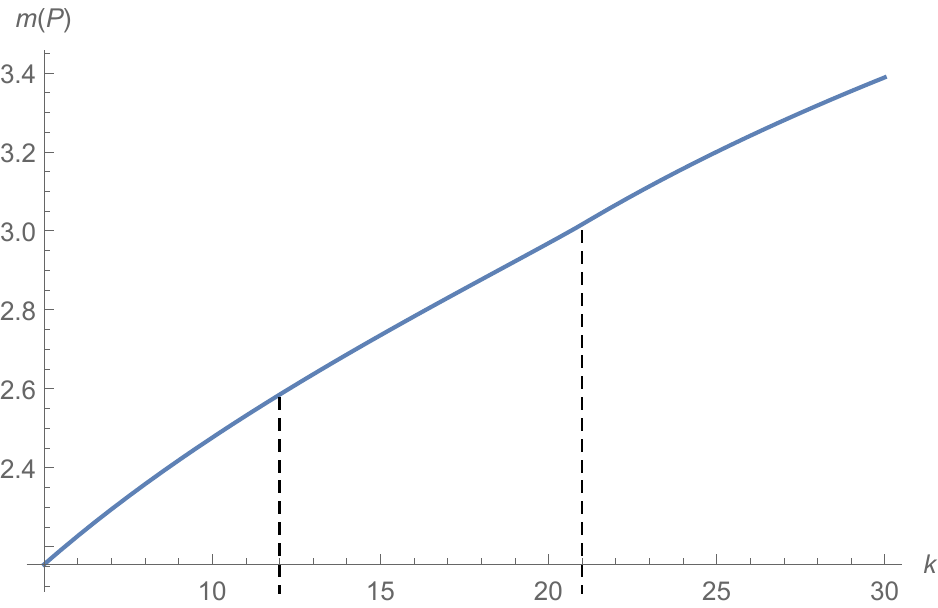}\quad\quad
    \includegraphics[width=4cm]{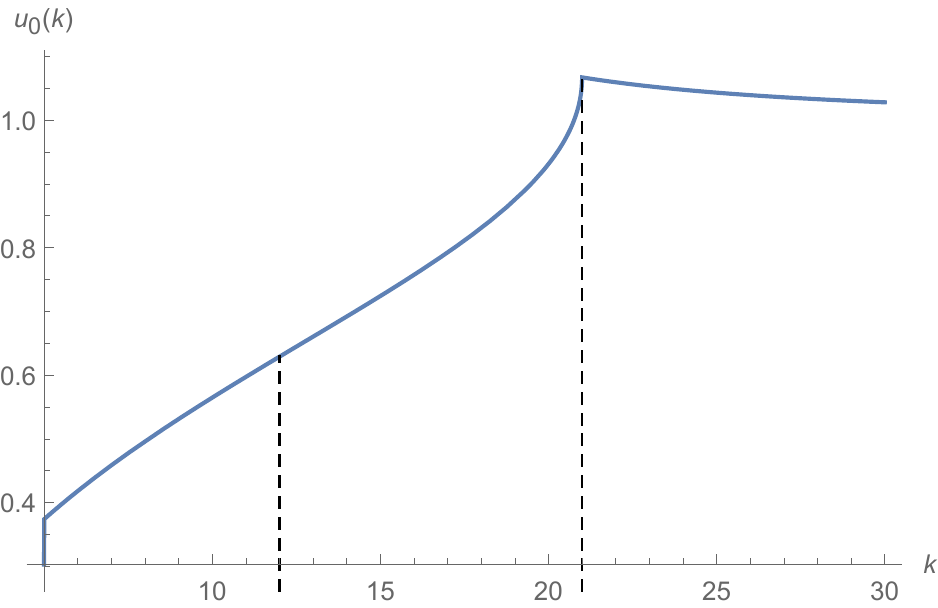},\label{transitionex}
\end{equation}
where we have used the Mahler flow equation $\textup{d}m(P)/\textup{d}\log k=u_0(k)$ to plot $u_0(k)$ for $k<k_0$. The two dashed lines are $k=k_\textup{iso}$ and $k=k_0$ respectively. As we can see, the curve for $u_0(k)$ has an abrupt change at $k=k_0$.
\end{example}

As we can see, since the origin is still in liquid phase for $k_\text{iso}\leq k<k_0$, the trajectory of $m(P)$ and its derivative have no changes at $k_\text{iso}$. The liquid-gas transition is reflected by $m(P)$ and $u_0$ at a later $k$, i.e., $k=k_0$.

\begin{remark}
As mentioned above, we may consider generalized Mahler measure at the point $(x_0,y_0)$ where a hole opens up to resolve this ``delay''. Equivalently, we can again make a rescaling of variables in $P(z,w)$ so that $k_\textup{iso}$ and $k_0$ would coincide (recall \eqref{rescale}). This just shifts the amoeba/Ronkin function by $(x_0,y_0)\rightarrow(0,0)$.
\end{remark}

\begin{corollary}
For $m(P)$ and $u_0(k)$ of $P(z,w)$ (with possible rescaling of variables), the emergence of liquid-gas phase transition at $k=k_\textup{iso}$ is indicated by the changes of their behaviours.\label{transitioncor}
\end{corollary}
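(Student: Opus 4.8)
The plan is to deduce the corollary from Proposition \ref{transitionprop} by absorbing the ``delay'' via the rescaling of variables introduced in \eqref{rescale}. Proposition \ref{transitionprop} already establishes that the behaviours of $m(P)$ and $u_0(k)$ change abruptly at $k=k_0=\max_{|z|=|w|=1}(|p(z,w)|)=p(1,1)$, and that this signals the moment the origin of the log-plane enters a bounded complementary region (a hole) of the amoeba. The only gap between this and the statement we want is the inequality $k_\textup{iso}\le k_0$: in general the holes open at $k=k_\textup{iso}$ somewhere in the amoeba, not necessarily at the origin, so the indicator provided by $m(P)$ lags behind the true onset of the transition.

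First I would perform the shift of the amoeba $(\log|z|,\log|w|)\mapsto(\log|z|-\log a,\log|w|-\log b)$, which amounts to the substitution $p(z,w)\mapsto\tilde p(z,w):=p(z/a,w/b)$ as in \eqref{rescale}. By the $1$-to-$1$ property of a Harnack curve over the real nodes of its amoeba, there is a pair $(a,b)$ of positive reals --- unique once one fixes which node is to be sent to the origin --- for which a real node of $P_\textup{iso}=0$ sits at the origin; this is precisely the content of the discussion following \eqref{rescale}. As noted in the Remark after Proposition \ref{mflowprop}, this operation is merely a translation of the Ronkin function on the $xy$-plane and leaves all physical data unchanged, in particular the volume under $R(x,y)$, i.e., the partition function; hence it is legitimate to argue with $\tilde P(z,w)=k-\tilde p(z,w)$ in place of $P$.

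Next I would invoke the identity, established around \eqref{rescale}, that after this rescaling
\begin{equation}
    k_\textup{iso}=\max_{|z|=|w|=1}(|\tilde p(z,w)|)=\tilde p(1,1)=p(1/a,1/b),
\end{equation}
so that for the rescaled polynomial the two scales coincide, $k_0=k_\textup{iso}$. Feeding this into Proposition \ref{transitionprop} removes the delay: the change in the behaviour of $m(\tilde P)$ and $u_0(k)$ --- namely that for $k<k_0$ the real part in \eqref{mahlerre} can no longer be dropped and $\mathrm{d}m/\mathrm{d}\log k$ ceases to be a period of the holomorphic differential on $1-k^{-1}\tilde p=0$ --- now occurs precisely at $k=k_\textup{iso}$, which by the Remark preceding Proposition \ref{transitionprop} is the critical point of the liquid-gas transition, equivalently of the transition of the spectral curve from non-Harnack to Harnack. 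This is the claimed statement.

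The step I expect to require the most care is the justification that a single rescaling simultaneously moves \emph{the relevant node} to the origin and guarantees that the origin then lies in a hole for all $k>k_\textup{iso}$; this is where the facts that $P_\textup{iso}=0$ has genus zero and that nodes of a Harnack curve are in $1$-to-$1$ correspondence with the relevant points of its amoeba do the real work. For Newton polygons with several interior points, one should note, as in the Remark after Proposition \ref{mflowprop}, that all holes open simultaneously when $k$ crosses $k_\textup{iso}$; hence the rescaling chosen so that one selected node sits at the origin still makes that particular hole appear exactly at $k_\textup{iso}$, and the remaining holes, being irrelevant to $R(0,0)=m(P)$, do not affect the argument.
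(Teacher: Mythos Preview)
Your proposal is correct and follows essentially the same approach as the paper: the paper treats the corollary as an immediate consequence of Proposition~\ref{transitionprop} together with the preceding Remark, which observes that the rescaling \eqref{rescale} makes $k_0$ and $k_\textup{iso}$ coincide by shifting the node to the origin. Your write-up is more explicit about the justification (the $1$-to-$1$ property at nodes, invariance of the physics under the shift, and the case of multiple interior points), but the underlying idea is identical.
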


\begin{example}
Let us rescale the variables in \eqref{C3Z4Z2} by $z\rightarrow z$, $w\rightarrow 4w$. This yields
\begin{equation}
    \Tilde{P}=-4zw-\frac{1}{4}zw^{-1}-\frac{1}{4}z^{-3}w^{-1}-2z-2z^{-1}-w^{-1}-\frac{3}{2}z^{-1}w^{-1}-z^{-2}w^{-1}+k.
\end{equation}
One may check that now $k_0=12=k_\textup{iso}$. Again, we plot $m(P)$ and $u_0(k)$ numerically:
\begin{equation}
    \includegraphics[width=4cm]{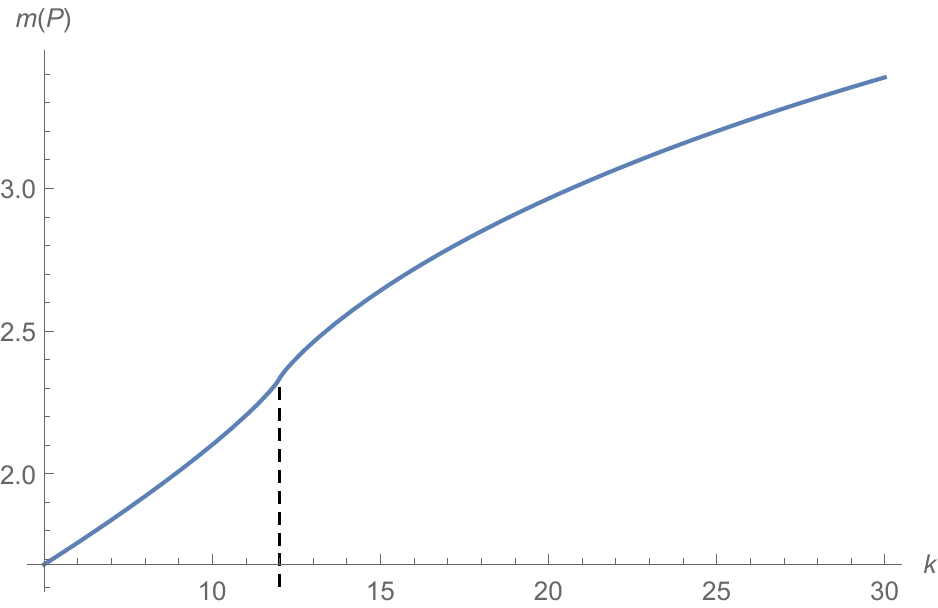}\quad\quad
    \includegraphics[width=4cm]{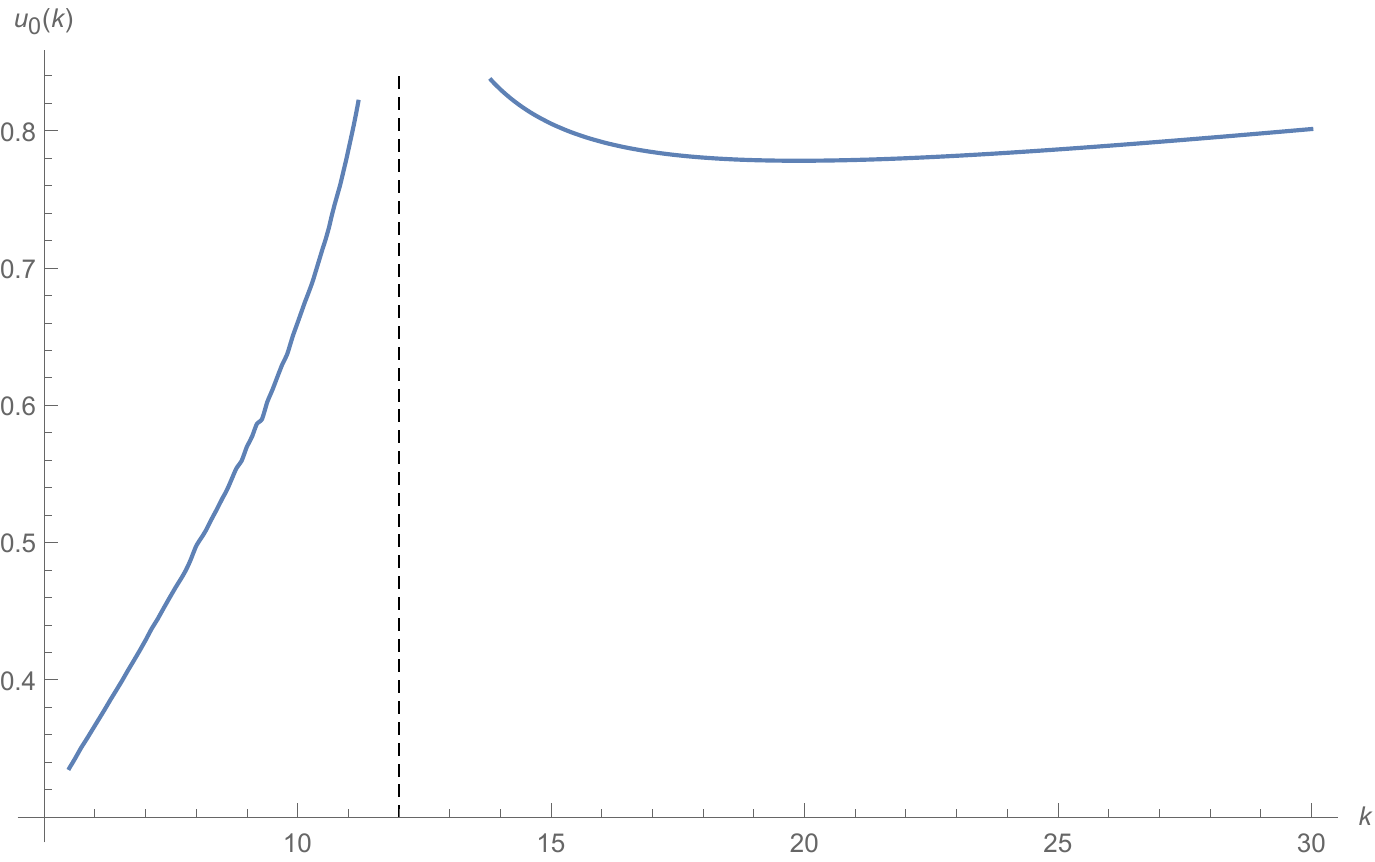},
\end{equation}
where the dashed line is $k=12$. As we can see, the Mahler flow has a change at $k_\textup{iso}=k_0$. This is more clear for $u_0$ where it goes to infinity at $k=12$.
\end{example}

The above discussions would also make sense in terms of the free energy of crystal melting model. As the free energy is an indicator of phase transitions, the Ronkin function would also reflect phase transitions via $F=-\int\text{d}x\text{d}yR(x,y)$. On the other hand, $R(x,y)$ is essentially the generalized Mahler measure at point $(x,y)$. Equivalently, $m(P)$ detects the Ronkin function at $(0,0)$, and we can always shift the amoeba to move any point $(x,y)$ to the origin.

\subsection{Towards a Universal Measure}\label{universal}
For crystal melting and topological strings, the physical interpretations of Mahler measures and Ronkin functions are quite clear as reviewed in \S\ref{crystal}. We shall now discuss their physical meanings from the perspective of quiver gauge theories.
There is a very straightforward implication if we consider the GLSM fields. From \cite{Kenyon:2003uj}, we know that the partition function for perfect matchings (in the thermodynamic limit) can be determined by Ronkin functions. Therefore,
\begin{proposition}
The partition function $Z$ for GLSM fields can be determined via
\begin{equation}
    \log Z=\int\textup{d}x\,\textup{d}y\,R(x,y),
\end{equation}
which also defines the free energy of the dimer/GLSM as $F\equiv-\log Z$.
\end{proposition}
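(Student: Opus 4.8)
The plan is to obtain the statement by transporting the dimer free-energy theorem of Kenyon--Okounkov--Sheffield \cite{Kenyon:2003uj} to the GLSM language, and then assembling the three-dimensional object via the crystal-melting saddle point already recorded in \eqref{ZBPS}. The first step is purely a dictionary: by the discussion of \S\ref{dimer}--\S\ref{crystal}, a GLSM field is a perfect matching of the brane tiling $\mathcal{G}_1$, and, conversely, a molten-crystal configuration is a stack of such perfect matchings (one $2$d slice per layer); hence the object ``partition function for GLSM fields'' is the energy-weighted dimer partition function $Z(\mathcal{G}_n)=\sum_{M\in\mathcal{M}(\mathcal{G}_n)}\mathrm{e}^{-\mathcal{E}(M)}$ on the enlarged domains, taken in the thermodynamic limit $n\to\infty$ with the edge energies $\mathcal{E}(e_I)=-\log(2\sin(\pi R_I/2))$ of \S\ref{isodimer}. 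The same datum is the crystal-melting partition function $Z_{\textup{BPS}}$ of \S\ref{crystal}, up to the normalisation by the unmolten crystal $Z_0$.

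Second, I would recall the single-slice input. By the Kasteleyn expansion $Z(\mathcal{G}_n)=|\det K_n|$ and the factorisation of $\det K_n$ as a product of values of the characteristic polynomial $P(z,w)$ over $n$-th roots of unity, the free-energy density converges to
\begin{equation}
    \frac1{n^2}\log Z(\mathcal{G}_n)\;\longrightarrow\;\frac1{(2\pi i)^2}\int_{|z|=|w|=1}\log|P(z,w)|\,\frac{\mathrm{d}z}{z}\frac{\mathrm{d}w}{w}=m(P)=R(0,0),
\end{equation}
and, after the rescaling $z\mapsto\mathrm{e}^{x}z,\ w\mapsto\mathrm{e}^{y}w$ that defines the generalized Mahler measure \eqref{genMahler}, the analogous limit with these ``magnetically altered'' weights equals $R(x,y)$. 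Thus $R(x,y)$ is exactly the free-energy density of the $2$d dimer ensemble whose average height change is the one dual to $(x,y)$.

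Third, I would assemble the full object. Physically the layers of the crystal are labelled by the K\"ahler modulus, which \S\ref{kahler} identifies with the amoeba coordinate, so summing the slice free energies over all layers and passing to the continuum gives $\log Z\sim\int\mathrm{d}x\,\mathrm{d}y\,R(x,y)$; this is precisely the Laplace/saddle-point evaluation \eqref{ZBPS} of the convex functional $R$, established in \cite{Ooguri:2009ijd,Ooguri:2009ri}, and since a GLSM-field configuration is the same datum as a molten-crystal configuration this sum equals $\log Z$ after the $Z/Z_0$ normalisation. Reading off the exponent yields $\log Z=\int\mathrm{d}x\,\mathrm{d}y\,R(x,y)$, and we set $F\equiv-\log Z$.

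The main obstacle is the convergence and bookkeeping of this last step. The bare integral $\int\mathrm{d}x\,\mathrm{d}y\,R(x,y)$ diverges because $R$ is linear and unbounded on the solid-phase facets, so one must pass to the regularised quantity --- the volume between the limit shapes of $Z$ and $Z_0$, as in the footnote after \eqref{ZBPS} --- and one must check that the continuum limit of the layer sum reproduces $\int R$ itself rather than its Legendre transform (the surface tension). Both points are settled in \cite{Kenyon:2003uj,Ooguri:2009ijd}; the only new content here is the GLSM-to-dimer translation of the first step, after which the proposition follows by transport of those results.
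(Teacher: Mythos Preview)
Your proposal is correct and follows essentially the same approach as the paper. The paper's own argument is extremely brief: it simply invokes the GLSM-field $\leftrightarrow$ perfect-matching dictionary and then cites \cite{Kenyon:2003uj} together with the crystal-melting relation \eqref{ZBPS}, noting afterwards the need for the $Z/Z_0$ normalisation. You have spelled out the mechanism (Kasteleyn factorisation over roots of unity, the slice-by-slice interpretation of $R(x,y)$, and the saddle-point assembly) in considerably more detail than the paper does, but the logical skeleton---dictionary plus transport of the known dimer/crystal results---is identical.
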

Of course, such expression would diverge, so we need to normalize it by $Z/Z_0$, where $Z_0$ is the partition function whose Ronkin function only has linear (solid) facets.

A more non-trivial interpretation would be the connection to 4d superconformal index $I_{4\text{d}}$ on $S^3\times S^1$. As studied in \cite{Yamazaki:2012cp,Terashima:2012cx}, when the radius of $S^1$ goes to 0, $I_{4d}$ would reduce to the partition function $Z_{3\text{d}}$ on the ellipsoid $S^3_b=\{(z_1,z_2)\in\mathbb{C}^2|b^2|z_1|^2+b^{-2}|z_2|=1\}$. By further taking $b\rightarrow0$, the 3d partition function would give the partition function for 2d $\mathcal{N}=(2,2)$ theory:
\begin{equation}
    Z_{2\text{d}}=\int\text{d}\sigma\exp\left(-\frac{1}{\pi b^2}W_{2\text{d}}(\sigma)\right),
\end{equation}
where $\sigma$ is the scalar in the vector multiplet. The effective twisted superpotential $W_{2\text{d}}$ can then be identified with the volume of some hyperbolic 3-fold $\mathcal{M}$. This 3-fold can be determined by (the zig-zag paths on) the dimer model. Moreover, the genus-0 prepotential for topological B-model is
\begin{equation}
    F_0=\int\dd x\,\dd y\,R(x,y)=\int\text{d}x\,\text{d}y\, \mathcal{L}(\text{vol}(\mathcal{M})),
\end{equation}
where $\mathcal{L}$ denotes the Legendre transformation. As we can see \cite{Yamazaki:2012cp},
\begin{proposition}
The $4$d superconformal index under dimensional reduction is related to topological string partition function and Ronkin function via Legendre transformation.
\end{proposition}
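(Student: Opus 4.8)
The plan is to assemble the proposition from three successive dimensional reductions, glued together by the dictionary between the brane--tiling combinatorics and hyperbolic geometry. First I would recall the superconformal index $I_{4\mathrm{d}}$ of the toric quiver on $S^{3}\times S^{1}$ as an elliptic hypergeometric integral whose vector-- and chiral--multiplet ingredients (elliptic Gamma functions) are dictated by the quiver/dimer data. The first step is the reduction on the thermal circle: as the $S^{1}$ radius shrinks, $I_{4\mathrm{d}}$ degenerates to the squashed three--sphere partition function $Z_{3\mathrm{d}}$ on $S^{3}_{b}$, with the elliptic Gamma functions collapsing to double--sine functions. This is a standard degeneration and I would simply invoke the derivation used in \cite{Yamazaki:2012cp,Terashima:2012cx}; the only care needed is to track the fugacity/R--charge assignments so the output is the $Z_{3\mathrm{d}}$ of the \emph{same} quiver, i.e.\ built from the same perfect--matching/zig--zag data that enters $P(z,w)$.

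Second, I would take $b\to 0$. In this limit the double--sine functions admit a controlled asymptotic expansion whose leading piece is a dilogarithm, so a saddle--point evaluation of $Z_{3\mathrm{d}}$ produces the $2\mathrm{d}$ $\mathcal{N}=(2,2)$ partition function $Z_{2\mathrm{d}}=\int\mathrm{d}\sigma\,\exp\!\big(-\tfrac{1}{\pi b^{2}}W_{2\mathrm{d}}(\sigma)\big)$ with $W_{2\mathrm{d}}$ the effective twisted superpotential assembled from these dilogarithms. The crucial geometric input is that the arrangement of chiral multiplets around the faces and nodes of the tiling---equivalently, its zig--zag paths---packages precisely into an ideal triangulation of a hyperbolic $3$--manifold $\mathcal{M}$, so that $W_{2\mathrm{d}}(\sigma)$ is identified, term by term (Bloch--Wigner dilogarithm $\leftrightarrow$ volume of an ideal tetrahedron), with the volume functional $\mathrm{vol}(\mathcal{M})$ on the shape parameters $\sigma$, the gluing (edge) equations coinciding with the extremization conditions of $W_{2\mathrm{d}}$. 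This is the heart of the matter and the step I expect to be the main obstacle: making the zig--zag $\mapsto$ triangulation correspondence precise, matching orientations and signs, and checking that the constraints $\sum_{e_{I}\in\mathcal{I}}\theta_{I}=\pi$ and $\sum_{e_{I}\in\mathcal{Z}}(\pi-2\theta_{I})=2\pi$ of Proposition \ref{specprop} are exactly the edge equations of $\mathcal{M}$.

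Third, I would relate $\mathrm{vol}(\mathcal{M})$ to the Ronkin function. The Legendre transform $\mathcal{L}$ exchanges the twisted--superpotential variables $\sigma$ (the logarithmic coordinates on the curve) with the FI/K\"ahler data, and under this exchange $\mathcal{L}(\mathrm{vol}(\mathcal{M}))$ evaluates, at each $(x,y)$, to the Ronkin function $R(x,y)$ of $P(z,w)$: one uses that $R(x,y)=m(P;\mathrm{e}^{x},\mathrm{e}^{y})$ and that $\nabla R$ lands on the lattice points of $\Delta$ with the interior/boundary dictionary quoted above, which is precisely the assertion that $R$ and $\mathrm{vol}(\mathcal{M})$ are a Legendre--dual pair. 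Integrating over $(x,y)$ and recalling $F_{0}=\int\mathrm{d}x\,\mathrm{d}y\,R(x,y)=\log Z_{\mathrm{BPS}}$ from the proposition preceding \eqref{ZBPS} identifies $F_{0}$ with the genus--$0$ B--model prepotential on the mirror, i.e.\ the topological string partition function at leading order. Chaining $I_{4\mathrm{d}}\to Z_{3\mathrm{d}}\to Z_{2\mathrm{d}}$, $W_{2\mathrm{d}}=\mathrm{vol}(\mathcal{M})$, and $\int\mathcal{L}(\mathrm{vol}(\mathcal{M}))=\int R=F_{0}$ then gives the stated relation.

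Finally I would dispose of the convergence issue: both $\log Z_{\mathrm{BPS}}$ and the naive $b\to 0$ free energy diverge, so each equality must be read after the standard normalization by $Z_{0}$ (the unmolten crystal, or the theory whose Ronkin function has only linear facets), which keeps the relevant volumes between Ronkin functions finite. I would carry this normalization through all three reductions uniformly; this is bookkeeping rather than a conceptual difficulty, and once it is in place the proposition follows by concatenating the identities above.
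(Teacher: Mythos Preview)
Your proposal is correct and follows essentially the same route as the paper: the chain $I_{4\mathrm{d}}\to Z_{3\mathrm{d}}\to Z_{2\mathrm{d}}$, the identification $W_{2\mathrm{d}}=\mathrm{vol}(\mathcal{M})$ via the dimer/zig--zag data, and then $F_{0}=\int R=\int\mathcal{L}(\mathrm{vol}(\mathcal{M}))$. The paper in fact does not supply a separate proof at all---the proposition is stated as a summary of the preceding paragraph, which simply quotes this chain from \cite{Yamazaki:2012cp,Terashima:2012cx}---so your version is considerably more detailed (the elliptic Gamma $\to$ double--sine degeneration, the Bloch--Wigner tetrahedron picture, the edge--equation matching, and the normalization by $Z_{0}$) than what the paper itself provides.
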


The topological string partition functions are invariant under wall crossing. Its counterpart in 4d, i.e., the superconformal indices, are also invariant under Seiberg duality. This would provide a further evidence that the Mahler measure/Ronkin function should be invariant under Seiberg duality.

Based on the above analysis, we wish to establish a universal measure for quivers. Mahler measure can be defined for any quivers arisen from toric geometry. Furthermore, it is monotonically increasing along the Mahler flow when we increase $k$ regardless of the detailed information of the gauge theory.
Physically, the Mahler measure can be used to determine the R-symmetry of the IR theory. Moreover, recall that
\begin{equation}
    Z_\textup{BPS}\sim\exp\left(\int\textup{d}x\,\textup{d}y\,R(x,y)\right).
\end{equation}
In other words, the Ronkin function characterizes the growth rate of BPS states in the theory. As Mahler measure (with possible rescaling of variables) is the minimum of Ronkin function, it provides a measure of the D-brane bound states.

\section{Conclusions and Outlook}\label{outlook}
In this paper, we studied the properties and physical implications of Mahler measure in the context of quiver gauge theories. We found that maximization of Mahler measure at the isoradial point leads to the correct R-symmetry in the infrared. We also discussed how the Mahler measure, its logarithmic derivative, and the Ronkin function behave under Seiberg duality and specular duality.

Moreover, we introduced {\it Mahler flow}, which encodes many interesting features
such as the flow equation and associated phase transition. Geometrically, Mahler flow can be studied using amoebae and tropical geometry. Physically, we proposed that the Mahler measure, which has a monotonic behaviour along Mahler flow, provides a universal measure of the quiver gauge theory.

The study of Mahler measure and Ronkin function is closely related to many aspects in physics including crystal melting, topological strings, black holes, etc. There is still much to explore in various topics, which may reveal deep connections among them. 
We list a few tantalizing ones below.

\paragraph{Multi-dimensional Mahler flow} Here, we only discussed Mahler flows with respect to a single parameter in the Newton polygon, viz., the constant term $k$. In general, it should be possible to consider varying all polynomial coefficients $k_i$, which dictate the complex structure of the geometry. This variation should constitute a multi-dimensional flow. See for example \cite{stienstra1997resonant,deninger1997deligne} for possible connections to various aspects in mathematics.

\paragraph{Non-isoradial embeddings} With the canonical weight choice, we obtained nice results for non-isoradial dimers. In particular, some tools in isoradial embeddings could also be applied to these non-isoradial ones. Therefore, one may wonder if there could be any similar properties for non-isoradial embeddings such as maximality for some Dirac operator in the sense of \cite{Kenyon_2002}.

\paragraph{Modularity} The modularity of Mahler measures \cite{villegas1999modular,Stienstra:2005wy} is a crucial concept in number theory. Along with the Picard-Fuchs equation, this would lead to important connections and applications in physics (e.g. recent emergence of modularity in \cite{Cheng:2018vpl,Harvey:2014cva}).

\paragraph{Non-Harnack curves} Unlike Harnack curves, the amoebae for non-Harnack curves behave differently. Hence, the boundaries of amoebae may not be the exact separations of different phases. Nevertheless, the area and complementary regions of an amoeba might still indicate the phase structures in a more intricate way.

\paragraph{Superconformal indices} As discussed in the previous section, the Mahler measure has an intimate relation with superconformal indices. It also has some connections to $F$-theorem in 3d \cite{Yamazaki:2012cp}. A more detailed study would give us a better understanding of the physics for Mahler measure in gauge theories.

\paragraph{Holes of the amoeba} We considered how the hole area $A_h$ can be computed in the large $k$ limit. However, how to obtain $A_h$ for small $k$ remain unsolved. We hope the discussions on liquid and gas phase contributions to the Mahler measure could shed some light on this topic.

\paragraph{Quiver entropy and black holes} Since the Mahler measure and Ronkin function provide a measure of degeneracy of D-brane bound states, it would be possible to define a quiver entropy from this. It would also be important to study if relation with the surface tension of the crystal model, which is the Legendre dual of the Ronkin function. In \cite[eqn(2.25)]{Feng:2007ur}, another quiver entropy was defined in terms of the plethystic exponential of the Hilbert series. How this would be connected to the quiver entropy from the Mahler measure is still an interesting open question. On the other hand, the famous OSV conjecture \cite{Ooguri:2004zv} says that $Z_\text{BH}=|Z_\text{topo}|^2$ when the D-brane bound states become black holes with smooth event horizon. The black hole entropy in the supergravity approximation is also the Legendre transformation of the free energy of topological A-model at genus 0. We will study how Mahler measures and quiver entropy are related to black holes in our future work.

\section*{Acknowledgments}
We are grateful to Sebastian Franco, Amihay Hanany and Masahito Yamazaki for fruitful discussions. JB is supported by a CSC scholarship. YHH would like to thank STFC for grant ST/J00037X/1. The research of AZ has been supported by
the French “Investissements d’Avenir” program, project ISITE-BFC (No. ANR-15-IDEX-0003), and EIPHI Graduate School (No. ANR-17-EURE-0002).

\appendix
\section{Example: Mahler Measure and Amoeba for $\mathbb{F}_0$}\label{exF0}
We can compute the Mahler measure for any Newton polygon/polynomial directly from the definition. Let us use \eqref{mahlerre} to perform the integration. As an example, let us consider the familiar $\mathbb{F}_0$ whose Newton polynomial is $P(z,w)=-z-z^{-1}-w-w^{-1}+k$. Then its Mahler measure is
\begin{equation}
\begin{split}
    m(P)&=\frac{1}{(2\pi i)^2}\int_{|z|=|w|=1}\log|k-(z+z^{-1}+w+w^{-1})|\frac{\text{d}z}{z}\frac{\text{d}w}{w}\\
    &=\text{Re}\left(\frac{1}{(2\pi i)^2}\int_{|z|=|w|=1}\log(k-(z+z^{-1}+w+w^{-1}))\frac{\text{d}z}{z}\frac{\text{d}w}{w}\right).
\end{split}
\end{equation}
The log part can be expanded as
\begin{equation}
    \log(k-(z+z^{-1}+w+w^{-1}))=\log k-\sum_{n=1}^\infty\frac{(z+z^{-1}+w+w^{-1})^n}{n}k^{-n},
\end{equation}
where $|k|>4$ or $k=4$ as $\max\limits_{|z|=|w|=1}|p(z,w)|=4$.

For each summand in the above sum, they can be further expanded as
\begin{equation}
    \begin{split}
        (z+z^{-1}+w+w^{-1})^n&=\sum_{i=0}^n\binom{n}{i}(z+z^{-1})^i(w+w^{-1})^{n-i}\\
        &=\sum_{i=0}^n\binom{n}{i}\left(\sum_{j=0}^i\binom{i}{j}z^{2j-i}\right)\left(\sum_{l=0}^{n-i}\binom{n-i}{l}w^{2l-n+i}\right).
    \end{split}
\end{equation}
As the only contribution to the integral, its constant term satisfies $i=2j,n-i=2l$. Therefore, we can write the constant term as
\begin{equation}
    [(z+z^{-1}+w+w^{-1})^n]_0=\sum_{\mathclap{\substack{i=0\\i\text{ even}}}}^n\binom{n}{i}\binom{i}{i/2}\binom{n-i}{(n-i)/2},
\end{equation}
where $[Q]_0$ denotes the constant term of $Q$. Equivalently, this can written as
\begin{equation}
    \begin{split}
        [(z+z^{-1}+w+w^{-1})^{2n}]_0&=\sum_{i=0}^n\binom{2n}{2i}\binom{2i}{i}\binom{2n-2i}{n-i}\\
        &=\frac{4^{2n}\left(\frac{2n-1}{2}\right)!^2}{\pi n!^2}=\frac{16^n\Gamma^2\left(\frac{2n+1}{2}\right)}{\pi n!^2}
    \end{split}\label{constterm2n}
\end{equation}
while $[(z+z^{-1}+w+w^{-1})^{2n-1}]_0$ vanishes. Hence,
\begin{equation}
    \begin{split}
        \left[\sum_{n=1}^\infty\frac{(z+z^{-1}+w+w^{-1})^n}{n}k^{-n}\right]_0&=\left[\sum_{n=1}^\infty\frac{(z+z^{-1}+w+w^{-1})^{2n}}{2n}k^{-2n}\right]_0\\
        &=\sum_{n=1}^\infty\frac{16^n\Gamma^2\left(\frac{2n+1}{2}\right)}{\pi n!^2}\frac{k^{-2n}}{2n}\\
        &=2k^{-2}{}_4F_3\left(1,1,\frac{3}{2},\frac{3}{2};2,2,2;16k^{-2}\right),
    \end{split}
\end{equation}
where ${}_pF_q$ is the (generalized) hypergeometric function. Therefore, we get
\begin{equation}
    m(P)=\text{Re}\left(\log k-2k^{-2}{}_4F_3\left(1,1,\frac{3}{2},\frac{3}{2};2,2,2;16k^{-2}\right)\right),~~~|k|>4\text{ or }k=4.
\end{equation}
In this paper, we are mainly interested in the case $k\geq4$. Therefore,
\begin{equation}
    m(P)=\log k-2k^{-2}{}_4F_3\left(1,1,\frac{3}{2},\frac{3}{2};2,2,2;16k^{-2}\right),~~~k\geq4.
\end{equation}

We may as well calculate the period $u_0(k)$ following the same manner. Recall that $u_0(k)$ can be written as the integral
\begin{equation}
    u_0(k)=\frac{1}{(2\pi i)^2}\int_{|z|=|w|=1}\frac{1}{1-k^{-1}(z+z^{-1}+w+w^{-1})}\frac{\text{d}z}{z}\frac{\text{d}w}{w}.
\end{equation}
Then we have the series expansion
\begin{equation}
    \frac{1}{1-k^{-1}(z+z^{-1}+w+w^{-1})}=\sum_{n=0}^\infty k^{-n}(z+z^{-1}+w+w^{-1})^n,~~~|k|>4.
\end{equation}
Since we are still dealing with $(z+z^{-1}+w+w^{-1})^n$, the constant terms are still the same as \eqref{constterm2n}. Hence,
\begin{equation}
    u_0(k)=\sum_{n=0}^\infty\frac{16^n\Gamma^2\left(\frac{2n+1}{2}\right)}{\pi n!^2}k^{-2n}=\frac{2}{\pi}K(4k^{-1})={}_2F_1\left(\frac{1}{2},\frac{1}{2};1;16k^{-2}\right),~~~|k|>4,
\end{equation}
where $K(x)=\int_0^1\frac{\text{d}t}{\sqrt{(1-t^2)(1-x^2t^2)}}$ is the elliptic integral of the first kind. As we can see, $u_0(k)$ is simply a hypergeometric function and $m(P)$ can also be expressed concisely using some hypergeometric function. In general, there may not be such compact expressions for $m(P)$ and $u_0(k)$. We will then compute them perturbatively in terms of their series expansions as
\begin{equation}
    m(P)=\log k-\sum_{n=2}^\infty\frac{f_n}{nk^n},\quad u_0(k)=1+\sum_{n=2}^\infty\frac{f_n}{k^n}.
\end{equation}

In Figure \ref{flow_amoeba}, we plot $m(P)$ along the Mahler flow. We also plot several amoebae using Monte-Carlo for different $k$ to illustrate how the amoebae would change along the Mahler flow\footnote{In \cite{Bao:2021olg}, this process is called the crawling of amoeba.}.
\begin{figure}[h]
    \centering
    \begin{subfigure}{4cm}
		\centering
		\includegraphics[width=4cm]{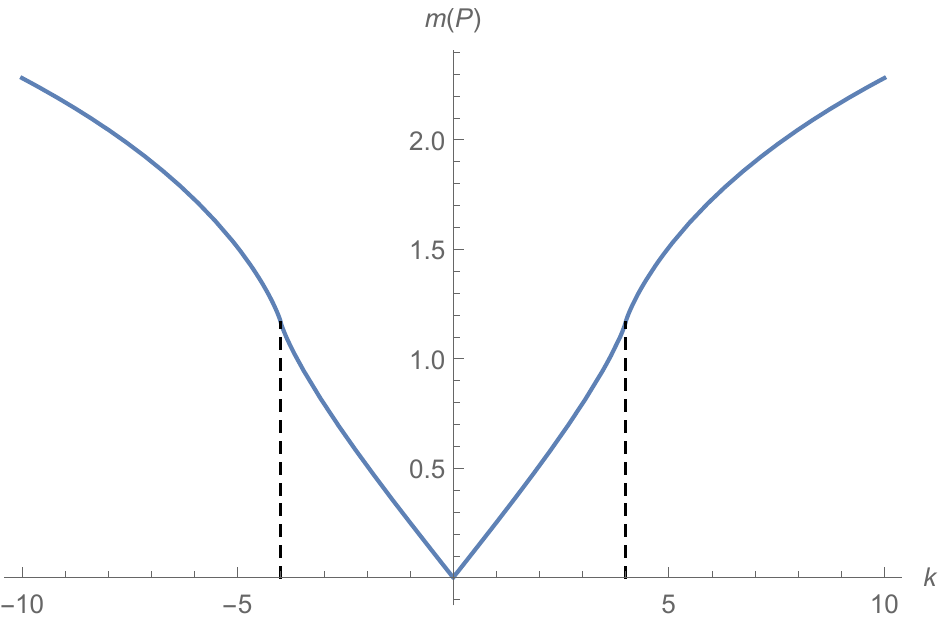}
		\caption{}
	\end{subfigure}
    \begin{subfigure}{4cm}
		\centering
		\includegraphics[width=4cm]{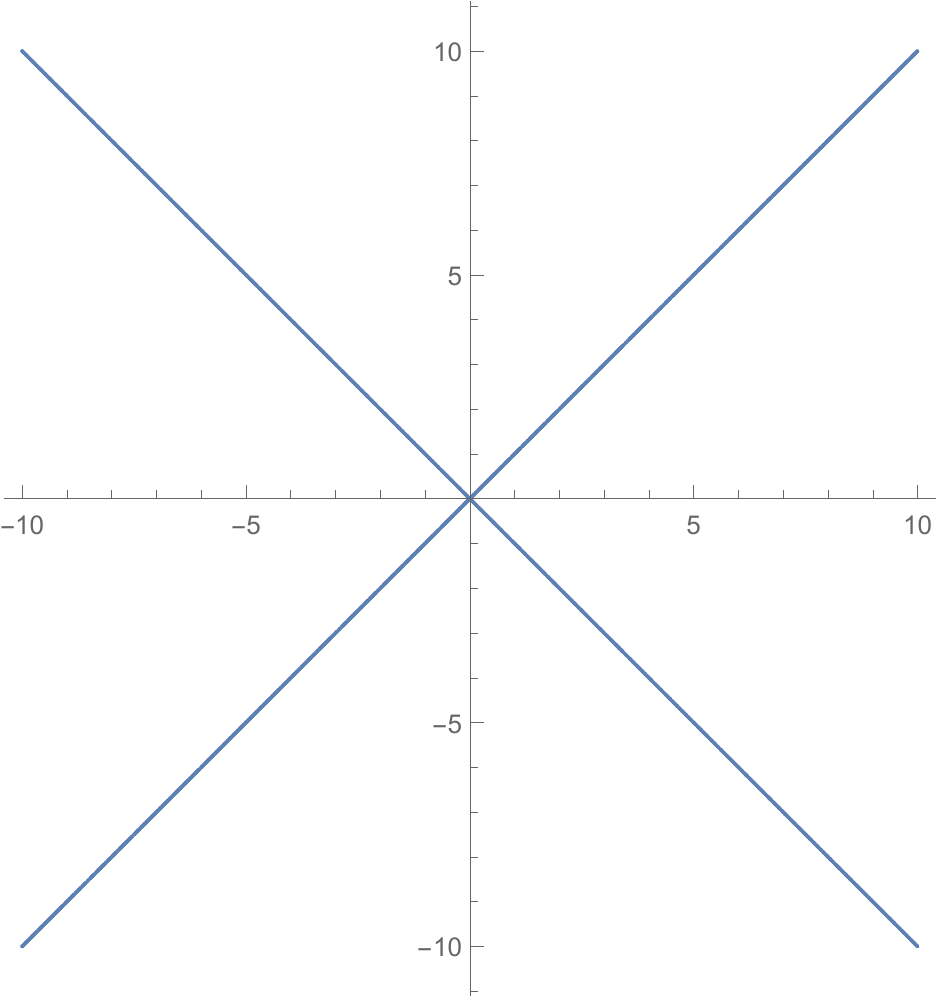}
		\caption{}
	\end{subfigure}
	\begin{subfigure}{4cm}
		\centering
		\includegraphics[width=4cm]{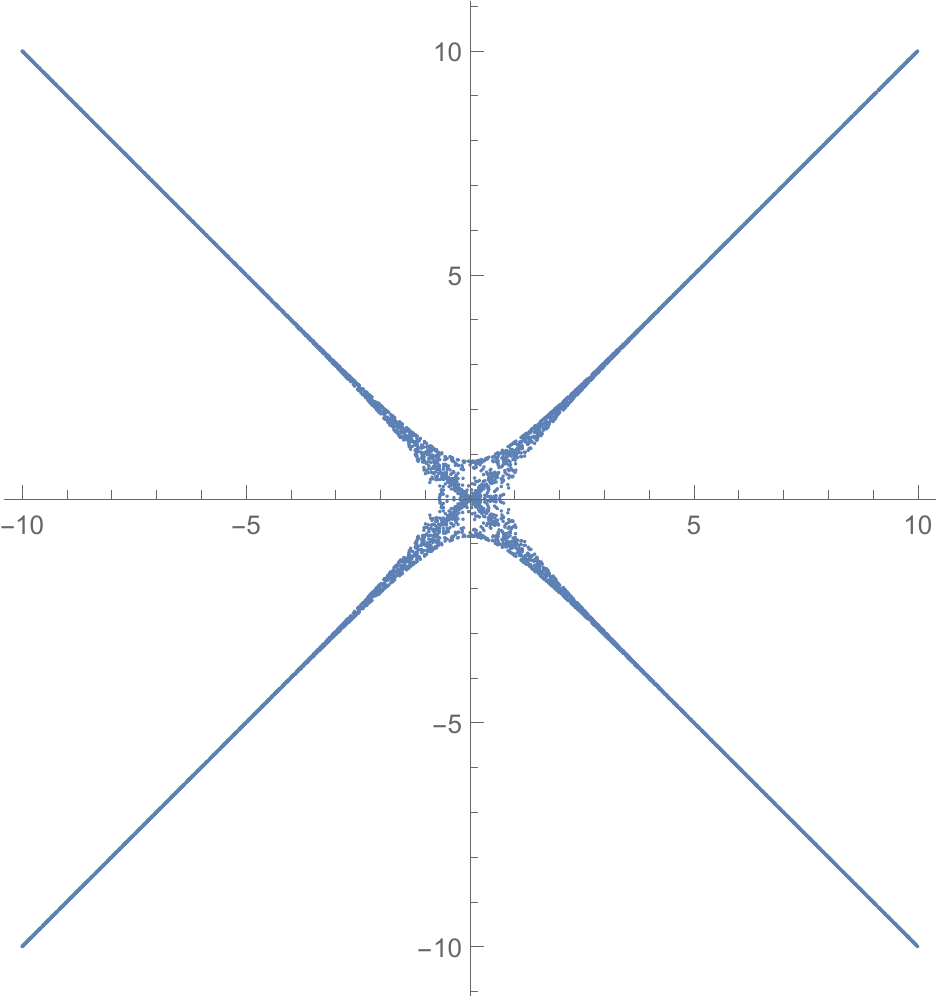}
		\caption{}
	\end{subfigure}
	\begin{subfigure}{4cm}
		\centering
		\includegraphics[width=4cm]{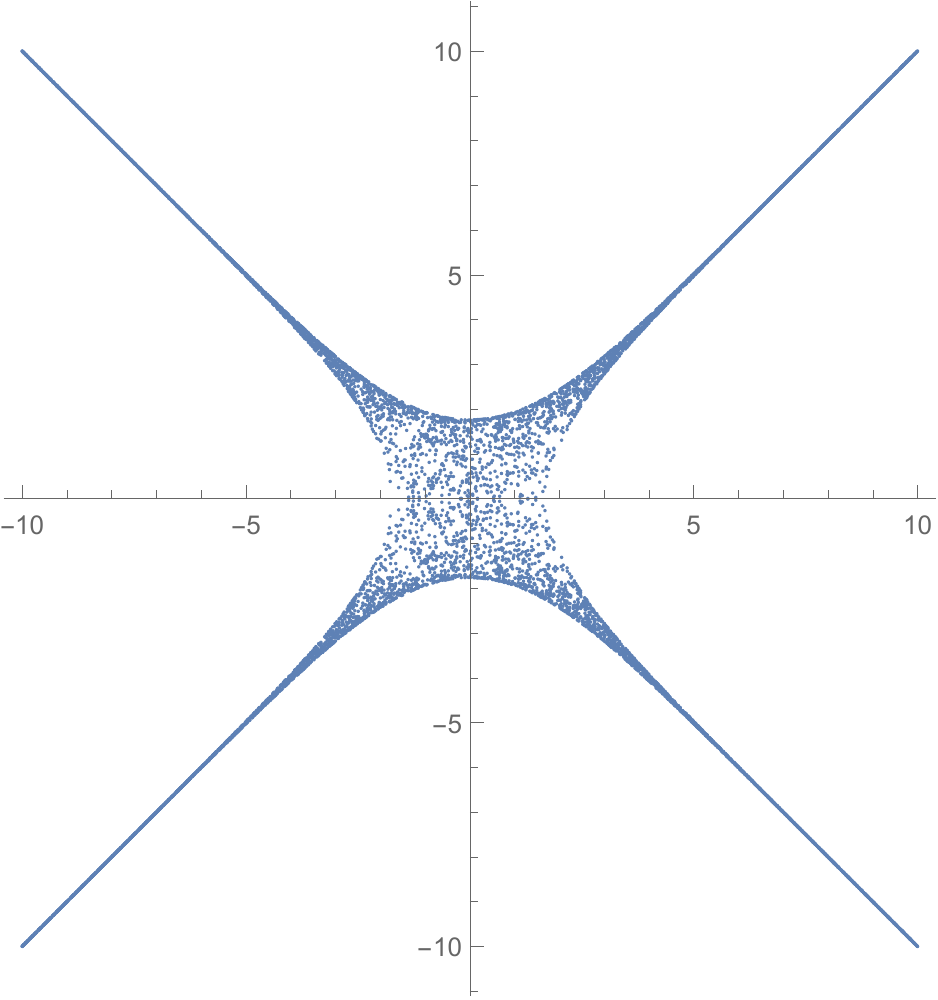}
		\caption{}
	\end{subfigure}
	\begin{subfigure}{4cm}
		\centering
		\includegraphics[width=4cm]{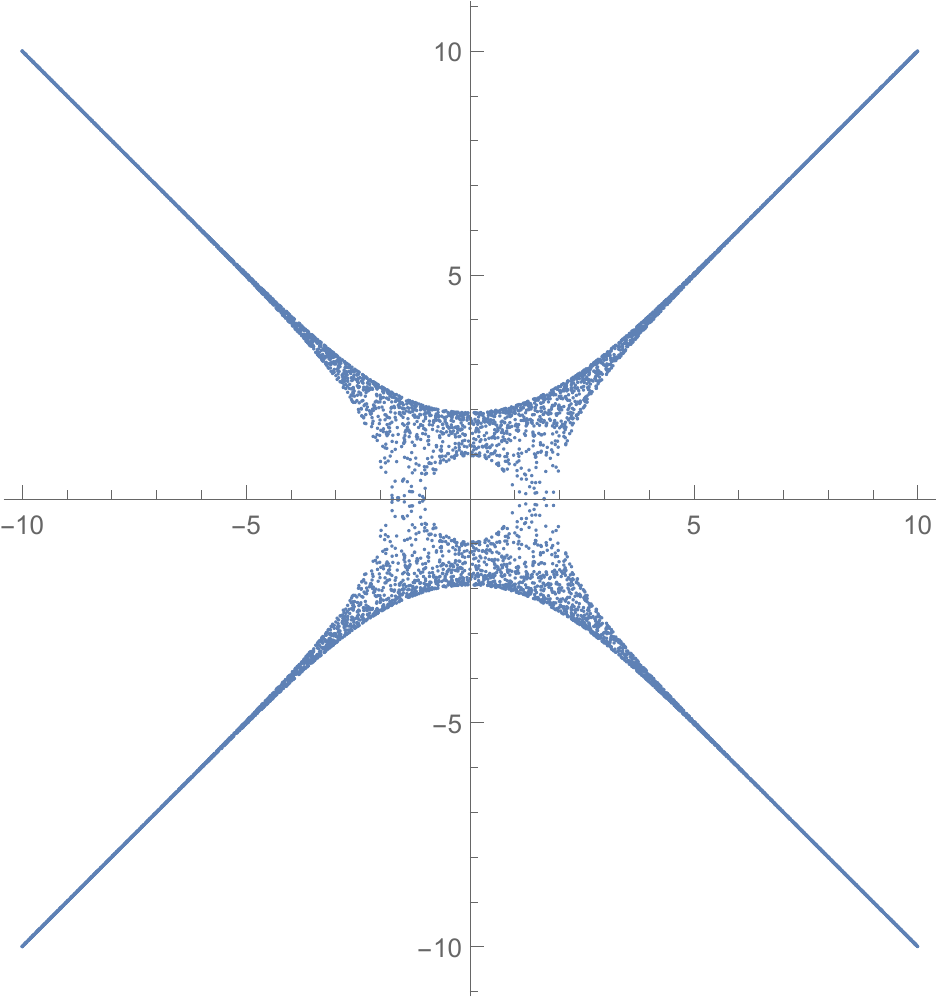}
		\caption{}
	\end{subfigure}
	\begin{subfigure}{4cm}
		\centering
		\includegraphics[width=4cm]{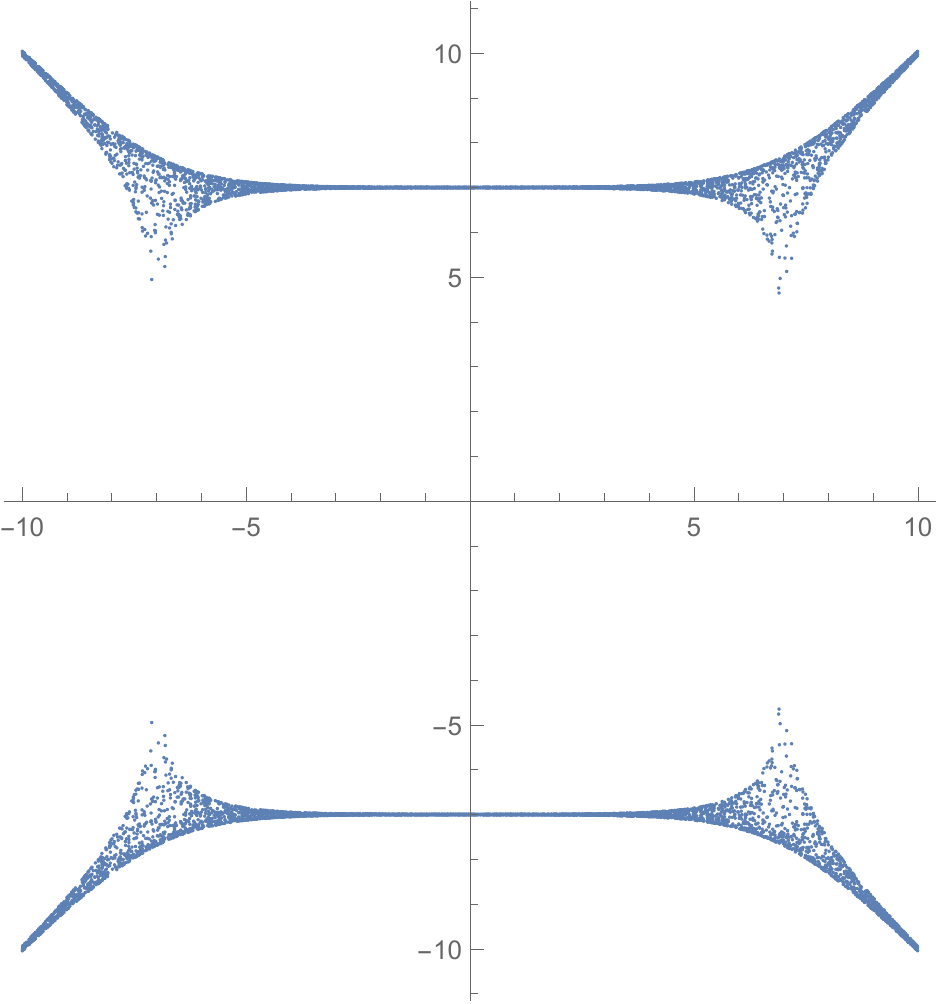}
		\caption{}
	\end{subfigure}
	\caption{The numerical $m(P)$ along the Mahler flow in (a). The amoeba for $P(z,w)$ when (b) $k=0$, (c) $k=3/4$, (d) $k=k_\text{iso}=4$, (e) $k=5$ and (f) $k=\text{e}^7$.}\label{flow_amoeba}
\end{figure}

Incidentally, we also plot the Mahler measures for negative $k$ in Figure \ref{flow_amoeba}(a) which give the mirror shape of the positive Mahler flow. This shows a non-differentiable point at $k=0$. If we plot the amoeba for $k=0$ as in Figure \ref{flow_amoeba}(b), we can see that it degenerates to two lines. In other words, the amoeba retracts to its spine in this case. By definition, $k=0$ gives another tropical limit. Although the details of phase structure in terms of amoeba for a non-Harnack curve is yet unclear, it could still be possible that $k=0$ has a solid-liquid transition.
\newpage

\linespread{0.9}\selectfont
\addcontentsline{toc}{section}{References}
\bibliographystyle{utphys}
\bibliography{ref}

\end{document}